\newtheorem{lemma}{Lemma}[section]
\newtheorem{theorem}[lemma]{Theorem}
\newtheorem{corollary}[lemma]{Corollary}
\newtheorem{proposition}[lemma]{Proposition}
\newtheorem{proposition-definition}[lemma]{Proposition-Definition}
\theoremstyle{definition}
\newtheorem{definition}[lemma]{Definition}
\newcommand{\CC}{{\mathbb{C}}}
\newcommand{\ZZ}{{\mathbb{Z}}}
\newcommand{\FF}{{\mathbb{F}}}
\newcommand{\calA}{{\mathcal{A}}}
\newcommand{\calB}{{\mathcal{B}}}
\newcommand{\calC}{{\mathcal{C}}}
\newcommand{\calE}{{\mathcal{E}}}
\newcommand{\calF}{{\mathcal{F}}}
\newcommand{\calL}{{\mathcal{L}}}
\newcommand{\calP}{{\mathcal{P}}}
\newcommand{\calS}{{\mathcal{S}}}
\newcommand{\calX}{{\mathcal{X}}}
\newcommand{\calY}{{\mathcal{Y}}}
\renewcommand{\Pr}{\operatorname{Prob}}		
\newcommand{\one}{{\mathbf{1}}}
\newcommand{\ISG}{\mathrm{ISG}}
\newcommand{\stepR}{\mathbf{R}}		
\newcommand{\stepG}{\mathbf{G}}		
\newcommand{\stepB}{\mathbf{B}}		
\newcommand{\half}{{\frac{1}{2}}}
\newcommand{\thalf}{{\tfrac{1}{2}}}
\newcommand{\lgc}{\calL^{/\calS}}
\newcommand{\qca}{\alpha}
\newcommand{\qcaR}{\beta}
\newcommand{\barqca}{\bar{\alpha}}
\newcommand{\circuit}[1]{{\mathfrak{#1}}} 
\newcommand{\dd}{{\mathsf{d}}} 
\newcommand{\ii}{{\mathbf{i}}}
\newcommand{\id}{{\mathbf{id}}}
\newcommand{\MCAlg}{{Majorana chain algebra}}
\newcommand{\MCAlgEq}{{Majorana chain algebra~\eqref{eq:anomalous_1D_L}}}
\newcommand{\glued}{{\mathrm{glued}}}
\newcommand{\strip}{{\mathrm{strip}}}
\newcommand{\ot}{\leftarrow} 
\DeclareMathOperator{\dom}{dom}
\DeclareMathOperator*{\im}{{im}}
\DeclareMathOperator*{\Supp}{{Supp}}
\DeclareMathOperator{\coker}{{coker}}
\DeclareMathOperator*{\ind}{ind}
\DeclareMathOperator*{\Ind}{Ind^{M}}
\DeclareMathOperator{\Cent}{{Cent}}
\DeclarePairedDelimiter{\abs}{|}{|}
\definecolor{darkblue}{rgb}{0.,0.,0.4}
\definecolor{darkred}{rgb}{0.5,0.,0.}
\definecolor{darkpurple}{rgb}{0.5,0.,0.5}
\definecolor{ltgreen}{rgb}{0.1,.59,.43}
\definecolor{orange}{rgb}{1.0, 0.5, 0.0}
\newcommand{\roger}[1]{{\ifmmode \text{\color{darkpurple}\footnotesize (RM) #1} \else {\color{darkpurple}\small (RM) #1} \fi}}
\newcommand{\jh}[1]{{\ifmmode \text{\color{blue}\footnotesize (JH) #1} \else {\color{blue}\small (JH) #1} \fi}}
\newcommand{\dave}[1]{{\ifmmode \text{\color{ltgreen}\footnotesize (DA) #1} \else {\color{ltgreen}\small (DA) #1} \fi}}
\newcommand{\zhi}[1]{{\ifmmode \text{\color{orange}\footnotesize (Zhi) #1} \else {\color{orange}\small (Zhi) #1} \fi}}
\def\l@subsubsection#1#2{}
\renewcommand{\nocontentsline}[3]{}
\renewcommand{\tocless}[2]{\bgroup\let\addcontentsline=\nocontentsline#1{#2}\egroup}
\newcommand{\insertlabel}[2]{
  \protected@edef\@currentlabelname{#2}
  \label{#1}
}
\begin{document}
\title{Measurement Quantum Cellular Automata and Anomalies in Floquet Codes}

\author{David Aasen}
\affiliation{Microsoft Quantum, Station Q, Santa Barbara, California, USA}
\author{Jeongwan Haah}
\affiliation{Microsoft Quantum, Station Q, Santa Barbara, California, USA}
\affiliation{Microsoft Quantum, Redmond, Washington, USA}
\author{Zhi Li}
\affiliation{Perimeter Institute for Theoretical Physics, Waterloo, Ontario, Canada}
\author{Roger S. K. Mong}
\affiliation{Department of Physics and Astronomy, University of Pittsburgh, Pittsburgh, Pennsylvania, USA}

\begin{abstract}
We investigate the evolution of quantum information under Pauli measurement circuits.
We focus on the case of one- and two-dimensional systems, which are relevant to the recently introduced Floquet topological codes.
We define local reversibility in context of measurement circuits,
which allows us to treat finite depth measurement circuits on a similar footing to finite depth unitary circuits.
In contrast to the unitary case, a finite depth locally reversible measurement circuit 
can implement a translation in one dimension.
A locally reversible measurement circuit in two dimensions 
may also induce a flow of logical information along the boundary.
We introduce ``measurement quantum cellular automata'' which unifies these ideas
and define an index in one dimension to characterize the flow of logical operators.
We find a $\mathbb{Z}_2$ bulk invariant for two-dimensional Floquet topological codes 
which indicates an obstruction to having a trivial boundary.
We prove that the Hastings--Haah honeycomb code belongs to a class with such obstruction,
which means that any boundary must have either nonlocal dynamics, period doubled, 
or admits anomalous boundary flow of quantum information.
\end{abstract}

\maketitle

\tableofcontents

\section{Introduction}

Quantum measurements are critical to virtually any aspect of quantum information processing.
Besides the evident necessity of measurements to read out information in a quantum state,
they are valuable for applications such as 
entanglement and magic state distillation~\cite{Knill2004a,Bravyi2005},
driving entanglement phase transitions~\cite{Fisher2023},
or state preparation~\cite{Gottesman1999,BriegelRaussendorf,Nielsen2003}.
Measurements are also indispensable in quantum error correction,
allowing for the delocalization and robust storage of logical information.
To this end, the sequence of measurements must be carefully designed 
to isolate errors while preserving the integrity of the quantum information.

Meanwhile, new classes of phases of unitary quantum dynamics have been discovered,
which cannot be realized by static Hamiltonians~\cite{Harper2019, Rudner2020}.
For example, periodically driving a noninteracting system can change the topology of the band structure, leading to new topological phases~\cite{Takashi2009,Kitagawa2010,Lindner2011}.
Explicit examples have been constructed where nontrivial boundary dynamics is manifest~\cite{Rudner2013, PoFidkowski2016, Harper2017, fermionGNVW1}.
As new types of emergent phenomena and topological order are discovered in periodically driven systems,
it is natural to ask of their analogues in periodic measurement dynamics.

In this paper, we investigate periodic sequences of measurements and their effect on the dyanmics of quantum information.
Our investigation is, in part,
motivated by a new class of recently developed quantum error correcting codes~\cite{Hastings2021, Aasen2022, Davydova2022, Kesselring2022, ZhangXCube2022}
with both nontrivial Floquet and topological characteristics.
Some of these Floquet codes only require neighboring pairwise measurements,
	which allows for easier implementation than their static counterparts 
	that involve joint measurement of three or more qubits.
Such measurements are expected to be natively available in Majorana-based quantum hardware~\cite{Hastings2021,Paetznick2023}.

In particular, the Hastings--Haah honeycomb code (HH code)~\cite{Hastings2021}
	implements a nontrivial transformation of the code space every measurement period.
The HH code is defined on a two-dimensional plaquette three-colorable lattice
with a period-three measurement schedule (i.e., three measurement steps per cycle) in the absence of boundaries. 
However, it has been difficult to introduce boundaries and construct planar realizations of the HH code.
Refs.~\cite{Hastings2021, Haah2022, Gidney_2022, Paetznick2023} 
maintain logical qubits by doubling the periodicity of the measurement schedule to six steps.
In constrast, Ref.~\cite{Vuillot2021} showed a period-three planar circuit,
but the resulting dynamics is nonlocal and is not fault-tolerant against certain errors.
A natural question follows: 
is the period doubling a fundamental aspect of the HH code with boundaries?
To this end, we show that either the boundary of the HH code must be period doubled, or admit dynamical boundary degrees of freedom, as long as locality is preserved.
This is reminiscent of the bulk-boundary correspondence for certain classes of gapped symmetry protected/enriched topologically ordered states,
	where the boundary must either be symmetry-breaking or hosts anomalous gapless degrees of freedom.
The main result of this paper is to characterize such boundary anomalies for measurement circuits.

Quantum cellular automata (QCA) are an abstraction of local unitary dynamics.
A QCA is a locality preserving $\ast$-automorphism of a local operator algebra; 
simply put, it takes local operators on a product Hilbert space to nearby local operators.
The unitarity of a QCA means that it is invertible; its effect can be reversed by another QCA.
While finite-depth unitary circuits (FDUC or shallow unitary circuits) 
provide a class of QCA, not all QCA can be implemented as a FDUC~%
	\cite{GNVW, nta3, FreedmanHastings2019QCA, Wilbur2022, Chen2021, Haah2022b}.
In particular, the action of shifting all the qubits on an infinite chain by one site (say, to the left) is a QCA, but cannot be implemented as a shallow unitary circuit.
In one dimension, QCA are classified by the GNVW invariant~\cite{GNVW} up to FDUC, a rational index which captures the flow of quantum information along a chain.
The GNVW invariant is akin to a chiral central charge
in the sense that while it must vanish for any circuit in strictly one dimension,
	it can be nonzero at the boundary of a two-dimensional shallow unitary circuit~\cite{PoFidkowski2016, Harper2017, Harper2019, ZhangLevin2022}.

\begin{figure}[ht]
	\centering
	\includegraphics[width=.9\linewidth]{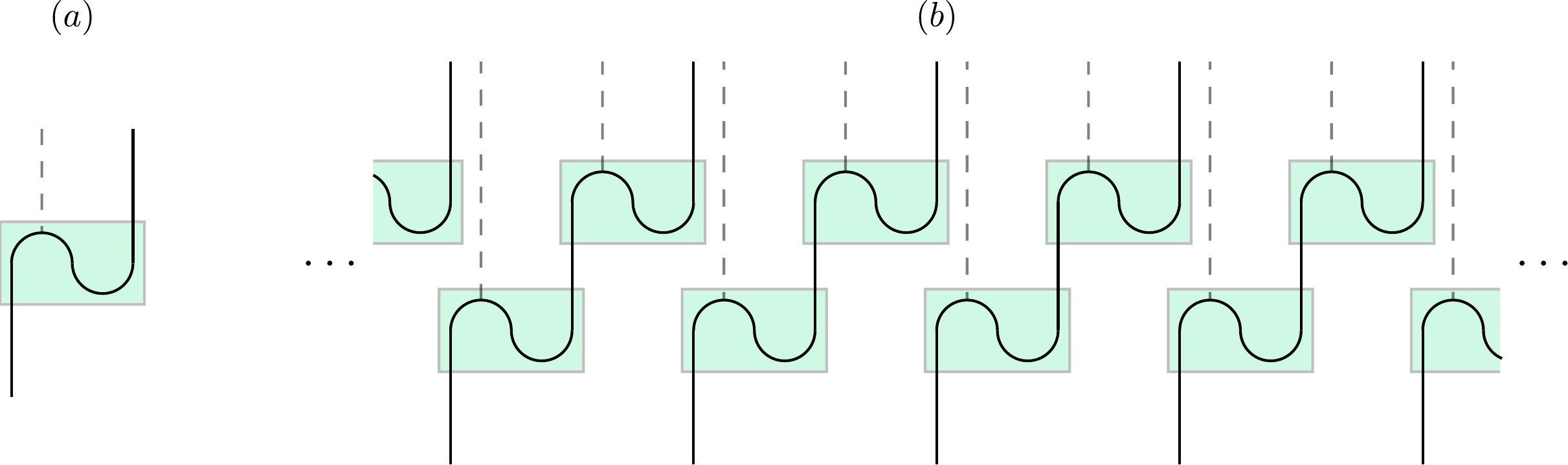} 
   \caption{%
	\textbf{(a)} A schematic for {\bf quantum teleportion} of a single qubit.
	Broken into steps (from bottom to top),
		we take an unknown qubit (left), create a Bell pair (cup),
		measure the unknown qubit with one of the qubit of the Bell pair (cap),
		and the remaining qubit (right) carries the original quantum information,
		up to Pauli corrections.
	\textbf{(b)} A schematic for a measurement circuit which implements a {\bf qubit-translation} along an infinite chain.
	The circuit, acting on a chain of qubits along with ancilla between qubits, comprises of two (composite) steps.
	First the qubit on each site is teleported to its adjacent ancilla, followed by another teleportation of the ancilla to the next qubit.
	This results in a one-dimensional locally reversible measurement cycle which have a nontrivial MQCA index.
	In both subfigures, solid lines indicate flow of quantum information,
		the dashed lines are classical information which results from the measurements.
	} \label{fig:teleportation_schematic}
\end{figure}

We consider analogous ideas for measurements circuits.
In light of the fact that a finite-depth measurement circuit can generate long ranged correlations,
	we introduce a {\bf local reversibility} condition for Pauli measurements circuits,
	which guarantees that quantum information is preserved and remains local over a sequence of measurements.
We define a {\bf locally reversible measurement cycle} (LRMC) as an analogue to shallow unitary circuits.
	It is a cyclic sequence of measurements which after every period leaves the logical space unchanged while transforming the logical operators.
We encapsulate such transformations in a notion of {\bf measurement quantum cellular automata} (MQCA),
	which requires that local logical operators map to nearby local logical operators.
We then define an {\bf MQCA index} which characterizes the flow of information along a one-dimensional strip or chain.

A crucial difference between the measurement versus unitary case is that the degrees of freedom do not necessarily form a product operator algebra.
In the standard QCA formulation, the operator algebra is a tensor product of simple local algebras, 
each living on a lattice site.
Here, the MQCA is defined over the logical operator algebra of a Pauli stabilizer/gauge group,
	which in many interesting cases do not have a tensor product structure.
Some of our examples occur at the boundary of a two-dimensional topological stabilizer group.
The unitary QCA, if Clifford, is a special case of the MQCA.
Hence, the MQCA covers a richer set of dynamics than its unitary counterpart.
As such, the MQCA index is a generalization of the GNVW index for Clifford quantum cellular automata.

While shallow unitary circuits must have trivial GNVW invariant, 
the analogous case is not true for locally reversible measurement cycles.
A finite depth, locally reversible measurement cycle can implement qubit translations along an infinite chain.
Our example, illustrated in Fig.~\ref{fig:teleportation_schematic}, consists of repeated application of the quantum teleportation protocol~\cite{BBCJPW}.
The effect of the circuit is to teleport the qubit at site $j$ to that at $j+1$; this circuit has the MQCA index~$1$.
The first main result of this paper is that the MQCA index of any one-dimensonal system is always an integer.
In terms of logical algebras this means that every relevant degree of freedom can be characterized by a pair of anticommuting operators, say $X_j$ and $Z_j$ localized around a site~$j$.
The translation circuit transforms these operators, up to a sign, to $X_{j+1}$ and $Z_{j+1}$ respectively.
The MQCA index precisely measures the flow of these operators.

\begin{figure}[ht]
	\centering
	\includegraphics[width=0.75\linewidth]{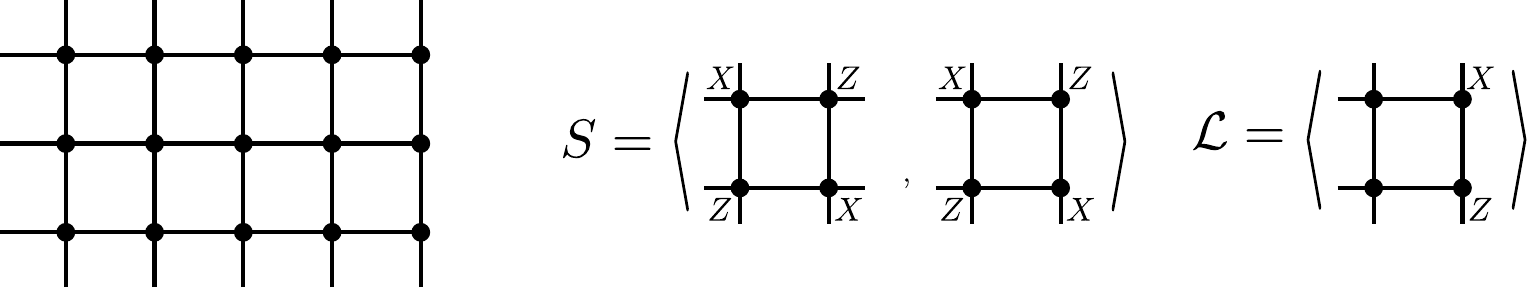} 
   \caption{%
	Wen's plaquette model with a vertical boundary.
	(Left) A square lattice with a vertical boundary to its right.
	(Middle) Stabilizers of the model, which include both bulk terms and terms touching the boundary.
	(Right) Generators of the logical algebra, which are 2-body terms which lives on the boundary.
   } \label{fig:WenPlaq_boundaryR}
\end{figure}

Furthermore, we also consider two-dimensional Floquet measurement circuits and induced MQCA along the boundary.
Our first example is based on Wen's plaquette model~\cite{WenP}, realizing the toric code topological order, stabilized by four-body operators.
Consider a vertical boundary of the model (which results from simply dropping terms which straddles the boundary), shown in Fig.~\ref{fig:WenPlaq_boundaryR}.
The logical algebra is generated by two-body operators along the edge, with one logical generator per unit cell.
When we implement the translation circuit above, one along every column, every site of the model shifts upward by one unit.
While the logical algebra remains the same, i.e., the translation implements an automorphism of the algebra,
	each individual logical operator shifts by one, realizing a nontrivial MQCA along the boundary.
Because there is only one logical generator per unit cell (as opposed to two per unit cell in the 1d chain),
	this MQCA transports the equivalent of a half-qubit of information along the boundary.
The MQCA index is $\half$, indicative of a boundary anomaly that cannot be realized in a purely 1d circuit.

The boundary MQCA index depends on the specific choice of circuit termination along the edge;
	one may alter the index by an integer by attaching a pure 1d circuit to the edge.
Our second main result is that the fractional portion of the MQCA index is independent of the chosen boundary termination,
	and is a property the bulk circuit.
That is, two-dimensional topological locally reversible measurement cycles are characterized by a $\thalf \ZZ / \ZZ \cong \ZZ_2$ index,
	which provides an obstruction to constructing a smooth boundary with no nontrivial logical operators.%
	\footnote{In this paper, we do not prove that this $\ZZ_2$ invariant is the \emph{only} obstruction to having a trivial boundary.}
In particular, the HH code belongs to the nontrivial class.
This allows us to say that any boundary of the HH code must (a) be not locally reversible, or (b) be period doubled (or $6N$ steps) or nonperiodic, or (c) admit a boundary with an anomalous MQCA index.

The organization of the paper is as follows.
\begin{itemize}[itemsep=1pt, parsep=0pt, topsep=2pt]
\item \S\ref{sec:reversibility}
	defines {\bf reversibility} and {\bf local reversibility} in the context of measurement circuits.
	These concepts are analogous to unitarity and local unitarity for quantum circuits (without measurements).
	The section lays the foundations for this work and is filled with examples and nonexamples.
\item \S\ref{sec:LRMC}
	delves into the properties of locally reversible measurement dynamics.
	We explain circuit {\bf blendings}, colloquially boundaries between different circuits.
	We define measurement quantum cellular automata (MQCA).
\item \S\ref{sec:MQCA_1D}
	defines an MQCA index, quantifying the ``flow'' of information in a circuit.
	We prove that every one-dimensional circuit must have an integer index,
	and that two-dimensional topological locally reversible measurement cycles\footnote{More precisely, any topological LR cycle which admits a boundary with the vacuum (trivial circuit).}
	are characterized by a $\ZZ_2$ invariant.
\item \S\ref{sec:WPT}
	introduces the Wen-plaquette translation (WPT) circuit.
	We construct various boundaries for this model.
\item \S\ref{sec:HHcode}
	reviews the Hastings--Haah honeycomb code (HH code).
	Again we construct various boundaries for this model.
\item
	\S\ref{sec:discussion} contains a brief summary of results and concluding remarks.
	We explore connections between the MQCA index and TQFT, and generalizations of the present work.
	We also discuss potential applications to quantum error correction.
\end{itemize}
Section~\ref{sec:MQCA_1D} does the heavy lifting using mathematics of Pauli groups and Fredholm index.
	While we encourage readers of all expertise to study these details,
	we expect readers to be able to fully understand the contents of our examples \S\ref{sec:WPT} and \S\ref{sec:HHcode}
	with only the background contained in \S\ref{sec:reversibility}.
For reference, a list of the common terms and abbreviations used throughout this paper is given in Table~\ref{tab:terms}.

\begin{table}[t]
\centering
\caption{Common terms and abbreviations used in this paper}
\label{tab:terms}
\begin{tabular}{l @{\quad} r}
\hline
	base stabilizer group / base code  & following Eq.~\eqref{eq:LRMcircuit}
\\
	conjugate bases / conjugate pairs  & Def.~\ref{defn:lrt}
\\
	\textbf{ISG}: instantaneous stabilizer group  & Eq.~\eqref{eq:def_ISG}
\\
	locally reversible (LR)  & Def.~\ref{defn:lrt}
\\
	\textbf{LRMC} / \textbf{LR cycle}: locally reversible measurement cycle & \S\ref{sec:LRMC}
\\
	\quad	topological LRMC  & Def.~\ref{defn:TopologicalCode}
\\
	blending  & \S\ref{sec:blending}
\\
	\quad topological blending  & Def.~\ref{defn:TopologicalCode}
\\
	\quad vacuum blending  & \S\ref{sec:boundaryactions}
\\
	\textbf{MQCA}: measurement quantum cellular automata  & Def.~\ref{defn:MQCA}
\\
	\quad canonical MQCA  & preceeding Thm.~\ref{thm:IntegerIndexOf1dLRMC}
\\
	MQCA index / $\Ind$  & \S\ref{sec:MQCA-def-prop}
\\
	\MCAlg  & Eq.~\eqref{eq:anomalous_1D_L}
\\
	\textbf{WPT}: Wen-plaquette translation model  & \S\ref{sec:WPT}
\\
	\textbf{HH code}: Hastings--Haah honeycomb code  & \S\ref{sec:HHcode}
\\\hline
\end{tabular}
\end{table}

\section{Pauli Measurement dynamics}
\label{sec:reversibility}

In analogy with local unitary circuits,
we consider local {\bf measurement circuits}
which consists of finitely many layers (time steps) of 
simultaneous measurements of nonoverlapping or just commuting local operators.
That is, in each layer, we measure a set of local operators, 
each of which acts on a few neighboring qubits, commuting with all other.
One may consider interleaving local measurements and local unitaries.
However, a pure measurement circuit is no less general than a unitary-measurement interleaved circuit,
since a local unitary only changes the basis of the measurement locally:
\begin{equation}\label{eq-measurementseries}
	\Pi_n U_n \cdots \Pi_2 U_2 \Pi_1 U_1=\tilde{\Pi}_n\cdots \tilde{\Pi}_2\tilde{\Pi}_1 U_n \cdots U_2 U_1 \,,
\end{equation}
where $\tilde{\Pi}_k=U_n\cdots U_{k+1}\Pi_k U_{k+1}^\dagger\cdots U_{n}^\dagger$ can be realized as a measurement circuit.
Hence, the most general short-time dynamics including measurements 
can be thought of as two-stage dynamics,
where in the first stage one applies a unitary circuit
and in second stage one applies a measurement circuit.

In this paper, we consider measurements of Pauli operators over qubits.
A Pauli operator is a tensor product of $2$-by-$2$ Pauli matrices,
such as $X_i$ acting on a qubit~$i$ and $Z_i Z_{i+1}$ acting on two qubits~$i$ and~$i+1$.
Since any Pauli operator~$P$ squares to~$P^2  = \one$,
the associated measurement takes two values~$\pm 1$,
and correspondingly the state is collapsed to
\begin{align}
\ket \psi \mapsto \frac{\one \pm P}{2} \ket \psi
\end{align}
depending on the measurement outcome, 
where we neglected overall normalization 
but wrote the projector as a genuine projector.

An interesting aspect of Pauli measurements is that 
they exhibit the Heisenberg uncertainty principle 
to a maximal degree that we are going to examine more carefully below.
We will first ignore locality of the dynamics,
but look at an abstract situation,
and then tailor our discussion with locality.

\subsection{Reversible measurements}

Suppose we measure a system that is in one of (continuously) many possible quantum states
and obtain an outcome that is independent of the underlying state.
There is no information transfer from the system to the observer,
and the underlying quantum information should be undisturbed.
The outcome being independent of the state 
does not necessarily mean a fixed outcome.
It only means that the outcome is drawn from a fixed probability distribution!

Quantum teleportation is a manifestation of such no-information measurement.
Consider two qubits, where the first qubit contains some state we want to transfer (a logical qubit),
and the second qubit is in an eigenstate of Pauli~$X$.
The teleportation protocol is that we measure two-qubit Pauli~$ZZ$
followed by a single-qubit Pauli~$X$ on the first qubit.
The first measurement outcome~$\pm 1$ by~$ZZ$ must be uniformly random ($\Pr[+] = \Pr[-]$)
because  the initial state~$\ket \psi$ is an eigenstate of an anticommuting operator~$IX$:
\begin{align}
\Pr[+] = \bra{\psi} \frac{\one + ZZ}{2} \ket{\psi}
= \bra{\psi} (IX) \frac{\one + ZZ}{2} (IX) \ket{\psi}
= \bra{\psi} \frac{\one - ZZ}{2} \ket{\psi} = \Pr[-]. \label{eq:RandomOutcome}
\end{align}
For the same reason, the next measurement by~$XI$ has a uniformly random outcome,
and direct calculation shows that the second qubit holds a quantum state
that is different from the original by a Pauli that depends only on the measurement outcomes.
We measured, learned nothing, and hence managed to preserve a quantum state.

To properly describe the transformation of the quantum information in a sequence of measurements,
	we need to track the stabilizer groups and the dynamics of the logical operators.
At each step, the {\bf instantaneous stabilizer group} (ISG) is the set of Pauli operators with definite eigenvalue.
For example, both $\one = II$ and $IX$ are stabilizers of the initial state, hence the initial ISG is $\ISG_0 = \{\one, IX\}$.
After measurement of $ZZ$, the ISG become $\ISG_1 = \{\one, ZZ\}$ (note that $IX$ is no longer a stabilizer).
The final ISG is $\ISG_2 = \{\one, XI\}$.
For step $t$, a {\bf logical operator} is any Pauli operator that commutes with all the elements of $\ISG_t$,
	the set of which is denoted $\ISG_t^\perp$.
They generate actions that can be applied to the state while staying in the stabilizer space.
For example, $\ISG_0^\perp = \{\one,XI,YI,ZI,IX,XX,YX,ZX\}$.
Two logical operators are equivalent with resepect to an ISG if they are related by a stabilizer element,
	e.g.\ $XI \mathrel{\overset{\ISG_0}{\scalebox{2.5}[1]{$\sim$}}} XX$, $YI \mathrel{\overset{\ISG_0}{\scalebox{2.5}[1]{$\sim$}}} YX$,
	as they lead to equilvalent actions when acting on states stabilized by $\ISG_0$.
(Each stabilizer, by definition, is also a logical operator and is equivalent to the identity.)

The \emph{measurement dynamics} of the circuit is characterized by the evolution of the logical operators through each step.
Logical operators evolve from step~$t$ to step~$t+1$ via the following procedure:
	for a logical operator $L \in \ISG_t$, find an equivalent operator
	$L' \mathrel{\overset{\ISG_t}{\scalebox{2.5}[1]{$\sim$}}} L$ such that $L'$ also belongs to $\ISG_{t+1}$,
	then $L'$ become the logical operator at step~$t+1$;
	this amounts to finding a stabilizer $S \in \ISG_t$ such that $LS \in \ISG_{t+1}^\perp$.
For instance, consider the logical operator $XI$ in the initial state.
$XI$ is not a logical operator of $\ISG_1$ (failing to commute with $ZZ$),
	but as $XI \mathrel{\overset{\ISG_0}{\scalebox{2.5}[1]{$\sim$}}} XX$ and $XX$ does commute with $ZZ$,
	$XI \mapsto XX$ through the process of measuring $ZZ$.
Measuring $XI$ next, the logical $XX$ (at $\ISG_1$) \emph{does} commute with the second measurement,
	and hence the logical operator remains as $XX$.
Through our sequence of stabilizers $\ISG_0 \to \ISG_1 \to \ISG_2$, the initial logical operators transform as
\begin{align}
	(XI,YI,ZI) \xmapsto{\text{measure $ZZ$}} (XX,YX,ZI) \xmapsto{\text{measure $XI$}} (XX,XY,IZ) \mathrel{\overset{\ISG_2}{\scalebox{2.8}[1]{$\sim$}}} (IX,IY,IZ) .
\end{align}
Indeed, operators acting on the first qubit is ``teleported'' to the operators on the second qubit.

After the teleporation, 
the meaningful part of the quantum state (the logical qubit)
is in the second qubit and the first qubit holds an eigenstate of Pauli~$X$.
So, the situation is the same as before the teleportation and 
we may teleport the quantum state back to the first qubit.
The protocol is to measure~$ZZ$ and then~$IX$.
We observe a forward-backward symmetry 
if we consider the Pauli operators that stabilizes the instantaneous state:
\begin{align}
\xymatrix{
	\braket{ \pm IX } \ar@/^/[rr]^{1_\text{forward}} & & \langle \pm ZZ \rangle \ar@/^/[rr]^{2_\text{forward}} \ar@/^/[ll]^{2_\text{backward}} & & \langle \pm XI \rangle \ar@/^/[ll]^{1_\text{backward}}
}. \label{eq:ModernTeleport}
\end{align}
Here, the operator in each chevron $\braket{\bullet}$ takes eigenvalue~$+1$ on the underlying state.
Since measurement outcomes are completely random, the signs must be regarded as irrelevant,
though one should note that to achieve proper teleportation 
those signs must be used to apply Pauli ``corrections.''

Actually, the very fact that a Pauli measurement is
completely random means that the measurement can be undone 
modulo Pauli corrections depending on measurement outcomes:
\begin{align}
\xymatrix{
	\langle \pm IX \rangle \ar@/^/[rr]^{\text{measure }ZZ} & & \langle \pm ZZ \rangle \ar@/^/[ll]^{\text{measure }IX}
} .
\end{align}
The intermediate $ZZ$ measurement in~\eqref{eq:ModernTeleport} is crucial in the teleportation protocol.
If $XI$ was to be measured directly as a first step (while the system is stabilized by $IX$), the quantum information stored in the first qubit would be destroyed.
This is a nonreversible action which we avoid when discussing dynamics of quantum information.
The underlying reason is that $IX$ and $XI$ commute and are independent, and so a direct transition from the first to last step is impossible.

We will find it useful to abstract the situation.
For a group, a {\bf basis} is a nonredundant generating set of the group.

\begin{proposition}\label{thm:Reversibility}
Let $\calA$ and $\calB$ be abelian groups of Pauli operators under multiplication 
on a finite set of qubits  (Pauli stabilizer groups).
Let $\calS = \calA \cap \calB$.
The following are equivalent.
\begin{enumerate}
	\item[(a)] (anticommuting conjugate) 
	There exist bases $\{A_i \calS\}$ of~$\calA/\calS$ and $\{B_j \calS\}$ of~$\calB/\calS$
	such that for each $A_{i'}$ there is $B_{j'}$ that commutes with all~$A_i$ but $A_{i'}$,
	and for each $B_{j'}$ there is $A_{i'}$ that commutes with all~$B_j$ but $B_{j'}$.

	\item[(b)] (no enlargement of stabilizer groups) If $A \in \calA$ commutes with all elements of~$\calB$, then $A \in \calB$;
	and if $B \in \calB$ commutes with all elements of~$\calA$, then $B \in \calA$.
	\item[(c)] (nonsingular commutation relations) 
	For any bases $\{A_a \calS \,|\, a = 1,2,\ldots, m \}$ of~$\calA/\calS$ and $\{B_b \calS \,|\, b = 1,2,\ldots, n \}$ 
	of~$\calB/\calS$,
	the binary matrix~$M$ with entries $M_{ab} = 0$ if $A_a$ and $B_b$ commute and $M_{ab} = 1$ otherwise 
	(for $ 1 \le a \le m$ and $1 \le b \le n$),
	is invertible over~$\FF_2$.

	\item[(d)] (shared logical operators) There exists a group~$\calL$ of Pauli operators
	that elementwise commutes with~$\calA$ and with~$\calB$ 
	such that $\calL\calA$ is the group of all logical operators for~$\calA$
	and $\calL\calB$ is that for~$\calB$.
	
	\item[(e)] (resolution of anticommutation) For any Pauli operator~$O$ that commutes with $\calS$ elementwise,
	there exist $A \in \calA$ and $B \in \calB$ such that $OA$ commutes with $\calB$ elementwise 
	and $OB$ commutes with $\calA$ elementwise.
\end{enumerate}
\end{proposition}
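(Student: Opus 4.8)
The plan is to translate the whole statement into symplectic linear algebra over $\FF_2$ and show that every one of (a)--(e) is equivalent to a single nondegeneracy condition. Working modulo phases, the Pauli group on the given $n$ qubits is the space $V\cong\FF_2^{2n}$ carrying the nondegenerate symplectic form $\omega$ that records commutation (value $0$ for commuting operators, $1$ for anticommuting). Under this dictionary $\calA,\calB$ become isotropic subspaces $a,b\subseteq V$, the group $\calS$ becomes $s:=a\cap b$, and the group of logical (i.e.\ centralizing) operators for $\calA$ becomes the symplectic complement $a^\perp\supseteq a$, and likewise for $\calB$. The object that controls everything is the bilinear pairing $\bar\omega\colon (a/s)\times(b/s)\to\FF_2$ induced by $\omega$; it is well defined exactly because $s$ is isotropic and contained in both $a$ and $b$, and a one-line check identifies its left kernel with $(a\cap b^\perp)/s$ and its right kernel with $(b\cap a^\perp)/s$. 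I claim each of (a)--(e) is equivalent to the assertion that $\bar\omega$ is a perfect pairing (equivalently, the two induced maps $a/s\to(b/s)^\ast$ and $b/s\to(a/s)^\ast$ are isomorphisms).

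Next I would dispatch the four ``linear algebra'' reformulations. Condition (c) is literally the statement that the matrix $M$ is square and invertible for one, hence every, choice of bases, which is perfectness. Condition (b) unwinds to $a\cap b^\perp\subseteq b$ and $b\cap a^\perp\subseteq a$; since $s$ is automatically contained in both intersections, these say exactly that the left and right kernels of $\bar\omega$ vanish, and a finite-dimensional pairing with trivial left kernel embeds the first space into the dual of the second, so both kernels vanishing forces equal dimensions and hence perfectness. Condition (a) says that for some basis $\{A_i s\}$ of $a/s$ every dual-basis functional is of the form $\bar\omega(\,\cdot\,,\beta)$ with $\beta\in b/s$ (and symmetrically); since the dual-basis functionals span $(a/s)^\ast$ this makes $b/s\to(a/s)^\ast$ surjective, likewise $a/s\to(b/s)^\ast$, hence (finite dimensions) both are isomorphisms; conversely, from a perfect $\bar\omega$ one takes any basis of $a/s$ together with its $\bar\omega$-dual basis in $b/s$. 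Condition (e) says $s^\perp\subseteq(a+b^\perp)\cap(b+a^\perp)$; the reverse containment is automatic because $a,b^\perp\subseteq s^\perp$ and $b,a^\perp\subseteq s^\perp$, so (e) is equivalent to $a+b^\perp=s^\perp=b+a^\perp$, and applying $(\,\cdot\,)^\perp$ (using $(U+W)^\perp=U^\perp\cap W^\perp$ and $U^{\perp\perp}=U$) turns this into $a^\perp\cap b=s=b^\perp\cap a$, which is condition (b).

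Condition (d) I would handle with Dedekind's modular law in both directions. If $\bar\omega$ is perfect, let $\calL$ be the group of Pauli operators commuting with all of $\calA$ and all of $\calB$, so the associated subspace is $\ell:=a^\perp\cap b^\perp$; it commutes elementwise with $\calA$ and $\calB$ by construction, and the modular law gives $\ell+a=(a^\perp\cap b^\perp)+a=(a+b^\perp)\cap a^\perp$ (using $a\subseteq a^\perp$), which equals $a^\perp$ because perfectness is equivalent to (b), hence to $a+b^\perp=s^\perp$, while $a^\perp\subseteq s^\perp$. Thus $\ell+a=a^\perp$ and symmetrically $\ell+b=b^\perp$, so $\calL\calA$ and $\calL\calB$ are the full logical-operator groups. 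Conversely, given any $\calL$ as in (d) with subspace $\ell$: it commutes with $\calA$ and $\calB$, so $\ell\subseteq a^\perp\cap b^\perp$, and $\ell+a=a^\perp$ then forces $a^\perp\subseteq(a^\perp\cap b^\perp)+a=(a+b^\perp)\cap a^\perp\subseteq a+b^\perp$; applying $(\,\cdot\,)^\perp$ gives $a^\perp\cap b\subseteq a$, hence $a^\perp\cap b=s$, and symmetrically $b^\perp\cap a=s$, which is (b).

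The one step I expect to be genuinely non-obvious is the forward direction of (d): a plain dimension count on $\ell$ does not pin anything down, and the trick is to use that $\ell$ is forced to lie inside $a^\perp\cap b^\perp$ while still surjecting onto $a^\perp/a$, then push this through the modular law and take symplectic complements. Everything else is bookkeeping, the only fussy part being the translation between Pauli operators and $V=\FF_2^{2n}$---in particular verifying that the image of $\calS$ is exactly $a\cap b$ and that subspace-level equalities lift back to honest groups of Pauli operators, which is harmless once one works consistently modulo phases.
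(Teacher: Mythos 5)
Your overall strategy --- quotienting Pauli groups by phases to land in $\FF_2^{2n}$ with the commutation symplectic form, and reducing everything to perfectness of the induced pairing $\bar\omega\colon (a/s)\times(b/s)\to\FF_2$ --- is exactly what the paper does (it outsources the proof to a linear-algebra proposition in the appendix stated for isotropic subspaces $A,B\subseteq P$ with $S=A\cap B$). Your treatments of (a), (b), (c) coincide with the paper's, and your handling of (e) via the identity $s^\perp=(a+b^\perp)\cap(b+a^\perp)$ followed by taking perps is a clean alternative to the paper's argument, which instead extends a functional on $A/S$ to one on $S^\perp/S$ and represents it by some $p\in S^\perp$.

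The genuine difference is in (d). The paper proves a strictly stronger intermediate condition (d$'$): there is a subspace $H$ on which the form restricts nondegenerately with $H+A=A^\perp$, $H+B=B^\perp$. It builds $H$ explicitly from a basis of $A^\perp/A$ using the map $p\mapsto a(p)$ furnished by (a), then shows (d$'$)$\leftrightarrow$(d) by splitting a general $L$ as a hyperbolic piece plus a self-orthogonal piece. Your route avoids this entirely: you take $\ell=a^\perp\cap b^\perp$ (which will in general contain $s$, hence carries a degenerate form, so it satisfies (d) but not (d$'$)), and Dedekind's modular law $(a^\perp\cap b^\perp)+a=a^\perp\cap(a+b^\perp)$ combined with $a+b^\perp=s^\perp$ closes the argument in two lines, with the converse again falling out by perps. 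This is shorter and more conceptual; what it gives up is the sharper structural output (d$'$), which the paper records because the hyperbolic $H$ is the thing one actually uses downstream to identify shared logical qubits. One small caveat: your closing remark that ``the image of $\calS$ is exactly $a\cap b$'' is not literally true in the presence of sign discrepancies (e.g.\ $Z\in\calA$, $-Z\in\calB$ gives $\calS=\{\one\}$ but $a\cap b\ne 0$); the paper sweeps the same point under the rug, so this is an inherited looseness rather than a flaw specific to your argument, but it is worth being honest that the transcription is \emph{not} completely phase-agnostic.
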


Note that the conditions~(b) and~(d) do not require any basis
and in particular do not need any information about~$\calS$.

\begin{proof}
See~\ref{thm:Reversible-LinearVersion}.
\end{proof}

\begin{definition}
A {\bf reversible} pair of Pauli stabilizer groups is one 
that satisfies any one (and hence all) of the conditions of~\ref{thm:Reversibility}.
\end{definition}

We interpret these characterizations as follows.
Let two Pauli stabilizer groups~$\calA$ and $\calB$ be a reversible pair.
Suppose we have some logical state encoded in a Pauli stabilizer code~$\calA$
and then we measure stabilizers of~$\calB$.
For each stabilizer measurement $B \in \calB$, either it commutes with all of $\calA$ or it anticommute with at least an element of $\calA$.
If $B$ anticommutes with an element of $\calA$%
---guaranteed for $B \notin \calA$ by condition~(a)---%
then every outcome of the measurements will be completely random in the sense that any allowed results appear with equal probabality (see~\eqref{eq:RandomOutcome}),
hence revealing no information about the logical state.
If $B$ were to commute with every element of $\calA$, then 
it still does not reveal any logical information 
because the measurement is merely a consistency check: $B \in \calA$. This is~(b).
The condition~(c) would be useful because it is an efficiently verifiable condition.
The existence of the shared logical operators, the condition~(d), tells us how to
access the logical qubit after the measurement:
use the operators of~$\calL$ which is invariant under the measurement transition.
The condition~(d) also means that for any logical operator~$P$ of~$\calA$, 
one can dress it by some $A\in\calA$ such that~$P A$ also commutes with~$\calB$. 
Given~$P$, such a product $P A$ is unique modulo~$\calA \cap \calB$.
The condition~(e) gives a minimal condition for an operator to turn to a logical operator;
namely, the operator has to commute with the common stabilizers.

In the example at the beginning of this section, the groups $\braket{IX}$ and $\braket{ZZ}$ form a reversible pair,
	as do $\big({\braket{ZZ}},{\braket{XI}}\big)$, illustrate via~\eqref{eq:ModernTeleport}.
Notably, $\braket{IX}$ and $\braket{XI}$ do not form a reverisble pair.
(Throughout, we use $\braket{P_1, P_2, \dots}$ to denote the group generated by Pauli operator(s) $\{P_i\}$.)

\subsection{Example: iterated but instantaneous quantum teleportation}
\label{sec:teleportation}

The above teleportation example in~\eqref{eq:ModernTeleport} 
is an analog (that is perhaps more modern) of the original quantum teleportation protocol~\cite{BBCJPW}.
Let us review and chain up the orignal protocol.
Alice holds a data (logical) qubit and another qubit that is in the Bell state 
with a qubit of Bob's.
Alice measures her two qubits in the Bell basis.
Instantly the Bob's qubit is set to exactly the same state as Alice's logical qubit,
up to a Pauli correction that depends on the Alice's measurement outcome.
Bob could have held a half of another Bell pair shared with Charlie,
and measured his two qubits in the Bell basis.
Then, up to a Pauli correction, Charlie's qubit would be set to Alice's logical qubit.
It is interesting that Alice's measurement and Bob's commute,
so the time ordering of the measurements is irrelevant.%
\footnote{%
though an interpretation would depend on it;
in one ordering the logical qubit is teleported in sequence,
while in the other ordering the logical qubit is teleported through a derived Bell pair between Alice and Charlie.
}
It is straightforward how to iterate this teleportation protocol
for $n + 1 \ge 2$ parties that stand on a line.
The first party holds a logical qubit, indexed~$1$,
and each party~$j$ holds two qubits indexed~$2j-1$ and~$2j$ 
except for the last one which only holds qubit~$2n+1$.
Parties~$j$ and $j+1$ share a Bell pair between qubits~$2j$ and~$2j+1$
and each party~$j$ makes a Bell measurement on the two qubits~$2j-1$ and~$2j$  where $1 \le j \le n$.
The logical qubit at the first party is teleported to the last party instantaneously,
up to a Pauli correction that depends on all measurement outcomes.
\begin{figure}[ht]
	\centering
	\includegraphics[width=.4\linewidth]{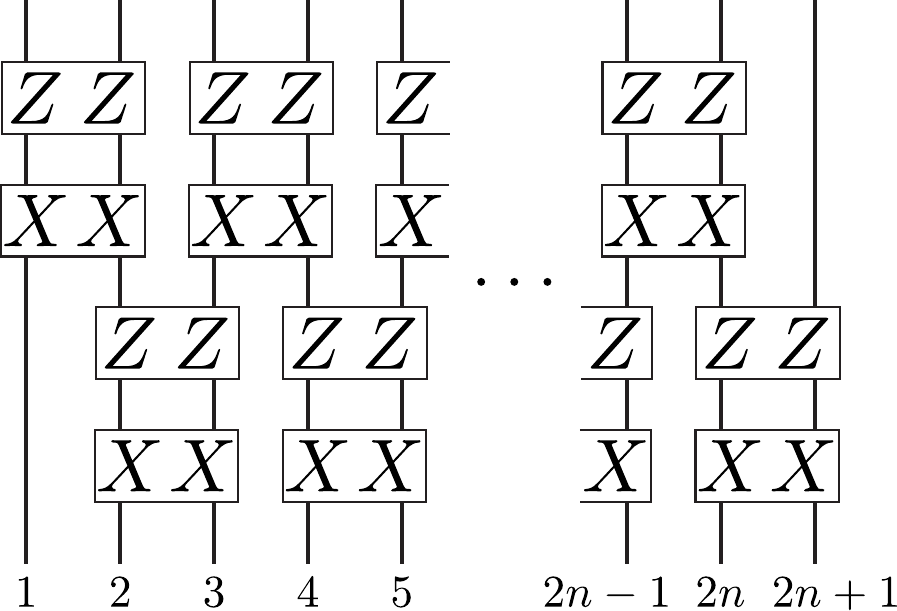}
	\caption{%
	Circuit picture for the teleportation protocol discussed in the main text. 
	The first two measurements (read from the bottom) prepare bells states between parties $2n$ and $2n+1$. 
	The second two measurements measures parties $2j-1$ and $2j$ in the Bell basis,
		which results in teleportation of the state on qubit $1$ to qubit $2n+1$.
	This circuit is reversible, but not locally reversible (as $n$ become large while fixing $\ell$).
	}
	\label{fig:Teleportation}
\end{figure}

The state of the initial shared Bell pairs is
a common eigenstate of Pauli operators of
\begin{subequations}
\begin{align}
	\calA &= \braket{ X_{2j} X_{2j+1} \,, Z_{2j} Z_{2j+1} ~|~ j = 1, 2,\ldots, n }.
\end{align}
The Bell measurements are equivalent to measuring every element of an abelian Pauli group
\begin{align}
	\calB &= \braket{ X_{2j-1} X_{2j} \,, Z_{2j-1} Z_{2j}~|~ j = 1,2,\ldots, n }.
\end{align}
\end{subequations}
We check that these two groups form a reversible pair.
Take a group
\begin{align}
\calL = \braket{ X_1 X_2 X_3 \cdots X_{2n+1} \,, Z_1 Z_2 Z_3 \cdots Z_{2n+1} }.
\end{align}
It is clear that $\calL \calA$ is the group of all logical operators of~$\calA$,
and $\calL \calB$ is the group of all logical operators of~$\calB$.
Therefore, by~\ref{thm:Reversibility}(d), $\calA$ and $\calB$ are a reversible pair.

Equivalently, we can check the condition~(b).
Since $\calA$ is a direct product of nonoverlapping Pauli subgroups,
each containing exactly two generators for a Bell pair,
any hypothetical nonidentity element $A \in \calA$ that commutes with~$\calB$
must have a left-most nonidentity tensor factor on an even-indexed qubit, say~$2j$.
Then, the commutation relation with $X_{2j-1}X_{2j}$ and~$Z_{2j-1}Z_{2j}$ of~$\calB$
forces this tensor factor to be the identity, a contradiction.
Hence, the condition~(b) is satisfied.

Note that as the name ``reversible pair'' suggests,
the iterated teleportation in the reverse direction from qubit~$2n+1$ to~$1$ 
is implemented by measuring~$\calA$ on a state stabilized by~$\calB$.

The other equivalent conditions~\ref{thm:Reversibility}(a,c) are instantiated as follows.
Some generating sets that satisfy~(a) will be found by
diagonalizing the matrix~$M$ in~(c).
Let us order the generators of~$\calA$ and $\calB$ as
\begin{align} \begin{split}
	\calA &= \braket{ X_2 X_3 \,, Z_2 Z_3 \,, X_4 X_5 \,, Z_4 Z_5 \,, \ldots },
\\	\calB &= \braket{ X_1 X_2 \,, Z_1 Z_2 \,, X_3 X_4 \,, Z_3 Z_4 \,, \ldots }.
\end{split} \end{align}
Then, the $2n \times 2n$ binary commutation matrix is
\begin{align}
\renewcommand{\arraystretch}{0.9}
M = \begin{pmatrix}
N & N &       \\
  & N & N     \\
  &   & \ddots & \ddots \\
  &    &   & N & N \\
    &    &   &  & N \\
\end{pmatrix} \text{ where } 
N = \begin{pmatrix}
0 & 1 \\ 1 & 0
\end{pmatrix}.
\end{align}
Choosing a different generating set for~$\calA$ amounts to a row operation,
while that for~$\calB$ does to a column operation.
Hence, we are free to manipulate~$M$ by any invertible matrices on the left and right separately.
An easy choice is by the inverse of~$M$ on the right:
\begin{align}
\renewcommand{\arraystretch}{0.9}
\begin{pmatrix}
N & N &       \\
  & N & N     \\
  &   & \ddots & \ddots \\
  &    &   & N & N \\
    &    &   &  & N \\
\end{pmatrix}
\begin{pmatrix}
N & N &  \cdots & N & N \\
 & N &  \cdots & N & N \\
 & &  \ddots & \vdots & \vdots \\
 & & & N & N \\
 & & & & N \\
\end{pmatrix}
=
\begin{pmatrix}
\,I\, &  &  &   &  \\
 & \,I\, &  &   & \\
 &  &  \ddots &  &  \\
 &  & &   \,I\, &  \\
  & &  &   & \,I\, \\
\end{pmatrix} .
\end{align}
The upper triangular matrix combines many generators of~$\calB$ into a new generator.
This means that a local stabilizer of~$\calA$ is conjugate to a nonlocal generator of~$\calB$ in the sense of~(a).
We cannot yet conclude that this nonlocality is unavoidable;
however, we will confirm that the nonlocality follows
because the quantum teleportation is essentially ``instantaneous.''

\subsection{Locally reversible transitions}
\label{sec:lrtrans}

By a lattice of qubits we mean a set of qubits where the distance between two qubits is defined.
For any $\ell \geq 0$, an $\ell$-ball is a metric ball of diameter~$\ell$.
Here, the diameter of a region is the largest distance between two points in the region.
An $\ell$-neighborhood of a region $R$ is the set of all points within distance $\ell$ from the region: $R^{+\ell} = \{ x ~|~ \exists a \in R, \mathrm{distance}(a,x) \leq \ell \}$.
For an operator $O$, its support, denoted by~$\Supp(O)$, 
is the smallest region for which $O$ acts by identity on the complement.
Let $\calP$ be the group of Pauli operators with finite support.

We say that a subgroup~$\calA$ of~$\calP$ is {\bf $\ell$-locally generated} 
above a subgroup~$\calS$,
if there is a set~$\{A_i \in \calA\}$ of $\ell$-local operators
such that $\calA$ is generated by $\{A_i\}$ and~$\calS$.

\begin{definition}\label{defn:lrt}
A pair $(\calA,\calB)$ of abelian subgroups of the group of all finitely supported Pauli operators 
is {\bf locally reversible} if
there exist $\ell$-local generating sets~$\{A_i \in \calA\}$ and~$\{B_j \in \calB\}$ above~$\calS=\calA\cap\calB$ 
such that for each $A_{i'}$ there is $B_{j'}$ that commutes with all~$A_i$ but $A_{i'}$,
and for each $B_{j'}$ there is $A_{i'}$ that commutes with all~$B_j$ but $B_{j'}$.
We say the anticommuting elements $A_{i'}$ and $B_{j'}$ are {\bf conjugate};
we call the pair $\{A_i \in \calA\}$ and $\{B_i \in \calB\}$ an {\bf $\ell$-local conjugate bases} for $(\calA,\calB)$.
\end{definition}

A {\bf locally finite product}~$\hat A$ from~$\calA$ is a subset of~$\calA$
where there are only finitely many elements whose supports include any given point.
The product of all elements from such a set may not be well defined,
especially if $\calP$ is infinite, and hence may not be an element of~$\calP$,
but the infinite product is well defined locally as a quantum circuit of depth~1 
and in particular does have a well-defined action on any finitely supported Pauli operator by conjugation.
The action by conjugation introduces a sign to any finitely supported Pauli operator.
We identify any two locally finite products from~$\calA$ if they give the same action by conjugation.
The set of all locally finite products of~$\calA$ is denoted by~$\hat \calA$,
which is a group.

\begin{proposition}\label{thm:LocallyReversible}
	Suppose $(\calA,\calB)$ is a locally reversible pair.
	For any~$\hat P \in \hat \calP$ supported on~$R$ that commutes with $\calS = \calA\cap\calB$ elementwise,
	there exist locally finite products $\hat A \in \hat \calA$ and $\hat B \in \hat \calB$, both supported on~$R^{+2\ell}$,
	such that $\hat A \hat P$ commutes with~$\calB$ elementwise and $\hat B \hat P$ commutes with~$\calA$ elementwise.
\end{proposition}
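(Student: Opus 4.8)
The plan is to realise Proposition~\ref{thm:LocallyReversible} as a local refinement of condition~(e) of Proposition~\ref{thm:Reversibility}, building the corrections $\hat A$ and $\hat B$ explicitly by ``pattern matching'' on a conjugate basis. First I would fix, using local reversibility, an $\ell$-local conjugate basis $\{A_i\}\subseteq\calA$ and $\{B_j\}\subseteq\calB$ above $\calS$. From the symmetric clause of Definition~\ref{defn:lrt} I get, for each $j$, an index $d(j)$ with $A_{d(j)}$ anticommuting with $B_j$ and commuting with every other $B_{j'}$; a one-line check shows $d$ is injective (otherwise some $A_{d(j)}$ would both commute and anticommute with a single $B_{j'}$). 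Symmetrically I fix an injective $e$ with $B_{e(i)}$ the conjugate of $A_i$. I would also record two trivial facts used repeatedly: (i) since $\calA$ is abelian and $\calS\subseteq\calA$, any locally finite product from $\calA$ commutes with $\calS$ elementwise, and likewise for $\calB$; (ii) on the lattice only finitely many distinct $\ell$-local Pauli operators can have support meeting a given bounded region (a bounded region has finitely many qubits, carrying finitely many Paulis), so any set of $\ell$-local operators is automatically locally finite.

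Next I would define the correction. Since each $B_j$ is a single Pauli, $\hat P$ either commutes or anticommutes with it; put $J=\{\,j:\hat P\text{ anticommutes with }B_j\,\}$ and set $\hat A:=\prod_{j\in J}A_{d(j)}$. Using fact~(ii) and injectivity of $d$, this is a well-defined locally finite product, i.e.\ an element of $\hat\calA$. For the support bound I would argue: $j\in J$ forces $\Supp(B_j)$ to meet $\Supp(\hat P)\subseteq R$, so $\ell$-locality of $B_j$ gives $\Supp(B_j)\subseteq R^{+\ell}$; and $A_{d(j)}$ anticommutes with $B_j$, so their supports overlap, whence $\ell$-locality of $A_{d(j)}$ gives $\Supp(A_{d(j)})\subseteq R^{+2\ell}$. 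Hence $\hat A$ is supported on $R^{+2\ell}$.

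It then remains to check $\hat A\hat P$ commutes with $\calB$ elementwise. Because $d$ is injective and $A_{d(j)}$ anticommutes among the $B$'s only with $B_j$, the product $\hat A$ anticommutes with $B_j$ exactly when $j\in J$; so $\hat A$ and $\hat P$ have identical commutation with each $B_j$, and $\hat A\hat P$ commutes with every $B_j$. By fact~(i) and the hypothesis on $\hat P$, the product $\hat A\hat P$ also commutes with $\calS$ elementwise. Since $\calB=\langle\{B_j\}\cup\calS\rangle$ and conjugation is multiplicative, $\hat A\hat P$ commutes with all of $\calB$. The companion element $\hat B:=\prod_{i\in I}B_{e(i)}$, with $I=\{\,i:\hat P\text{ anticommutes with }A_i\,\}$, is handled by the identical argument with $\calA$ and $\calB$ interchanged, giving $\hat B\in\hat\calB$ supported on $R^{+2\ell}$ with $\hat B\hat P$ commuting with $\calA$ elementwise.

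I expect the only real work to be in the locality bookkeeping of the second paragraph --- confirming that the pattern-matching product $\hat A$ is a genuine locally finite product (which is where distinctness plus finiteness of balls on the lattice is used) and that it does not spread beyond $R^{+2\ell}$ (which is where ``$\ell$-local and anticommuting $\Rightarrow$ supports within an $\ell$-neighbourhood'' is used twice). The one conceptual point not to skip is that commuting with the chosen local generators of $\calB$ is insufficient by itself; one must separately note that $\hat A$ commutes with the shared stabilizers $\calS$, which is exactly where the hypothesis that $\hat P$ commutes with $\calS$ --- and abelian-ness of $\calA$ --- are needed.
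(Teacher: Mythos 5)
Your proof is correct and follows essentially the same route as the paper: pick an $\ell$-local conjugate basis, form the correction as the product of those basis elements conjugate to the ones anticommuting with $\hat P$, and track supports via the two applications of $\ell$-locality. You spell out two points the paper's sketch leaves implicit — the injectivity of the conjugation map (needed to make the correction a genuine locally finite \emph{set}) and the separate check that $\hat A\hat P$ commutes with $\calS$ so that commuting with the basis elements suffices to commute with all of $\calB$ — but these are matters of detail, not a different argument.
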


\begin{proof}
Take a conjugate $\ell$-local bases for~$(\calA, \calB)$.
Look at all the basis elements of~$\calA$ that anticommute with~$\hat P$,
	they must each be supported on $R^{+\ell}$.
Multiply $\hat P$ by the conjugate basis elements of~$\calB$.
These conjugate elements of~$\calB$ are supported on $R^{+2\ell}$
and the set is locally finite because in any ball there are only finitely many independent elements.
\end{proof}

\begin{corollary}\label{thm:LP}
For any~$\hat  P \in \hat \calP$ supported on~$R$ that commutes with~$\calA$ elementwise,
there exists a locally finite product~$\hat A \in \hat \calA$ supported on~$R^{+2\ell}$
such that $\hat A \hat P$ commutes with~$\calB$ elementwise.
Such a product~$\hat A \hat P$ is unique up to~$\widehat{\calA \cap \calB}$.
\end{corollary}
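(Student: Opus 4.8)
The plan is to read off existence directly from Proposition~\ref{thm:LocallyReversible} and to spend all the effort on the uniqueness clause. For existence: the hypothesis that $\hat P$ commutes with every element of $\calA$ is stronger than what Proposition~\ref{thm:LocallyReversible} needs, since $\calS=\calA\cap\calB\subseteq\calA$ so $\hat P$ in particular commutes with $\calS$ elementwise. Applying that proposition gives locally finite products $\hat A\in\hat\calA$ and $\hat B\in\hat\calB$, both supported on $R^{+2\ell}$, with $\hat A\hat P$ commuting with $\calB$ elementwise; we simply keep $\hat A$ and discard $\hat B$. So the only substantive point is uniqueness.

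For uniqueness, suppose $\hat A,\hat A'\in\hat\calA$ are such that $\hat A\hat P$ and $\hat A'\hat P$ both commute with $\calB$ elementwise. The commutation type of a Pauli against a product is additive mod $2$: for any finitely supported $B$, the operator $\hat A\hat P$ anticommutes with $B$ exactly when an odd number of $\hat A,\hat P$ do. Hence $\hat A$ and $\hat A'$ have the same commutation pattern with every $B\in\calB$, so $\hat D:=\hat A\hat A'\in\hat\calA$ (the latter is a group by construction) commutes with $\calB$ elementwise. Everything then reduces to the claim: \emph{if $(\calA,\calB)$ is locally reversible and $\hat D\in\hat\calA$ commutes with $\calB$ elementwise, then $\hat D\in\widehat{\calA\cap\calB}$.} Granting this, $\hat A=\hat D\hat A'$ with $\hat D\in\widehat{\calA\cap\calB}$, so $\hat A\hat P$ and $\hat A'\hat P$ agree modulo $\widehat{\calA\cap\calB}$, which is the asserted uniqueness. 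Note that this claim is precisely the locally-finite-product analogue of the ``no enlargement'' condition Proposition~\ref{thm:Reversibility}(b), now on an infinite lattice.

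To prove the claim I would fix an $\ell$-local conjugate basis $\{A_i\},\{B_j\}$ for $(\calA,\calB)$ as in Definition~\ref{defn:lrt}, reindexed so that $A_i$ and $B_j$ anticommute precisely when $i=j$. The structural input is that every $\hat D\in\hat\calA$ is conjugation-equal to $\hat S\cdot\prod_i A_i^{d_i}$ for some $\hat S\in\widehat{\calA\cap\calB}$ and some locally finite bit string $(d_i)$: write $\hat D$ as a locally finite product of elements of $\calA$, expand each factor through the generating set $\{A_i\}\cup(\calA\cap\calB)$, and collect exponents. Given such a decomposition, $\hat S$ commutes with every $B_j$ (since $\widehat{\calA\cap\calB}\subseteq\widehat\calB$ and $\calB$ is abelian) while $\prod_i A_i^{d_i}$ anticommutes with $B_j$ iff $d_j=1$; because $\hat D$ commutes with every $B_j$ we conclude $d_j=0$ for all $j$, hence $\hat D=\hat S\in\widehat{\calA\cap\calB}$.

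The main obstacle is the locality bookkeeping behind that decomposition. Expanding even a single local factor $C\in\calA$ through $\{A_i\}\cup(\calA\cap\calB)$ could a priori require far-away generators $A_i$ together with a far-away element of $\calA\cap\calB$ to cancel them, so that summing exponents over all factors of a locally finite product need not stay locally finite. I expect to control this by assuming, as holds in all situations relevant here, that $\calA\cap\calB$ is itself $\ell'$-locally generated; then a local $C$ expands through local generators only, and $\hat S$ and $(d_i)$ inherit local finiteness from that of $\hat D$. Without such a hypothesis, the fallback is an exhaustion argument: restrict $\calA$, $\calB$, and $\hat D$ to an increasing family of finite balls, apply the finite-dimensional version Proposition~\ref{thm:Reversibility}(b) on each (valid since a locally reversible pair is reversible), and pass to the limit, keeping track of supports so that the local products glue. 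Once this decomposition is in hand the remainder is formal.
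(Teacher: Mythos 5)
Your reduction is exactly the paper's: existence comes straight from Proposition~\ref{thm:LocallyReversible}, and uniqueness amounts to the claim that an element $\hat D\in\hat\calA$ commuting with $\calB$ elementwise must lie in $\widehat{\calA\cap\calB}$, proved by expanding in the conjugate basis. Where you diverge is in the locality bookkeeping, and here you over-worry: the concern you raise, that expanding a factor $C\in\calA$ in $\{A_i\}\cup(\calA\cap\calB)$ might recruit far-away generators, does not actually arise, and neither of your two proposed fixes is needed. If $C=s\prod_{j\in J}A_j$ with $s\in\calS=\calA\cap\calB$ and $\{A_i\}$ a basis above $\calS$, then $J$ is uniquely determined (nonredundancy of the $A_i$ modulo $\calS$ is itself forced by the conjugate-basis condition) and is read off from commutation: $j\in J$ if and only if $C$ anticommutes with $B_j$, since $s$ and every $A_i$ with $i\neq j$ commute with $B_j$. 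Anticommutation forces $\Supp(B_j)\cap\Supp(C)\neq\emptyset$, so $\Supp(B_j)\subseteq\Supp(C)^{+\ell}$ and $\Supp(A_j)\subseteq\Supp(C)^{+2\ell}$, and hence $s=C\prod_{j\in J}A_j$ is also supported in $\Supp(C)^{+2\ell}$. The expansion of each factor of $\hat D$ is therefore automatically localized near its own support, the resulting $\calS$-parts form a locally finite family with no hypothesis on $\calA\cap\calB$, and the coefficients $d_j$ in the collected expansion $\hat D=\hat S\prod_j A_j^{d_j}$ are finite sums. Commutation of $\hat D$ with every $B_j$ then forces every $d_j=0$, giving $\hat D=\hat S\in\widehat{\calA\cap\calB}$. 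Your exhaustion fallback, as sketched, has its own difficulties (restricting $\calA,\calB$ to a ball does not preserve the conjugate-basis structure, so Proposition~\ref{thm:Reversibility}(b) does not apply directly to the restrictions), but fortunately you do not need it.
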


\begin{proof}
We only have to show the uniqueness.
The difference of two such products is an element~$\hat A'$ of~$\hat \calA$
that has to commute with~$\calB$ elementwise.
This means that
the expansion of~$\hat A'$ in the basis of~$\calA$ that is conjugate to~$\calB$,
should not contain any nontrivial basis element above~$\calS = \calA \cap \calB$.
This is what we wanted to show.
\end{proof}

This corollary means that the measurement dynamics by a locally reversible {\bf transition}~$\calA \to \calB$
is locality preserving.
A locally finite product~$\hat P$ that commutes with $\calA$ elementwise
can be thought of as a logical operator%
\footnote{%
One may wonder why we do not say it \emph{is} a logical operator.
If $\hat P$ is finitely supported, then it \emph{is};
otherwise, two infinitely supported ``operators''
do not always have a well-defined commutation relation,
and we may not always speak of algebra of logical operators.
Of course, this issue never arises in any finite system.
}
of~$\calA$. 
A string operator in a toric code, extended to infinity in both directions,
is an example of an infinitely supported but locally finite product of Pauli operators
that commutes with the code's stabilizer group.
Then, under the locally reversible transition
the logical operators are updated locally.
We can now conclude that the iterated, instantaneous teleportation circuit in~\S\ref{sec:teleportation}
is \emph{not} locally reversible at some step.
If it were, then every logical operator, in particular the single-qubit~$X$ and~$Z$
on the data qubit that is going to be teleported,
must have evolved to a nearby operator.
This is a contradiction since the circuit teleports the logical qubit to a distant position,
where the post-teleportation logical qubit does not have any logical operator represented 
near the pre-teleportation qubit.

\subsection{Example revisited: translation by measurements}
\label{sec:trans1d_LRMC}

Here we give an example of a one-dimensional locally reversible measurement circuit,
	implementing a uniform translation of the logical qubits (similar in spirit to Fig.~\ref{fig:teleportation_schematic} from the introduction).

Consider a one-dimensional chain of sites, 
each having two qubits indexed by even and odd integers respectively.
Put all even-indexed qubits in an $X$-eigenstate,
and regard all odd-indexed qubits as logical qubits.
By~\eqref{eq:ModernTeleport} we can teleport the logical qubits at $2j-1$ to the even-indexed qubits at $2j$,
putting all odd-indexed qubits in an $X$-eigenstate.
This teleportation is performed by a measurement circuit of depth~2.
The resulting situation is almost the same as the beginning,
except that the role of even and odd qubits are interchanged.
Now, implement another teleportation circuit for each pair of qubits~$2j$ and~$2j+1$ for all~$j$.
Then, we are back the the original stabilizer group where all even-indexed qubits
are individually in an $X$-eigenstate.
A circuit diagram for this teleportation is provided in Fig.~\ref{fig:Teleportation_chain}.
\begin{figure}[ht]
	\centering
	\includegraphics[width=.4\linewidth]{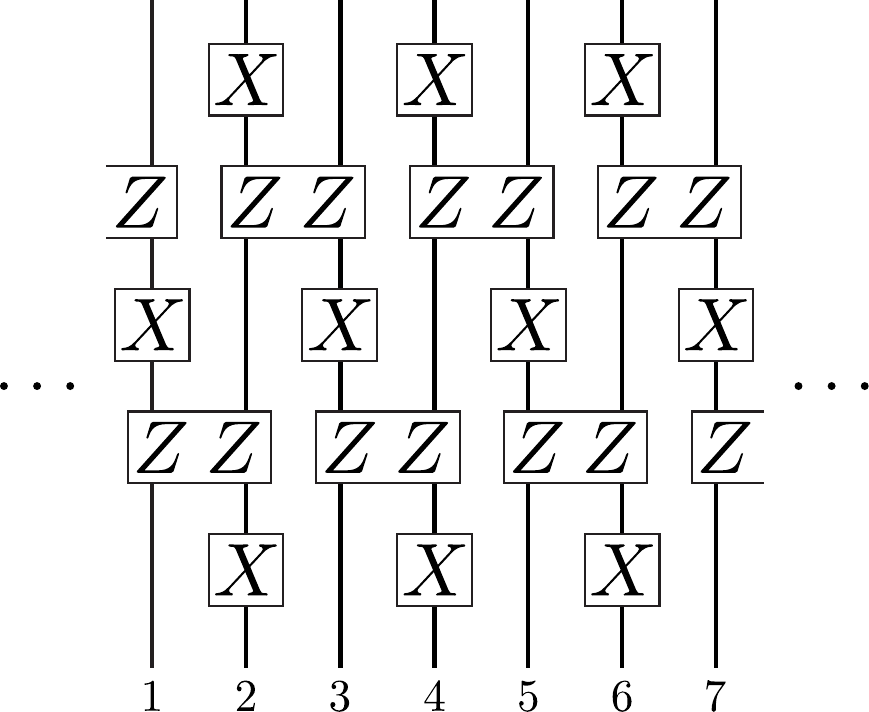}
	\caption{%
	A locally reversible circuit which implements a translation on a chain of $n$ qubits with $n$ ancilla's. 
	After one cycle, the logical operators at odd-numbered sites are transports two sites over to the right.
	} \label{fig:Teleportation_chain}
\end{figure}
Overall, the measurement circuit has depth~4.
Ignoring Pauli corrections that depend on the measurement outcomes,
we may think of this depth-4 circuit as a measurement dynamics of period~4.
\begin{align}
\xymatrix{
\braket{X_{2j}} \ar[r]^{1}& \braket{Z_{2j-1} Z_{2j}} \ar[d]^{2} \\
\braket{Z_{2j} Z_{2j+1}}\ar[u]^{4} & \braket{X_{2j-1}} \ar[l]^{3}
}
\end{align}
A logical qubit, initially at site~$j$ (on qubit~$2j-1$),
is teleported after the second measurement layer to qubit~$2j$,
and then to qubit~$2j+1$ on site~$j+1$.
That is, up to Pauli corrections, every logical qubit is teleported by one site to the right.
Modulo classical information of measurement outcomes that must be transferred towards the right,
we have achieved translation by measurement circuit.
This is contrasted to unitary dynamics where a net translation
is never achieved by a shallow unitary circuit~\cite{GNVW}.
Note that no matter what state the logical qubits are in,
the classical information that we are ignoring 
has exactly the same distribution --- the uniform distribution.

We can check the reversible pair condition to this translation circuit.
As the depth is~4, we consider four pairs,
but each pair is the same as any other modulo regrouping of qubits.
Any pair of measurement layers in the circuit consists of nonoverlapping two-qubit circuits.
It is thus obvious to see the condition satisfied.
A major difference from the iterated teleportation example above
is that now the stabilizer groups' bases that show 
conjugate pairs according to~\ref{thm:Reversibility}(a)
are all local, acting on at most two neighboring qubits.

\section{Locally reversible measurement cycles}
\label{sec:LRMC}

A {\bf measurement circuit} is a sequence of measurement steps $M_t$ ($t = 0,1,2,\ldots$),
where each measurement step is a set of mutually commuting $\ell$-local Pauli operators:
$M_t \subset \calP$ and $M_t^\perp \supseteq M_t$.
($\bullet^\perp$ denotes the set of all Paulis which commute with every element of $\bullet$.)
At each time step, the {\bf instantaneous stabilizer group} (ISG) 
is a set of finitely supported Pauli operators with definite eigenvalue.
From a measurement circuit, its instantaneous stabilizer groups~\cite{Hastings2021} are given by
\begin{align}
	\ISG_t = \Braket{ \ISG_{t-1} \cap M_t^\perp , M_t} .
	\label{eq:def_ISG}
\end{align}
At each time step, the $\ISG_t$ contains the group generated by $M_t$,
but also retains elements from the previous step which commute with the current measurement set.

Given a time-periodic measurement circuit, 
i.e., $M_t = M_{t+T}$ for all $t \ge 0$ and some $T > 0$,
if we start with $\ISG_{-1} = \{ \one \}$,
the evolution according to~\eqref{eq:def_ISG}
may lead to a steady, time-periodic sequence of ISGs.
That is, $\ISG_{t + T'} \equiv \ISG_{t}$ for all $t \ge t_0$ and some $T' > 0$,
where $\equiv$ means that the two stabilizer groups are the same up to signs.
This is the situation of our interest in this paper.
We do not necessarily claim that the time series of ISGs of any time-periodic measurement circuit
must enter into a steady state.
It would be interesting on its own to study the transient part of the dynamics,
but we do not address this question in this paper.

From now on we focus on {\bf locally reversible measurement cycles}.
This will be abbreviated as LR cycles or LRMC.
Formally an LRMC is a tuple $(\calA_0,\circuit C)$,
where $\calA_0$ is a stabilizer code and $\circuit C$ is a finite depth measurement circuit specified by $M_1,M_2,\ldots,M_{T}$,
subject to the condition that 
the ISGs determined by~\eqref{eq:def_ISG} form a cyclic chain of locally reversible pairs
\begin{align} \label{eq:LRMcircuit}
	\calA_0 \to \calA_1 \to \calA_2 \to \cdots \to \calA_{T-1} \to \calA_T \equiv \calA_0
\end{align}
where the last ISG is the same as the first up to signs.
The initial stabilizer group~$\calA_0$ is called the {\bf base} or {\bf background}.
Note that there are exactly $T$ locally reversible pairs in this sequence.
We will simply say that the sequence of stabilizer groups is locally reversible.

Remark that a tuple $(\calA_0,\circuit C)$ defines a sequence of ISGs by~\eqref{eq:def_ISG},
but there can be many different circuits $\circuit C$ that give the same sequence of ISGs up to signs.
Indeed, given a sequence of stabilizer groups, each of which admits an $\ell$-local generating set,
the transition can be obtained by measuring all elements of the $\ell$-local generating set,
the order of which is not important because they all commute with each other.
There can be many choices of an $\ell$-local generating set for a given stabilizer group,
and any choice will result in the same sequence of instantaneous stabilizer groups.
We wish to equate these different measurement circuits that give the same sequence of ISGs up to signs.
We make this precise with a simple equivalence relation.

\subsection{Simple equivalence and group structure}\label{sec:SimpleEquivalence}

Since we consider finite depth circuits,
we can compose them for any finite number of times.
The composed dynamics is still locally reversible by definition.
It may be useful to think of an abstract simple graph,
where the vertices are each a stabilizer group and an edge exists if and only if 
the two stabilizer groups at the ends of the edge are a locally reversible pair.
A locally reversible measurement cycle corresponds to a loop in this abstract graph
and the composition of dynamics corresponds to the composition of the loops.
Hence, we have a monoid of all locally reversible measurement cycles
based at a fixed background stabilizer group.
(A monoid is an algebraic structure similar to a group but not necessarily with the inverse operation.)
We can turn this monoid into a group by introducing a {\bf simple equivalence} relation
in a manner analogous to the group of local unitary circuits.

It is useful to construct a unitary circuit corresponding to a reversible measurement circuit.
Consider the measurement of Pauli operator~$B$ whose outcome is uniformly random
for the reason that there was an anticommuting stabilizer~$A$.
For any pair of anticommuting Pauli operators~$A,B$, we have a unitary
\begin{align}
	U_{A,B} = \frac{A + B}{\sqrt 2} .
\end{align}
If a state~$\ket \psi = A \ket \psi$ was stabilized by~$A$,
then $U_{A,B} \ket \psi = B U_{A,B} \ket \psi$ is stabilized by~$B$:
\begin{align}
	B U \ket \psi = \frac{1}{\sqrt 2} B (\one + B) \ket \psi = \frac{1}{\sqrt 2}(B + \one) \ket \psi = \frac{1}{\sqrt 2}(B + A) \ket \psi = U \ket \psi .
\end{align}
In fact, $U_{A,B} \ket{\psi}$ is the state after obtaining measurement outcome $B= +1$.
(A generalization for higher dimensional qudits can be found in~\ref{lem:Fourier}.)
More generally, for any locally reversible transition $\calA\to\calB$,
we may replace \emph{all} measurements by a product of such unitaries,
\begin{align}
	U_{\{A_i\},\{B_i\}} = \prod_i \frac{A_i+B_i}{\sqrt 2}
\end{align}
with $\{A_i\}$, $\{B_i\}$ being local conjugate bases for $(\calA,\calB)$.
Because $A_i$ commutes with $B_j$ for $i \neq j$, 
the unitary $U_{\{A_i\},\{B_i\}}$ can be implemented by a finite-depth unitary circuit.
If the signs for $\{A_i\}$ and $\{B_i\}$ were chosen such that $\ket{\psi}$ was stabilized
	by $\{A_i\}$ (with $+1$ eigenvalue) and mesaurement outcome for $\{B_i\}$ are all $+1$,
then $U_{A_i,B_i} \ket{\psi}$ is the state after the transition $\calA\to\calB$.
Since any states with different measurement outcomes 	differ only by a (locally finite) Pauli operator,
we may say that the evolution of the ISG by measuring generators of~$\calB$ 
is the same as that by applying~$U$.
Note that the unitary is \emph{not canonical};
different choices of conjugate bases for $(\calA,\calB)$ will give different unitaries.

Returning to the subject of simple equivalence for circuits,
first, we allow insertion and deletion of redundant measurements.
Measuring a Pauli on a state that is already stabilized by the Pauli does not change the state.
In particular, if a measurement gate in a layer is a product of some other operator that is being measured,
we may omit this redundant measurement gate.
We say that two circuits are simply equivalent if 
one is obtained from the other 
by adding or removing such redundant measurements 
as long as the operators of the redundant measurements are $\ell$-local.
It is possible that an entire layer is eliminated after removing redundant measurements.
This may shorten the sequence of instantaneous stabilizer groups.
For example, a sequence $(\calA \to \calA)$ can always be shortened to~$(\calA)$.
In this case, the corresponding unitary circuit for $\calA \to \calA$ can be chosen to be the identity.

Second, we allow insertion and deletion of sandwiched anticommuting measurements.
Concretely, suppose that a locally reversible transition~$\calA \to \calB \to \calC$ of length~2
is achieved by measuring all elements of some local conjugate bases of the ISGs.
Consider two consecutive measurements by~$B \in \calB$ and $C \in \calC$ which are a conjugate pair between~$\calB$ and~$\calC$,
and suppose that $C \equiv A$ was also an element of~$\calA$ 
participating in the local conjugate basis between~$\calA$ and~$\calB$.
(For example, this may happen if $\calA = \calC$.)
Then, we allow deletion of the measurement of~$B$.

The deletion of measuring~$B$ \emph{does change} the underlying sequence of ISGs;
however, local reversibility is retained and the resulting state remains the same (up to Pauli corrections).
To see this, consider the converted unitary circuit.
Since $A \equiv C$ and $B$ are a conjugate pair,
we may replace the measurements of~$B$ followed by~$C$ with~$U_{B,C} U_{A,B}$.
Observe that $U_{B,C}$ is equal to $U_{A,B}^{-1}$ up to a Pauli ($U_{A,B} U_{A,-B} = BA$);
hence the deletion of~$B$ amounts to removing these unitary gates.
The only difference between the measurement circuit and the converted unitary circuit 
is a Pauli operator that only depends on random measurement outcomes.
Application of this rule allows the locally reversible sequence $\calA \to \calB \to \calA$ 
to be shortened to $\calA$.

As the name implies, a locally reversible measurement cycle can be undone,
reversing the sequence of ISGs.
\begin{align} \label{eq:LRMcircuitRev}
	\calA_0 \equiv \calA_T \to \calA_{T-1} \to \cdots \to \calA_{1} \to \calA_0 \,.
\end{align}
Note that this reverse dynamics is in general 
not achieved by simply reversing the time order of measurement gates.
If an LR cycle consists of a base code $\calA_0$ and a circuit
$\circuit C = (M_1, M_2, \ldots, M_T)$ of measurement layers,
the order-reversed measurement circuit $(M_T, M_{T-1},\ldots, M_1)$ 
with background~$\calA_0$ does not make much sense
since it is redundant to measure $M_T$ on the base code $\calA_T \equiv \calA_0$
and it is not clear why the last ISG after measuring $M_1$ must be the base code.
A better attempt is to consider a measurement circuit
$(M_{T-1}, M_{T-2}, \ldots, M_1, M_T)$
as the measured operators in each layer belong to the next ISG in~\eqref{eq:LRMcircuitRev}
so that the circuit may induce transitions
$\calA_0 \equiv \calA_T
\underset{\substack{\\[-1ex]?}}{\xrightarrow{M_{T-1}}} \calA_{T-1}
\underset{\substack{\\[-1ex]?}}{\xrightarrow{M_{T-2}}} \cdots
\underset{\substack{\\[-1ex]?}}{\xrightarrow{M_1}} \calA_1
\underset{\substack{\\[-1ex]?}}{\xrightarrow{M_T}} \calA_0$.
This circuit often works, but not always.\footnote{%
For example, a cycle $\big(\calA_0 = \braket{ZZ,XY}, \circuit C=(M_1=\{IZ\}, M_2=\{XI\}, M_3=\{IY\}, M_4=\{ZZ\}) \big)$
gives an ISG sequence 
$\braket{Z_1Z_2,X_1Y_2} \xrightarrow{M_1} \braket{Z_1,Z_2} \xrightarrow{M_2} \braket{X_1,Z_2} \xrightarrow{M_3} \braket{X_1,Y_2} \xrightarrow{M_4} \braket{Z_1Z_2,X_1Y_2}$,
but $\calA_0=\braket{Z_1Z_2,X_1Y_2} \xrightarrow{M_3} \braket{X_1,Y_2} \xrightarrow{M_2} \braket{X_1,Y_2} \xrightarrow{M_1} \braket{X_1,Z_2} \xrightarrow{M_0=M_4} \braket{Z_1,Z_2}$
which does not even reach the initial stabilizer group up to sign.
That is, $\big(\calA_0, \{M_3,M_2,M_1,M_4\}\big)$ 
does not constitute a valid LR cycle.
}
However, we know from previous discussion that
we can always find a circuit~$\circuit C' = (M_1', M_2', \dots, M_T')$ that is simply equivalent to~$\circuit C$
such that its order-reversed version $(M_{T-1}',\dots,M_1',M_T')$ produces the reversed sequence of ISGs.

Hence, based at a fixed stabilizer group~$\calA_0$,
the collection of all simple equivalence classes of LR cycles, is a group
because for any LR cycle $(\calA_0, \circuit C)$ with an ISG sequence~\eqref{eq:LRMcircuit} 
we can construct an inverse cycle~$(\calA_0, \circuit D)$ with ISG sequence~\eqref{eq:LRMcircuitRev}
such that the composition~$(\calA_0, \circuit C \circ \circuit D )$
is simply equivalent to the trivial LR cycle~$(\calA_0, \emptyset)$.
Indeed, the first primitive allows us to replace a given measurement round of~$\circuit C$ 
with one that measures elements of conjugate bases,
and then the second primitive removes all the middle measurements in $\calA_{t-1} \to \calA_t \to \calA_{t-1}$,
shortening the sequence inductively.

Thus, the simple equivalence among locally reversible measurement cycles
is defined by the two primitives.
This simple equivalence relation allows us to think of any LR cycle
as a sequence of stabilizer groups that is periodic up to signs,
without explicitly referring to the measurement gates.
However, two simply equivalent dynamics may not behave the same way,
especially if one is interested in its behavior against perturbations
as in fault tolerant gadgets in quantum computing literature.
There, it is often essential to repeat measurements.
This aspect is not considered in this paper.

\subsection{Blendings and topological codes}
\label{sec:blending}

For LR cycles in a lattice of the same dimension~$\dd$,
we consider their blending or interpolation 
in analogy with that of quantum cellular automata~\cite{GNVW,FreedmanHastings2019QCA,FHH2019,Haah2022b}.
There can be various adaptations of the unitary notion to our measurement setting,
but we will consider two versions below.
Each version can be thought of as a certain class of spatial boundary conditions.

\begin{figure}[ht]
	\centering
	\begin{minipage}{162mm} \raggedright \includegraphics[width=162mm]{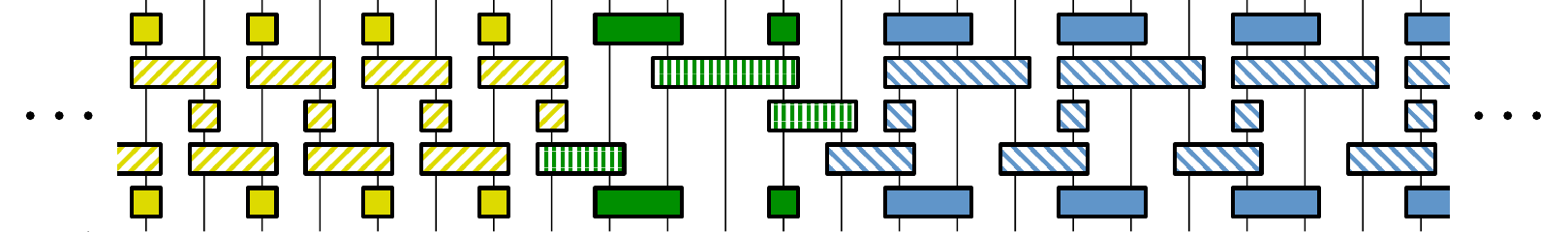}
	\\[-2ex] ${}\mkern2mu\mkern6mu	
		\mathord{\underbrace{\hspace{72mm}\mkern-0.7mu\mkern-12mu}_{\displaystyle\strut A}} \mkern12mu
		\mathord{\underbrace{\hspace{90mm}\mkern-0.7mu\mkern-12mu}_{\displaystyle\strut B}}$
	\\[-3ex] ${}\mkern1mu\mkern6mu \hspace{60mm}
		\mathord{\underbrace{\hspace{24mm}\mkern-0.7mu\mkern-12mu}_{\displaystyle\strut I}}$
	\end{minipage}
	\caption{%
	Illustrative example of a blending in one dimension, between $(\calA_0, \circuit A)$ (yellow) with $(\calB_0, \circuit B)$ (blue).
	The lattice is divided into $A$ (left half) and $B$ (right half); the interface region $I$ has 4-site width.
	Stabilizers and circuit operators supported on $A \setminus I$ and $B \setminus I$ matches those of $(\calA_0, \circuit A)$ and $(\calB_0, \circuit B)$ respectively.
	} \label{fig:circuit_blending}
\end{figure}

The first that we call a {\bf blending} is defined as follows.
Take two LR cycles $(\calA_0,\circuit A)$ and $(\calB_0,\circuit B)$ 
on a $\dd$-dimensional lattice.
Without loss of generality, we can always assume two circuits $\circuit A$ and $\circuit B$ 
act on the same lattice; if not, we insert inert ancillary degrees of freedom.
Then, given a spatial boundary, a blending between the two dynamics 
is a choice of an LR cycle~$(\calC_0, \circuit C)$ on the full lattice
subject to conditions as follows.
The spatial boundary divides the full lattice into two regions~$A$ and~$B$,
and we consider an interface region~$I$ that is a small-distance neighborhood 
of the geometric boundary between~$A$ and~$B$.
The interface~$I$ is extended along the boundary, but has a microscopic width.
The blending base group~$\calC_0$ is required 
to include any element of~$\calA_0$ if it is supported on~$A \setminus I$
and any element of~$\calB_0$ if it is supported on~$B \setminus I$.
Similarly, the blending circuit~$\circuit C$ is required to include measurement gates 
of~$\circuit A$ over~$A \setminus I$ and those of~$\circuit B$ over~$B \setminus I$ 
up to some insertion of idling steps in between.
Due to the possible idling steps, 
two circuits~$\circuit A$ and $\circuit B$ that is being blended may have different periods.
We do not impose any requirement on the gates over the interface region~$I$
or the base group~$\calC_0$ besides local-reversibility.
See Fig.~\ref{fig:circuit_blending} for an illustrative example of blending.
However, naive deletion of measurement gates may not give a blending
because either periodicity or the local reversibility can be broken.
Hence, the existence of a blending between LR cycles is not always obvious.

The second is a subclass of blendings
and applies only for a smaller family of LR cycles with topological background codes that we now define.
Given an $\ell$-locally generated Pauli stabilizer group~$\calA$ on a $\dd$-dimensional lattice of qubits,
an {\bf excitation} by a Pauli operator~$P$ is define to be 
an $\ell$-local stabilizer generator that anticommutes with~$P$.
For any region~$R$ of the lattice we denote by~$R^{+\ell}$ the $\ell$-neighborhood of~$R$.

\begin{definition}\label{defn:TopologicalCode}
$\calA$ is {\bf topological} if all of the following conditions are met with some $\ell > 0$: 
\begin{enumerate}[itemsep=1pt, parsep=0pt, topsep=2pt]
	\item[(i)] for any Pauli operator~$P$ supported on a finite convex region~$R$,
if $P$ commutes with all stabilizers, 
then $P$ is a product of $\ell$-local stabilizer generators,
each of which is supported on~$R^{+\ell}$,

	\item[(ii)] if a subset of excitations by a Pauli operator~$P$ 
is contained in a finite convex region~$R$,
but no excitation's support overlaps $R^{+\ell}\setminus R$,
then for any ball~$B$ of diameter~$\ell$ in~$R$
there exists another Pauli operator~$Q$ supported on~$R^{+\ell}$
such that $PQ$ causes no excitation in $R^{+\ell} \setminus B$, and

\item[(iii)] if the set of all excitations by a finitely supported Pauli operator~$P$
is contained in a convex region~$R$,
then there exists another Pauli operator~$Q$ supported on~$R^{+\ell}$
such that $PQ$ causes no excitation.
\end{enumerate}
A {\bf topological} LR cycle is one that has a topological code as a base stabilizer group.
A {\bf topological blending} of topological LR cycles is a blending
such that the blending base stabilizer group is topological on the whole lattice.
\end{definition}

Here are a few remarks about the present definition of topological codes.
We will liberally use conventional languages that are not defined in the present paper.

The condition~(i) for topological codes is a tailored version of 
what is called ``local topological order condition'' or TQO-1,2 
that are used in an energy gap stability proof~\cite{BravyiHastingsMichalakis2010stability}.
This condition implies that the local reduced density matrix of any code state 
is determined by the stabilizer generators around them.
In particular, there cannot be any nontrivial logical operator supported on any finite region.
All ``exact'' codes in the sense of~\cite{Haah2013} satisfy this first condition
at least for diamond-shaped regions (balls in the $\ell_1$-metric).
Many examples of fracton phases~\cite{Nandkishore2018,Pretko2020} satisfy the condition~(i).

The condition~(ii) says that we should be able to localize any excitation at any point in space
as long as it is isolated by a small distance~$\ell$, 
having no other excitation in~$R^{+\ell} \setminus R$.
Indeed, if we have a bunch of point-like excitations as in the 2d toric code 
or gauge charges in higher dimensions, 
we can always bring them to any point by some string operators.
On the other hand, a flux loop in a gauge theory in~$\dd \ge 3$ must remain as a flux loop,
and if we choose a region~$R$ to enclose a proper portion of a flux loop,
we cannot localize the excitation within~$R$ to a point.
This is not a violation of the condition~(ii) since a loop is not isolated.
Note that the two conditions~(i) and~(ii) allows for gapped boundaries~\cite{BravyiKitaevSurfaceCode}.
Indeed, the surface code on a left half plane has a gapped vertical boundary
and there is no nontrivial finitely supported logical operator, satisfying~(i).
Any excitation can be transported to any given point by some string operator, implying~(ii).
On the contrary, fracton codes~\cite{Nandkishore2018,Pretko2020} in~$\dd \ge 3$ violates the condition~(ii).
By definition a fracton is an isolated, nonlocally created excitation
whose antiparticles are never point-like.
If we choose a region~$R$ that contains an isolated, single fracton,
then the condition requires that we find an operator~$Q$ that can move this fracton 
to any given point within~$R$, which is forbidden.

The condition~(iii) rules out any gapped boundaries and enforces more strict spatial homogeneity.
A gapped boundary allows for a topologically charged excitation to enter the bulk from the boundary
by a finitely supported operator.
If we take~$R$ to be a small ball enclosing the excitation in the bulk,
then the condition~(iii) demands that we should be able to annihilate it by acting around the small ball,
which contradicts the fact that excitation is topologically nontrivial.

Finally, we note that every translation invariant code with macroscopic code distance in~$\dd=2$
is a topological code in the present sense 
because it is finite depth quantum circuit equivalent to finitely many copies of the toric code~\cite{Haah2018}.
Virtually any Pauli stabilizer code that has a topological quantum field theory description
fits into the present definition of topological codes.

\subsection{Boundary actions of topological LR cycles: MQCA}
\label{sec:boundaryactions}

For convenience, we say a {\bf blending into vacuum} or simply {\bf vacuum blending}
to mean a blending between some LR cycle and the ``trivial'' LR cycle based
at the stabilizer group generated by the single-qubit~$Z$ on every qubit.
A vacuum blending is not necessarily topological.
The two major examples in \S\ref{sec:WPT} and \S\ref{sec:HHcode} below 
will have nontopological vacuum blendings with various interfaces.

A vacuum blending for a topological LR cycle starts with a base stabilizer code~$\calA$
that has a spatial boundary and that is topological on one side,
by which we mean that the three conditions of topological codes
hold for regions~$R$ contained on the half space.
Let us look at this base code~$\calA$ more closely.
Since the vacuum side has single-qubit stabilizers, we can ignore that region.
By definition, the base code~$\calA$ 
contains all generators of a topological code~$\calB$
that are supported $5\ell$-away from the spatial boundary of the blending.
(Of course, the constant~$5$ is just conveniently chosen to clarify what we do.)
There can be many finitely supported nontrivial logical operators due to the boundary,
although there is none in the bulk.

Assuming that the spatial boundary is a flat hyperplane,
we show that there is a sharply defined region near the boundary, say within distance~$20\ell$,
on which all finite logical operators are represented.
We may say that this is the interface region~$I$ of the vacuum blending.
To this end, let $P \in \calA^\perp$ be any logical operator that is finitely supported.
The lattice for~$\calA$ is the same as that for~$\calB$ at least in the bulk,
that is away from the boundary by distance, say $5\ell$.
By the topological code condition~(iii),
$P$ can be capped off by some Pauli operator~$P'$
such that $P P'$ is a stabilizer of~$\calB$
and $P'$ is supported on the $\ell$-neighborhood of the minimal convex region 
that supports all the noncommuting stabilizer generators of~$\calB$ against~$P$.
The convex region has to be spatially near the boundary
because that is the only region where $P$ may not commute with~$\calB$.
So, $P'$ is supported on within distance $10\ell$ from the boundary.
Then, the topological code condition~(i)
implies that $PP'$ is a product of stabilizer generators of~$\calB$,
each of which is $\ell$-close to the support of~$P P'$.
In this product, the generators supported in the bulk is also a member of~$\calA$
and they cancel the tensor components of~$P$ over the bulk.
Hence, we have a stabilizer of~$\calA$ 
that cancels the tensor components of~$P$ over the bulk, 
leaving some operator near the boundary, within distance~$15\ell$.
This final operator is within distance~$5\ell$ from~$\Supp(P)$.
We conclude that all finitely supported logical operators for the base stabilizer group~$\calA$ 
have a representative localized at the boundary, strictly within the distance~$15\ell$ from the boundary.
Note that our claimed interface region has width~$20\ell$ from the boundary.\footnote{%
	We have not used the condition~(ii) of~\ref{defn:TopologicalCode} for this localization.
	We will use (ii) in the proof of~\ref{lem:IndexSeparation}.
}

This localization implies that, to consider the full group of logical operators,
it suffices to restrict stabilizers in $\calA$ to the interface~$I$
and take the commutant~$\calL$ of the restriction
within the full Pauli group on~$I$.
Here, the restriction means that we drop any tensor component that is not acting on~$I$.
For example, if $B \otimes F$ is a Pauli operator acting on the ``interior of the bulk'' by~$B$
and on~$I$ by~$F$, then the restriction of~$B \otimes F$ to~$I$ is $\one \otimes F$.
(There is a phase ambiguity in the restriction, which will be inconsequential.)
The restriction of~$\calA$ to the interface region~$I$ 
generates a group~$\Pi_I \calA$ that is not necessarily abelian even though $\calA$ is abelian.
If two operators~$P,Q \in \calL$, which are supported on the interface~$I$ by construction,
are equivalent up to a stabilizer~$S \in \calA$,
{\it i.e.}, $P = QS$,
then, trivially, $S = Q^{-1}P $ is also supported on~$I$.
So, the group of all equivalence classes of finitely supported logical operators of~$\calA$
is precisely~$\calL / \calA_I$ where
$\calA_I = \{ S \in \calA ~|~\Supp(S) \subseteq I \} \subseteq \Cent(\calL)$.
We will refer to $\calL / \calA_I$ as the {\bf boundary logical quotient group}.

Now, the circuit dynamics that is a composition of locally reversible transitions
gives an invertible evolution rule of~\ref{thm:LP}
under which every logical operator representative~$P$, 
which can be found within distance~$15\ell$ from the boundary,
is mapped to an equivalence class~$[Q]$ of logical operators up to stabilizers
with a representative $Q$ supported near $P$
within distance that depends only on $\ell$ times the circuit depth.
Localizing $Q$ towards the boundary, we obtain~$Q'$
that is supported within distance~$5\ell$ along the direction that the interface is extended
and within distance~$20\ell$ from the boundary.
Hence, $P \mapsto [Q']$ is well defined map on the interface region~$I$.
It follows that the circuit dynamics defines an automorphism on~$\calL / \calA_I$
with the extra locality-preserving property.
This is an example of the following.

\begin{definition}\label{defn:MQCA}
Let $\calL$ be a group of finitely supported Pauli operators 
on a $\dd$-dimensional lattice with finitely many qubits per site.
For a normal subgroup $\calS \subseteq \calL$,
an automorphism~$\alpha$ of the quotient group~$\calL/\calS$ is an {\bf MQCA} 
if there exists a constant length~$r$ (called a range or spread) such that 
for any~$P \in \calL$ the image~$\alpha([P])$
can always be represented by~$Q \in \calL$ that is supported on the $r$-neighborhood of the support of~$P$.
\end{definition}

A Clifford QCA in the existing sense~\cite{clifQCA,clifQCAclassification} 
is an MQCA with $\calL$ being the full Pauli group and $\calS = \{ \one \}$.
Thus, our MQCA is a generalization of Clifford QCA.
As we have shown above, every vacuum-blending of a topological LR cycle
in $\dd$ dimensions gives an MQCA in $\dd-1$ dimensions.
Though our definition~\ref{defn:MQCA} of MQCA does not mention measurements explicitly,
our examples come from measurement circuits, hence the name ``M''QCA.
In all cases we consider, $\calL$ is a group of logical operators.
The translation in~\S\ref{sec:trans1d_LRMC} induced by an LR cycle in one dimension
is an example of MQCA in one dimension, which happens to be a unitary Clifford QCA.
Of course, to have a definite automorphism we need to fix the measurement outcomes (i.e., postselection),
but we have discussed that different measurement outcomes lead to the same automorphism 
up to a Pauli circuit of depth~1.

\section{Index of one-dimensional MQCA}
\label{sec:MQCA_1D}

For one-dimensional unitary QCA and quantum walk~\cite{Kitaev_2005,GNVW},
it has been fruitful to use Fredholm operators~\cite{avron1994index,Lax},
which we will review briefly.
Since a group of Pauli operators over qubits is an $\FF_2$-linear space if we ignore all the phase factors,
our MQCA can be thought of as an invertible linear map acting on an infinite dimensional $\FF_2$-vector space.
Following~Refs.~\cite{avron1994index,Kitaev_2005,GNVW},
we adapt the construction of an index for a linear map to our MQCA.

\subsection{Fredholm maps}

Let $\FF$ be a field ({\it e.g.}, the field of complex numbers, real numbers, rational numbers, or a finite field~$\FF_2$).
A {\bf Fredholm} map~$\phi : V \to W$ is a linear map between $\FF$-vector spaces,
such that the kernel $\ker \phi = \{ v \in V ~|~ \phi(v) = 0 \}$ and the cokernel
$\coker \phi = W / \im \phi = W / \{ \phi(v) ~|~ v \in V \}$ are both finite dimensional.
If $V$ and $W$ are finite dimensional, all linear maps are Fredholm.
Hence, this notion is interesting only in an infinite dimensional setting.
It is well known~\cite[Chap.~2]{Lax} that an $\FF$-linear map~$\phi : V \to W$ is Fredholm 
if and only if
there exists another $\FF$-linear map $\eta : W \to V$ such that 
$(\phi\eta - \one) : W \to W$ and $(\eta \phi - \one) : V \to V$
are both finite rank ({\it i.e.}, having a finite dimensional image).
Such $\eta$ is called a {\bf pseudoinverse} of~$\phi$.
Thus, colloquially speaking, 
a Fredholm operator is an almost invertible linear map except for a tiny subspace.

The {\bf index} of a Fredholm map~$\phi$ is an integer defined as
\begin{align}
	\ind \phi = (\dim_\FF \ker \phi) - (\dim_\FF \coker \phi).
\end{align}
The two dimensions are assumed to be finite, so is the index.
The Fredholm index is stable against any finite dimensional perturbations:
if $\delta : V \to W$ is any linear map of finite rank
and $\phi : V \to W$ is Fredholm,
then
\begin{align}
	\ind(\phi + \delta) = \ind(\phi). \label{eq:FredholmStability}
\end{align}
In addition, the Fredholm index satisfies
\begin{align} \begin{split}
\label{eq:FredholmPlus}
	\ind(\phi_1 \oplus \phi_2) &= \ind(\phi_1) + \ind(\phi_2),
\\	\ind(\phi_1 \phi_2) &= \ind(\phi_1) + \ind(\phi_2)
\end{split} \end{align}
whenever they are defined.
The proof in~\cite[Chap.~2]{Lax} for these properties~\cite{Sarason1987} is surprisingly elementary.
Note that the cofficient field~$\FF$ is arbitrary.

\subsection{The MQCA index}\label{sec:MQCA-def-prop}

We have to identify an appropriate Fredholm map 
starting with a one-dimensional MQCA
\begin{align}
	\alpha : \calL / \calS \to \calL/\calS.
\end{align}
Although $\alpha$ can be regarded as a linear map on an $\FF_2$-vector space
by forgetting phase factors,
this is invertible, so the Fredholm index of~$\alpha$ is automatically zero.
A guiding principle to avoid this triviality
is to consider a semi-infinite interval
and examine the restricted action by~$\alpha$.

Motivated by the infinite Ising chain, we consider a quotient as follows.
For any region~$R \subseteq \ZZ$,
let $\calL_R$ denote the subgroup of~$\calL$ consisting of all elements supported on~$R$.
If $R$ is an interval, we will abbreviate $R$ by an obvious inequality such as $\calL_{<a} = \calL_{(-\infty,a)}$.
Similarly, let $\calL^{/\calS}_R$ denote the subgroup of~$\calL/\calS$
consisting of all elements that have a representative supported on~$R$.
With these, we define the following.%
\footnote{%
For the infinite Ising chain, we can consider $\calL$, the set of all finite tensor products of Pauli~$Z$,
and $\calS$, the subset of all finite tensor products of even number of Pauli~$Z$.
In this example, $\calL^{/\calS}_{-\infty} = \calL^{/\calS}_{+\infty} = \calL / \calS$,
because $Z_0 \calS$ has a representative on any site, so $Z_0 \calS \in \calL^{/\calS}_R$ for any nonempty~$R$.
If we have a collection of independent Ising chains,
and if $\calL$ is the set of all finitely supported logical operators
and $\calS$ be the stabilizers,
then $\big( \calL^{/\calS}_{-\infty} \calL^{/\calS}_{+\infty} \big) := \bigl\langle \calL^{/\calS}_{-\infty}, \calL^{/\calS}_{+\infty} \bigr\rangle \subseteq \calL/\calS$ 
captures the logical operators for all infinite and semi-infinite chains (e.g.\ Ising model with support on $(-\infty,a)$ or $(a,+\infty)$ sites).
For the Ising chain collection,
$\calF$ is generated by the logical operators 
of all the finite Ising chains;
$\calF \cong \calL/\Cent(\calL)$.
So, it might seem extravagant to form~$\calF$;
however, our construction allows for more general~$\calS \subseteq \calL$.
}
\begin{align} \label{eq:ConstructF} \begin{aligned}
	\calL^{/\calS}_R &= \big\{ x\calS \in \calL/\calS ~\big|~ x \in \calL ,\, \Supp(x) \subseteq R \big\} ,\\
	\calL^{/\calS}_{+\infty} &= \bigcap_{k \in \ZZ} \calL^{/\calS}_{ (k,\infty)  }  \,,\\
	\calL^{/\calS}_{-\infty} &= \bigcap_{k \in \ZZ} \calL^{/\calS}_{ (-\infty,k)  } \,,\\
	\calF &= (\calL/\calS) \Big/ \big( \calL^{/\calS}_{-\infty} \calL^{/\calS}_{+\infty} \big) ,\\
	\calF_{<a} &= \big( \calL^{/\calS}_{<a} \calL^{/\calS}_{+\infty} \big) \Big/ \big( \calL^{/\calS}_{-\infty} \calL^{/\calS}_{+\infty} \big) .
\end{aligned} \end{align}
The automorphism~$\alpha$ is defined on~$\calL/\calS$,
but we observe that $\alpha$ preserves~$\calL^{/\calS}_{-\infty}\calL^{/\calS}_{+\infty}$,
so we have an induced automorphism
\begin{align}
\bar \alpha :  \calF \to \calF.
\end{align}
To see this, we have to show
for any $x \calS \in \calL^{/\calS}_{+\infty}$
that $\alpha(x \calS) \in \bigcap_{k} \calL^{/\calS}_{(k,\infty)}$,
but this is straightforward:
for any~$k \in \ZZ$, we know that $x \calS \in \calL^{/\calS}_{(k+r,\infty)}$ 
and the locality-preserving property implies that $\alpha(x\calS) \in \calL^{/\calS}_{(k,\infty)}$.
A similar argument shows that 
$\alpha(\calL^{/\calS}_{-\infty}) \subseteq \calL^{/\calS}_{-\infty}$.

Let $a \le b$ be sites of the lattice~$\ZZ$
and choose a subgroup~$\calF_{\circ}$ 
such that $\calF_{<a} \subseteq \calF_\circ \subseteq \calF_{<b}$.
Let $\nu$ be a projection onto~$\calF_\circ$:
\begin{align}
	\calF \xrightarrow{\quad \nu \quad} \calF_{\circ} \,,
\end{align}
which is a left inverse of the inclusion~$\iota :\calF_\circ \to \calF$
so that $\nu \iota = \one$ on~$\calF_\circ$ and $(\iota \nu)^2 = \iota \nu$.
Consider a composition
\begin{align}\label{eq-nualphaiota}
	\phi &= \nu \bar \alpha |_{\calF_{\circ}} : 
	\calF_{\circ} \lhook\joinrel\xrightarrow{\quad\iota\quad} 
	\calF \xrightarrow{\quad\bar \alpha \quad} 
	\calF \xrightarrow{\quad\nu\quad} \calF_{\circ} \,.
\end{align}
\begin{definition}
	The MQCA index of~$\alpha$ is defined as:
	\begin{align}
		\Ind(\alpha) &= \half \ind(\phi) \in \half \ZZ .\label{eq:IndexMQCA}
	\end{align}
\end{definition}
The factor of half is a convention, 
which will make our~$\Ind$ equal to the usual index of one-dimensional unitary QCA~\cite{GNVW}
if $\alpha$ acts on a full local operator algebra.
This is a legitimate definition because

\begin{lemma}\label{lem:Fredholm2}
$\phi$ is Fredholm, and $\Ind(\alpha)$ does not depend on the choices of~$a,b,\nu$.
\end{lemma}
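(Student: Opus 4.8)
The plan is to establish two things: (i) that $\phi$ is Fredholm, and (ii) that the index $\ind(\phi)$ is independent of the auxiliary choices $a$, $b$, and the projection $\nu$. I would handle independence first, since it is the cleaner argument and reduces the Fredholm claim to a single convenient choice.

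\textbf{Independence of the choices.} Fix two choices $(\calF_\circ,\nu)$ and $(\calF_\circ',\nu')$ with $\calF_{<a}\subseteq\calF_\circ\subseteq\calF_{<b}$ and $\calF_{<a'}\subseteq\calF_\circ'\subseteq\calF_{<b'}$. The first observation is that $\calF_\circ$ and $\calF_\circ'$ differ only by a ``finite-dimensional'' piece: since $\calF_{<a}\subseteq\calF_\circ\cap\calF_\circ'$ and $\calF_\circ\calF_\circ'\subseteq\calF_{<\max(b,b')}$, and because the logical operators are localized at the boundary with finitely many independent generators per site (using the localization established in \S\ref{sec:boundaryactions} together with the finitely-many-qubits-per-site hypothesis in Def.~\ref{defn:MQCA}), the quotient $\calF_{<b}/\calF_{<a}$ is finite dimensional over $\FF_2$. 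Hence $\calF_\circ$ and $\calF_\circ'$ are commensurate: $\dim(\calF_\circ/(\calF_\circ\cap\calF_\circ'))$ and $\dim(\calF_\circ'/(\calF_\circ\cap\calF_\circ'))$ are both finite. A standard argument (as in Refs.~\onlinecite{avron1994index,Kitaev_2005,GNVW}) then shows that $\nu\bar\alpha|_{\calF_\circ}$ and $\nu'\bar\alpha|_{\calF_\circ'}$ differ by maps factoring through finite-dimensional spaces, so their indices agree by \eqref{eq:FredholmStability} and \eqref{eq:FredholmPlus}; concretely one can write $\calF_\circ = (\calF_\circ\cap\calF_\circ')\oplus (\text{finite})$ as $\FF_2$-spaces, compare the two compressions on the common part, and absorb the finite discrepancies. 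The choice of $\nu$ (as opposed to $\calF_\circ$) is irrelevant because any two projections onto the same subspace agree.

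\textbf{Fredholmness.} Given independence, it suffices to exhibit one choice of $a,b,\nu$ for which $\phi$ is visibly Fredholm. Take $b = a$, i.e.\ $\calF_\circ = \calF_{<a}$ with $\nu$ the projection killing $\calL^{/\calS}_{\ge a}$-classes. The locality-preserving property of $\bar\alpha$ with spread $r$ gives $\bar\alpha(\calF_{<a-r})\subseteq\calF_{<a}$ and $\bar\alpha^{-1}(\calF_{<a-r})\subseteq\calF_{<a}$. I would build a pseudoinverse $\eta = \nu\,\bar\alpha^{-1}|_{\calF_{<a}}$ and check that $\phi\eta-\one$ and $\eta\phi-\one$ are finite rank: both are supported, as $\FF_2$-linear maps, on the ``collar'' of classes representable in a window of width $O(r)$ around $a$, which is finite dimensional by the same per-site finiteness used above. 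This realizes $\phi$ as an almost-invertible map and certifies $\ker\phi$, $\coker\phi$ finite dimensional.

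\textbf{Anticipated obstacle.} The main subtlety is the claim that $\calF_{<b}/\calF_{<a}$ (and the various collar quotients) are finite dimensional. This is \emph{not} automatic for an arbitrary normal subgroup $\calS\subseteq\calL$: it relies essentially on $\calL$ being a group of \emph{boundary} logical operators of a topological code, so that finite logical operators localize to a bounded-width interface region (\S\ref{sec:boundaryactions}) and there are finitely many independent logical classes per unit length along the boundary. I expect the bulk of the proof to be a careful bookkeeping argument pinning down this ``finitely many generators per site after quotienting by $\calL^{/\calS}_{\pm\infty}$'' statement, tracking how the definitions of $\calL^{/\calS}_{\pm\infty}$ in \eqref{eq:ConstructF} interact with the spread $r$; once that finiteness is in hand, the Fredholm properties and the independence both follow from the elementary linear algebra of commensurable subspaces and the stability relations \eqref{eq:FredholmStability}--\eqref{eq:FredholmPlus}.
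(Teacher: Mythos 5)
Your plan is essentially the one the paper uses: build the pseudoinverse $\eta=\nu\,\bar\alpha^{-1}|_{\calF_\circ}$ to certify Fredholmness, and show that different choices of $(a,b,\nu)$ change $\phi$ only by finite-rank discrepancies that are absorbed by the stability and composition rules \eqref{eq:FredholmStability}--\eqref{eq:FredholmPlus}. The paper does it in the opposite order (Fredholmness first, then independence via an explicit diagram chase), but the ingredients are the same, and your commensurability framing is a fine packaging of the diagram chase.

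Two points deserve correction. First, the assertion ``any two projections onto the same subspace agree'' is false: two projections $\nu,\nu'$ onto $\calF_\circ$ have the same image but in general different kernels, so $\nu\neq\nu'$. The correct reason the choice of $\nu$ is immaterial is a locality argument: both projections restrict to the identity on $\calF_{<a}\subseteq\calF_\circ$, so $\nu-\nu'$ vanishes on $\calF_{<a}$; since $\bar\alpha(\calF_{<a-r})\subseteq\calF_{<a}$, the difference $(\nu-\nu')\bar\alpha|_{\calF_\circ}$ factors through $\calF_\circ/\calF_{<a-r}$, which is finite-dimensional, hence the difference is finite rank. This is precisely the pattern used repeatedly in the paper's proof. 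Second, the ``anticipated obstacle'' is a misread. The finite-dimensionality of $\calF_{<b}/\calF_{<a}$ does not rely on the boundary-localization argument of \S\ref{sec:boundaryactions}, which is only needed to show that the edge of a topological PLRMC \emph{produces} an MQCA. Once you are in the abstract setting of Def.~\ref{defn:MQCA}, which already postulates a one-dimensional lattice with finitely many qubits per site, the finiteness is immediate: the natural map $\calL_{<b}/\calL_{<a}\to\calF_{<b}/\calF_{<a}$ is surjective, and a class in $\calL_{<b}/\calL_{<a}$ is determined by the restriction of a representative to $[a,b)$, a finite set of qubits. So this piece is elementary bookkeeping, not the heart of the proof.
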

\begin{proof}[Proof sketch] 
	The map~$\phi$ is a Fredholm operator 
	because the operator $\nu\bar{\alpha}^{-1}$ gives a pseudoinverse.
	Under different choices of $a,b,\nu$, 
	the Fredholm map~$\phi$ differs by a finite rank map, 
	which does not change the index. 
	See \S\ref{sec:indexproof}.
\end{proof}

\begin{proposition}\label{MQCA-properties}
The MQCA index satisfies the following properties.
\begin{enumerate}
	\item 
	$\Ind(\qca\otimes \qcaR)=\Ind(\qca)+\Ind(\qcaR)$. 		 	
	\item
	$\Ind(\qca\qcaR)=\Ind(\qca)+\Ind(\qcaR)$.
	\\In particular, $\Ind(\id)=0$ and $\Ind(\qca^{-1})=-\Ind(\qca)$.
	\item
	If $U: \calL/\calS\to\calL'/\calS'$ is a locality-preserving isomorphism between groups of equivalent logical operators, and if $\qca: \calL/\calS\to\calL/\calS$ is an MQCA on $\calL/\calS$, then $U\qca U^{-1}: \calL'/\calS'\to\calL'/\calS'$ is an MQCA on $\calL'/\calS'$, and $\Ind(U\qca U^{-1})=\Ind(\qca)$.
	\item
	If $\calL/\calS$ is the full Pauli group, then $\Ind$ equals 
	($\log_2$ of) the GNVW index of 1d unitary QCA~\cite{GNVW} over qubits.
\end{enumerate}
\end{proposition}

\noindent
In item~1, $\qca \otimes \qcaR$ denotes the MQCA on a double layer system defined by applying $\qca$ and $\qcaR$ separately on each layer. 
The underlying logical groups for $\qca$ and $\qcaR$ can be different.
In item~2, $\qca\qcaR$ is a composition: first applying $\qcaR$ and then applying $\qca$. 
The underlying logical groups for $\qca$ and $\qcaR$ must be the same for $\qca\qcaR$ to make sense. 
It is clear that $\qca\qcaR$ is still an MQCA. 
In item~3, we conjugate a MQCA by a locality-preserving isomorphism of the logical group 
and obtain a new MQCA.
Here, $U$ being locality-preserving means that, similarly to the definition~\ref{defn:MQCA} of MQCA,
for any $P \in \calL$ the image~$U([P])$ can be represented by~$Q \in \calL'$ that is supported near~$P$.
Item~3 says that the index is invariant under the conjugation by~$U$.
Note that this does not follow from item~3, since $U$ is merely an isomorphism, not necessarily an automorphism.
Item~4 shows that our index is a proper generalization 
of the usual index of 1D Clifford QCA defined on a full Pauli group.
The proof of~\ref{MQCA-properties} is deferred to \S\nameref{app:MQCAprop}.

The following result, proved in \S\nameref{app:MQCAflow}, substantiates the intuition that 1d MQCA index is a measure of flow.

\begin{proposition}\label{prop:IndAsFlow}
	Suppose that $\calL$ is the commutant of a locally generated group of Pauli operators 
	within a full one-dimensional Pauli group
	and that $\calS / (\calS_{<a} \calS_{>b})$ is finite dimensional for some $a,b \in \ZZ$.
	Then,
	\begin{enumerate}
		\item $\Ind$ is a blending invariant, i.e., 
		if for two MQCA $\alpha,\beta$ on~$\calL / \calS$ there exists a third MQCA~$\gamma$ and $a, b \in \ZZ$ 
		such that $\gamma|_{\calF_{<a}} = \alpha|_{\calF_{<a}}$ and $\gamma|_{\calF_{>b}} = \beta|_{\calF_{>b}}$,
		then $\Ind(\alpha) = \Ind(\beta)$.
		\item If $\alpha^\mathrm{refl}$ on~$\calL^\mathrm{refl}/ \calS^\mathrm{refl}$ 
		is the MQCA obtained by the spatial reflection about any point in the 1d line of an MQCA~$\alpha$ on~$\calL/\calS$, 
		then $\Ind(\alpha^\mathrm{refl}) = - \Ind(\alpha)$.
		\item For any $a \in \ZZ$ there exists $b \in \ZZ$ such that $b < a$ and
		\begin{equation}
			\Ind(\alpha) = \half \Bigl( 
			\dim(\calF_{<a} \cap \bar \alpha^{-1} \calF_{>b}) 
			- 
			\dim(\calF_{<a} \cap \calF_{>b}) 
			\Bigr) . \label{eq:indexFormula}
		\end{equation} 
	\end{enumerate}
	The assumption that $\dim(\calS / (\calS_{<a}\calS_{>b})) < \infty$
	is fulfilled if either
	$\calS$ is locally generated
	or
	$\calS$ consists of all elements of a two-dimensional topological Pauli stabilizer group
	that are supported on a finite width strip.
\end{proposition}

In the formula~\eqref{eq:indexFormula} 
the first term $\dim(\calF_{<a}\cap\bar{\qca}^{-1}\calF_{>b})$ 
counts the number of logical operators 
that are originally represented in $(-\infty,a)$ 
and are mapped to a logical operator represented in $(b,\infty)$ under the MQCA. 
The second term calibrates the index so that it vanishes if $\alpha$ is the identity.

\subsection{Shift on Majorana chain algebra}

As an example of a noninteger index,
we consider $\calL$ generated by $L_i = X_i Z_{i+1}$ on a lattice with one qubit per site.
The generators satisfy the following commutation relations:
\begin{align}
	L_j L_k &= s L_k L_j \quad \text{ where } s = \begin{cases} -1 & \text{if } |j-k| = 1, \\ +1 & \text{otherwise}.  \end{cases}
	\label{eq:anomalous_1D_L}
\end{align}
This algebra is ubiquitous among the examples discussed in the following sections;
for any set of operators $\{L_k | k\in\ZZ\}$ obeying~\eqref{eq:anomalous_1D_L},
regardless of the context that they are constructed,
we call the \textbf{Majorana chain algebra}.%
\footnote{%
	Consider a 1D chain of Majorana zero modes $\{ \gamma_k \,|\, k\in\ZZ \}$, 
	such that the fermion operators obey $\gamma_k^\dag = \gamma_k$ 
	and $\gamma_k \gamma_l + \gamma_l \gamma_k = 2 \delta_{k,l}$.
	Local observables consist of an even number of fermion operators, 
	and are generated by bilinears $\ii \gamma_k \gamma_{k+1}$. 
	These bilinears obey the \MCAlg, hence the name.
}
In fact, 
any translation invariant algebra in one dimension generated by Pauli operators
has commutation relations consisting of three pieces:
those of a commutative subalgebra, 
those of a full Pauli algebra,
and those of a Majorana chain algebra~\cite[IV.22]{nta3}.

Set $\calS = \{ \one \} \subset \calL$.
Suppose an MQCA $\alpha : \calL/\calS \to \calL / \calS$ is given as follows:
\begin{align}
	\qca(L_j) = L_{j+1} \,.
	\label{eq:MC_shift_qca}
\end{align}
To calculate the index, we follow the construction of~$\calF$ in~\eqref{eq:ConstructF}.
The subgroups~$\calL^{/\calS}_{\pm \infty}$ at infinities are both~$\{ \pm \one \}$.
So, $\calF = \calL/\{ \pm \one \}$.
We choose a projection $\nu : \calF \to \calF_{\le 0}$ defined by
\begin{align}
\nu :  \pm L_k \mapsto 
\begin{cases}
\pm L_k  & \text{if } k \le 0, \\
\pm \one & \text{otherwise}.
\end{cases}
\end{align}
The language of vector spaces would make this more clear.
$\calF$ is precisely a vector space of all finite bit strings, one bit for each site~$k$, 
with finitely many nonzero components.
The projection~$\nu$ zeros out all components on sites~$k > 0$.
So, the Fredholm map in terms of the basis vectors~$e_k$,
each of which is the unit bit string vector of a sole nonzero component~$1$ at site~$k \le 0$, is
\begin{align}
	\phi : e_k \mapsto 
	\begin{cases} 
	e_{k+1} & \text{if } k \le -1, \\ 
	0 & \text{if } k = 0.  
	\end{cases}
\end{align}
This is surjective, but has a nonzero kernel spanned by~$e_0$.
Hence, $\ind(\phi) = \dim \ker \phi - \dim \coker \phi = 1 - 0 = 1$,
and
\begin{align}
	\Ind(\alpha) = \half \ind(\phi) = \half.
\end{align}

One can also calculate the index using Prop.~\ref{prop:IndAsFlow}. For any $b\leq a-2$, we have $\dim(\calF_{<a}\cap\calF_{>b})=a-b-2$ (a basis is $\braket{e_{b+1},\cdots,e_{a-2}}$) and $\dim(\calF_{<a}\cap\barqca^{-1}\calF_{>b})=a-b-1$ (a basis is $\braket{e_{b},\cdots,e_{a-2}}$), hence $\Ind(\alpha)=\frac{1}{2}$. Notice that in this example, $(-\infty,a)\cap(b,\infty)$ must be nonempty since $\calL$ is not a direct sum of on-site local groups.

\subsection{MQCA of one-dimensional LR cycles}

Here we show that the MQCA index of any LR cycle in one dimension is an integer.
This result is contrasted with the MQCA example above on the shift on \MCAlg,
which has MQCA index~$\thalf$.

We first analyze the base stabilizer group.
The structure of all stabilizer groups is very complicated in higher dimensions
with numerous examples in fracton phases with or without space translation invariance~\cite{Nandkishore2018,Pretko2020}
and more recently discovered infinite families~\cite{Michnicki2014,Portnoy2023}.
However, known stabilizer groups in one dimension are essentially those of Ising chains,
generated by $Z_i Z_{i+1}^\dag$ on some (possibly infinite) interval on the one-dimensional lattice.
(The hermitian conjugate is redundant for qubits.)
We show that in dimension one this is the only possibility.
An Ising chain over $n$ consecutive sites 
is an $n$ qubit (or qudit) system with a stabilizer group generated by~$Z_i Z_{i+1}^\dag$.
By convention, an Ising chain over one site ($n=1$)
is a qubit (or a qudit) with a stabilizer~$Z$ acting on it.

\begin{figure}[ht]
	\begin{align*} \xymatrix @R=12mm @M=2mm @L=2mm {
		\vcenter{\hbox{\includegraphics[width=100mm]{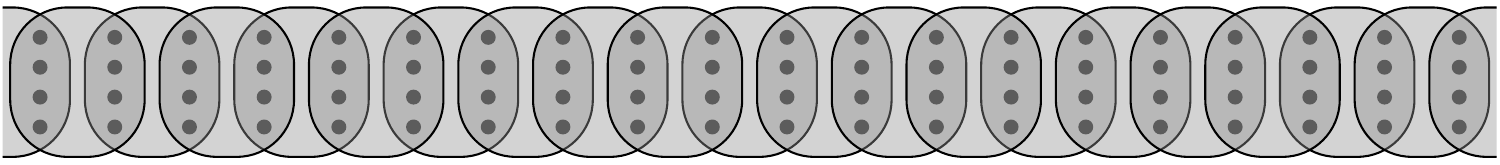}}} \ar[d]^(0.44){\displaystyle \prod_\text{sites $s$} U_s}
	\\	\vcenter{\hbox{\includegraphics[width=100mm]{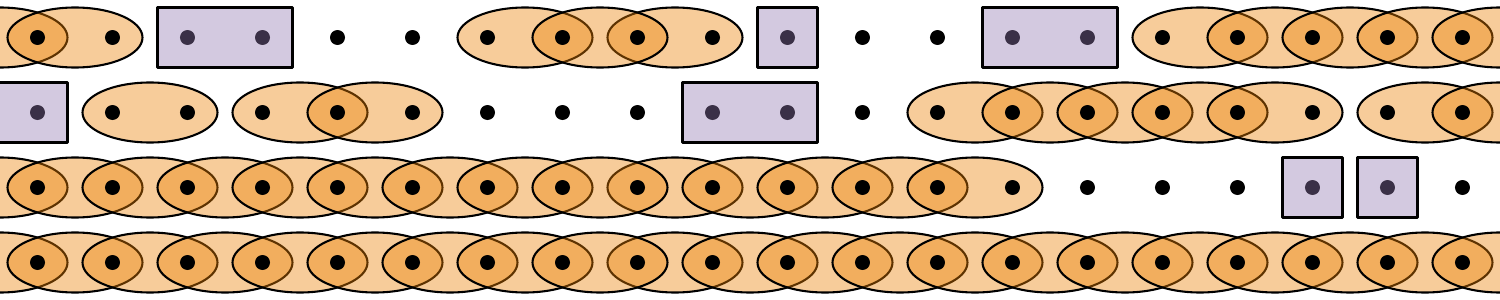}}}
	} \end{align*}
	\caption{%
	Illustration of Theorem~\ref{thm:IsingIn1D}.
	Each site (shown as columns) consists of a finite number of qudits (black dots).
	(Top) Each bubble represents a set of stabilizers supported on two sites.
	(Bottom) Each purple rectangle represent a 1-site stabilizer or 2-site Bell pair.
	The orange ovals represent Ising couplings.
	The theorem says that the stabilizers in the top figure can always be mapped, via conjugation by onsite unitaries,
	to stabilizers belonging to a collection of Ising chains, 1-site stabilizers, and Bell pairs (such as that shown in the bottom figure).
	} \label{fig:ThmIsingIn1D}
\end{figure}

\begin{restatable}{theorem}{IsingOneD}\label{thm:IsingIn1D}
Let there be finite $q_k$ qudits of prime dimension~$p$ 
at each site~$k$ of the one-dimensional lattice~$\ZZ$
or a finite periodic lattice.
Let $\calA$ be a Pauli stabilizer group with generators
acting on at most two neighboring sites.
Then, there exists a Clifford circuit of depth~1, 
consisting of one-site Clifford unitary gates,
by which $\calA$ is mapped to the stabilizer group of a collection of independent Ising chains,
and some completely disentangled qubits and nearest-neighbor Bell pairs.
Each Ising chain may be infinite or finite, 
but each site~$k$ participates in at most $q_k$ Ising chains.
On a finite periodic lattice of length~$L$,
each Ising chain occupies at most $L$ qudits.
\end{restatable}

Though this theorem (see Fig.~\ref{fig:ThmIsingIn1D}) is perhaps not surprising~\cite{BravyiTerhal2009no-go},
it appears that this has only been explicitly proven assuming translation invariance~\cite{Haah2013}.
The assumption that the generators of~$\calA$ acts on at most two neighboring sites
is always satisfied by blocking a few neighboring sites.
Our proof does not assume translation invariance and is more elementary.
Actually, the present theorem applied to translation invariant cases
is stronger than that in~\cite{Haah2013} since our circuit has depth~1.
See~\S\ref{app:IsingIn1D} for two full proofs.

The MQCA of a one-dimensional LR cycle is defined by an obvious choice of~$\calL$ and~$\calS$.
Namely, we set $\calS$ to be the base stabilizer group,
and $\calL$ to be the set of all finitely supported logical operators of $\calS$.
Then, we have an MQCA, which we call the {\bf canonical MQCA} of the one-dimensional LR cycle.

\begin{theorem}\label{thm:IntegerIndexOf1dLRMC}
The canonical MQCA of any one-dimensional LR cycle has an integer MQCA index.
\end{theorem}
\begin{proof}[Proof sketch]
	The base code is equivalent to a collection of independent Ising chains (theorem \ref{thm:IsingIn1D}). Each finite Ising chain is effectively one qubit, while infinite (including semi-infinite) Ising chains are quotient out by definition. The canonical MQCA is essentially a unitary QCA on these Ising qubits, which we know has integer index. See~\S\ref{sec:indexproof}.
\end{proof}

\subsection{Boundary MQCA of two-dimensional topological LR cycles}

In the previous section, we have found $(\dd-1)$-dimensional MQCA
by the action of a topological LR cycle in $\dd$ dimensions
if the measurement circuit admits a vacuum blending.
Here, we specifically consider a~$\dd=2$ topological LR cycle that admits vacuum blendings 
on a right boundary and on a left boundary.
We will show that the boundary MQCA is well defined 
up to the canonical MQCA of a standalone one-dimensional LR cycle,
regardless of how we choose the vacuum blending.
This will imply that the boundary MQCA 
is a topological blending invariant up to the canonical MQCA of standalone one-dimensional LR cycles.
It is plausible that every topological LR cycle admits vacuum blendings on any boundary.

Let~$\calB$ be the base topological code of a 2d topological LR cycle~$(\calB, \circuit B)$.
Suppose that our bulk circuit~$\circuit B$ admits a vacuum blending 
with two boundary components on the left and right of the system with a base code~$\calA$,
so the gates of this vacuum blending are supported on a \emph{strip} with sufficiently large but finite width,
say $1234\ell$.
We assume that the base stabilizer group~$\calA_\strip$ is $\ell$-locally generated 
and that the three properties for topological codes 
hold for all operators supported sufficiently far away, say by distance~$10\ell$, 
from the boundary. See Fig.~\ref{fig:gluedStrip} for the geometry of the regions.

Then, we have three MQCAs: $\alpha_1$, $\alpha_2$, and $\alpha_\strip$.
As in~\S\ref{sec:boundaryactions}, 
the LR cycle will implement a boundary MQCA~$\alpha_1$ on the left boundary, acting on~$\calL_1 / \calA_1$,
and another boundary MQCA~$\alpha_2$ on the right boundary, acting on~$\calL_2 / \calA_2$.
Here, $\calL_i$ consists of all Pauli operators that commutes with~$\calA$ 
and that are finitely supported on the vertically extended interface region~$I_i$.
The subgroup~$\calA_i$ is~$(\calA_\strip)_{I_i}$, the set of all elements of~$\calA_\strip$ that are supported on~$I_i$.
We require no relation between these two MQCA other than that $I_1$ and $I_2$ must be parallel to each other.
In addition, since the overall circuit is effectively on a one-dimensional system, the strip,
with the number of qubits per effective site depending on the distance between~$I_1$ and~$I_2$, 
we have the canonical 1d MQCA $\alpha_\strip$ acting on $\calL_\strip / \calA_\strip$,
where $\calL_\strip$ is the set of all finitely supported Pauli operators on the strip 
that commute with every element of~$\calA_\strip$.

\begin{figure}
	\centering
	\includegraphics[width=60mm]{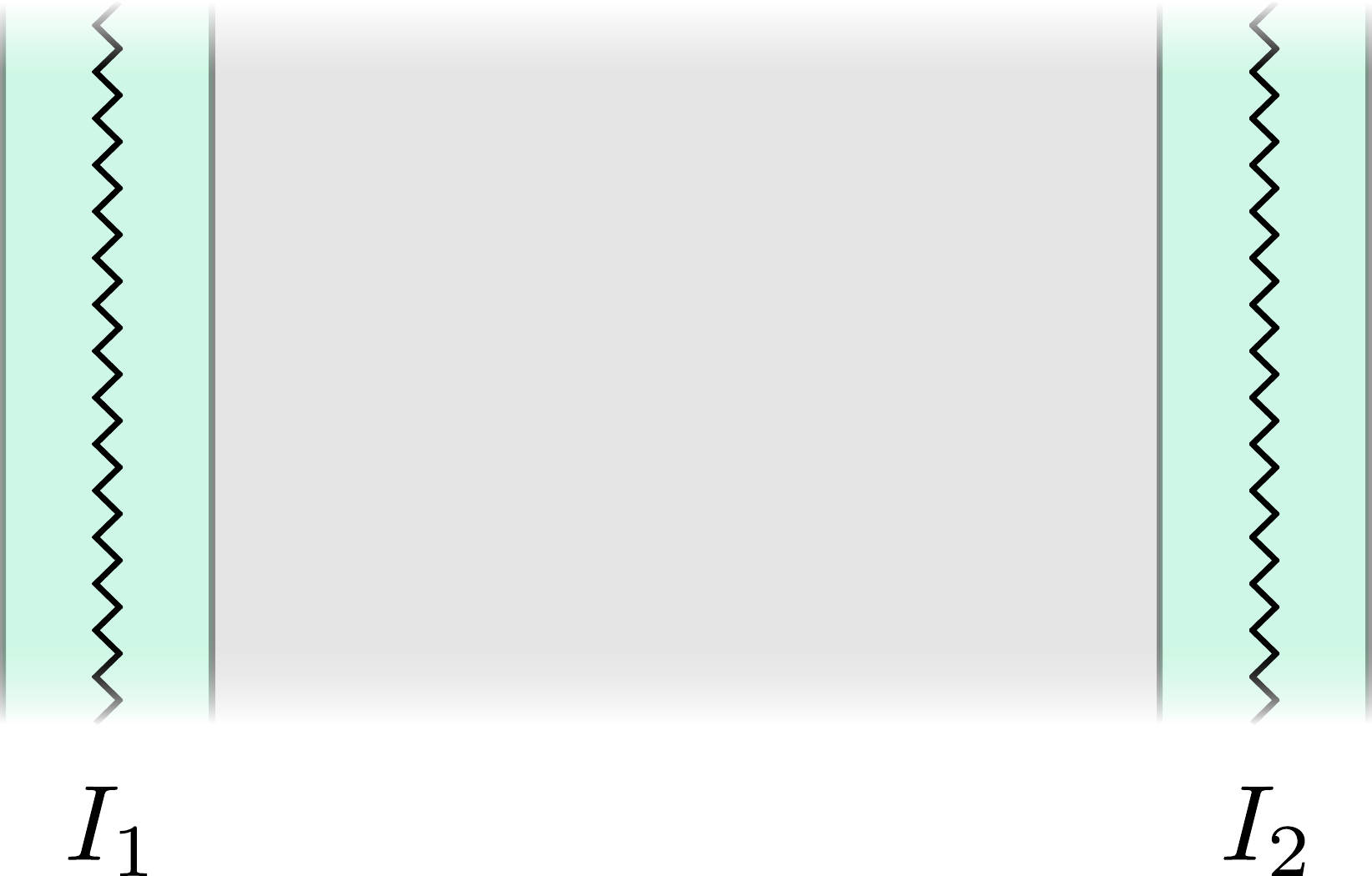}
	\caption{The geometry of the strip and the two interface regions. 
	Every region is infinite vertically, but uniformly bounded widths.}
	\label{fig:gluedStrip}
\end{figure}

\begin{lemma}\label{lem:IndexSeparation}
	$\Ind(\alpha_\strip) = \Ind(\alpha_1) + \Ind(\alpha_2)$.
\end{lemma}
\begin{proof}[Proof sketch]
	The proof can be found in \S\nameref{app:MQCAstrip}.
	Roughly speaking, logical operators at the boundaries~$I_1$ and $I_2$ 
	are still logical operators for the strip, and undergo the same dynamics. 
	The strip may have extra logical operators not supported on either individual boundary, 
	namely, string operators connecting $I_1$ and $I_2$. 
	However, there are only finitely many such string operators
	because the condition~(ii) for topological codes implies that excitations are mobile.
	These finitely many string operators do not contribute to the index.
\end{proof}

Combining~\ref{thm:IntegerIndexOf1dLRMC} and~\ref{lem:IndexSeparation}, we get the following:
\begin{theorem}\label{thm:twodimZ2}
	Suppose that a two-dimensional topological locally reversible measurement cycle~$(\calB, \circuit C)$ 
	admits vacuum blendings on the left and right boundaries.
	Then, the index~$\Ind(\alpha)$ of the MQCA~$\alpha$ on the right boundary (and hence on the left, too)
	is independent of specific vacuum blendings up to~$\ZZ$.
	Therefore, $\Ind(\alpha) + \ZZ \in \thalf \ZZ / \ZZ \cong \ZZ_2$ is an invariant of~$(\calB,\circuit C)$
	under topological blendings.
\end{theorem}

\begin{proof}
	Applying~\ref{lem:IndexSeparation} on an LR cycle~$(\calA,\circuit C)$ with left and right boundaries, 
	we get $\Ind(\alpha_\strip) = \Ind(\alpha_1) + \Ind(\alpha_2)$ 
	where $\alpha_1$ and $\alpha_2$ are the MQCA on the left and right boundary, respectively.
	Suppose we have another LR cycle~$(\calA',\circuit C')$,
	which is identical to $(\calA,\circuit C)$ except at the right boundary. 
	We know $\Ind(\alpha_\strip') = \Ind(\alpha_1) + \Ind(\alpha_2')$.
	Theorem~\ref{thm:IsingIn1D} tells us that $\Ind(\alpha_\strip'), \Ind(\alpha_\strip) \in \ZZ$.
	Hence, $\Ind(\alpha_2') - \Ind(\alpha_2) \in \ZZ$, 
	which means that the index of the boundary MQCA is independent of specific vacuum blendings up to~$\ZZ$.
	
	It remains to explain the invariance of~$\Ind(\alpha_2) + \ZZ \in \thalf \ZZ / \ZZ \cong \ZZ_2$ 
	under topological blendings.
	Suppose that there is a topological blending with a vertical interface region 
	between $(\calB,\circuit C)$ on the left and $(\calB',\circuit C')$ on the right.
	One can also take a right boundary on the right of the interface,
	in which case the the right MQCA index is fully determined by~$(\calB',\circuit C')$.
	The interface between $(\calB,\circuit C)$ and $(\calB', \circuit C')$
	and the elements of~$\circuit C'$ with its right boundary,
	can be regarded as as a blending of~$(\calB,\circuit C)$ into the vacuum.
	The independence of the index on vacuum blendings implies the conclusion.
\end{proof}

In \S\ref{sec:WPT} and \S\ref{sec:HHcode} below,
we will examine two examples, the Wen plaquette translation circuit
and the honeycomb Floquet code.

\begin{corollary}
The Wen plaquette translation circuit and the honeycomb Floquet code
have both nonzero $\ZZ_2$ index.
\end{corollary}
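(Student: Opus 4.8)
The plan is to reduce the corollary to Theorem~\ref{thm:twodimZ2}: since that theorem says $\Ind(\circuit C)+\ZZ\in\thalf\ZZ/\ZZ\cong\ZZ_2$ is a well-defined invariant of the bulk circuit (once one exhibits suitable centerless vacuum blendings), it suffices, for each of the two models, to construct \emph{one} centerless vacuum blending whose right-boundary MQCA index lies in $\thalf+\ZZ$. By item~(2) and~(3) of Prop.~\ref{prop-properties} and the shift computation for the Majorana chain algebra \eqref{eq:anomalous_1D_L} with $\qca(L_j)=L_{j+1}$ from \eqref{eq:MC_shift_qca}, it is enough to show that the boundary MQCA is conjugate, through a locality-preserving isomorphism, to an \emph{odd} shift on a Majorana chain.

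For the Wen-plaquette translation circuit (worked out in \S\ref{sec:WPT}) I would take the vertical termination obtained by dropping the bulk checks that straddle the cut, as in Fig.~\ref{fig:WenPlaq_boundaryR}. The first task is bookkeeping: because the bulk is Wen's plaquette model (toric-code order), the base code is topological away from the cut, the left interface can be smoothly glued to its mirror image with $\calA_\glued$ topological, and the boundary logical group $\calL$ is centerless, so the hypotheses of Theorem~\ref{thm:twodimZ2} hold. The substantive point is that $\calL$ is generated by the two-body edge operators with exactly one generator $L_j$ per unit cell, and that these obey \eqref{eq:anomalous_1D_L} — $L_j$ anticommutes only with $L_{j\pm1}$. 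One then checks that running the column-wise teleportation circuit carries $L_j$ to $L_{j+1}$ up to sign, i.e.\ the boundary MQCA is exactly the shift \eqref{eq:MC_shift_qca}, whose index is $\thalf$; hence the $\ZZ_2$ class is nontrivial.

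For the Hastings--Haah honeycomb code (worked out in \S\ref{sec:HHcode}) the strategy is the same in outline but the three-colored honeycomb geometry and the period-three schedule $(\stepR,\stepG,\stepB)$ require more care. One constructs an explicit termination of the honeycomb so that, after deleting the straddling two-body checks, all finitely supported logical operators of the base code localize to a clean interface region $I$; the topologicality, centerlessness, and gluability hypotheses are again routine since the bulk realizes toric-code order and is translation invariant. The crux is to compute the interface logical algebra and track it through the three measurement layers of one full period: one must show (i) the interface logical operators are generated by a single anticommuting-paired generator per unit cell obeying \eqref{eq:anomalous_1D_L}, and (ii) one full period of the schedule acts on these generators as a translation by an \emph{odd} number of cells. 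Then Prop.~\ref{prop-properties}(2),(3) together with the shift computation give a boundary index in $\thalf+\ZZ$.

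The main obstacle is precisely this HH computation. Because the checks are two-body and are measured in a three-step cycle, and because the boundary logical information is carried by string-like operators, the delicate part is to verify \emph{simultaneously} that the interface algebra is a Majorana chain (one generator per cell, not a full Pauli chain with two per cell) and that a full period translates it by an odd number of cells rather than an even one — either failure would make the index integral and destroy the anomaly. The structural fact that ultimately forces the $\thalf$ is the halving of the logical-operator density at the boundary relative to a standalone one-dimensional chain, exactly as in the Wen-plaquette case; making that counting precise for the terminated honeycomb is where the real work lies.
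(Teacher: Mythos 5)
Your proposal matches the paper's approach exactly: the paper's proof of this corollary is the one-liner ``They fulfill the assumptions of~\ref{thm:twodimZ2} by direct calculations below,'' and the ``calculations below'' are precisely the two verifications you outline — in \S\ref{sec:WPT_vright} the right vertical blending of the WPT circuit has boundary logicals $L_j = Z_{0,j}X_{0,j+1}$ obeying the Majorana chain algebra and shifting $L_j\mapsto L_{j-1}$ per cycle (index $-\thalf$), and in \S\ref{sec:HHboundary1} the zigzag termination of the HH code has boundary logicals~\eqref{eq:HHboundary1_logicals} that again form a Majorana chain algebra and shift by one ($L^C_k\mapsto L^C_{k-1}$) per period, again index $-\thalf$. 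You have correctly identified both the structure of the reduction (apply Theorem~\ref{thm:twodimZ2} once a centerless, gluable vacuum blending is exhibited, then use Prop.~\ref{prop-properties} and the Majorana-shift index $\thalf$) and the genuine content (the boundary algebra must be a Majorana chain, not a full Pauli chain, and the per-period shift must be odd).

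One small correction: for the WPT model you cite Fig.~\ref{fig:WenPlaq_boundaryR}, which depicts the \emph{static} Wen plaquette model without the ancillas; the vacuum blending that actually enters the argument is the one in \S\ref{sec:WPT_vright} on the lattice $\ZZ\times\thalf\ZZ$ with the teleportation ancillas included, whose ISGs~\eqref{eq:WPT_right_blendd} must be checked to form a locally reversible period-$4$ chain. Similarly, for HH the paper's \S\ref{sec:HHboundary1} gives the explicit single-site and two-site boundary measurements~\eqref{eq:HHboundary1_MeasSeq}; you flag this as ``where the real work lies,'' which is fair, but the proof is not complete until those ISGs are written down and the commutation relations and evolution~\eqref{eq:HHboundary1_logicals} are verified. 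Modulo that deferred bookkeeping, your route is the paper's route.
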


\begin{proof}
The base codes of those LR cycles are finite depth unitary circuit equivalent to the toric code,
and are therefore topological in the sense of Defn.~\ref{defn:TopologicalCode}.
(Any translation invariant base code in 2d with no local logical operator is topological
by the classifcation theorem~\cite{Haah2018}.)
They fulfill the assumptions of~\ref{thm:twodimZ2} by direct calculations below.
\end{proof}

\section{Example of 2d locally reversible cycle: Wen plaquette-translation}
\label{sec:WPT}

Here we define the Wen plaquette-translation (WPT) model.
The Wen plaquette code~\cite{WenP} is a Pauli stabilizer code equivalent to Kitaev toric code.
The WPT model is a locally reversible measurement cycle with period four with the Wen plaquette code as the base code.
The circuit lives on the lattice $\Lambda = \ZZ \times \thalf\ZZ$,
whose sites will be specified by coordinates such as
$(x,y)$ and $(x,y + \thalf )$ with $x,y\in \ZZ$.
The lattice is a square lattice with an extra site at every vertical link.
The base stabilizer group~$\ISG_0$ is given by
\begin{align}
	\ISG_0 &\equiv \Braket{ Z_{x,y} X_{x+1,y} X_{x,y+1} Z_{x+1,y+1} \;,\, Z_{x,\,y+1/2} ~|~ x,y \in \ZZ}.
\end{align}
The sites on $(\ZZ,\ZZ)$ implement Wen's plaquette model realizing the $\ZZ_2$ toric code topological order, 
while the sites on $(\ZZ,\ZZ+\half)$ act as ancillas to facilitate a translation circuit.
The four steps of the circuit implement the (reverse of the) qubit-translation protocol 
of \S\ref{sec:trans1d_LRMC} along each vertical column of the lattice,
parallel to the $y$-coordinate axis.
As a result, the plaquette stabilizers translate by half a unit ($y \mapsto y-\thalf$) after every two steps of the circuit.
They may be represented as follows.
\begin{align} \begin{aligned}
	P_0(x,y) &= Z_{x,y} X_{x+1,y} X_{x,y+1} Z_{x+1,y+1} \,,
\\	P_1(x,y) &= Z_{x,y-1/2} Z_{x,y} X_{x+1,y} X_{x,y+1} Z_{x+1,y+1/2} Z_{x+1,y+1} \,,
\\	P_{t+2}(x,y) &= P_t\big(x,y-\thalf\big) .
\end{aligned} \end{align}
These operators are shown in Fig.~\ref{fig:WPT}.

\begin{figure}[hbt]
	\centering
	\includegraphics[width=.95\linewidth]{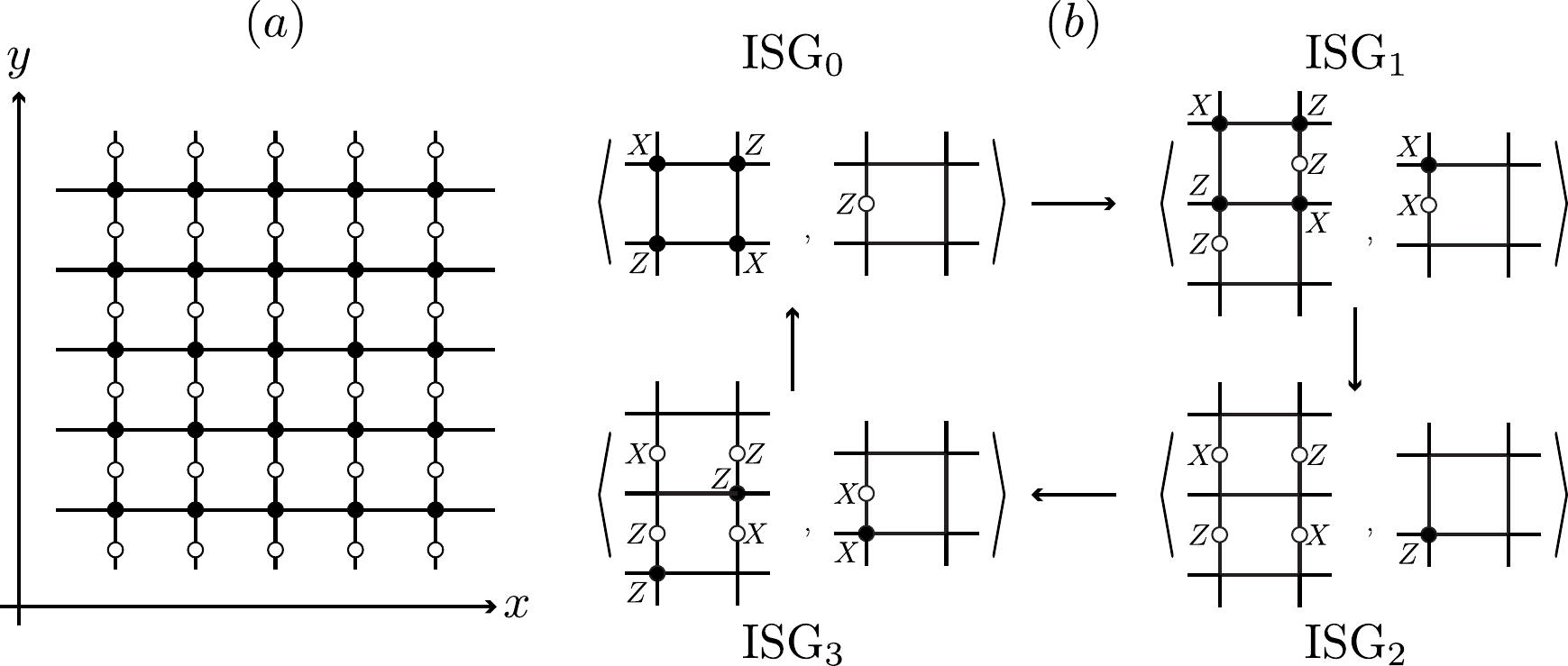}
	\caption{
	In (a) we display the lattice used for the WPT model. The solid dots are data qubits and the open dots are ancilla qubits used in the teleportation protocol. 
	In (b) we provide the ISGs for each step of the measurement sequence.
	}
	\label{fig:WPT}
  \end{figure}

Explicitly, the ISGs of the WPT model are
\begin{subequations} \begin{align}
	\ISG_0 &= \Braket{ P_0(x,y) \;,\, Z_{x,y+1/2} ~|~ x,y \in \ZZ } ,
\\\begin{split}
	\ISG_1 &= \Braket{ P_1(x,y) \;,\, X_{x,y-1/2} X_{x,y} ~|~ x,y \in \ZZ } ,
\end{split}\\\begin{split}
	\ISG_2 &= \Braket{ \vphantom{\Big|} P_2(x,y) \;,\, Z_{x,y} ~|~ x,y \in \ZZ } ,
\end{split}\\\begin{split}
	\ISG_3 &= \Braket{ P_3(x,y) \;,\, X_{x,y} X_{x,y+1/2} ~|~ x,y \in \ZZ } .
\end{split}
\end{align} \end{subequations}
This circuit translates logicals by one unit cell every cycle.
For example, the locally finite product $\prod_n X_{n,n}$ is mapped to $\prod_n X_{n,n-1}$ 
after one cycle of the circuit.

In the following subsections, we construct various boundaries of the WPT circuit.
First, we construct a boundary on the east of the bulk by dropping the measurement gates.
The background stabilizer code is generated by plaquette operators in the west half plane.
The construction is arguably the simplest that gives nontrivial MQCA flow,
naturally induced by the vertical translation of the bulk.
Second, we introduce a boundary at the north
with extra measurements at the boundary.
This shows that the nominal vertical motion of the bulk
does not dictate the flow at the boundary.
These east and north boundaries are examples of vacuum blendings.
We also discuss gluing two copies of the same system along a boundary,
demonstrating how two LR cycles of the same boundary MQCA index
may admit a topological blending.
The discussion there is more generally applicable beyond this WPT example.
In \S\ref{app:WPT}, we present further calculations showing additional vacuum blendings 
along a south boundary and along the east boundary with reversed flow.
Since translation is possible in a standalone 1d LR cycle,
it may seem trivial to change the net flow (the MQCA index),
but our construction does not explicitly bring an extra 1d system.

\subsection{Vertical (right) boundary}
\label{sec:WPT_vright}
We construct a blend of the bulk plaquette circuit ($x \leq 0$) with a trivial circuit ($x > 0$).
The stabilizer groups are
\begin{subequations} \label{eq:WPT_right_blendd} \begin{align}
	\ISG^{(R)}_{0} &= \Big\langle \big\{ P_0(x,y) \,, Z_{x+1,y-1/2} ~\big|~ x\in\ZZ_{-},\, y\in\ZZ \big\} \cup \big\{ Z_{x,y} ~\big|~ x\in\ZZ_+,\, y\in\thalf\ZZ \big\} \Big\rangle \,,
\\	\ISG^{(R)}_{1} &= \Big\langle \big\{ P_1(x,y) \,, X_{x+1,y-1/2} X_{x+1,y} ~\big|~ x\in\ZZ_{-},\, y\in\ZZ \big\} \cup \big\{ Z_{x,y} ~\big|~ x\in\ZZ_+,\, y\in\thalf\ZZ \big\} \Big\rangle \,,
\\	\ISG^{(R)}_{2} &= \Big\langle \big\{ P_2(x,y) \,, Z_{x+1,y} ~\big|~ x\in\ZZ_{-},\, y\in\ZZ \big\} \cup \big\{ Z_{x,y} ~\big|~ x\in\ZZ_+,\, y\in\thalf\ZZ \big\} \Big\rangle \,,
\\	\ISG^{(R)}_{3} &= \Big\langle \big\{ P_3(x,y) \,, X_{x+1,y} X_{x+1,y+1/2} ~\big|~ x\in\ZZ_{-},\, y\in\ZZ \big\} \cup \big\{ Z_{x,y} ~\big|~ x\in\ZZ_+,\, y\in\thalf\ZZ \big\} \Big\rangle \,,
\end{align} \end{subequations}
with $\ISG^{(R)}_{n+4} = \ISG^{(R)}_{n}$.

The local logical algebra of $\ISG^{(R)}_{0}$ are generated by $Z_{0,y} X_{0,y+1}$ along the boundary.
That is,
\begin{align}
	\big(\ISG^{(R)}_{0}\big)^\perp &= \Braket{ L_j \,|\, j\in\ZZ } \ISG^{(R)}_{0} ,
&	L_j &= Z_{0,j} X_{0,j+1} \,.
\end{align}
These logical operators obey the \MCAlgEq, with commutation relation
\begin{align}
	L_j L_k &= \begin{cases} -L_k L_j & |j-k|=1, \\ \phantom{-} L_k L_j & \text{$j=k$ or $|j-k|\geq2$}. \end{cases}
\end{align}
Under a cycle of the WPT circuit, the logical operators undergoes the following dynamics:
\begin{align}
	\begin{aligned}
		Z_{0,j}X_{0,j+1}
		&\xmapsto[(Z_{0,j-1/2})]{\mathbf{0}\to\mathbf{1}}
		Z_{0,j-1/2}Z_{0,j}X_{0,j+1}
		\xmapsto[(X_{0,j+1/2}X_{0,{j+1}})]{\mathbf{1}\to\mathbf{2}}
		Z_{0,j-1/2}Z_{0,j}X_{0,j+1/2}\\
		&\xmapsto[(Z_{0,j-1}Z_{0,j})]{\mathbf{2}\to\mathbf{3}}
		Z_{0,j-1}Z_{0,j-1/2}X_{0,j+1/2}
		\xmapsto[(X_{0,j}X_{0,j+1/2})]{\mathbf{3}\to\mathbf{4}}
		Z_{0,j-1}Z_{0,j-\frac{1}{2}}X_{0,j}\equiv Z_{0,j-1} X_{0,j}
		\,.
	\end{aligned}
\end{align}
Here, $L\xmapsto[(A)]{\mathbf{i}\to\mathbf{i+1}}L'$ means that when we measure $\ISG^{(R)}_{i+1}$,
the logical operator $L$ for $\ISG^{(R)}_{i}$ should be dressed by $A\in\ISG_{i}$ in order to commute with $\ISG^{(R)}_{i+1}$ and survive the measurement,
resulting in the logical operator $L'\equiv LA$ for $\ISG^{(R)}_{i+1}$ (see discussion below Prop.~\ref{thm:Reversibility} for details).

The net effect after one cycle is a translation of logical operators, with MQCA index $-\half$.

\subsection{Horizontal (top) boundary}
\label{sec:WPT_htop}
Again, we construct a blending of the bulk plaquette circuit ($y \leq y_b$) with a trivial circuit ($y > y_b$), with the boundary located at $y_b$.
Define (for $y_b \in \half\ZZ$) the stabilizer group
\begin{align} \begin{aligned}
	\ISG_{T(y_b)} &= \Big\langle \big\{ P_0(x,y) \,, Z_{x,y+1/2} ~\big|~ x\in\ZZ,\, y_b-y\in\ZZ_+ \big\} \cup \big\{ Z_{x,y} ~\big|~ (x,y)\in\Lambda,\, y>y_b \big\} \Big\rangle \,.
\end{aligned} \end{align}
(Recall $\Lambda = \ZZ \times \thalf\ZZ$ is the lattice of qubits.)
Beginning with the boundary at $y_b=0$, a unit translation would take $\ISG_{T(0)} \mapsto \ISG_{T(-1)}$.
The key point in our blending construction is that the direct transition $\ISG_{T(y_b)} \to \ISG_{T(y_b+1)}$ is locally reversible.

Indeed, the generators of $\ISG_{T(y_b)}$ and $\ISG_{T(y_b+1)}$ are identical almost everywhere, except for a horizontal strip near $y_b$.
Their intersection $\calS=\ISG_{T(y_b)}\cap\ISG_{T(y_b+1)}$ is generated by
	$\{ P_0(x,y) \,, Z_{x,y+3/2} \,|\, x\in\ZZ,\, y_b-y\in\ZZ_+ \}$ below
	and $\{ Z_{x,y} \,|\, x\in\ZZ,\, y>y_b+1 \}$ above \mbox{$y_b+1$}.
Above $\calS$, $\ISG_{T(y_b)}$ has extra stabilzers $\braket{ Z_{x,y_b+1} \,|\, x\in\ZZ }$;
$\ISG_{T(y_{b+1})}$ has extra stabilzers $\braket{ P_0(x,y_b) \,|\, x\in\ZZ }$.
$Z_{x,y_b+1}$ and $P_0(x,y_b)$ form conjugate pairs (cf.~Def.~\ref{defn:lrt}):
\begin{align}
	Z_{x,y_b+1}P_0(x',y_b) = (-1)^{\delta_{x,x'}} P_0(x',y_b)Z_{x,y_b+1} \,.
\end{align}

This blending consists of 5 steps, the first four steps implements the bulk translation moving all the plaquette stabilizers one unit away from the boundary;
	the 5\textsuperscript{th} step restores the ISG back to its starting point by measuring $\{P_0(x,-1) \,|\, x\in\ZZ\}$ supported on rows $y=-1,0$.
Explicitly, the blend of the WPT circuit is:
\begin{subequations} \label{eq:WPT_top_blendl} \begin{align}
	\ISG^{(T)}_{0} &= \ISG_{T(0)} \,,
\\	\ISG^{(T)}_{1} &= \Big\langle \big\{ P_1(x,y) \,, X_{x,y+1/2} X_{x,y+1} ~\big|~ x\in\ZZ,\, y\in\ZZ_- \big\} \cup \big\{ Z_{x,y} ~\big|~ (x,y)\in\Lambda,\, y>0 \big\} \Big\rangle \,,
\\	\begin{split}
	\ISG^{(T)}_{2} &= \Big\langle \big\{ P_2(x,y) \,, Z_{x,y+1} ~\big|~ x\in\ZZ,\, y\in\ZZ_- \big\} \cup \big\{ Z_{x,y} ~\big|~ (x,y)\in\Lambda,\, y>0 \big\} \Big\rangle
	\\ &= \ISG_{T(-1/2)} \,,
\end{split} \\
	\ISG^{(T)}_{3} &= \Big\langle \big\{ P_3(x,y) \,, X_{x,y} X_{x,y+1/2} ~\big|~ x\in\ZZ,\, y\in\ZZ_- \big\} \cup \big\{ Z_{x,y} ~\big|~ (x,y)\in\Lambda,\, y\geq0 \big\} \Big\rangle \,,
\\	\ISG^{(T)}_{4} &= \ISG_{T(-1)} \,,
\\	\ISG^{(T)}_{n+5} &= \ISG^{(T)}_{n} .
\end{align} \end{subequations}
For $y \leq -1$, the circuits matches that of the bulk, except with an extra step $\ISG_{5n+4} \to \ISG_{5n+5}$ which does nothing as the two ISGs have identical sets of stabilizers away from the boundary.
Similarly, for $y > 0$, the circuit is stabilized by onsite operators and has trivial dynamics.

The local logical operators of $\ISG^{(T)}_{0}$ are
\begin{align}
	\big(\ISG^{(T)}_{0}\big)^\perp &= \Braket{ L_j \,|\, j\in\ZZ } \ISG^{(T)}_{0} ,
&	L_j &= Z_{j,0} X_{j+1,0} \,.
\end{align}
Again, $\{L_j\}$ obey the \MCAlgEq.
After four steps of the circuit, the logical operators transforms to $L_j \mapsto Z_{j,-1} X_{j+1,-1}$,
	which is equivalent to $Z_{j-1,0} Z_{j,-1} X_{j+1,-1} Z_{j+1,0}$ (under $\ISG^{(T)}_{4}$ stabilizer group).
Upon the the final step, the logical operator become (in the equivalence class of) $Z_{j-1,0} X_{j,0}$.
Hence a cycle of this blending circuit takes every logical operator a unit translation to the left $L_j \mapsto L_{j-1}$; moving quantum information left by a half-qubit.

We can construct a circuit that moves quantum information rightward with a slight modification.
The stabilizer group $\ISG^{(T)}_4$ consists of a row of $Z$-stabilizers along $y=0$;
	in the new circuit, we replace this with a row of $X$-stabilizers, such that it remains locally reversible with $\ISG^{(T)}_3$ and $\ISG^{(T)}_5 = \ISG_{T(0)}$.
\begin{align} \label{eq:WPT_top_blendr} \begin{aligned}
	\ISG^{(T')}_t &= \ISG^{(T)}_t \qquad t \notin 5\ZZ+4 ,
\\	\ISG^{(T')}_{5n+4} &= \Big\langle \big\{ P_0(x,y-1) \,, Z_{x,y+1/2} ~\big|~ x\in\ZZ,\, y\in\ZZ_- \big\}
	\\&\qquad \cup \big\{ X_{(x,0)} ~\big|~ x\in\ZZ \big\} \cup \big\{ Z_{x,y} ~\big|~ (x,y)\in\Lambda,\, y>0 \big\} \Big\rangle \,.
\end{aligned} \end{align}
Under this circuit, the logical operator $L_j = Z_{j,0} X_{j+1,0}$ still maps to $Z_{j,-1} X_{j+1,-1}$ after four steps, and transforms to $L_{j+1} = Z_{j+1,0} X_{j+2,0}$ after a full cycle.

\subsection{Double WPT: period doubling and gluing}
\label{sec:WPT_double}

We now demonstrate that doubling the WPT model results in a boundary that can always be trivialized.
There are two ways to double the LR cycle:
(1) by taking a double-period, letting $T \to 2T$,
or (2) we stack two copies by taking a tensor product of two copies of the WPT circuit.

In the first scenario, doubling the period means that the boundary \MCAlg 
generated by $L_k$ transform as $L_k \mapsto L_{k\pm2}$.
By appending the circuit~\eqref{eq:anomalous_1D_L_shift2} (or its inverse),
we can produce a stationary boundary, i.e., a blend such that the boundary MQCA is the identity map.

In the second scenario, we can ``gap out'' the edge as follows.
Take two identical copies of WPT with a boundary (e.g., \S\ref{sec:WPT_vright}).
Each copy admits a boundary \MCAlg{} generated by $L^{1,2}_k$ acted on by a nontrivial MQCA which shifts $L^{1,2}_k \mapsto L^{1,2}_{k\pm1}$.
We add to the stabilizer groups pairs of logical operators $L^1_k L^2_k$ (which mutually commute!).

This gapping out construction is an instance of the following more general situation.
Consider two copies of a topological LR cycle in $\dd$ dimensions.
For illustrative purposes, we imagine that each system is a two-dimensional sheet.
Suppose that the sheet admits a vacuum blending on the left,
so all the bulk gates stay on the right half,
and the interface region~$I$ is vertically extended but have a uniformly finite thickness.
The blending starts with a background code~$\calA$.
We find the set~$\calL$ of all logical operators of~$\calA$ 
that are supported finitely on~$I$.
Having two identical sheets, we have two identical base groups~$\calA \otimes \one, \one \otimes \calA$,
identical logical groups~$\calL \otimes \one ,\one \otimes \calL$ and identical boundary MQCA~$\alpha \otimes \one,\one \otimes \alpha$.

\begin{figure}[htbp]
   \centering
\includegraphics[width=.9\textwidth]{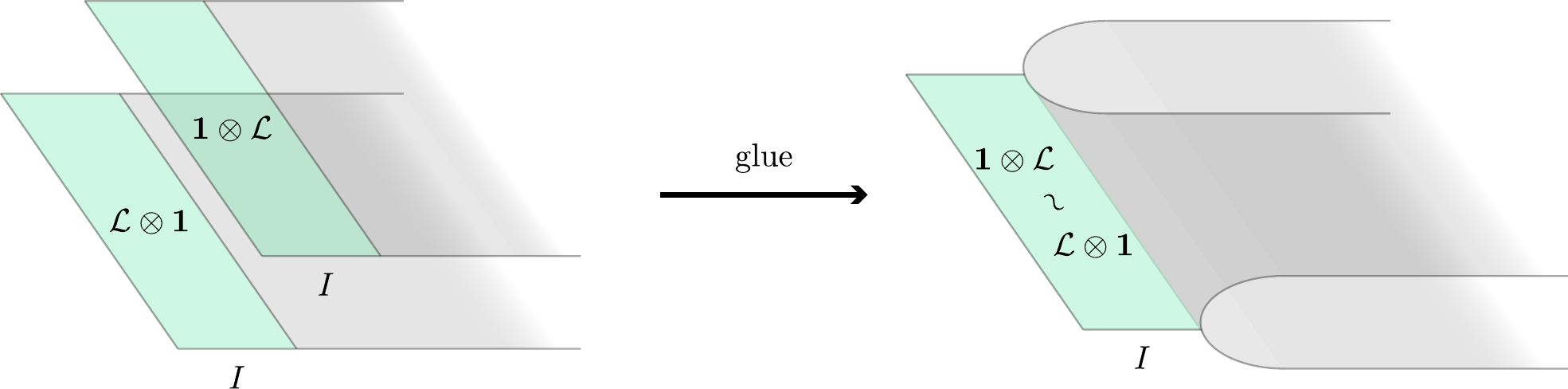}
   \caption{Gluing two identical vacuum blendings along the interface regions.
   Since the two MQCA along the boundaries are identical, 
   the abelian subgroup of the boundary logical group 
   generated by $P \otimes P$ with $P \in \calL$
   remains stationary under the dynamics.}
   \label{fig:glue}
\end{figure}

Leaving intact all the gates of the two-sheet system with the boundaries,
we modify the base code to define
\begin{align}
	\calA_\text{glued} = \Braket{\strut \calA \otimes \one, \quad \one \otimes \calA, \quad \{P \otimes P \,\vert\, P \in \calL \} } .
\end{align}
Then, it is easy to see that $\calA_\glued$ is abelian,%
\footnote{%
	With odd prime dimensional qudits,
	we take the complex conjugate for the second sheet
	to make~$\calA_\text{glued}$ abelian,
	where the complex conjugation is in the basis 
	under which ``Pauli''~$X$ is real and ``Pauli''~$Z$ is diagonal.
	$\braket{\calA \otimes \one, \quad \one \otimes \calA^\ast, \quad \{P \otimes P^\ast \,|\, P \in \calL \} }$.
}
and that $\alpha \otimes \alpha$ maps~$\calA_\glued / (\calA \times \calA)$ into itself.

As the third group of generators for~$\calA_\glued$ 
couples the two-sheet system with virtually all possible operators,
one may expect that $\calA_\text{glued}$ is also a topological code.
However, this is not always true (with the current definition of vacuum blending).
A counterexample is given by a ``floating'' infinite Ising chain,
which sits within the interface region, but which is not acted on by any gates in the circuit.
The only finitely supported logical operator of the Ising chain is a single-site~$Z$,
so the third group of generators for~$\calA_\text{glued}$ will consist of~$Z \otimes Z$,
making an Ising system on a ladder geometry, 
which fails to be a topological code.

If a vacuum blending for a topological LR cycle has a \emph{centerless} 
boundary logical quotient group~$\calL / \calA_I$,
then we can show that there is no nontrivial finitely supported logical operator for~$\calA_\glued$.
By the localization of logical operators for~$\calA$ at the interface region discussed in~\S\ref{sec:boundaryactions},
it suffices to look at a potential logical operator~$P \otimes Q \in \calL \times \calL$.
Since $\calA_\glued$ contains~$Q \otimes Q$,
the potential logical operator is equivalent to~$PQ \otimes \one \in \calL \otimes \one$,
which should commute with all $P' \otimes P' \in \calA_\glued$,
implying that $PQ \in \calL$ commutes with all of~$\calL$.
So, $PQ \in \Cent(\calL) \subseteq \calA$ since $\calL / \calA_I$ is centerless,
and $PQ \otimes \one \in \calA \otimes \one \subseteq \calA_\glued$,
i.e., the potential logical operator is trivial.

We summarize the conclusion here:

\begin{proposition}\label{prop:GluingPristineBoundaries}
For two copies of any vacuum blending of a topological LR cycle
giving the boundary MQCA~$\alpha \otimes \alpha$ on a centerless group at the interface,
there is a new base code~$\calA_\glued$ that includes the base code~$\calA \times \calA$ of the two blendings
such that $\calA_\glued$ has no finitely supported nontrivial logical operator
and $\alpha \otimes \alpha $ takes $\calA_\glued / (\calA \times \calA)$ into itself.
\end{proposition}

\noindent
We obtain a new LR cycle, 
starting with $\calA_\glued$ and evolving 
by the gates of the pre-gluing two identical blendings.
This is mostly easily seen by considering a unitary circuit constructed from the locally reversible circuit.
This circuit starts with the background code $\calA \times \calA$
and certainly can start with a larger background code $\calA_\glued \supseteq \calA \times \calA$.
We do not claim that this new LR cycle is always topological.
It is almost topological for the absence of nontrivial logical operators, 
but we do not know if the third group of generators of the form~$P \otimes P$
is uniformly locally generated,
which would depend on~$\calA$ that is largely unconstrained.

If we unfold the glued region to keep one sheet on the right and put the other sheet on the left,
then $\calA_\glued$ is topological for the WPT model by inspection.
This blending between the LR cycle of WPT model and the space-reflected LR cycle is topological.

\section{The HH honeycomb Floquet code and its boundary dynamics}
\label{sec:HHcode}

\newcommand{\HHzzSiteNumbering}{ \begin{minipage}{70mm} \centering \footnotesize \ensuremath{\xymatrix @!0 @C=4.12mm{ \cdots&\bar3&\bar2&\bar1&0&1&2&3&4&5&6&7&8&9&\cdots }} \end{minipage} }

In the previous section we have constructed an LR cycle based on the Wen plaquette model.
Here, we discuss another example, the Hastings--Haah Honeycomb code (HH code)~\cite{Hastings2021},
realizing an LR cycle,
and calculate the boundary MQCA along flat boundaries.
A boundary condition below will induce a boundary MQCA that has nontrivial spread,
by which the size of operators actually grows linearly in time.

In addition, we show how to ``gap out'' a boundary by doubling the period.
If we double the period of an LR cycle, the boundary MQCA index $\bmod\,\ZZ$ always vanishes.
Then, it is conceivable that we may find a vacuum blending
with no local logical operators at the boundary.
We show that this is indeed the case for the HH code.
This construction gives a measurement circuit of period~6
and is different from the planar honeycomb code of~\cite{Haah2022,Gidney_2022,Paetznick2023}
in that our bulk dynamics is simply repeated in the second half
whereas in~\cite{Haah2022,Gidney_2022,Paetznick2023} it is reversed in the second half.
This is a necessary modification to implement an error correcting code on a finite patch.

In \S\ref{app:HH},
we show a topological blending between the LR cycle of the HH code and that of the WPT model,
a chiral edge dynamics on the boundary of a finite disk,
and a variety of other boundary conditions.

\subsection{The LR cycle of the HH code}

The HH code is a two-dimensional Floquet code defined on any trivalent, plaquette-three-colorable lattice.
For illustrative purposes, the figures will all be illustrated with the honeycomb lattice, although the results apply more generally.
See Fig.~\ref{fig:honeycomb3color} for the coloring scheme.
We will take the plaquette colors to be $R$, $G$, $B$, for red, green, blue, respectively. 
The plaquette coloring induces an edge coloring; since every edge must connect two plaquettes with the same coloring, we denote red edges connect red plaquettes, blue edges blue plaquettes, and green edges green plaquettes.
The edge shared by neighboring red and green plaquette is always blue, the edge shared by neighboring green and blue plaquette is always red, etc.

The model has one qubit per vertex of the lattice.
We denote a set of edge operators by a color and Pauli label: $E_\text{color}^p$ for $\text{color} \in \{R,G,B\}$ and $p \in \{X,Y,Z\}$.
For example $E_R^X$ denotes the set of two-qubit operators given by $XX$ acting on the pair of vertices along red edges.
Similarly, we denote a set of plaquette operators by a color and a Pauli label: $P_\text{color}^p$ for $\text{color} \in \{R,G,B\}$ and $p \in \{X,Y,Z\}$.
For example, $P_G^X$ denotes the set of plaquette operators given by a product of Pauli $X$ operators which reside on the vertices along the boundary of the green plaquettes. 
Denote $\Braket{E_\text{color}^p}$ and $\Braket{P_\text{color}^p}$ as the stabilizer group generated by $E_\text{color}^p$ and $P_\text{color}^p$ respectively.
Notice that $\Braket{P_{c'}^p} \subset \Braket{E_c^p}$ for different colors $c' \neq c$ (see Fig.~\ref{fig:honeycomb3color}).

\begin{figure}[ht]
	\centering
	\begin{minipage}{85mm} \raisebox{33mm}{\small\bf (a)} \; \includegraphics[width=75mm, trim={0 5mm 0 30mm}, clip]{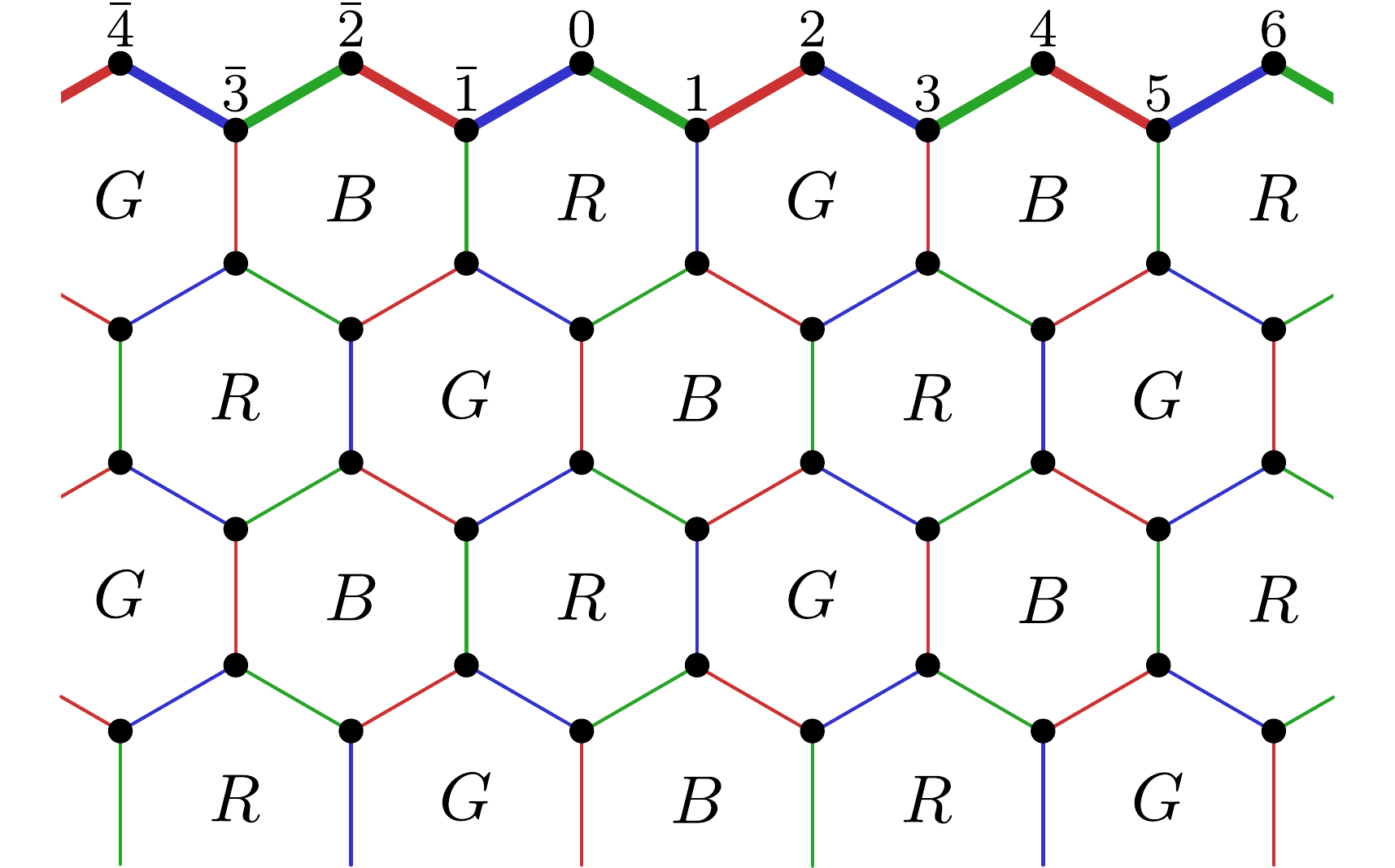} \end{minipage}
	\hspace{10mm}
	\begin{minipage}{44mm} \raisebox{33mm}{\small\bf (b)} \; \includegraphics[width=34mm]{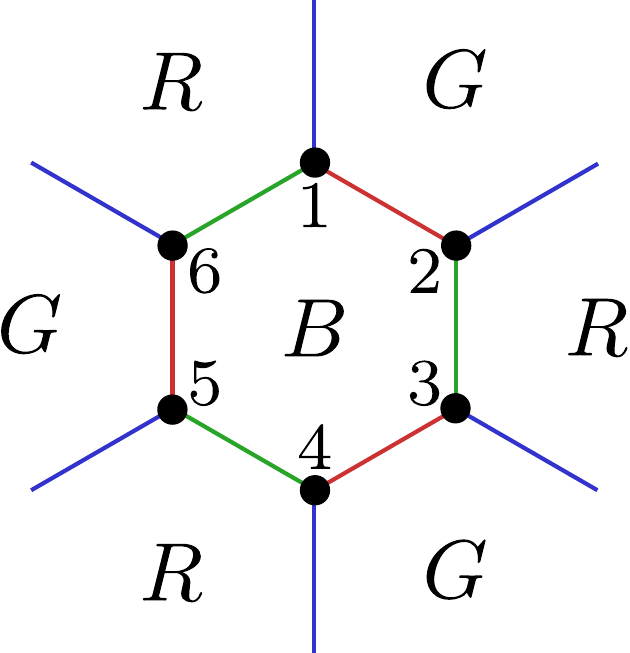} \end{minipage}
	\caption{%
	\textbf{(a)} Plaquette coloring on the honeycomb lattice.
	Each plaquette and edge is colored red ($R$), green ($G$), or blue ($B$).
	\textbf{(b)} A blue plaquette with six vertices. 
	Around the blue plaquette consists of alternating red and green edges.
	} \label{fig:honeycomb3color}
\end{figure}

The HH code consists of a period-three measurement schedule $\dots, \stepR, \stepG, \stepB, \stepR, \dots$ given by 
\begin{align}
	\text{Step }\stepR &: E_R^X \,,
&	\text{Step }\stepG &: E_G^Y \,,
&	\text{Step }\stepB &: E_B^Z \,.
	\label{eq:HHbulk_MeasureSeq}
\end{align}
The three steps ($\stepR$, $\stepG$, $\stepB$) repeats as a ``Floquet code.''
The measurement sequence implements a circuit between ISGs
\begin{align}
	\cdots \xrightarrow{E_R^X} \ISG_\stepR \xrightarrow{E_G^Y} \ISG_\stepG \xrightarrow{E_B^Z} \ISG_\stepB \xrightarrow{E_R^X} \ISG_\stepR \xrightarrow{E_G^Y} \cdots,
\end{align}
where
\begin{subequations} \begin{align}
	\ISG_\stepR &\equiv \Braket{ E_R^X, P_R^X, P_G^Y, P_B^Z },
\\	\ISG_\stepG &\equiv \Braket{ E_G^Y, P_R^X, P_G^Y, P_B^Z },
\\	\ISG_\stepB &\equiv \Braket{ E_B^Z, P_R^X, P_G^Y, P_B^Z }.
\end{align} \end{subequations}
The ISGs are strictly larger than their respective measurement set (e.g.\ $\bigl\langle E_G^Y \bigr\rangle \subset \ISG_\stepG$)
	because there are stabilizers inferred after pairs of measurements which commute with subsequent measurements.
For example measuring $E_R^X$ and then $E_G^Y$ infers the $P_B^Z$ plaquette stabilizers;
	around a blue plaquette $E_R^X$ generates $\prod_{v \in \text{plaq}} X_v$ and $E_G^Y$ generates $\prod_{v \in \text{plaq}} Y_v$, which together puts $\prod_{v \in \text{plaq}} Z _v\in P_B^Z$ in $\ISG_\stepG$.%
	\footnote{If one were to implement the HH code from a ``cold start'' with the measurement sequence~\eqref{eq:HHbulk_MeasureSeq}, it would take a full period to ``warm up'' to reversible transitions between the ISGs.}

We now explicitly show that transitions between the ISGs are locally reversible.
Consider the pair $(\ISG_\stepR, \ISG_\stepG)$.
Their intersection is given by $\ISG_\stepR \cap \ISG_\stepG = \Braket{P_R^X, P_G^Y, P_B^X, P_B^Y, P_B^Z}$.
Observe that every operator in $E_R^X$ and $E_G^Y$ lives completely along the boundary of some blue plaquette, a property of the trivalent, three-colorable graph.
For each blue plaquette with $2n$ vertices, order the vertices in clockwise order $1, 2, \dots, 2n$,
	such that the edge $(1,2)$ (connecting vertices $1$ and $2$) is labeled red,
	illustrated in Fig.~\ref{fig:honeycomb3color}(b).
As such, edges $(2j-1,2j)$ are all red, edges $(2j,2j+1)$ along with $(2n,1)$ are all green.
Within a single blue plaquette,
	the quotient $\ISG_\stepR / (\ISG_\stepR \cap \ISG_\stepG)$ is generated by a basis $\{A_i\} = \{ X_3 X_4, X_5 X_6, \dots, X_{2n-1}X_{2n} \}$; 
	the quotient $\ISG_\stepG / (\ISG_\stepR \cap \ISG_\stepG)$ is generated by basis $\{ Y_2 Y_3, Y_4 Y_5, \dots, Y_{2n-2}Y_{2n-1} \}$.
Both quotients consist of $(n-1)$ generators (not $n$) because the products of all $X$'s (and $Y$'s) are in $\ISG_\stepR \cap \ISG_\stepG$.
For each~$k$ with $2 \leq k \leq n$, the operator $\prod_{j=2}^{2k-1} Y_j \in \ISG_\stepG$ anticommutes with $X_{2k-1}X_{2k}$ but commutes with all other generators of $\{A_i\}$.
Hence the pair $\{A_i\}, \bigl\{ \prod_{j=2}^{2k-1} Y_j \bigr\}$
	form a conjugate $\ell$-local bases%
	\footnote{These bases are $\ell$-local with $\ell=\sup \{n_\text{plaq}\}$, under the graph metric.
		It is possible to lower $\ell$ with alternative choices of conjugate bases.
		For example, the bulk HH code on the honeycomb lattice admits conjugate bases with $\ell=1$.}
	for $(\ISG_\stepR, \ISG_\stepG)$ which satisfies the defintion~\ref{defn:lrt} of locally reversible transitions.

By similar arguments, the circuit $\ISG_\stepG \to \ISG_\stepB$ and $\ISG_\stepB \to \ISG_\stepR$ are also locally reversible.
Hence the HH code is a locally reversible measurement cycle.

\subsection{Truncated zigzag boundary}
\label{sec:HHboundary1}
We now consider various boundaries of the HH code and explicitly compute the evolution of the boundary logical algebra.
For the remainder of this section, we assume the bulk HH code sits on a honeycomb lattice.

\begin{figure}[ht]
	\centering
	\includegraphics[width=90mm, trim={0 90mm 0 0}, clip]{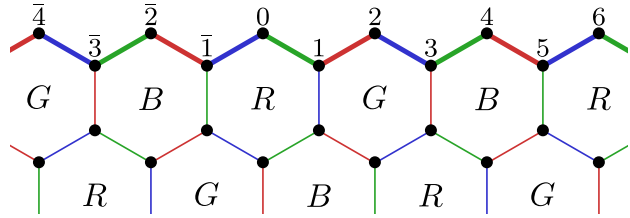}
	\caption{A portion of the HH honeycomb code along with a zigzag boundary exposed. 
	Sites along the boundary are numbered ($\bar{n}$ denotes $-n$).
	} \label{fig:HHboundary_zigzag}
\end{figure}
The zigzag boundary and labeling of vertices as shown in Fig.~\ref{fig:HHboundary_zigzag}.
Consider the measurement circuit consists of both bulk edges ($E_R^X$, $E_G^Y$, or $E_B^Z$), as well as one- and two-body measurements along the boundary.
\begin{align} \label{eq:HHboundary1_MeasSeq} \begin{aligned}
	M_\stepR' &= E_R^X \cup \big\{ X_{6j-2}X_{6j-1} \;,\, X_{6j+1}X_{6j+2} \;,\, X_{6j+0} ~\big|~ j\in\ZZ \big\} ,
\\	M_\stepG' &= E_G^Y \cup \big\{ Y_{6j+0}Y_{6j+1} \;,\, Y_{6j+3}Y_{6j+4} \;,\, Y_{6j+2} ~\big|~ j\in\ZZ \big\} ,
\\	M_\stepB' &= E_B^Z \cup \big\{ Z_{6j-4}Z_{6j-3} \;,\, Z_{6j-1}Z_{6j+0} \;,\, Z_{6j-2} ~\big|~ j\in\ZZ \big\} .
\end{aligned} \end{align}
These edge measurements result from taking a cut through the lattice which has a zigzag boundary as shown in Fig.~\ref{fig:HHboundary_zigzag},
	and then truncating measurement stabilizers along cut edges into a single site stabilizer.
Again, each two subsequent measurement steps determine a bulk plaquette,
	and so the ISG corresponding to each measurement round includes the bulk plaquettes,%
	\footnote{We consider all plaquettes in Fig.~\ref{fig:HHboundary_zigzag} with a color label part of the bulk.}
	bulk edge stabilizers, in additional to the boundary stabilizers.
\begin{align} \label{eq:HHboundary1_ISG} \begin{aligned}
	\ISG_\stepR' &= \Braket{ P_R^X, P_G^Y, P_B^Z, M_\stepR' },  &&&& \vcenter{\hbox{\includegraphics[width=70mm]{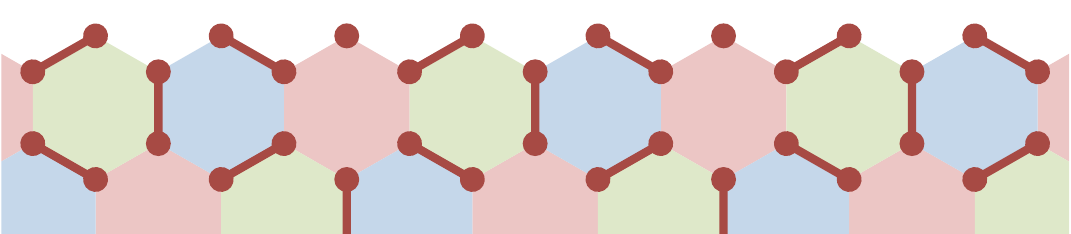}  }}
\\ \ISG_\stepG' &= \Braket{ P_R^X, P_G^Y, P_B^Z, M_\stepG' },  &&&& \vcenter{\hbox{\includegraphics[width=70mm]{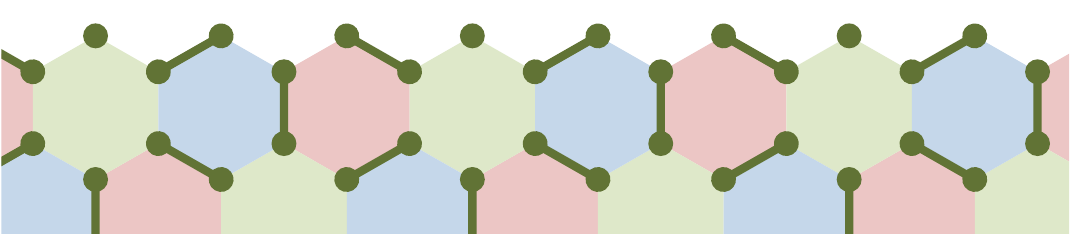}}}
\\	\ISG_\stepB' &= \Braket{ P_R^X, P_G^Y, P_B^Z, M_\stepB' }.  &&&& \vcenter{\hbox{\includegraphics[width=70mm]{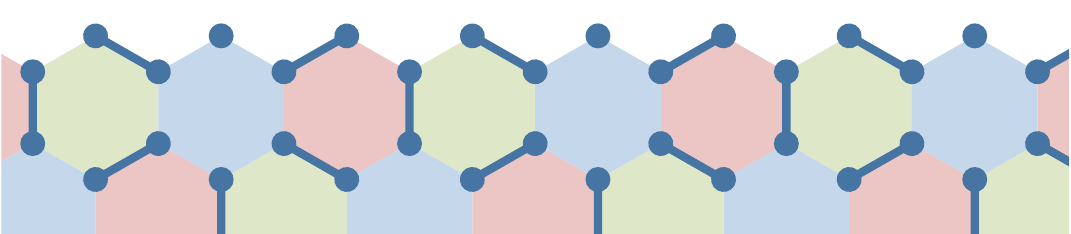}}}
\end{aligned} \end{align}
Here we provide a visualization of the measurements sets and ISGs at each step.
Measurement operators are indicated by the dots;
	a single dot denotes a single-site Pauli operator, connected dots indicate multi-site operators;
	the dark colors red, green, blue corresponds to Pauli $X$, $Y$, $Z$ respectively.
The colored hexagon are used to indicate the presence of certain plaquette operators in the ISG,
	the light colors red, green, blue corresponding to products of Pauli $X$, $Y$, $Z$ respectively.
The ISGs at each step are generated by the plaquette and measurements operators shown.

One can compute the boundary logical algebra for each ISG.
We denote the boundary logical generators at step $\stepR$ via $L^{\stepR\prime}_k$ for $k \in \ZZ$, 
	logicals at step $\stepG$ via $L^{\stepG\prime}_k$ for $k \in \ZZ-\tfrac{1}{3}$,
	logicals at step $\stepB$ via $L^{\stepB\prime}_k$ for $k \in \ZZ-\tfrac{2}{3}$.
Explicitly, they are given by (for $j\in\ZZ$):
\begin{subequations} \label{eq:HHboundary1_logicals} \begin{align}
	L^{\stepR\prime}_{2j    } &= Z_{6j-2}Z_{6j-1}Z_{6j+1}Z_{6j+2} \,,
&	L^{\stepR\prime}_{2j+1  } &= X_{6j+2}X_{6j+3}X_{6j+4} \,,
\\	L^{\stepG\prime}_{2j+2/3} &= X_{6j+0}X_{6j+1}X_{6j+3}X_{6j+4} \,,
&	L^{\stepG\prime}_{2j+5/3} &= Y_{6j+4}Y_{6j+5}Y_{6j+6} \,,
\\	L^{\stepB\prime}_{2j-2/3} &= Y_{6j-4}Y_{6j-3}Y_{6j-1}Y_{6j+0} \,,
&	L^{\stepB\prime}_{2j+1/3} &= Z_{6j+0}Z_{6j+1}Z_{6j+2} \,.
\end{align} \end{subequations}
At each step, the boundary algebra is a \MCAlgEq:
	$L^C_i L^C_j = (-1)^{\delta_{|i-j|,1}} L^C_j L^C_i$ for each step $C$.
The transitions between ISGs at each step induces an isomorphism between these boundary logical algebra.
\begin{align}
	\cdots \mapsto L^{\stepR\prime}_j
	\xmapsto{\stepR\to\stepG} L^{\stepG\prime}_{j-1/3}
	\xmapsto{\stepG\to\stepB} L^{\stepB\prime}_{j-2/3}
	\xmapsto{\stepB\to\stepR} L^{\stepR\prime}_{j-1  }
	\mapsto \cdots
\end{align}
In particular after one measurement period we have $L^C_{k} \to L^C_{k-1}$,
	which results in a translation of half a qubit per measurement period.
Just as the MQCA~\eqref{eq:MC_shift_qca}, this is a nontrivial boundary action,
and has MQCA index $-\half$.

\subsection{MQCA with nontrivial spread}

Example MQCA so far have been translations.
Technically these require nonzero spread parameter,
but the image of an operator does not change its support size.
Here we give an example of boundary MQCA that exhibits 
nontrivial growth of an operator.

\label{sec:HHboundary2}
Consider the following edge measurement sequence on the zigzag edge (Fig.~\ref{fig:HHboundary_zigzag}).
\begin{subequations} \label{eq:HHboundary2_MeasSeq} \begin{align}
	M_\stepR &= \big\{ X_{6j+1} X_{6j+2} \;,\, X_{6j+4} X_{6j+5} ~\big|~ j\in\ZZ \big\} ,
	&&	\vcenter{\hbox{\includegraphics[width=70mm]{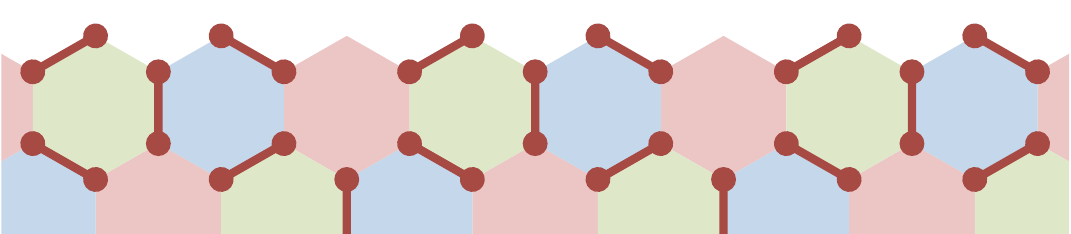}}}
	\label{eq:HHboundary_R2}
\\	M_\stepG &= \big\{ Y_{6j+3} Y_{6j+4} \;,\, Y_{6j+0} Y_{6j+1} ~\big|~ j\in\ZZ \big\} ,
	&&	\vcenter{\hbox{\includegraphics[width=70mm]{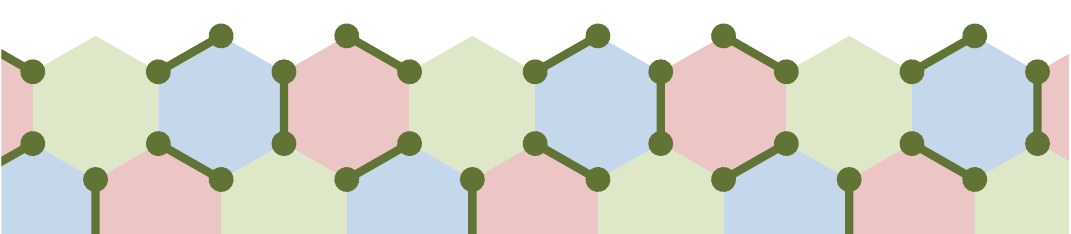}}}
	\label{eq:HHboundary_G2}
\\	M_\stepB &= \big\{ Z_{6j-1} Z_{6j+0} \;,\, Z_{6j+2} Z_{6j+3} ~\big|~ j\in\ZZ \big\} .
	&&	\vcenter{\hbox{\includegraphics[width=70mm]{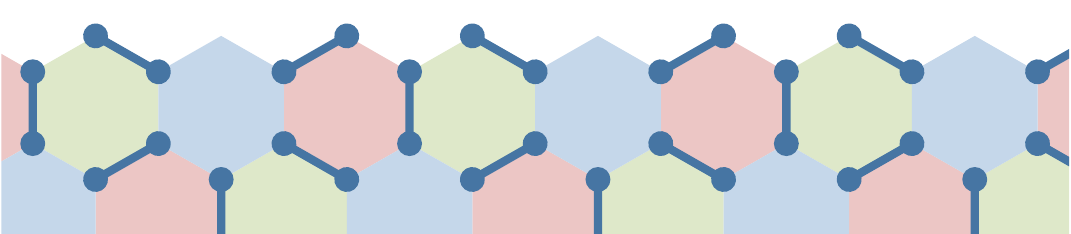}}}
	\label{eq:HHboundary_B2}
\end{align} \end{subequations}
We use the same coloring scheme as Eq.~\eqref{eq:HHboundary1_ISG} to denote the Pauli measurements.
These are the edge measurements that would result from truncating the lattice at the zigzag edge,
	and keeping only measurements where the entire term lives on the lattice that remains.
Notably, this differs from the circuit of \S\ref{sec:HHboundary1} by dropping single-site measurements.

The corresponding ISGs are generated by these measurements operators, along with the bulk plaquette operators.
The three ISGs share a common set of logical operators
\begin{align}
    \big\{ X_{6j+2} X_{6j+3} X_{6j+4} \;,\, Y_{6j-2} Y_{6j-1} Y_{6j} \;,\, Z_{6j} Z_{6j+1} Z_{6j+2} \;\big|\; j\in\ZZ \big\} .
\end{align}
Because these operators commute with all three ISGs, they are fixed points of the boundary MQCA;
	i.e., they map to themselves after a period of the circuit.
In addition,
	$\ISG_\stepR$ has additional logical generators $X_{6j}$,
	$\ISG_\stepG$ has $Y_{6j+2}$, and $\ISG_\stepB$ has $Y_{6j+4}$, for $j\in\ZZ$.
These operators have nontrivial dynamics:
\begin{align}
	X_0
	\xmapsto{\stepR\to\stepG}
	(Z_0 Z_1 Z_2) Y_2
	\xmapsto{\stepG\to\stepB}
	(Z_0 Z_1 Z_2) (X_2 X_3 X_4) Z_4
	\xmapsto{\stepB\to\stepR}
	(Z_0 Z_1 Z_2) (X_2 X_3 X_4) (Y_4 Y_5 Y_6) X_6
	\,.
	\label{eq:HHboundary2_opstring}
\end{align}
In contrast to the previous cases, the support of a logical under the LR cycle increases in size over time.
There is no contradiction here, as the operator growth is at most linear in time.
The logical algebra for $\ISG_\stepR$ is indeed a \MCAlg,
	with assignment
\begin{align}
	L_{4j} &= X_{6j} \,,
&	L_{4j+1} &= Y_{6j} Z_{6j+1} Z_{6j+2} \,,
&	L_{4j+2} &= X_{6j+2} X_{6j+3} X_{6j+4} \,,
&	L_{4j+3} &= Y_{6j+4} Y_{6j+5} Y_{6j+6} \,.
\end{align}
However, a cycle of this circuit implements the automorphism
	$L_{4j} \mapsto L_{4j} L_{4j+1} L_{4j+2} L_{4j+3} L_{4j+4}$,
	$L_{4j+1} \mapsto L_{4j+2} L_{4j+3} L_{4j+4}$,
	$L_{4j+2} \mapsto L_{4j+2}$,
	$L_{4j+3} \mapsto L_{4j+3}$.
A calculation shows that this circuit has boundary MQCA index of $+\half$.

It is enlightening to use the Majorana basis to describe these operators:
Consider a chain of Majorana zero modes $\{\gamma_k \,|\, k\in\ZZ \}$, where $\gamma_k$ are Hermitian fermionic operators obeying anticommutation relations $\gamma_k \gamma_l + \gamma_l \gamma_k = 2 \delta_{k,l}$.
Local observables are comprised of an even number of Majorana operators, which are generated by neighboring bilinears $\ii \gamma_{k} \gamma_{k+1}$.
The neighboring bilinears $A_k = \ii \gamma_{k-1} \gamma_k$ form a \MCAlgEq,  that is, $A_j A_k = (-1)^{\delta_{|j-k|,1}} A_k A_j$.
For this reason, we make the assignment $L_k \cong \ii \gamma_{k-1} \gamma_k$.%
	\footnote{%
		The operators $L_k$ form a representation of~$\ii \gamma_{k-1}\gamma_k$.
		An inverse of this representation 
		(as an algebra homomorphism from the algebra of Majorana operators to a matrix algebra)
		is called the Jordan--Wigner transformation:
		$\gamma_k \sim \prod_{i \leq k} L_i$.}
Then this LR cycle implements an automorphism equilvalent (up to signs) to the map
\begin{align}
	\gamma_k \mapsto \begin{cases} \gamma_{k+4} & k \equiv 0 \pmod{4} \\ \gamma_k & k \not\equiv 0 \pmod{4} \end{cases} .
\end{align}
For example, the circuit transforms the operator $X_0 = L_0 \cong \ii \gamma_{-1} \gamma_0 \mapsto \ii \gamma_{-1} \gamma_4 \cong L_0 L_1 L_2 L_3 L_4$, explaining why the length of the string grows.
We observe that along any cut $\ZZ = \big( (-\infty,\mathfrak{c}) \cap \ZZ \big) \cup \big( [\mathfrak{c},\infty) \cap \ZZ \big)$,
	a single Majorana crosses the cut per cycle (moving from position $< \mathfrak{c}$ to $\geq \mathfrak{c}$).%
	\footnote{We do not make any rigorous claims regarding fermionic (M)QCA here.  However such connection is likely not a coincidence.}

\subsection{Gapped dynamics with period doubled}
\label{sec:HHzz_gapped}

For applications to quantum error correction, it is desirable to find a planar implementation
whose boundary consists of multiple gapped segments.
For a vacuum blending of a LR cycle, we say a particular segment of the boundary is \textbf{gapped} 
if the base code does not admit any nontrivial logical operator 
supported entirely in a small neighborhood of the boundary segment.
Nontrivial logical operators must involve objects that are extended beyond the gapped region.
For example, a string logical operator that ends on a gapped region but extends to the bulk is allowed.

If the boundary MQCA is the identity,
then one can gap out any planar realization by measuring boundary logical operators.
For instance, consider the triangle planar configuration shown in Fig.~\ref{fig:HHtriangle},
where the logical algebra is generated by $\{ L_k \,|\, k \in \ZZ_{2N} \}$ obeying \MCAlg.
One can construct gapped regions along the boundary by adding
	$\{L_{2k} \,|\, k\in[a,b]\}$ or  $\{L_{2k+1} \,|\, k\in[a,b]\}$ to the base stabilizer group.
For the boundary circuits discussed so far, these involves measuring 3-body or higher-weight operators.

For some quantum computing architectures,
it may be advantageous to resolve the boundary logical operators using only weight-2 operators.
The following circuit, which uses additional ancilla qubits (placed next to the ``even'' site along the edge)
	implements a gapped boundary for the zigzag edge.
\begin{align} \begin{split}
	\HHzzSiteNumbering &
\\[0ex]	\vcenter{\hbox{\includegraphics[width=70mm]{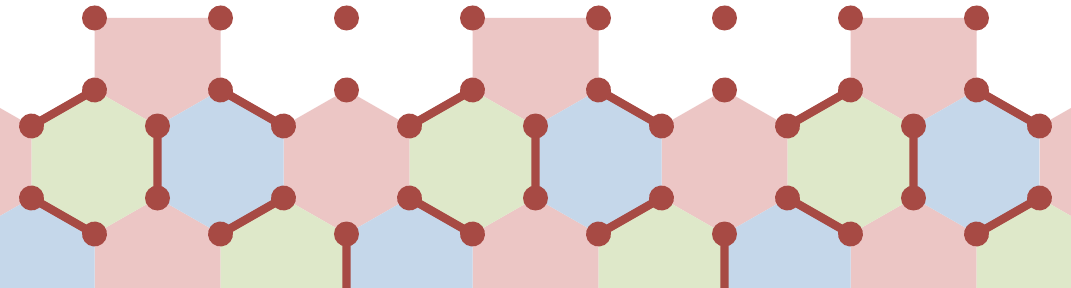}}} & \quad\begin{aligned} M_0 = E_R^X &\cup \big\{ X_{6j-2}X_{6j-1} \,, X_{6j+1}X_{6j+2} \,, \\&\qquad X^{6j-2} \,, X^{6j+0} \,, X^{6j+2} ~|~ j\in\ZZ \big\} , \end{aligned}
\\[1ex]	\vcenter{\hbox{\includegraphics[width=70mm]{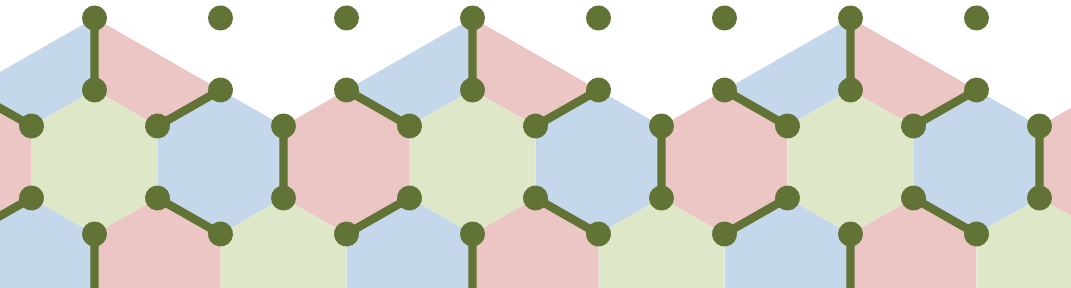}}} & \quad\begin{aligned} M_1 = E_G^Y &\cup \big\{ Y_{6j+0}Y_{6j+1} \,, Y_{6j+3}Y_{6j+4} \,, \\&\qquad Y^{6j+0} \,, Y^{6j+2}Y_{6j+2} \,, Y^{6j+4} ~|~ j\in\ZZ \big\} , \end{aligned}
\\[1ex]	\vcenter{\hbox{\includegraphics[width=70mm]{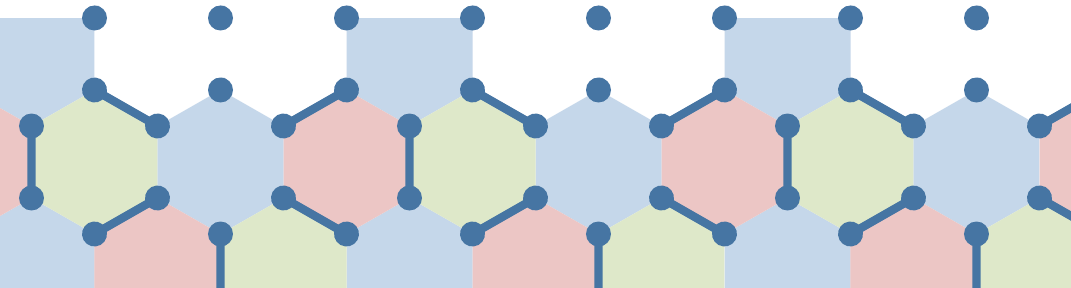}}} & \quad\begin{aligned} M_2 = E_B^Z &\cup \big\{ Z_{6j-4}Z_{6j-3} \,, Z_{6j-1}Z_{6j+0} \,, \\&\qquad Z^{6j-4} \,, Z^{6j-2} \,, Z^{6j+0} ~|~ j\in\ZZ \big\} , \end{aligned}
\\[1ex]	\vcenter{\hbox{\includegraphics[width=70mm]{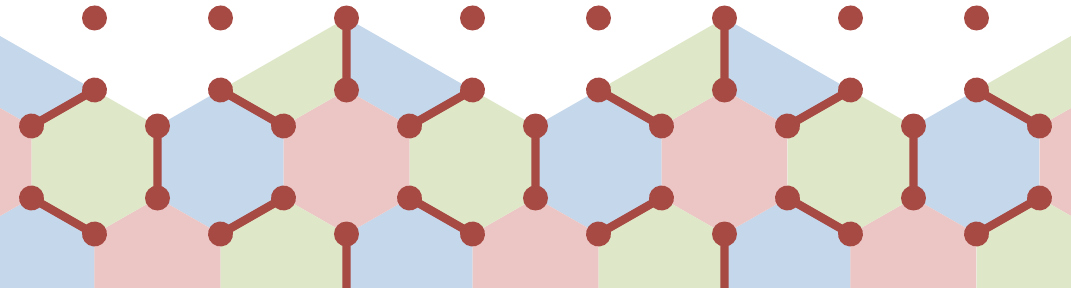}}} & \quad\begin{aligned} M_3 = E_R^X &\cup \big\{ X_{6j-2}X_{6j-1} \,, X_{6j+1}X_{6j+2} \,, \\&\qquad X^{6j-2} \,, X^{6j+0}X_{6j+0} \,, X^{6j+2} ~|~ j\in\ZZ \big\} , \end{aligned} 
\\[1ex]	\vcenter{\hbox{\includegraphics[width=70mm]{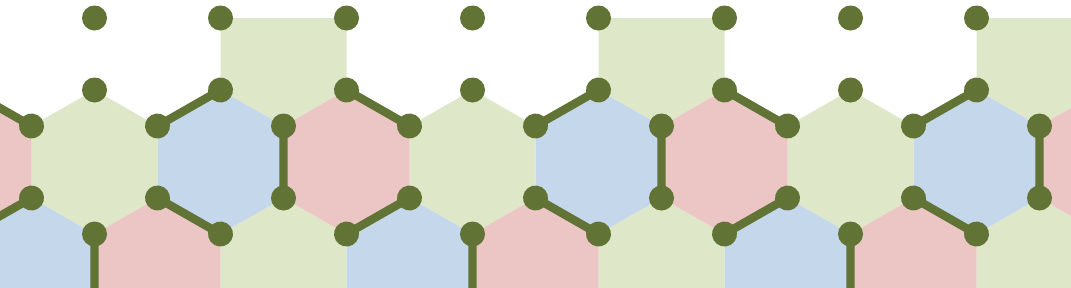}}} & \quad\begin{aligned} M_4 = E_G^Y &\cup \big\{ Y_{6j+0}Y_{6j+1} \,, Y_{6j+3}Y_{6j+4} \,, \\&\qquad Y^{6j+0} \,, Y^{6j+2} \,, Y^{6j+4} ~|~ j\in\ZZ \big\} , \end{aligned}
\\[1ex]	\vcenter{\hbox{\includegraphics[width=70mm]{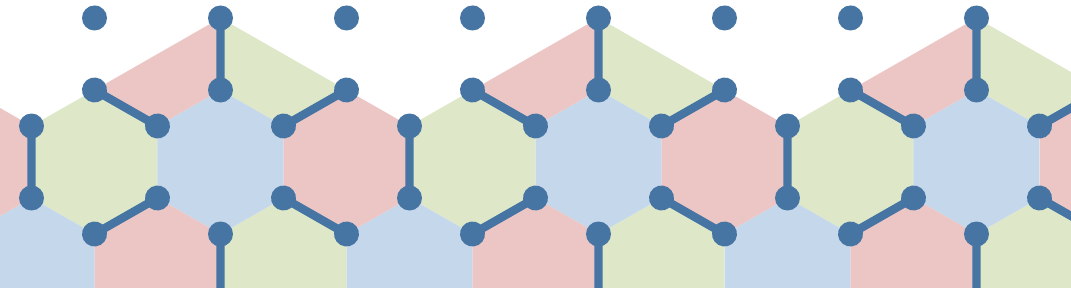}}} & \quad\begin{aligned} M_5 = E_B^Z &\cup \big\{ Z_{6j-4}Z_{6j-3} \,, Z_{6j-1}Z_{6j+0} \,, \\&\qquad Z^{6j-4} \,, Z^{6j-2}Z_{6j-2} \,, Z^{6j+0} ~|~ j\in\ZZ \big\} . \end{aligned}
\end{split} \notag \\ \HHzzSiteNumbering
\end{align}

Consider steps $4 \to 5 \to 0$.
On step~4, $Y_2$, $Y_3Y_4$, and $Y^4$ (acting on ancilla next to site 4) are stablized.
After step~5, where $Z_2Z_3$ and $Z_4Z^4$ are both measured, the product $Y_2Y_3Y_4Y^4$ from step~4 remains in the ISG.
Their combination also place the operator $X_2X_3X_4X^4$ in $\ISG_5$.
On step~0, $X^4$ is measured, which infers the operator $X_2X_3X_4$---a logical operator~\eqref{eq:HHboundary1_logicals} of the circuit~\eqref{eq:HHboundary1_MeasSeq}.
Similar arguments will show that the $Z_0Z_1Z_2 \in \ISG_2$ and $Y_4Y_5Y_6 \in \ISG_4$.
Further detail is given in \S\nameref{app:HHzz:3anc}.
Upon every cycle,
	the bulk plaquette operators are measured \emph{twice},
	the boundary operators are checked \emph{three times}.

A planar error correction code can be realized by executing this boundary sequence along segments of the planar edge,
	with alternating segments offset by 3 measurement steps of the period.%
	\footnote{In the TQFT picture (expounded in the discussion \S\ref{sec:discussion}), the segments alternates with $e$- and $m$-anyon condensation.  Between each adjacent pair of segment traps a Majorana zero mode.}
This scheme utilizes three ancilla/unit cell (for visual aesthetic).
Here each ancilla is only used for a fraction of the cycle;
a similar version with one ancilla per unit cell is given in \S\nameref{app:HHzz:1anc}.

\subsection{Remarks}

Suchara et al.~\cite{Suchara2011constructions} pointed out that the Kitaev honeycomb model
viewed as a static subsystem code does not have any logical qubit.
It seems natural to tripartition the set of edges of the honeycomb lattice based on orientation,\footnote{%
	The tripartition splits the terms of the Kitaev honeycomb model into $X$-, $Y$-, and $Z$-bonds, as described in~\cite{Kitaev_2005}.}
	rather than the Kekule pattern in the HH code.
One may then consider a measurement circuit with period 3, measuring all the edges of one orientation at a time.
We remark that this measurement circuit is \emph{not} locally reversible, as neighboring steps consist of large conjugate elements akin to the iterated teleportation example in~\S\ref{sec:teleportation}.

The bulk $\ZZ_2$-valued index is nonchiral in the sense that it is unchanged under space-reflection or circuit-reversal.
Notably, the HH code is invariant under a reflection that preserves the plaquette coloring.
However, for any termination of the HH code which is period-preserving and locally reversible
	(or any such boundary of any bulk LR cycle with a nontrivial $\ZZ_2$ index)
	the boundary MQCA must take on a nonzero index which indeed breaks reflection symmetry 
	(Prop.~\ref{prop:IndAsFlow}.2).
Along the zigzag boundary employed in the previous subsections,
	reflection symmetry is broken by the plaquette-coloring
	and the boundary circuit has chiral dynamics.
However, along the ``armchair'' boundary shown in Fig.~\ref{fig:HHboundary_armchair}
the edge measurement sequence must either
explicitly break reflection, alter the circuit periodicity, or violate local reversibility.

\begin{figure}[ht]
	\centering
	\includegraphics[width=0.4\linewidth]{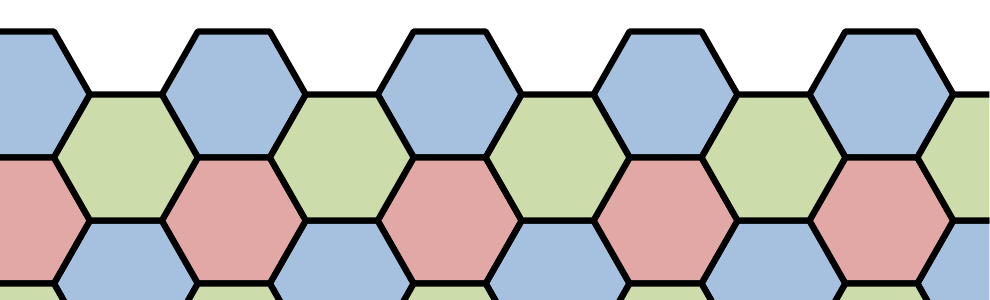} \qquad
	\includegraphics[width=0.4\linewidth]{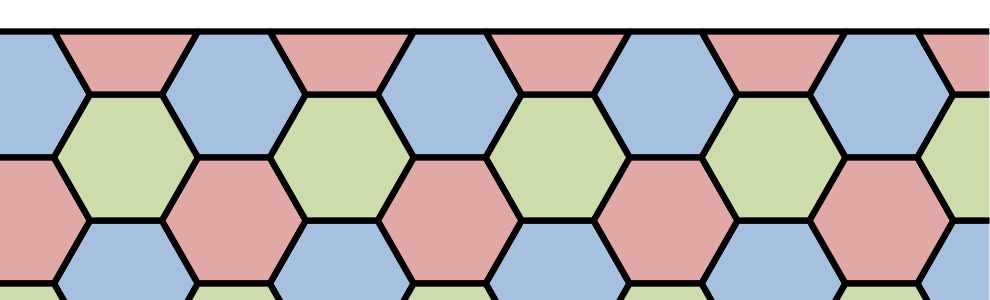}
	\caption{Possible boundary terminations of the HH honeycomb code along the ``armchair'' edge.
	Both geometries are (left-right) reflection symmetric.
	Hence any circuit along these boundary must either
	explicitly break reflection (via the set of measurement operations),
	alter the circuit periodicity, or violate local reversibility.
	}
	\label{fig:HHboundary_armchair}
\end{figure}

For example, Vuillot's planar realization~\cite{Vuillot2021} of the HH code has
	boundary geometry illustrated in Fig.~\ref{fig:HHboundary_armchair}(right).
Labeling the boundary sites by integers,
	the $\stepR$-step consists of measurements $\{X_{2a} X_{2a+1} ~|~ a\in\ZZ\}$,
	while the $\stepB$-step consists of measurements $\{Z_{2a-1} Z_{2a} ~|~ a\in\ZZ\}$.
These two steps, similar to example \S\ref{sec:teleportation}, is not locally reversible.
Appendix\,\nameref{app:HHac:gapped} provides a period-doubled (6-step) LR cycle with gapped boundary, i.e., no nontrivial boundary logical operators.
Appendix\,\nameref{app:HHac:p3r} provides a 3-step LR cycle breaking reflection symmetry.
On the other hand, Ref.~\cite{Haah2022} circumvents the issue by altering the bulk sequence into the period-6 cycle:
	$\stepR \to \stepG \to \stepB \to \stepR \to \stepB \to \stepG \to \stepR$.
	This sequence of length~6 is actually simply equivalent in the sense of~\S\ref{sec:SimpleEquivalence} 
	to the sequence of length zero because the second half is precisely the reverse of the first half.
	Therefore, its $\ZZ_2$ index is zero.

\section{Discussion}
\label{sec:discussion}

By considering locality and reversibility to measurement circuits,
we have introduced measurement quantum cellular automata, 
which enabled the characterization of boundary anomalies in Floquet measurement circuits.
Much of the concepts from the unitary case carry over to measurement circuits:
	locally reversible measurement cycles (LRMC) are analogous to finite depth unitary circuits (FDUC);
	measurement quantum cellular automata (MQCA) generalizes quantum cellular automata (QCA);
	and the MQCA index is a generalization of the GNVW index for Clifford QCA.
If we were to make a comparison between unitary and measurement circuits and their respective QCAs, 
our investigation shows that in one dimension LRMCs can implement any 1d QCA.
Moreover, at the boundary of two-dimensional LRMCs, 
we have observed anomalous effects which cannot occur in purely one-dimensional LRMCs.
In an abuse of notation, we have the hierarchy of classes
\begin{align}
	\mathrm{FDUC} \subset \mathrm{QCA} \cong \mathrm{LRMC} \subset \mathrm{MQCA}
\end{align}
in one dimension.

For topological LR cycles in two dimensions we have defined a $\ZZ_2$-valued index 
under a mild (perhaps redundant) assumption
by considering induced boundary dynamics.
By analyzing standalone one-dimensional LR cycles 
via the classification of one-dimensional Pauli stabilizer codes,
we have shown that the index is well defined for the two-dimensional bulk
regardless of the details of boundaries.
Our $\ZZ_2$ index is an invariant under topological blending equivalence,
but is not complete
for a rather uninteresting reason that our definition of topological blending 
does not allow nontrivial gapped boundaries between base codes.
We expect that a complete set of invariants for topological LR cycles over qubits in two dimensions
is our $\ZZ_2$ index 
combined with a nonnegative integer counting the copies of toric code
that the base code is FDUC equivalent to.
If the latter is zero, then the former should also be zero.
Based on examples, we further expect that our $\ZZ_2$ index completely determines
whether there exists a vacuum blending with no local logical operators at the boundary
for two-dimensional topological LR cycles over qubits.
Note that we have not considered any symmetries (onsite, antiunitary, {\it etc}.),
with which the classification seems much richer~\cite{Jones:DHR:2023}.

We have examined two examples of topological LR cycles,
the WPT model in \S\ref{sec:WPT} and the HH code in \S\ref{sec:HHcode},
each showcasing anomalous boundary dynamics.
The local logical algebras that appear at their boundaries are both isomorphic to the \MCAlg.
Both LR cycles realize a nontrivial bulk automorphism interchanging electric and magnetic logical operators.
It is well known that a translation by one site in the Wen plaquette model 
realizes the nontrivial automorphism on the emergent anyon theory,
which is exactly what the translation circuit implements.
In the HH code, each period pumps an invertible domain wall across the system.
Both of these observations are consistent with the TQFT picture described in Fig.~\ref{fig:meme}.

\begin{figure}[htbp] 
	\centering
	\includegraphics[width=.5\textwidth]{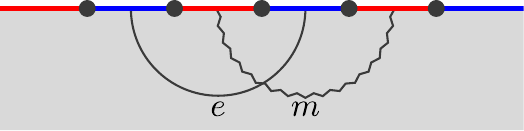} 
	\caption{%
	A boundary of the 2D toric code consisting of alternating $e$ (blue) and $m$ (red) condensates.
	The local logical operators are generated by $e$- and $m$-strings which connect adjacent boundaries.}
	\label{fig:meme}
\end{figure}

The topological quantum field theory picture of this index is rather simple. 
In Fig.~\ref{fig:meme} we show a boundary of the 2D toric code described by alternating $e$ and $m$ condensates.
Between every neighbouring pairs of $e$ and $m$ edges (illustruated by the black dots in Fig.~\ref{fig:meme}) is a Majorana zero mode trapped at the domain wall.
Local logical operators are generated by $e$- and $m$-strings which connect second-neighboring boundaries of the same type, as shown in Fig.~\ref{fig:meme}.
These local logical operators obey \MCAlgEq{} and are analogues of the operators $\{L_{2j}\}$ and $\{L_{2j+1}\}$ 
discussed in \S\ref{sec:WPT} and \S\ref{sec:HHcode}.
Now consider the case where we have an LR cycle whose base ISG is described by Fig.~\ref{fig:meme} 
(or anything equivalent up to a finite depth unitary circuit). 
If the circuit implements a bulk automorphism $\varphi$ that is nontrivial (i.e., switches $e$ and $m$), 
then every $e$-string must map to an $m$-string and vice versa.
However, because the base ISG is fixed, the boundary pattern of $e$- and $m$-condensation does not change.
This means that the boundary logical algebra (generated by boundary string operators) must transform nontrivially under the automorphism~$\varphi$.
One way this can happen while maintaining locality is a translation by a Majorana;
	every $e$- and $m$-string maps to the string immediately to its left (or right),
	corresponding to the boundary MQCA index $\Ind \alpha_\varphi = \pm\half$.
(Here $\alpha_\varphi$ is some boundary MQCA induced by the bulk topological LR cycle 
which implements the topological automorphism~$\alpha$.)
This anomalous boundary action is a consequence of the nontrivial bulk automorphism
and a bulk-boundary correspondence between bulk and edge dynamics.
In light of the fact that Majoranas have quantum dimension~$\sqrt{2}$,
we can exponentiate the MQCA index with base~2: $I(\alpha) = 2^{\Ind\alpha}$.
We may interpret the exponential MQCA index as the quantum dimension of the domain walls, modulo an integer power of~2.

Our results generalize to Pauli stabilizers on $p$-dimensional qudits for any prime $p$.
In dimension one, every LR cycle has an integer MQCA index;
Theorem~\ref{thm:IntegerIndexOf1dLRMC} does not use anything special about the qubits having dimension~2,
but it is important that Pauli groups modulo phase factors are vector spaces, 
and hence we need prime dimensional qudits.
Technically, we have used the fact that any subspace of a vector space is a direct summand,
which is not always true over a coefficient ring that is not a field.
The WPT model of \S\ref{sec:WPT} extends to $\ZZ_p$ in a straightforward manner, with boundary algebra obeying
\begin{align} L_j L_k = s_{j,k} L_k L_j \quad
	\text{where } s_{j,k} = \begin{cases} \exp\mathopen\big[ \frac{2\pi i}{p} (j-k) \big] & |j-k|=1 , \\ 1 & \text{otherwise} . \end{cases} \end{align}
A cycle of the model again transforms $L_k \mapsto L_{k+1}$ with MQCA index of~$\half$.
Hence, we have a $\ZZ_2$ index for a class of 2d topological LR cycle for each prime~$p$,
where the class is specified by the conditions of Theorem~\ref{thm:twodimZ2}.
More generally, working within Pauli stabilizers for $N$-dimensional qudits for any positive integer $N$,
it is conceivable that the bulk topological LR cycle in two dimensions have an invariant valued in
$
	\bigoplus_{\text{primes $p|N$}} \ZZ_2
$,
a $\ZZ_2$-index for every prime factor of $N$.
If this is going to be true, different prime dimensional qudits should not mix up under any dynamics.

We can also extend these ideas to fermionic systems.
In the unitary case, the main difference between the fermionic and bosonic case is that the (Clifford) f-QCA GNVW index is quantized in units of $\half$~\cite{fermionGNVW2}.
We can define f-LRMCs, where the stabilizers are bosonic (i.e., consists of an even number of fermion operators),
	and f-MQCA as a locality-preserving automorphism of a fermionic algebra (or its quotient).
Suppose that we can also generalize the MQCA index to its fermionic counterpart,
	then as $\mathrm{LRMC} \subseteq \mathrm{f\mbox{-}LRMC}$, we can construct one-dimensional circuits with arbitrary integer f-MQCA index.
It is conceivable that the f-MQCA index must be an integer for a pure 1d circuit by the following argument.
If we write a 1d fermionic chain in the Majorana basis, then the bosonic sector of the algebra is generated by fermion bilinears, which is exactly the \MCAlgEq.
Any f-LRMC must be constructed from operators in the \MCAlg.
If it were possible to construct an f-LRMC $\circuit C_f$ with f-MQCA index of $\half$,
we would be able to construct a ``representation'' circuit $\circuit C_b$ on the boundary of Wen's plaquette model 
from its boundary logical algebra that is a representation of the algebra of fermion bilinears.
This would change the boundary MQCA index by $\half$, in direct contradiction with Theorem~\ref{thm:twodimZ2}, 
one of our main results.
We can give another argument. We can think of the translation by measurements
as a unitary circuit on a background code that has a translation invariant state in a subsystem.
For the MQCA index to be a half, we would be shifting the operator algebra by just one Majorana mode,
but there is no 1d state specified by a complete set of commuting operators
that is invariant under one-Majorana translation.
Thus, we expect $\mathrm{LRMC} \cong \mathrm{f\mbox{-}LRMC}$ in one dimensions.%
	\footnote{The equivalence probably holds in all dimensions, since it is possible to construct emergent fermions with bosonic systems in $\dd \geq 2$.}

Throughout this work we have studied the dynamics of measurement circuits without noise or error considerations.
We observe a pattern that nonlocally reversible codes such as
	the Bacon--Shor code~\cite{Bacon2006} 
	viewed as a period 2 sequence of measurements of the ``gauge'' operators
	or Vuillot's boundary of the HH code~\cite{Vuillot2021}
	are susceptible to errors~\cite{NappBaconShor2012, Vuillot2021} 
	and do not have an error threshold in the thermodynamic limit.
Akin to the iterated teleportation example in \S\ref{sec:teleportation}, the nonlocality of the conjugate bases means that
	tracking the dynamics of logical operators require classical corrections
	that depends on a long chain of results from the measurements.
As system size grows, cummulative measurement errors would become uncorrectable.
It is worth exploring if for a topological code the local reversibility 
provides an easily-checkable condition for the existence of an error threshold.

Finally, we envisage the qualifications of a \emph{Floquet topological code}.
Suppose we interpret a ``Floquet code'' as one which involves some measurement schedule between noncommuting stabilizers.
We see that many subsystem codes, such as the Bacon--Shor code~\cite{Bacon2006},
	Bombin's ruby lattice code~\cite{Bombin2010TopoSubsysCode},
	and the 3-qubit subsystem surface code~\cite{Bravyi2012} can be categorized as Floquet.
We can let Floquet topological code to mean any LR cycle with a topological base stabilizer code,
	which would certainly include many of the recent works~\cite{Aasen2022, Davydova2022, Kesselring2022}.
Of course, one can always ``Floquetize'' a static topological error correction code
	by breaking up high-weight operators into sequences of 2- and 3-body operators while retaining the original code's topological character~\cite{Bravyi2012, Chao2020optimizationof, Gidney2022pentagon}.
One could further demand that the measurement circuit implements a nontrivial automorphism among the logical operators,
	which would rule out such Floquetized codes; but would also rule out the CSS honeycomb Floquet code~\cite{Davydova2022, Kesselring2022}!

\textit{Note}:
During the preparation of this manuscript, we became aware of~\cite{Sullivan2023, Roberts2023}
	that also studies the boundary dynamics of Floquet codes, using complementary methods.
Ref.~\cite{Sullivan2023} studies a \emph{unitarized} version of the HH code written in terms of fermions 
and characterizes the boundary dynamics of the unitary circuit via the ``chiral unitary index''~\cite{fermionGNVW1, fermionGNVW2}.

\begin{acknowledgments}
D.A. is grateful for illuminating discussions with Margarita Davydova, Andrew Potter, and Zhenghan Wang.
Z.L. thanks Beni Yoshida for helpful discussions.
R.M. acknowledges Margarita Davydova and Nathanan Tantivasadakarn for illuminating conversations.

Z.L. is supported by Perimeter Institute;
research at Perimeter Institute is supported in part by the Government of Canada through the Department of Innovation, Science and Economic Development and by the Province of Ontario through the Ministry of Colleges and Universities.
R.M. is supported by US National Science Foundation grant No.~DMR-1848336.
\end{acknowledgments}

\phantomsection
\addcontentsline{toc}{section}{References}
\nocite{apsrev42Control}
\bibliographystyle{apsrev4-2}
\bibliography{mqcarefs}

\begin{thebibliography}{61}%
\makeatletter
\providecommand \@ifxundefined [1]{%
 \@ifx{#1\undefined}
}%
\providecommand \@ifnum [1]{%
 \ifnum #1\expandafter \@firstoftwo
 \else \expandafter \@secondoftwo
 \fi
}%
\providecommand \@ifx [1]{%
 \ifx #1\expandafter \@firstoftwo
 \else \expandafter \@secondoftwo
 \fi
}%
\providecommand \natexlab [1]{#1}%
\providecommand \enquote  [1]{``#1''}%
\providecommand \bibnamefont  [1]{#1}%
\providecommand \bibfnamefont [1]{#1}%
\providecommand \citenamefont [1]{#1}%
\providecommand \href@noop [0]{\@secondoftwo}%
\providecommand \href [0]{\begingroup \@sanitize@url \@href}%
\providecommand \@href[1]{\@@startlink{#1}\@@href}%
\providecommand \@@href[1]{\endgroup#1\@@endlink}%
\providecommand \@sanitize@url [0]{\catcode `\\12\catcode `\$12\catcode
  `\&12\catcode `\#12\catcode `\^12\catcode `\_12\catcode `\%12\relax}%
\providecommand \@@startlink[1]{}%
\providecommand \@@endlink[0]{}%
\providecommand \url  [0]{\begingroup\@sanitize@url \@url }%
\providecommand \@url [1]{\endgroup\@href {#1}{\urlprefix }}%
\providecommand \urlprefix  [0]{URL }%
\providecommand \Eprint [0]{\href }%
\providecommand \doibase [0]{https://doi.org/}%
\providecommand \selectlanguage [0]{\@gobble}%
\providecommand \bibinfo  [0]{\@secondoftwo}%
\providecommand \bibfield  [0]{\@secondoftwo}%
\providecommand \translation [1]{[#1]}%
\providecommand \BibitemOpen [0]{}%
\providecommand \bibitemStop [0]{}%
\providecommand \bibitemNoStop [0]{.\EOS\space}%
\providecommand \EOS [0]{\spacefactor3000\relax}%
\providecommand \BibitemShut  [1]{\csname bibitem#1\endcsname}%
\let\auto@bib@innerbib\@empty
\bibitem [{\citenamefont {Knill}(2004)}]{Knill2004a}%
  \BibitemOpen
  \bibfield  {author} {\bibinfo {author} {\bibfnamefont {E.}~\bibnamefont
  {Knill}},\ }\href@noop {} {\bibinfo {title} {Fault-tolerant postselected
  quantum computation: Schemes}} (\bibinfo {year} {2004}),\ \Eprint
  {https://arxiv.org/abs/quant-ph/0402171} {arXiv:quant-ph/0402171}%
  \BibitemShut {NoStop}%
\bibitem [{\citenamefont {Bravyi}\ and\ \citenamefont
  {Kitaev}(2005)}]{Bravyi2005}%
  \BibitemOpen
  \bibfield  {author} {\bibinfo {author} {\bibfnamefont {Sergey}\ \bibnamefont
  {Bravyi}}\ and\ \bibinfo {author} {\bibfnamefont {Alexei}\ \bibnamefont
  {Kitaev}},\ }\bibfield  {title} {\bibinfo {title} {Universal quantum
  computation with ideal {C}lifford gates and noisy ancillas},\ }\href
  {https://doi.org/10.1103/PhysRevA.71.022316} {\bibfield  {journal} {\bibinfo
  {journal} {Phys. Rev. A}\ }\textbf {\bibinfo {volume} {71}},\ \bibinfo
  {pages} {022316} (\bibinfo {year} {2005})},\ \Eprint
  {https://arxiv.org/abs/quant-ph/0403025} {arXiv:quant-ph/0403025}%
  \BibitemShut {NoStop}%
\bibitem [{\citenamefont {Fisher}\ \emph {et~al.}(2023)\citenamefont {Fisher},
  \citenamefont {Khemani}, \citenamefont {Nahum},\ and\ \citenamefont
  {Vijay}}]{Fisher2023}%
  \BibitemOpen
  \bibfield  {author} {\bibinfo {author} {\bibfnamefont {Matthew~P.A.}\
  \bibnamefont {Fisher}}, \bibinfo {author} {\bibfnamefont {Vedika}\
  \bibnamefont {Khemani}}, \bibinfo {author} {\bibfnamefont {Adam}\
  \bibnamefont {Nahum}},\ and\ \bibinfo {author} {\bibfnamefont {Sagar}\
  \bibnamefont {Vijay}},\ }\bibfield  {title} {\bibinfo {title} {Random quantum
  circuits},\ }\href {https://doi.org/10.1146/annurev-conmatphys-031720-030658}
  {\bibfield  {journal} {\bibinfo  {journal} {Annual Review of Condensed Matter
  Physics}\ }\textbf {\bibinfo {volume} {14}},\ \bibinfo {pages} {335--379}
  (\bibinfo {year} {2023})}\BibitemShut {NoStop}%
\bibitem [{\citenamefont {Gottesman}\ and\ \citenamefont
  {Chuang}(1999)}]{Gottesman1999}%
  \BibitemOpen
  \bibfield  {author} {\bibinfo {author} {\bibfnamefont {Daniel}\ \bibnamefont
  {Gottesman}}\ and\ \bibinfo {author} {\bibfnamefont {Isaac~L.}\ \bibnamefont
  {Chuang}},\ }\bibfield  {title} {\bibinfo {title} {Demonstrating the
  viability of universal quantum computation using teleportation and
  single-qubit operations},\ }\href {https://doi.org/10.1038/46503} {\bibfield
  {journal} {\bibinfo  {journal} {Nature}\ }\textbf {\bibinfo {volume} {402}},\
  \bibinfo {pages} {390--393} (\bibinfo {year} {1999})},\ \Eprint
  {https://arxiv.org/abs/quant-ph/9908010} {arXiv:quant-ph/9908010}%
  \BibitemShut {NoStop}%
\bibitem [{\citenamefont {Briegel}\ and\ \citenamefont
  {Raussendorf}(2001)}]{BriegelRaussendorf}%
  \BibitemOpen
  \bibfield  {author} {\bibinfo {author} {\bibfnamefont {Hans~J.}\ \bibnamefont
  {Briegel}}\ and\ \bibinfo {author} {\bibfnamefont {Robert}\ \bibnamefont
  {Raussendorf}},\ }\bibfield  {title} {\bibinfo {title} {Persistent
  entanglement in arrays of interacting particles},\ }\href
  {https://doi.org/10.1103/PhysRevLett.86.910} {\bibfield  {journal} {\bibinfo
  {journal} {Phys. Rev. Lett.}\ }\textbf {\bibinfo {volume} {86}},\ \bibinfo
  {pages} {910} (\bibinfo {year} {2001})},\ \Eprint
  {https://arxiv.org/abs/quant-ph/0004051} {arXiv:quant-ph/0004051}%
  \BibitemShut {NoStop}%
\bibitem [{\citenamefont {Nielsen}(2003)}]{Nielsen2003}%
  \BibitemOpen
  \bibfield  {author} {\bibinfo {author} {\bibfnamefont {Michael~A.}\
  \bibnamefont {Nielsen}},\ }\bibfield  {title} {\bibinfo {title} {Quantum
  computation by measurement and quantum memory},\ }\href
  {https://doi.org/https://doi.org/10.1016/S0375-9601(02)01803-0} {\bibfield
  {journal} {\bibinfo  {journal} {Physics Letters A}\ }\textbf {\bibinfo
  {volume} {308}},\ \bibinfo {pages} {96--100} (\bibinfo {year} {2003})},\
  \Eprint {https://arxiv.org/abs/quant-ph/0108020} {arXiv:quant-ph/0108020}%
  \BibitemShut {NoStop}%
\bibitem [{\citenamefont {Harper}\ \emph {et~al.}(2020)\citenamefont {Harper},
  \citenamefont {Roy}, \citenamefont {Rudner},\ and\ \citenamefont
  {Sondhi}}]{Harper2019}%
  \BibitemOpen
  \bibfield  {author} {\bibinfo {author} {\bibfnamefont {Fenner}\ \bibnamefont
  {Harper}}, \bibinfo {author} {\bibfnamefont {Rahul}\ \bibnamefont {Roy}},
  \bibinfo {author} {\bibfnamefont {Mark~S.}\ \bibnamefont {Rudner}},\ and\
  \bibinfo {author} {\bibfnamefont {S.~L.}\ \bibnamefont {Sondhi}},\ }\bibfield
   {title} {\bibinfo {title} {Topology and broken symmetry in {Floquet}
  systems},\ }\href {https://doi.org/10.1146/annurev-conmatphys-031218-013721}
  {\bibfield  {journal} {\bibinfo  {journal} {Annual Review of Condensed Matter
  Physics}\ }\textbf {\bibinfo {volume} {11}},\ \bibinfo {pages} {345--368}
  (\bibinfo {year} {2020})},\ \Eprint {https://arxiv.org/abs/1905.01317}
  {arXiv:1905.01317}\BibitemShut {NoStop}%
\bibitem [{\citenamefont {Rudner}\ and\ \citenamefont
  {Lindner}(2020)}]{Rudner2020}%
  \BibitemOpen
  \bibfield  {author} {\bibinfo {author} {\bibfnamefont {Mark~S}\ \bibnamefont
  {Rudner}}\ and\ \bibinfo {author} {\bibfnamefont {Netanel~H}\ \bibnamefont
  {Lindner}},\ }\bibfield  {title} {\bibinfo {title} {Band structure
  engineering and non-equilibrium dynamics in {F}loquet topological
  insulators},\ }\href {https://doi.org/10.1038/s42254-020-0170-z} {\bibfield
  {journal} {\bibinfo  {journal} {Nature reviews physics}\ }\textbf {\bibinfo
  {volume} {2}},\ \bibinfo {pages} {229--244} (\bibinfo {year}
  {2020})}\BibitemShut {NoStop}%
\bibitem [{\citenamefont {Oka}\ and\ \citenamefont {Aoki}(2009)}]{Takashi2009}%
  \BibitemOpen
  \bibfield  {author} {\bibinfo {author} {\bibfnamefont {Takashi}\ \bibnamefont
  {Oka}}\ and\ \bibinfo {author} {\bibfnamefont {Hideo}\ \bibnamefont {Aoki}},\
  }\bibfield  {title} {\bibinfo {title} {Photovoltaic {H}all effect in
  graphene},\ }\href {https://doi.org/10.1103/PhysRevB.79.081406} {\bibfield
  {journal} {\bibinfo  {journal} {Phys. Rev. B}\ }\textbf {\bibinfo {volume}
  {79}},\ \bibinfo {pages} {081406} (\bibinfo {year} {2009})},\ \Eprint
  {https://arxiv.org/abs/0807.4767} {arXiv:0807.4767}\BibitemShut {NoStop}%
\bibitem [{\citenamefont {Kitagawa}\ \emph {et~al.}(2010)\citenamefont
  {Kitagawa}, \citenamefont {Berg}, \citenamefont {Rudner},\ and\ \citenamefont
  {Demler}}]{Kitagawa2010}%
  \BibitemOpen
  \bibfield  {author} {\bibinfo {author} {\bibfnamefont {Takuya}\ \bibnamefont
  {Kitagawa}}, \bibinfo {author} {\bibfnamefont {Erez}\ \bibnamefont {Berg}},
  \bibinfo {author} {\bibfnamefont {Mark}\ \bibnamefont {Rudner}},\ and\
  \bibinfo {author} {\bibfnamefont {Eugene}\ \bibnamefont {Demler}},\
  }\bibfield  {title} {\bibinfo {title} {Topological characterization of
  periodically driven quantum systems},\ }\href
  {https://doi.org/10.1103/PhysRevB.82.235114} {\bibfield  {journal} {\bibinfo
  {journal} {Phys. Rev. B}\ }\textbf {\bibinfo {volume} {82}},\ \bibinfo
  {pages} {235114} (\bibinfo {year} {2010})},\ \Eprint
  {https://arxiv.org/abs/1010.6126} {arXiv:1010.6126}\BibitemShut {NoStop}%
\bibitem [{\citenamefont {Lindner}\ \emph {et~al.}(2011)\citenamefont
  {Lindner}, \citenamefont {Refael},\ and\ \citenamefont
  {Galitski}}]{Lindner2011}%
  \BibitemOpen
  \bibfield  {author} {\bibinfo {author} {\bibfnamefont {Netanel~H.}\
  \bibnamefont {Lindner}}, \bibinfo {author} {\bibfnamefont {Gil}\ \bibnamefont
  {Refael}},\ and\ \bibinfo {author} {\bibfnamefont {Victor}\ \bibnamefont
  {Galitski}},\ }\bibfield  {title} {\bibinfo {title} {Floquet topological
  insulator in semiconductor quantum wells},\ }\href
  {https://doi.org/10.1038/nphys1926} {\bibfield  {journal} {\bibinfo
  {journal} {Nature Physics}\ }\textbf {\bibinfo {volume} {7}},\ \bibinfo
  {pages} {490--495} (\bibinfo {year} {2011})},\ \Eprint
  {https://arxiv.org/abs/1008.1792} {arXiv:1008.1792}\BibitemShut {NoStop}%
\bibitem [{\citenamefont {Rudner}\ \emph {et~al.}(2013)\citenamefont {Rudner},
  \citenamefont {Lindner}, \citenamefont {Berg},\ and\ \citenamefont
  {Levin}}]{Rudner2013}%
  \BibitemOpen
  \bibfield  {author} {\bibinfo {author} {\bibfnamefont {Mark~S.}\ \bibnamefont
  {Rudner}}, \bibinfo {author} {\bibfnamefont {Netanel~H.}\ \bibnamefont
  {Lindner}}, \bibinfo {author} {\bibfnamefont {Erez}\ \bibnamefont {Berg}},\
  and\ \bibinfo {author} {\bibfnamefont {Michael}\ \bibnamefont {Levin}},\
  }\bibfield  {title} {\bibinfo {title} {Anomalous edge states and the
  bulk-edge correspondence for periodically-driven two dimensional systems},\
  }\href@noop {} {\bibfield  {journal} {\bibinfo  {journal} {Phys. Rev. X}\
  }\textbf {\bibinfo {volume} {3}},\ \bibinfo {pages} {031005} (\bibinfo {year}
  {2013})},\ \Eprint {https://arxiv.org/abs/1212.3324} {arXiv:1212.3324}%
  \BibitemShut {NoStop}%
\bibitem [{\citenamefont {Po}\ \emph {et~al.}(2016)\citenamefont {Po},
  \citenamefont {Fidkowski}, \citenamefont {Morimoto}, \citenamefont {Potter},\
  and\ \citenamefont {Vishwanath}}]{PoFidkowski2016}%
  \BibitemOpen
  \bibfield  {author} {\bibinfo {author} {\bibfnamefont {Hoi~Chun}\
  \bibnamefont {Po}}, \bibinfo {author} {\bibfnamefont {Lukasz}\ \bibnamefont
  {Fidkowski}}, \bibinfo {author} {\bibfnamefont {Takahiro}\ \bibnamefont
  {Morimoto}}, \bibinfo {author} {\bibfnamefont {Andrew~C.}\ \bibnamefont
  {Potter}},\ and\ \bibinfo {author} {\bibfnamefont {Ashvin}\ \bibnamefont
  {Vishwanath}},\ }\bibfield  {title} {\bibinfo {title} {Chiral {F}loquet
  phases of many-body localized bosons},\ }\href
  {https://doi.org/10.1103/PhysRevX.6.041070} {\bibfield  {journal} {\bibinfo
  {journal} {Phys. Rev. X}\ }\textbf {\bibinfo {volume} {6}},\ \bibinfo {pages}
  {041070} (\bibinfo {year} {2016})}\BibitemShut {NoStop}%
\bibitem [{\citenamefont {Harper}\ and\ \citenamefont
  {Roy}(2017)}]{Harper2017}%
  \BibitemOpen
  \bibfield  {author} {\bibinfo {author} {\bibfnamefont {Fenner}\ \bibnamefont
  {Harper}}\ and\ \bibinfo {author} {\bibfnamefont {Rahul}\ \bibnamefont
  {Roy}},\ }\bibfield  {title} {\bibinfo {title} {Floquet topological order in
  interacting systems of bosons and fermions},\ }\href
  {https://doi.org/10.1103/PhysRevLett.118.115301} {\bibfield  {journal}
  {\bibinfo  {journal} {Phys. Rev. Lett.}\ }\textbf {\bibinfo {volume} {118}},\
  \bibinfo {pages} {115301} (\bibinfo {year} {2017})},\ \Eprint
  {https://arxiv.org/abs/1609.06303} {arXiv:1609.06303}\BibitemShut {NoStop}%
\bibitem [{\citenamefont {Po}\ \emph {et~al.}(2017)\citenamefont {Po},
  \citenamefont {Fidkowski}, \citenamefont {Vishwanath},\ and\ \citenamefont
  {Potter}}]{fermionGNVW1}%
  \BibitemOpen
  \bibfield  {author} {\bibinfo {author} {\bibfnamefont {Hoi~Chun}\
  \bibnamefont {Po}}, \bibinfo {author} {\bibfnamefont {Lukasz}\ \bibnamefont
  {Fidkowski}}, \bibinfo {author} {\bibfnamefont {Ashvin}\ \bibnamefont
  {Vishwanath}},\ and\ \bibinfo {author} {\bibfnamefont {Andrew~C.}\
  \bibnamefont {Potter}},\ }\bibfield  {title} {\bibinfo {title} {Radical
  chiral {F}loquet phases in a periodically driven {Kitaev} model and beyond},\
  }\href {https://doi.org/10.1103/PhysRevB.96.245116} {\bibfield  {journal}
  {\bibinfo  {journal} {Phys. Rev. B}\ }\textbf {\bibinfo {volume} {96}},\
  \bibinfo {pages} {245116} (\bibinfo {year} {2017})},\ \Eprint
  {https://arxiv.org/abs/1701.01440} {arXiv:1701.01440}\BibitemShut {NoStop}%
\bibitem [{\citenamefont {Hastings}\ and\ \citenamefont
  {Haah}(2021)}]{Hastings2021}%
  \BibitemOpen
  \bibfield  {author} {\bibinfo {author} {\bibfnamefont {Matthew~B.}\
  \bibnamefont {Hastings}}\ and\ \bibinfo {author} {\bibfnamefont {Jeongwan}\
  \bibnamefont {Haah}},\ }\bibfield  {title} {\bibinfo {title} {Dynamically
  {G}enerated {L}ogical {Q}ubits},\ }\href
  {https://doi.org/10.22331/q-2021-10-19-564} {\bibfield  {journal} {\bibinfo
  {journal} {{Quantum}}\ }\textbf {\bibinfo {volume} {5}},\ \bibinfo {pages}
  {564} (\bibinfo {year} {2021})},\ \Eprint {https://arxiv.org/abs/2107.02194}
  {arXiv:2107.02194}\BibitemShut {NoStop}%
\bibitem [{\citenamefont {Aasen}\ \emph {et~al.}(2022)\citenamefont {Aasen},
  \citenamefont {Wang},\ and\ \citenamefont {Hastings}}]{Aasen2022}%
  \BibitemOpen
  \bibfield  {author} {\bibinfo {author} {\bibfnamefont {David}\ \bibnamefont
  {Aasen}}, \bibinfo {author} {\bibfnamefont {Zhenghan}\ \bibnamefont {Wang}},\
  and\ \bibinfo {author} {\bibfnamefont {Matthew~B.}\ \bibnamefont
  {Hastings}},\ }\bibfield  {title} {\bibinfo {title} {Adiabatic paths of
  {H}amiltonians, symmetries of topological order, and automorphism codes},\
  }\href {https://doi.org/10.1103/PhysRevB.106.085122} {\bibfield  {journal}
  {\bibinfo  {journal} {Phys. Rev. B}\ }\textbf {\bibinfo {volume} {106}},\
  \bibinfo {pages} {085122} (\bibinfo {year} {2022})},\ \Eprint
  {https://arxiv.org/abs/2203.11137} {arXiv:2203.11137}\BibitemShut {NoStop}%
\bibitem [{\citenamefont {Davydova}\ \emph {et~al.}(2022)\citenamefont
  {Davydova}, \citenamefont {Tantivasadakarn},\ and\ \citenamefont
  {Balasubramanian}}]{Davydova2022}%
  \BibitemOpen
  \bibfield  {author} {\bibinfo {author} {\bibfnamefont {Margarita}\
  \bibnamefont {Davydova}}, \bibinfo {author} {\bibfnamefont {Nathanan}\
  \bibnamefont {Tantivasadakarn}},\ and\ \bibinfo {author} {\bibfnamefont
  {Shankar}\ \bibnamefont {Balasubramanian}},\ }\href@noop {} {\bibinfo {title}
  {Floquet codes without parent subsystem codes}} (\bibinfo {year} {2022}),\
  \Eprint {https://arxiv.org/abs/2210.02468} {arXiv:2210.02468}\BibitemShut
  {NoStop}%
\bibitem [{\citenamefont {Kesselring}\ \emph {et~al.}(2022)\citenamefont
  {Kesselring}, \citenamefont {de~la Fuente}, \citenamefont {Thomsen},
  \citenamefont {Eisert}, \citenamefont {Bartlett},\ and\ \citenamefont
  {Brown}}]{Kesselring2022}%
  \BibitemOpen
  \bibfield  {author} {\bibinfo {author} {\bibfnamefont {Markus~S.}\
  \bibnamefont {Kesselring}}, \bibinfo {author} {\bibfnamefont {Julio
  C.~Magdalena}\ \bibnamefont {de~la Fuente}}, \bibinfo {author} {\bibfnamefont
  {Felix}\ \bibnamefont {Thomsen}}, \bibinfo {author} {\bibfnamefont {Jens}\
  \bibnamefont {Eisert}}, \bibinfo {author} {\bibfnamefont {Stephen~D.}\
  \bibnamefont {Bartlett}},\ and\ \bibinfo {author} {\bibfnamefont
  {Benjamin~J.}\ \bibnamefont {Brown}},\ }\href@noop {} {\bibinfo {title}
  {Anyon condensation and the color code}} (\bibinfo {year} {2022}),\ \Eprint
  {https://arxiv.org/abs/2212.00042} {arXiv:2212.00042}\BibitemShut {NoStop}%
\bibitem [{\citenamefont {Zhang}\ \emph {et~al.}(2022)\citenamefont {Zhang},
  \citenamefont {Aasen},\ and\ \citenamefont {Vijay}}]{ZhangXCube2022}%
  \BibitemOpen
  \bibfield  {author} {\bibinfo {author} {\bibfnamefont {Zhehao}\ \bibnamefont
  {Zhang}}, \bibinfo {author} {\bibfnamefont {David}\ \bibnamefont {Aasen}},\
  and\ \bibinfo {author} {\bibfnamefont {Sagar}\ \bibnamefont {Vijay}},\
  }\href@noop {} {\bibinfo {title} {The {X}-cube {F}loquet code}} (\bibinfo
  {year} {2022}),\ \Eprint {https://arxiv.org/abs/2211.05784}
  {arXiv:2211.05784}\BibitemShut {NoStop}%
\bibitem [{\citenamefont {Paetznick}\ \emph {et~al.}(2023)\citenamefont
  {Paetznick}, \citenamefont {Knapp}, \citenamefont {Delfosse}, \citenamefont
  {Bauer}, \citenamefont {Haah}, \citenamefont {Hastings},\ and\ \citenamefont
  {da~Silva}}]{Paetznick2023}%
  \BibitemOpen
  \bibfield  {author} {\bibinfo {author} {\bibfnamefont {Adam}\ \bibnamefont
  {Paetznick}}, \bibinfo {author} {\bibfnamefont {Christina}\ \bibnamefont
  {Knapp}}, \bibinfo {author} {\bibfnamefont {Nicolas}\ \bibnamefont
  {Delfosse}}, \bibinfo {author} {\bibfnamefont {Bela}\ \bibnamefont {Bauer}},
  \bibinfo {author} {\bibfnamefont {Jeongwan}\ \bibnamefont {Haah}}, \bibinfo
  {author} {\bibfnamefont {Matthew~B.}\ \bibnamefont {Hastings}},\ and\
  \bibinfo {author} {\bibfnamefont {Marcus~P.}\ \bibnamefont {da~Silva}},\
  }\bibfield  {title} {\bibinfo {title} {Performance of planar {F}loquet codes
  with {M}ajorana-based qubits},\ }\href
  {https://doi.org/10.1103/PRXQuantum.4.010310} {\bibfield  {journal} {\bibinfo
   {journal} {PRX Quantum}\ }\textbf {\bibinfo {volume} {4}},\ \bibinfo {pages}
  {010310} (\bibinfo {year} {2023})}\BibitemShut {NoStop}%
\bibitem [{\citenamefont {Haah}\ and\ \citenamefont
  {Hastings}(2022)}]{Haah2022}%
  \BibitemOpen
  \bibfield  {author} {\bibinfo {author} {\bibfnamefont {Jeongwan}\
  \bibnamefont {Haah}}\ and\ \bibinfo {author} {\bibfnamefont {Matthew~B.}\
  \bibnamefont {Hastings}},\ }\bibfield  {title} {\bibinfo {title} {Boundaries
  for the {H}oneycomb {C}ode},\ }\href
  {https://doi.org/10.22331/q-2022-04-21-693} {\bibfield  {journal} {\bibinfo
  {journal} {{Quantum}}\ }\textbf {\bibinfo {volume} {6}},\ \bibinfo {pages}
  {693} (\bibinfo {year} {2022})},\ \Eprint {https://arxiv.org/abs/2110.09545}
  {arXiv:2110.09545}\BibitemShut {NoStop}%
\bibitem [{\citenamefont {Gidney}\ \emph {et~al.}(2022)\citenamefont {Gidney},
  \citenamefont {Newman},\ and\ \citenamefont {McEwen}}]{Gidney_2022}%
  \BibitemOpen
  \bibfield  {author} {\bibinfo {author} {\bibfnamefont {Craig}\ \bibnamefont
  {Gidney}}, \bibinfo {author} {\bibfnamefont {Michael}\ \bibnamefont
  {Newman}},\ and\ \bibinfo {author} {\bibfnamefont {Matt}\ \bibnamefont
  {McEwen}},\ }\bibfield  {title} {\bibinfo {title} {Benchmarking the planar
  honeycomb code},\ }\href {https://doi.org/10.22331/q-2022-09-21-813}
  {\bibfield  {journal} {\bibinfo  {journal} {Quantum}\ }\textbf {\bibinfo
  {volume} {6}},\ \bibinfo {pages} {813} (\bibinfo {year} {2022})},\ \Eprint
  {https://arxiv.org/abs/2202.11845} {arXiv:2202.11845}\BibitemShut {NoStop}%
\bibitem [{\citenamefont {Vuillot}(2021)}]{Vuillot2021}%
  \BibitemOpen
  \bibfield  {author} {\bibinfo {author} {\bibfnamefont {Christophe}\
  \bibnamefont {Vuillot}},\ }\href@noop {} {\bibinfo {title} {Planar {F}loquet
  codes}} (\bibinfo {year} {2021}),\ \Eprint {https://arxiv.org/abs/2110.05348}
  {arXiv:2110.05348}\BibitemShut {NoStop}%
\bibitem [{\citenamefont {Gross}\ \emph {et~al.}(2012)\citenamefont {Gross},
  \citenamefont {Nesme}, \citenamefont {Vogts},\ and\ \citenamefont
  {Werner}}]{GNVW}%
  \BibitemOpen
  \bibfield  {author} {\bibinfo {author} {\bibfnamefont {D.}~\bibnamefont
  {Gross}}, \bibinfo {author} {\bibfnamefont {V.}~\bibnamefont {Nesme}},
  \bibinfo {author} {\bibfnamefont {H.}~\bibnamefont {Vogts}},\ and\ \bibinfo
  {author} {\bibfnamefont {R.~F.}\ \bibnamefont {Werner}},\ }\bibfield  {title}
  {\bibinfo {title} {Index theory of one dimensional quantum walks and cellular
  automata},\ }\href {https://doi.org/10.1007/s00220-012-1423-1} {\bibfield
  {journal} {\bibinfo  {journal} {Commun. Math. Phys.}\ }\textbf {\bibinfo
  {volume} {310}},\ \bibinfo {pages} {419--454} (\bibinfo {year} {2012})},\
  \Eprint {https://arxiv.org/abs/0910.3675} {arXiv:0910.3675}\BibitemShut
  {NoStop}%
\bibitem [{\citenamefont {Haah}\ \emph {et~al.}(2022)\citenamefont {Haah},
  \citenamefont {Fidkowski},\ and\ \citenamefont {Hastings}}]{nta3}%
  \BibitemOpen
  \bibfield  {author} {\bibinfo {author} {\bibfnamefont {Jeongwan}\
  \bibnamefont {Haah}}, \bibinfo {author} {\bibfnamefont {Lukasz}\ \bibnamefont
  {Fidkowski}},\ and\ \bibinfo {author} {\bibfnamefont {Matthew~B.}\
  \bibnamefont {Hastings}},\ }\bibfield  {title} {\bibinfo {title} {Nontrivial
  quantum cellular automata in higher dimensions},\ }\href
  {https://doi.org/10.1007/s00220-022-04528-1} {\bibfield  {journal} {\bibinfo
  {journal} {Commun. Math. Phys.}\ }\textbf {\bibinfo {volume} {398}},\
  \bibinfo {pages} {469--540} (\bibinfo {year} {2022})},\ \Eprint
  {https://arxiv.org/abs/1812.01625} {arXiv:1812.01625}\BibitemShut {NoStop}%
\bibitem [{\citenamefont {Freedman}\ and\ \citenamefont
  {Hastings}(2020)}]{FreedmanHastings2019QCA}%
  \BibitemOpen
  \bibfield  {author} {\bibinfo {author} {\bibfnamefont {M.}~\bibnamefont
  {Freedman}}\ and\ \bibinfo {author} {\bibfnamefont {M.~B.}\ \bibnamefont
  {Hastings}},\ }\bibfield  {title} {\bibinfo {title} {Classification of
  quantum cellular automata},\ }\href
  {https://doi.org/10.1007/s00220-020-03735-y} {\bibfield  {journal} {\bibinfo
  {journal} {Commun. Math. Phys.}\ }\textbf {\bibinfo {volume} {376}},\
  \bibinfo {pages} {1171--1222} (\bibinfo {year} {2020})},\ \Eprint
  {https://arxiv.org/abs/1902.10285} {arXiv:1902.10285}\BibitemShut {NoStop}%
\bibitem [{\citenamefont {Shirley}\ \emph {et~al.}(2022)\citenamefont
  {Shirley}, \citenamefont {Chen}, \citenamefont {Dua}, \citenamefont
  {Ellison}, \citenamefont {Tantivasadakarn},\ and\ \citenamefont
  {Williamson}}]{Wilbur2022}%
  \BibitemOpen
  \bibfield  {author} {\bibinfo {author} {\bibfnamefont {Wilbur}\ \bibnamefont
  {Shirley}}, \bibinfo {author} {\bibfnamefont {Yu-An}\ \bibnamefont {Chen}},
  \bibinfo {author} {\bibfnamefont {Arpit}\ \bibnamefont {Dua}}, \bibinfo
  {author} {\bibfnamefont {Tyler~D.}\ \bibnamefont {Ellison}}, \bibinfo
  {author} {\bibfnamefont {Nathanan}\ \bibnamefont {Tantivasadakarn}},\ and\
  \bibinfo {author} {\bibfnamefont {Dominic~J.}\ \bibnamefont {Williamson}},\
  }\bibfield  {title} {\bibinfo {title} {Three-dimensional quantum cellular
  automata from chiral semion surface topological order and beyond},\ }\href
  {https://doi.org/10.1103/PRXQuantum.3.030326} {\bibfield  {journal} {\bibinfo
   {journal} {PRX Quantum}\ }\textbf {\bibinfo {volume} {3}},\ \bibinfo {pages}
  {030326} (\bibinfo {year} {2022})}\BibitemShut {NoStop}%
\bibitem [{\citenamefont {Chen}\ and\ \citenamefont {Hsin}(2022)}]{Chen2021}%
  \BibitemOpen
  \bibfield  {author} {\bibinfo {author} {\bibfnamefont {Yu-An}\ \bibnamefont
  {Chen}}\ and\ \bibinfo {author} {\bibfnamefont {Po-Shen}\ \bibnamefont
  {Hsin}},\ }\href@noop {} {\bibinfo {title} {Exactly solvable lattice
  hamiltonians and gravitational anomalies}} (\bibinfo {year} {2022}),\ \Eprint
  {https://arxiv.org/abs/2110.14644} {arXiv:2110.14644}\BibitemShut {NoStop}%
\bibitem [{\citenamefont {Haah}(2022{\natexlab{a}})}]{Haah2022b}%
  \BibitemOpen
  \bibfield  {author} {\bibinfo {author} {\bibfnamefont {Jeongwan}\
  \bibnamefont {Haah}},\ }\href@noop {} {\bibinfo {title} {Invertible
  subalgebras}} (\bibinfo {year} {2022}{\natexlab{a}}),\ \Eprint
  {https://arxiv.org/abs/2211.02086} {arXiv:2211.02086}\BibitemShut {NoStop}%
\bibitem [{\citenamefont {Zhang}\ and\ \citenamefont
  {Levin}(2022)}]{ZhangLevin2022}%
  \BibitemOpen
  \bibfield  {author} {\bibinfo {author} {\bibfnamefont {Carolyn}\ \bibnamefont
  {Zhang}}\ and\ \bibinfo {author} {\bibfnamefont {Michael}\ \bibnamefont
  {Levin}},\ }\href@noop {} {\bibinfo {title} {Bulk-boundary correspondence for
  interacting {F}loquet systems in two dimensions}} (\bibinfo {year} {2022}),\
  \Eprint {https://arxiv.org/abs/2209.03975} {arXiv:2209.03975}\BibitemShut
  {NoStop}%
\bibitem [{\citenamefont {Bennett}\ \emph {et~al.}(1993)\citenamefont
  {Bennett}, \citenamefont {Brassard}, \citenamefont {Cr\'{e}peau},
  \citenamefont {Jozsa}, \citenamefont {Peres},\ and\ \citenamefont
  {Wootters}}]{BBCJPW}%
  \BibitemOpen
  \bibfield  {author} {\bibinfo {author} {\bibfnamefont {Charles~H.}\
  \bibnamefont {Bennett}}, \bibinfo {author} {\bibfnamefont {Gilles}\
  \bibnamefont {Brassard}}, \bibinfo {author} {\bibfnamefont {Claude}\
  \bibnamefont {Cr\'{e}peau}}, \bibinfo {author} {\bibfnamefont {Richard}\
  \bibnamefont {Jozsa}}, \bibinfo {author} {\bibfnamefont {Asher}\ \bibnamefont
  {Peres}},\ and\ \bibinfo {author} {\bibfnamefont {William~K.}\ \bibnamefont
  {Wootters}},\ }\bibfield  {title} {\bibinfo {title} {Teleporting an unknown
  quantum state via dual classical and {Einstein--Podolsky--Rosen} channels},\
  }\href {https://doi.org/10.1103/PhysRevLett.70.1895} {\bibfield  {journal}
  {\bibinfo  {journal} {Phys. Rev. Lett.}\ }\textbf {\bibinfo {volume} {70}},\
  \bibinfo {pages} {1895} (\bibinfo {year} {1993})}\BibitemShut {NoStop}%
\bibitem [{\citenamefont {Wen}(2003)}]{WenP}%
  \BibitemOpen
  \bibfield  {author} {\bibinfo {author} {\bibfnamefont {Xiao-Gang}\
  \bibnamefont {Wen}},\ }\bibfield  {title} {\bibinfo {title} {Quantum orders
  in an exact soluble model},\ }\href
  {https://doi.org/10.1103/PhysRevLett.90.016803} {\bibfield  {journal}
  {\bibinfo  {journal} {Phys. Rev. Lett.}\ }\textbf {\bibinfo {volume} {90}},\
  \bibinfo {pages} {016803} (\bibinfo {year} {2003})},\ \Eprint
  {https://arxiv.org/abs/quant-ph/0205004} {arXiv:quant-ph/0205004}%
  \BibitemShut {NoStop}%
\bibitem [{\citenamefont {Freedman}\ \emph {et~al.}(2022)\citenamefont
  {Freedman}, \citenamefont {Haah},\ and\ \citenamefont {Hastings}}]{FHH2019}%
  \BibitemOpen
  \bibfield  {author} {\bibinfo {author} {\bibfnamefont {Michael}\ \bibnamefont
  {Freedman}}, \bibinfo {author} {\bibfnamefont {Jeongwan}\ \bibnamefont
  {Haah}},\ and\ \bibinfo {author} {\bibfnamefont {Matthew~B.}\ \bibnamefont
  {Hastings}},\ }\bibfield  {title} {\bibinfo {title} {The group structure of
  quantum cellular automata},\ }\href
  {https://doi.org/10.1007/s00220-022-04316-x} {\bibfield  {journal} {\bibinfo
  {journal} {Commun. Math. Phys.}\ }\textbf {\bibinfo {volume} {389}},\
  \bibinfo {pages} {1277--1302} (\bibinfo {year} {2022})},\ \Eprint
  {https://arxiv.org/abs/1910.07998} {arXiv:1910.07998}\BibitemShut {NoStop}%
\bibitem [{\citenamefont {Bravyi}\ \emph {et~al.}(2010)\citenamefont {Bravyi},
  \citenamefont {Hastings},\ and\ \citenamefont
  {Michalakis}}]{BravyiHastingsMichalakis2010stability}%
  \BibitemOpen
  \bibfield  {author} {\bibinfo {author} {\bibfnamefont {Sergey}\ \bibnamefont
  {Bravyi}}, \bibinfo {author} {\bibfnamefont {Matthew}\ \bibnamefont
  {Hastings}},\ and\ \bibinfo {author} {\bibfnamefont {Spyridon}\ \bibnamefont
  {Michalakis}},\ }\bibfield  {title} {\bibinfo {title} {Topological quantum
  order: Stability under local perturbations},\ }\href
  {https://doi.org/10.1063/1.3490195} {\bibfield  {journal} {\bibinfo
  {journal} {J. Math. Phys.}\ }\textbf {\bibinfo {volume} {51}},\ \bibinfo
  {pages} {093512} (\bibinfo {year} {2010})},\ \Eprint
  {https://arxiv.org/abs/1001.0344} {arXiv:1001.0344}\BibitemShut {NoStop}%
\bibitem [{\citenamefont {Haah}(2013)}]{Haah2013}%
  \BibitemOpen
  \bibfield  {author} {\bibinfo {author} {\bibfnamefont {Jeongwan}\
  \bibnamefont {Haah}},\ }\bibfield  {title} {\bibinfo {title} {Commuting
  {Pauli} {H}amiltonians as maps between free modules},\ }\href
  {https://doi.org/10.1007/s00220-013-1810-2} {\bibfield  {journal} {\bibinfo
  {journal} {Commun. Math. Phys.}\ }\textbf {\bibinfo {volume} {324}},\
  \bibinfo {pages} {351--399} (\bibinfo {year} {2013})},\ \Eprint
  {https://arxiv.org/abs/1204.1063} {arXiv:1204.1063}\BibitemShut {NoStop}%
\bibitem [{\citenamefont {Nandkishore}\ and\ \citenamefont
  {Hermele}(2019)}]{Nandkishore2018}%
  \BibitemOpen
  \bibfield  {author} {\bibinfo {author} {\bibfnamefont {Rahul~M.}\
  \bibnamefont {Nandkishore}}\ and\ \bibinfo {author} {\bibfnamefont {Michael}\
  \bibnamefont {Hermele}},\ }\bibfield  {title} {\bibinfo {title} {Fractons},\
  }\href {https://doi.org/10.1146/annurev-conmatphys-031218-013604} {\bibfield
  {journal} {\bibinfo  {journal} {Annual Review of Condensed Matter Physics,}\
  }\textbf {\bibinfo {volume} {10}},\ \bibinfo {pages} {295--313} (\bibinfo
  {year} {2019})},\ \Eprint {https://arxiv.org/abs/1803.11196}
  {arXiv:1803.11196}\BibitemShut {NoStop}%
\bibitem [{\citenamefont {Pretko}\ \emph {et~al.}(2020)\citenamefont {Pretko},
  \citenamefont {Chen},\ and\ \citenamefont {You}}]{Pretko2020}%
  \BibitemOpen
  \bibfield  {author} {\bibinfo {author} {\bibfnamefont {Michael}\ \bibnamefont
  {Pretko}}, \bibinfo {author} {\bibfnamefont {Xie}\ \bibnamefont {Chen}},\
  and\ \bibinfo {author} {\bibfnamefont {Yizhi}\ \bibnamefont {You}},\
  }\bibfield  {title} {\bibinfo {title} {Fracton phases of matter},\ }\href
  {https://doi.org/10.1142/S0217751X20300033} {\bibfield  {journal} {\bibinfo
  {journal} {International Journal of Modern Physics A}\ }\textbf {\bibinfo
  {volume} {35}},\ \bibinfo {pages} {2030003} (\bibinfo {year} {2020})},\
  \Eprint {https://arxiv.org/abs/2001.01722} {arXiv:2001.01722}\BibitemShut
  {NoStop}%
\bibitem [{\citenamefont {Bravyi}\ and\ \citenamefont
  {Kitaev}(1998)}]{BravyiKitaevSurfaceCode}%
  \BibitemOpen
  \bibfield  {author} {\bibinfo {author} {\bibfnamefont {S.~B.}\ \bibnamefont
  {Bravyi}}\ and\ \bibinfo {author} {\bibfnamefont {A.~Yu.}\ \bibnamefont
  {Kitaev}},\ }\href@noop {} {\bibinfo {title} {Quantum codes on a lattice with
  boundary}} (\bibinfo {year} {1998}),\ \Eprint
  {https://arxiv.org/abs/quant-ph/9811052} {arXiv:quant-ph/9811052}%
  \BibitemShut {NoStop}%
\bibitem [{\citenamefont {Haah}(2021)}]{Haah2018}%
  \BibitemOpen
  \bibfield  {author} {\bibinfo {author} {\bibfnamefont {Jeongwan}\
  \bibnamefont {Haah}},\ }\bibfield  {title} {\bibinfo {title} {Classification
  of translation invariant topological {Pauli} stabilizer codes for prime
  dimensional qudits on two-dimensional lattices},\ }\href
  {https://doi.org/10.1063/5.0021068} {\bibfield  {journal} {\bibinfo
  {journal} {J. Math. Phys.}\ }\textbf {\bibinfo {volume} {62}},\ \bibinfo
  {pages} {012201} (\bibinfo {year} {2021})},\ \Eprint
  {https://arxiv.org/abs/1812.11193} {arXiv:1812.11193}\BibitemShut {NoStop}%
\bibitem [{\citenamefont {Schlingemann}\ \emph {et~al.}(2008)\citenamefont
  {Schlingemann}, \citenamefont {Vogts},\ and\ \citenamefont
  {Werner}}]{clifQCA}%
  \BibitemOpen
  \bibfield  {author} {\bibinfo {author} {\bibfnamefont {Dirk-M.}\ \bibnamefont
  {Schlingemann}}, \bibinfo {author} {\bibfnamefont {Holger}\ \bibnamefont
  {Vogts}},\ and\ \bibinfo {author} {\bibfnamefont {Reinhard~F.}\ \bibnamefont
  {Werner}},\ }\bibfield  {title} {\bibinfo {title} {On the structure of
  {Clifford} quantum cellular automata},\ }\href
  {https://doi.org/10.1063/1.3005565} {\bibfield  {journal} {\bibinfo
  {journal} {J. Math. Phys.}\ }\textbf {\bibinfo {volume} {49}},\ \bibinfo
  {pages} {112104} (\bibinfo {year} {2008})},\ \Eprint
  {https://arxiv.org/abs/0804.4447} {arXiv:0804.4447}\BibitemShut {NoStop}%
\bibitem [{\citenamefont {Haah}(2022{\natexlab{b}})}]{clifQCAclassification}%
  \BibitemOpen
  \bibfield  {author} {\bibinfo {author} {\bibfnamefont {Jeongwan}\
  \bibnamefont {Haah}},\ }\href@noop {} {\bibinfo {title} {Topological phases
  of unitary dynamics: Classification in {Clifford} category}} (\bibinfo {year}
  {2022}{\natexlab{b}}),\ \Eprint {https://arxiv.org/abs/2205.09141}
  {arXiv:2205.09141}\BibitemShut {NoStop}%
\bibitem [{\citenamefont {{Kitaev}}(2006)}]{Kitaev_2005}%
  \BibitemOpen
  \bibfield  {author} {\bibinfo {author} {\bibfnamefont {A.}~\bibnamefont
  {{Kitaev}}},\ }\bibfield  {title} {\bibinfo {title} {{Anyons in an exactly
  solved model and beyond}},\ }\href
  {https://doi.org/10.1016/j.aop.2005.10.005} {\bibfield  {journal} {\bibinfo
  {journal} {Ann. Phys.}\ }\textbf {\bibinfo {volume} {321}},\ \bibinfo {pages}
  {2--111} (\bibinfo {year} {2006})},\ \Eprint
  {https://arxiv.org/abs/cond-mat/0506438} {arXiv:cond-mat/0506438}%
  \BibitemShut {NoStop}%
\bibitem [{\citenamefont {Avron}\ \emph {et~al.}(1994)\citenamefont {Avron},
  \citenamefont {Seiler},\ and\ \citenamefont {Simon}}]{avron1994index}%
  \BibitemOpen
  \bibfield  {author} {\bibinfo {author} {\bibfnamefont {J.}~\bibnamefont
  {Avron}}, \bibinfo {author} {\bibfnamefont {R.}~\bibnamefont {Seiler}},\ and\
  \bibinfo {author} {\bibfnamefont {B.}~\bibnamefont {Simon}},\ }\bibfield
  {title} {\bibinfo {title} {The index of a pair of projections},\ }\href
  {https://doi.org/10.1006/jfan.1994.1031} {\bibfield  {journal} {\bibinfo
  {journal} {Journal of Functional Analysis}\ }\textbf {\bibinfo {volume}
  {120}},\ \bibinfo {pages} {220--237} (\bibinfo {year} {1994})}\BibitemShut
  {NoStop}%
\bibitem [{\citenamefont {Lax}(2002)}]{Lax}%
  \BibitemOpen
  \bibfield  {author} {\bibinfo {author} {\bibfnamefont {Peter~D.}\
  \bibnamefont {Lax}},\ }\href@noop {} {\emph {\bibinfo {title} {Functional
  Analysis}}},\ Pure and Applied Mathematics\ (\bibinfo  {publisher} {John
  Wiley \& sons},\ \bibinfo {year} {2002})\BibitemShut {NoStop}%
\bibitem [{\citenamefont {Sarason}(1987)}]{Sarason1987}%
  \BibitemOpen
  \bibfield  {author} {\bibinfo {author} {\bibfnamefont {Donald}\ \bibnamefont
  {Sarason}},\ }\bibfield  {title} {\bibinfo {title} {The multiplication
  theorem for {Fredholm} operators},\ }\href
  {https://doi.org/10.1080/00029890.1987.12000596} {\bibfield  {journal}
  {\bibinfo  {journal} {The American Mathematical Monthly}\ }\textbf {\bibinfo
  {volume} {94}},\ \bibinfo {pages} {68--70} (\bibinfo {year}
  {1987})}\BibitemShut {NoStop}%
\bibitem [{\citenamefont {Michnicki}(2014)}]{Michnicki2014}%
  \BibitemOpen
  \bibfield  {author} {\bibinfo {author} {\bibfnamefont {Kamil~P}\ \bibnamefont
  {Michnicki}},\ }\bibfield  {title} {\bibinfo {title} {3-d topological quantum
  memory with a power-law energy barrier},\ }\href
  {https://doi.org/10.1103/PhysRevLett.113.130501} {\bibfield  {journal}
  {\bibinfo  {journal} {Phys. Rev. Lett.}\ }\textbf {\bibinfo {volume} {113}},\
  \bibinfo {pages} {130501} (\bibinfo {year} {2014})},\ \Eprint
  {https://arxiv.org/abs/1406.4227} {arXiv:1406.4227}\BibitemShut {NoStop}%
\bibitem [{\citenamefont {Portnoy}(2023)}]{Portnoy2023}%
  \BibitemOpen
  \bibfield  {author} {\bibinfo {author} {\bibfnamefont {Elia}\ \bibnamefont
  {Portnoy}},\ }\href@noop {} {\bibinfo {title} {Local quantum codes from
  subdivided manifolds}} (\bibinfo {year} {2023}),\ \Eprint
  {https://arxiv.org/abs/2303.06755} {arXiv:2303.06755}\BibitemShut {NoStop}%
\bibitem [{\citenamefont {Bravyi}\ and\ \citenamefont
  {Terhal}(2009)}]{BravyiTerhal2009no-go}%
  \BibitemOpen
  \bibfield  {author} {\bibinfo {author} {\bibfnamefont {Sergey}\ \bibnamefont
  {Bravyi}}\ and\ \bibinfo {author} {\bibfnamefont {Barbara}\ \bibnamefont
  {Terhal}},\ }\bibfield  {title} {\bibinfo {title} {A no-go theorem for a
  two-dimensional self-correcting quantum memory based on stabilizer codes},\
  }\href {https://doi.org/10.1088/1367-2630/11/4/043029} {\bibfield  {journal}
  {\bibinfo  {journal} {New J. Phys.}\ }\textbf {\bibinfo {volume} {11}},\
  \bibinfo {pages} {043029} (\bibinfo {year} {2009})},\ \Eprint
  {https://arxiv.org/abs/0810.1983} {arXiv:0810.1983}\BibitemShut {NoStop}%
\bibitem [{\citenamefont {Suchara}\ \emph {et~al.}(2011)\citenamefont
  {Suchara}, \citenamefont {Bravyi},\ and\ \citenamefont
  {Terhal}}]{Suchara2011constructions}%
  \BibitemOpen
  \bibfield  {author} {\bibinfo {author} {\bibfnamefont {Martin}\ \bibnamefont
  {Suchara}}, \bibinfo {author} {\bibfnamefont {Sergey}\ \bibnamefont
  {Bravyi}},\ and\ \bibinfo {author} {\bibfnamefont {Barbara}\ \bibnamefont
  {Terhal}},\ }\bibfield  {title} {\bibinfo {title} {Constructions and noise
  threshold of topological subsystem codes},\ }\href
  {https://doi.org/10.1088/1751-8113/44/15/155301} {\bibfield  {journal}
  {\bibinfo  {journal} {J. Phys. A: Math. Theor.}\ }\textbf {\bibinfo {volume}
  {44}},\ \bibinfo {pages} {155301} (\bibinfo {year} {2011})}\BibitemShut
  {NoStop}%
\bibitem [{\citenamefont {Jones}(2023)}]{Jones:DHR:2023}%
  \BibitemOpen
  \bibfield  {author} {\bibinfo {author} {\bibfnamefont {Corey}\ \bibnamefont
  {Jones}},\ }\href@noop {} {\bibinfo {title} {{DHR} bimodules of quasi-local
  algebras and symmetric quantum cellular automata}} (\bibinfo {year} {2023}),\
  \Eprint {https://arxiv.org/abs/2304.00068} {arXiv:2304.00068}\BibitemShut
  {NoStop}%
\bibitem [{\citenamefont {{Fidkowski}}\ \emph {et~al.}(2019)\citenamefont
  {{Fidkowski}}, \citenamefont {{Po}}, \citenamefont {{Potter}},\ and\
  \citenamefont {{Vishwanath}}}]{fermionGNVW2}%
  \BibitemOpen
  \bibfield  {author} {\bibinfo {author} {\bibfnamefont {L.}~\bibnamefont
  {{Fidkowski}}}, \bibinfo {author} {\bibfnamefont {H.~C.}\ \bibnamefont
  {{Po}}}, \bibinfo {author} {\bibfnamefont {A.~C.}\ \bibnamefont {{Potter}}},\
  and\ \bibinfo {author} {\bibfnamefont {A.}~\bibnamefont {{Vishwanath}}},\
  }\bibfield  {title} {\bibinfo {title} {{Interacting invariants for {F}loquet
  phases of fermions in two dimensions}},\ }\href
  {https://doi.org/10.1103/PhysRevB.99.085115} {\bibfield  {journal} {\bibinfo
  {journal} {Phys. Rev. B}\ }\textbf {\bibinfo {volume} {99}},\ \bibinfo
  {pages} {085115} (\bibinfo {year} {2019})},\ \Eprint
  {https://arxiv.org/abs/1703.07360} {arXiv:1703.07360}\BibitemShut {NoStop}%
\bibitem [{\citenamefont {Bacon}(2006)}]{Bacon2006}%
  \BibitemOpen
  \bibfield  {author} {\bibinfo {author} {\bibfnamefont {Dave}\ \bibnamefont
  {Bacon}},\ }\bibfield  {title} {\bibinfo {title} {Operator quantum
  error-correcting subsystems for self-correcting quantum memories},\ }\href
  {https://doi.org/10.1103/PhysRevA.73.012340} {\bibfield  {journal} {\bibinfo
  {journal} {Phys. Rev. A}\ }\textbf {\bibinfo {volume} {73}},\ \bibinfo
  {pages} {012340} (\bibinfo {year} {2006})},\ \Eprint
  {https://arxiv.org/abs/quant-ph/0506023} {arXiv:quant-ph/0506023}%
  \BibitemShut {NoStop}%
\bibitem [{\citenamefont {Napp}\ and\ \citenamefont
  {Preskill}(2012)}]{NappBaconShor2012}%
  \BibitemOpen
  \bibfield  {author} {\bibinfo {author} {\bibfnamefont {John}\ \bibnamefont
  {Napp}}\ and\ \bibinfo {author} {\bibfnamefont {John}\ \bibnamefont
  {Preskill}},\ }\href@noop {} {\bibinfo {title} {Optimal {B}acon-{S}hor
  codes}} (\bibinfo {year} {2012}),\ \Eprint {https://arxiv.org/abs/1209.0794}
  {arXiv:1209.0794}\BibitemShut {NoStop}%
\bibitem [{\citenamefont {Bombin}(2010)}]{Bombin2010TopoSubsysCode}%
  \BibitemOpen
  \bibfield  {author} {\bibinfo {author} {\bibfnamefont {H.}~\bibnamefont
  {Bombin}},\ }\bibfield  {title} {\bibinfo {title} {Topological subsystem
  codes},\ }\href {https://doi.org/10.1103/PhysRevA.81.032301} {\bibfield
  {journal} {\bibinfo  {journal} {Phys. Rev. A}\ }\textbf {\bibinfo {volume}
  {81}},\ \bibinfo {pages} {032301} (\bibinfo {year} {2010})},\ \Eprint
  {https://arxiv.org/abs/0908.4246} {arXiv:0908.4246}\BibitemShut {NoStop}%
\bibitem [{\citenamefont {Bravyi}\ \emph {et~al.}(2012)\citenamefont {Bravyi},
  \citenamefont {Duclos-Cianci}, \citenamefont {Poulin},\ and\ \citenamefont
  {Suchara}}]{Bravyi2012}%
  \BibitemOpen
  \bibfield  {author} {\bibinfo {author} {\bibfnamefont {Sergey}\ \bibnamefont
  {Bravyi}}, \bibinfo {author} {\bibfnamefont {Guillaume}\ \bibnamefont
  {Duclos-Cianci}}, \bibinfo {author} {\bibfnamefont {David}\ \bibnamefont
  {Poulin}},\ and\ \bibinfo {author} {\bibfnamefont {Martin}\ \bibnamefont
  {Suchara}},\ }\bibfield  {title} {\bibinfo {title} {Subsystem surface codes
  with three-qubit check operators},\ }\href@noop {} {\bibfield  {journal}
  {\bibinfo  {journal} {Quantum Info. and Comput.}\ }\textbf {\bibinfo {volume}
  {13}},\ \bibinfo {pages} {0963--0985} (\bibinfo {year} {2012})},\ \Eprint
  {https://arxiv.org/abs/1207.1443} {arXiv:1207.1443}\BibitemShut {NoStop}%
\bibitem [{\citenamefont {Chao}\ \emph {et~al.}(2020)\citenamefont {Chao},
  \citenamefont {Beverland}, \citenamefont {Delfosse},\ and\ \citenamefont
  {Haah}}]{Chao2020optimizationof}%
  \BibitemOpen
  \bibfield  {author} {\bibinfo {author} {\bibfnamefont {Rui}\ \bibnamefont
  {Chao}}, \bibinfo {author} {\bibfnamefont {Michael~E.}\ \bibnamefont
  {Beverland}}, \bibinfo {author} {\bibfnamefont {Nicolas}\ \bibnamefont
  {Delfosse}},\ and\ \bibinfo {author} {\bibfnamefont {Jeongwan}\ \bibnamefont
  {Haah}},\ }\bibfield  {title} {\bibinfo {title} {Optimization of the surface
  code design for {M}ajorana-based qubits},\ }\href
  {https://doi.org/10.22331/q-2020-10-28-352} {\bibfield  {journal} {\bibinfo
  {journal} {{Quantum}}\ }\textbf {\bibinfo {volume} {4}},\ \bibinfo {pages}
  {352} (\bibinfo {year} {2020})},\ \Eprint {https://arxiv.org/abs/2007.00307}
  {arXiv:2007.00307}\BibitemShut {NoStop}%
\bibitem [{\citenamefont {Gidney}(2022)}]{Gidney2022pentagon}%
  \BibitemOpen
  \bibfield  {author} {\bibinfo {author} {\bibfnamefont {Craig}\ \bibnamefont
  {Gidney}},\ }\href@noop {} {\bibinfo {title} {A pair measurement surface code
  on pentagons}} (\bibinfo {year} {2022}),\ \Eprint
  {https://arxiv.org/abs/2206.12780} {arXiv:2206.12780}\BibitemShut {NoStop}%
\bibitem [{\citenamefont {Sullivan}\ \emph {et~al.}(2023)\citenamefont
  {Sullivan}, \citenamefont {Wen},\ and\ \citenamefont
  {Potter}}]{Sullivan2023}%
  \BibitemOpen
  \bibfield  {author} {\bibinfo {author} {\bibfnamefont {Joseph}\ \bibnamefont
  {Sullivan}}, \bibinfo {author} {\bibfnamefont {Rui}\ \bibnamefont {Wen}},\
  and\ \bibinfo {author} {\bibfnamefont {Andrew~C.}\ \bibnamefont {Potter}},\
  }\href@noop {} {\bibinfo {title} {Floquet codes and phases in twist-defect
  networks}} (\bibinfo {year} {2023}),\ \Eprint
  {https://arxiv.org/abs/2303.17664} {arXiv:2303.17664}\BibitemShut {NoStop}%
\bibitem [{\citenamefont {Roberts}\ \emph {et~al.}(2023)\citenamefont
  {Roberts}, \citenamefont {Vijay}, \citenamefont {Vishwanath},\ and\
  \citenamefont {Dua}}]{Roberts2023}%
  \BibitemOpen
  \bibfield  {author} {\bibinfo {author} {\bibfnamefont {Brenden}\ \bibnamefont
  {Roberts}}, \bibinfo {author} {\bibfnamefont {Sagar}\ \bibnamefont {Vijay}},
  \bibinfo {author} {\bibfnamefont {Ashvin}\ \bibnamefont {Vishwanath}},\ and\
  \bibinfo {author} {\bibfnamefont {Arpit}\ \bibnamefont {Dua}},\ }\bibfield
  {title} {\bibinfo {title} {Topological invariants of {F}loquet codes}}
  (\bibinfo {year} {2023}),\ \bibinfo {note} {to appear}\BibitemShut {NoStop}%
\bibitem [{\citenamefont {Calderbank}\ \emph {et~al.}(1997)\citenamefont
  {Calderbank}, \citenamefont {Rains}, \citenamefont {Shor},\ and\
  \citenamefont {Sloane}}]{CalderbankRainsShorEtAl1997Quantum}%
  \BibitemOpen
  \bibfield  {author} {\bibinfo {author} {\bibfnamefont {A.~R.}\ \bibnamefont
  {Calderbank}}, \bibinfo {author} {\bibfnamefont {E.~M}\ \bibnamefont
  {Rains}}, \bibinfo {author} {\bibfnamefont {P.~W.}\ \bibnamefont {Shor}},\
  and\ \bibinfo {author} {\bibfnamefont {N.~J.~A.}\ \bibnamefont {Sloane}},\
  }\bibfield  {title} {\bibinfo {title} {Quantum error correction and
  orthogonal geometry},\ }\href {https://doi.org/10.1103/PhysRevLett.78.405}
  {\bibfield  {journal} {\bibinfo  {journal} {Phys. Rev. Lett.}\ }\textbf
  {\bibinfo {volume} {78}},\ \bibinfo {pages} {405--408} (\bibinfo {year}
  {1997})},\ \Eprint {https://arxiv.org/abs/quant-ph/9605005}
  {arXiv:quant-ph/9605005}\BibitemShut {NoStop}%
\end{thebibliography}%

\clearpage
\appendix

\section{Some algebra}\label{app:algebra}

\begin{proposition}\label{thm:Reversible-LinearVersion}
Let $A$ and $B$ be self-orthogonal subspaces of a finite dimensional $\FF$-vector space~$P$ equipped with a symplectic form~$\lambda$.
Put $S = A \cap B$.
For any subspace $\perp$ means its orthogonal complement within~$P$.
The following are equivalent.
\begin{enumerate}
	\item[(a)] Linear maps $B/S \ni b+S \mapsto \lambda(b, \cdot ) \in (A/S)^*$ 
	and $A/S \ni a+S \mapsto \lambda(a,\cdot) \in (B/S)^*$ 
	are both surjective.
	\item[(b)] The linear maps in~(a) are both injective, i.e., $A^\perp \cap B = S = A \cap B^\perp$.
	\item[(c)] The induced bilinear map~$A/S \times B/S \ni (a + S, b+S) \mapsto \lambda(a,b) \in \FF$ is nonsingular, 
	i.e., the linear maps in~(a) are both linear isomorphisms.
	\item[(d)] There exists a subspace~$L \subseteq P$ such that $L+A = A^\perp$ and $L+B = B^\perp$.
	\item[(d')] There exists a subspace~$H \subseteq P$ on which $\lambda$ restricts to a nonsingular form
	such that $H+A = A^\perp$ and $H+B = B^\perp$.
	\item[(e)] For any $p \in S^\perp$,
	there exist $a \in A, b \in B$ such that $\lambda(p-a, B) = 0$ and $\lambda(p-b, A) = 0$.
 \item[(f)] $A^\perp/A$ and $B^\perp/B$ are canonically isomorphic, where the isomorphism $A^\perp/A \to B^\perp/B$ is defined for every $a'+A$ (here $a'\in A^\perp$) by searching $a\in A$ such that $a'+a\in B^\perp$ and defining $a'+A\mapsto a'+a+B$.
\end{enumerate}
If (d') holds, the sums are orthogonal sums.
\end{proposition}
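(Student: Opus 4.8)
```latex
\noindent\textbf{Proof plan for Proposition~\ref{thm:Reversible-LinearVersion}.}
The plan is to prove the cycle of implications among the conditions over the symplectic $\FF$-vector space $(P,\lambda)$, working with self-orthogonal subspaces $A, B$ and $S = A \cap B$. The conceptual backbone is that a symplectic form induces a perfect pairing between $A/S$ and its image under $\lambda$, so ``surjectivity,'' ``injectivity,'' and ``isomorphism'' of the two natural maps in~(a), (b), (c) are really the same statement once one tracks dimensions. First I would record the elementary facts I will use repeatedly: for any subspace $V \subseteq P$, $\dim V + \dim V^\perp = \dim P$ and $V^{\perp\perp} = V$; and for self-orthogonal $A$ one has $A \subseteq A^\perp$. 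I would also note the kernel of the map $B/S \to (A/S)^*$, $b + S \mapsto \lambda(b,\cdot)$, is $(A^\perp \cap B)/S$, which is why $(b)$ says exactly that this kernel is zero, i.e.\ $A^\perp \cap B = S$ (and symmetrically).

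For $(a)\Leftrightarrow(b)\Leftrightarrow(c)$: surjectivity of $B/S \to (A/S)^*$ forces $\dim(B/S) \ge \dim(A/S)$; surjectivity of the other map forces the reverse inequality; so under $(a)$ the two quotients have equal dimension and both maps are isomorphisms, giving $(c)$, and $(c)\Rightarrow(b)\Rightarrow(a)$ is immediate. For $(c)\Leftrightarrow(f)$: the canonical map $A^\perp/A \to B^\perp/B$ of~(f) is well defined precisely when for each $a' \in A^\perp$ there is $a \in A$ with $a' + a \in B^\perp$, i.e.\ when $A^\perp \subseteq A + B^\perp$; dualizing via $\perp$, this is $A \cap B \supseteq $ (something) — more cleanly, $A^\perp = A + B^\perp$ iff $A \cap B = (A^\perp)^\perp \cap (B^\perp)^\perp$ computations collapse to $S = A \cap B^\perp = A^\perp \cap B$, which is~(b). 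That it is then an isomorphism (and its inverse is the symmetric construction) is a dimension count plus a diagram chase. Condition~$(e)$ is a pointwise restatement: ``for all $p \in S^\perp$ there exist $a,b$ with $p - a \in B^\perp$ and $p - b \in A^\perp$'' says $S^\perp \subseteq A + B^\perp$ and $S^\perp \subseteq B + A^\perp$; since $S^\perp = A^\perp + B^\perp$ (from $S = A \cap B$), this again unwinds to~(b).

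For the equivalence with $(d)$ and $(d')$: given~(b), I would build $L$ by choosing a complement. Decompose $P$ relative to the flag $S \subseteq A \subseteq A^\perp$ and $S \subseteq B \subseteq B^\perp$; the content of~(b) is that $A^\perp/A$ and $B^\perp/B$ are ``diagonally identified'' inside $P/S$ by a nonsingular pairing (this is $(f)$), so one can pick $L$ mapping isomorphically onto the graph of that identification. Concretely, pick a basis of $A^\perp/A$, lift each element to a vector in $A^\perp$, adjust by an element of $A$ so the lift also lies in $B^\perp$ (possible by~(b)), and let $L$ be the span of these lifts together with a basis of $S$; then $L + A = A^\perp$ and $L + B = B^\perp$ by construction, giving~(d). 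For $(d')$, observe that the pairing in~(f) is the restriction of $\lambda$; choosing the lifts to be a symplectic basis for that nonsingular form (which one can, since $A^\perp/A$ carries a nonsingular induced symplectic form when $A$ is self-orthogonal — this is the standard reduction of a symplectic space by an isotropic subspace) makes $\lambda$ nonsingular on $H := L$. Conversely $(d')\Rightarrow(d)$ is trivial, and $(d)\Rightarrow(b)$ follows by taking perpendiculars: $L + A = A^\perp$ gives $L^\perp \cap A^\perp = A^{\perp\perp} = A$ after using self-orthogonality, and combined with $L + B = B^\perp$ one extracts $A^\perp \cap B \subseteq A$, hence $= S$; symmetrically on the other side. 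The final sentence, that the sums in $(d')$ are orthogonal, follows because $H \subseteq A^\perp$ forces $\lambda(H, A) = 0$ and $H \subseteq B^\perp$ forces $\lambda(H,B) = 0$.

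\noindent The main obstacle I anticipate is $(d')$: one must produce a single subspace $H$ that \emph{simultaneously} complements $A$ inside $A^\perp$ and $B$ inside $B^\perp$ \emph{and} carries a nonsingular restriction of $\lambda$. The first two requirements are the combinatorial heart (and are exactly~(b)/(f)); grafting on nonsingularity requires choosing the lifts compatibly with the symplectic structure on the reduced space $A^\perp/A$, so the careful step is checking that this reduced form is nonsingular and that a symplectic basis for it lifts to vectors spanning an $H$ on which $\lambda$ stays nondegenerate — i.e.\ that the adjustments by elements of $A$ (needed to land in $B^\perp$) do not destroy nonsingularity, which holds because those adjustments lie in the radical direction of the reduced pairing.
```
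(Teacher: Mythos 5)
Your plan tracks the paper's proof quite closely in structure and in the key constructions, with a few reroutings that are fine: you go $(e)\Leftrightarrow(b)$ and $(c)\Leftrightarrow(f)$ via perpendicular‑complement dualization where the paper goes $(a)\Leftrightarrow(e)$ via extending a functional and $(d)\Leftrightarrow(f)$ via the isomorphism carried by~$L$; and you prove $(d)\Rightarrow(b)$ directly by taking $\perp$'s of $L+A=A^\perp$ and using $L\subseteq B^\perp$, which is cleaner than routing through $(d')$ as the paper does. The dimension‑count argument for $(a)\Leftrightarrow(b)\Leftrightarrow(c)$ and the ``lift a basis of $A^\perp/A$ and adjust by an element of~$A$ to land in~$B^\perp$'' construction are exactly the paper's mechanism.

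There is one genuine slip in the construction of $L$ and $H$. You define $L$ as the span of the adjusted lifts \emph{together with a basis of~$S$}, and then for $(d')$ you set $H:=L$ and claim $\lambda|_H$ is nonsingular. That cannot be right whenever $S\neq 0$: every $s\in S$ is in $A\subseteq A^\perp$, hence orthogonal to the adjusted lifts (which lie in $A^\perp$), and $S$ is isotropic, so $S$ sits in the radical of $\lambda|_L$. The fix is to notice that including $S$ in $L$ is unnecessary even for $(d)$: the adjusted lifts $h_i=a_i'-a(a_i')$ already satisfy $\operatorname{span}\{h_i\}+A=A^\perp$ and $\operatorname{span}\{h_i\}+B=B^\perp$ (the latter because the classes $h_i+B$ are the images of $a_i'+A$ under the isomorphism of~(f), hence form a basis of $B^\perp/B$). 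With $H=\operatorname{span}\{h_i\}$, the identity $\lambda(h_i,h_j)=\lambda(a_i',a_j')$ (which is what the paper computes — the adjustments by $a(a_i')\in A$ drop out because $A\perp A^\perp$) shows $\lambda|_H$ is the induced nonsingular form on $A^\perp/A$, so no separate ``choice of symplectic basis'' is even required. Also, for the final claim that the sums in $(d')$ are \emph{orthogonal} sums, your observation $\lambda(H,A)=0$ and $\lambda(H,B)=0$ gives orthogonality; you should add that nonsingularity of $\lambda|_H$ then forces $H\cap A=0$ and $H\cap B=0$ (anything in the intersection would be in the radical of $\lambda|_H$), so the sums are in fact direct.
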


Theorem~\ref{thm:Reversibility} in the main text is a specialized version of this proposition for $\FF = \FF_2$.
Here (a--e) parallel~\ref{thm:Reversibility}(a--e) item-by-item.
The transcription should be straightforward using~\cite{CalderbankRainsShorEtAl1997Quantum}
to any interested reader. (f) is adopted for the abstract definition of MQCA.

\begin{proof}
(a) $\leftrightarrow$ (b) $\leftrightarrow$ (c):
A linear surjection implies that the dimension of the domain is at least that of the codomain.
So, (a) implies that the dimensions of $A/S$, $B/S$, and their duals are all the same.
A linear injection implies that the dimension of the codomain is at least that of the domain.
So, (b) implies that the dimensions of $A/S$, $B/S$, and their duals are all the same.
A surjection or an injection between finite dimensional spaces of the same dimension is automatically an isomorphism.
An isomoprhism is surjective and injective.

(a,b,c) $\to$ (d'):
For any $p \in S^\perp$ we have $\lambda(p, \cdot) \in (B/S)^*$ well defined.
But (a) says that this linear functional is equal to $\lambda(a(p), \cdot)$ for some~$a(p) \in A$, implying $p- a(p)\in B^\perp$ for all $p\in S^\perp$.
By (b), such $a(p)$ is unique up to~$S$.
In particular, since $A^\perp \subseteq S^\perp$, 
we define a linear subspace~$H$ to be the $\FF$-span of~$\{ a' - a(a') | a' \in A' \}$
where $A'$ is such that $\{a' + A | a' \in A' \}$ is any basis of~$A^\perp / A$.
By construction, $H \subseteq B^\perp$. 
Clearly, $H \subseteq A^\perp+A$ and $A^\perp\subseteq H+A$. 
Noticing that $A\subseteq A^\perp$, we get $H + A = A^\perp$.
Since $\lambda(a' - a(a'), a'' - a(a'')) = \lambda(a',a'')$, 
$\lambda$ restricted to~$H$ is the same as that on~$A^\perp/A$
where it is nonsingular. 
Then, $H \perp A$ implies $H \cap A = 0$; similarly $H \cap B = 0$.
Since $H \subseteq B^\perp$, we have $B^\perp \supseteq B + H$.
By~(c), we know $\dim A^\perp = \dim B^\perp$ and $\dim A = \dim B$, so it must be that $B^\perp = B \oplus H$.

(d) $\leftrightarrow$ (d')
Assume~(d).
Decompose $L$ as an orthogonal sum of a hyperbolic subspace~$H$ and a self-orthogonal subspace~$S$:
$L = H \oplus S$.
By assumption, $S \perp A$, and furthermore we see $S \perp S$ and $S \perp H$ by the decomposition.
Hence, $S \subseteq (H + S + A)^\perp = (A^\perp)^\perp = A$.
Therefore, $A^\perp = A + L = A + S + H = A + H$.
Similarly, $S \subseteq (H + S + B)^\perp = B$ and $B^\perp = B + H$.
This is (d').
It is obvious that (d') implies (d).

(d') $\to$ (c):
The nonsingularity of~$\lambda$ on~$L$ 
implies $P = L \oplus L^\perp$. $L\subseteq A^\perp$ implies $A\subseteq L^\perp$; similarly, $B\subseteq L^\perp$. 
Then, $A^\perp \cap B = (L\oplus A) \cap B = A \cap B = S = A \cap (L\oplus B) = A \cap B^\perp$.

(a) $\to$ (e):
By supposition, $\lambda(p,\cdot) \in (A/S)^*$. 
By~(a), choose $b \in B$ such that $\lambda(p,\cdot) = \lambda(b,\cdot)$.
By symmetry, we find $a \in A$ such that $\lambda(p,\cdot) = \lambda(a,\cdot)$.

(e) $\to$ (a):
If $f \in (A/S)^*$, then we can extend~$f$ to some linear functional on~$S^\perp / S$,
which is represented as $\lambda(p,\cdot)$ for some~$p \in S^\perp$.
By~(e), we find some~$b$ such that $\lambda(p-b,A) = 0$.
It follows that $f = \lambda(b,\cdot)$, so $B/S \to (A/S)^*$ is onto.
By symmetry, (a) follows.

(d) $\leftrightarrow$ (f): Assuming (d), then $A^\perp/A \cong L\cong B^\perp/B$. Moreover, since each  $a'\in A^\perp$ can be decomposed as $a'=a_0+l$ (here $a_0\in A$) and any $a\in A$ such that $a'+a\in B^\perp$ must be of the form $a=s-a_0$ (here $s\in S=A\cap B$), we know $a'+a+B$ is well-defined in $B^\perp/B$. Therefore the isomorphic $A^\perp/A \cong B^\perp/B$ does not depend on the choice of $L$. Assuming (f), we define an $L$ (perhaps nonuniquely)  by choosing a basis of $A^\perp/A$, finding an $a$ (perhaps nonuniquely) for each $a'+A$ in the basis, and spanning $L$ by $a'+a$. Clearly $L\oplus A=A^\perp$. $L\cap B=0$ and $L+B=B^\perp$ follow from the injectivity and surjectivity of $A^\perp/A \to B^\perp/B$.
\end{proof}

\paragraph{Generalized Pauli matrices and Fourier transforms.}
Let $p > 1$ be an integer and $\omega = e^{2\pi \ii k /p}$ be a primitive $p$-th root of unity where $\gcd(p,k)=1$.
Let $\ket{a}$ for $a \in \ZZ/p\ZZ = \ZZ_p$ span $\CC^p$.
The generalized Pauli matrices are $X = \sum_{j \in \ZZ/p\ZZ} \ket{j+1} \bra{j}$ 
and $Z = \sum_{j \in \ZZ/p\ZZ} \omega^j \ket j \bra j$.
Define a discrete Fourier transform~$F = \frac{1}{\sqrt{p}} \sum_{a,b \in \ZZ/p\ZZ} \ket a \omega^{ab} \bra b$.
It is readily checked that
\begin{equation}
	XZ = \omega^{-1} ZX, \qquad FF^\dag = \one, \qquad FXF^\dag = Z.
\end{equation}
Since $\{ X^a Z^b \,|\, a,b \in \ZZ/p\ZZ \}$ is an othonormal operator basis under the Hilbert--Schmidt inner product,
we may expand $F$ in this basis.
These are single-qudit matrix identities, but a generalization is straightforward.

\begin{lemma}\label{lem:Fourier}
If $A$ and $B$ are unitaries on a finite dimensional complex vector space such that 
\begin{equation}
A^p = B^p = \one, \quad AB = \omega^{-1} BA, \quad \abs[\Big]{\{ \omega^k \in \CC \,|\, k \in \ZZ \}} = p,
\end{equation}
then there exist some $\lambda_{a,b} \in \CC$ and a unitary~$U$ such that
\begin{equation}
	U = \sum_{a,b \in \ZZ / p \ZZ} \lambda_{a,b} A^a B^b, \qquad
	U A U^\dag = B \, .
\end{equation}
If $p$ is odd, then $\lambda_{a,b}$ can be chosen to be $p^{-1} \omega^{- ((p+1)/2)^2 (a-b)^2}$.
\end{lemma}
\begin{proof}
	Since $B$ is a unitary, there exists an eigenvector $\ket{\bar k}$ of~$B$
	where the eigenvalue must be a $p$-th root of unity, say $\omega^k$.
	Then the commutation relation shows that $\ket{\bar n} = A^{n-k} \ket{\bar k}$
	is an eigvenvector of~$B$ with eigenvalue $\omega^n$ for any $n \in \ZZ / p\ZZ$.
	Hence, the action of $A$ and $B$ on the $\CC$-linear span of~$\{\ket{\bar n} : n \in \ZZ / p \ZZ \}$
	is a representation of~$X,Z$.
	Since the algebra of $X,Z$ is simple, this representation is unique up to a unitary.
	For $A = X$ and $B = Z$, such~$U$ in the claim exists by the Fourier transform,
	and so does for more general~$A, B$.
	The special choice of $\lambda_{a,b}$ when $p$ is odd follows by direct calculation.
\end{proof}

\section{Details on the MQCA index}
\label{sec:indexproof}

Rehashing \eqref{eq:ConstructF},
\begin{align} \begin{aligned}
	\lgc_R &= \big\{ x\calS \in \calL/\calS ~\big|~ x \in \calL ,\, \Supp(x) \subseteq R \big\} ,\\
	\lgc_{+\infty} &= \bigcap_{k \in \ZZ} \lgc_{ (k,\infty) }  \,,\\
	\lgc_{-\infty} &= \bigcap_{k \in \ZZ} \lgc_{ (-\infty,k) } \,,\\
	\lgc_\infty &= \lgc_{+\infty} \lgc_{-\infty} \,, \\
	\calF &= (\calL/\calS) \Big/ \lgc_\infty \,, \\
	\calF_{<a} &= \big( \lgc_{<a} \lgc_{+\infty} \big) \Big/ \lgc_\infty \,.
\end{aligned} \end{align}
Recall that
$a \le b$ and $\calF_{\circ}$ is a subgroup
such that $\calF_{<a} \subseteq \calF_\circ \subseteq \calF_{<b}$.
Let $\nu: \calF \to \calF_{\circ}$ and $\iota :\calF_\circ \to \calF$
	be projection and inclusion maps respectively, such that $\nu\iota = \id_{\calF_\circ}$.
We wish to prove that 
	$\phi = \nu \bar \alpha |_{\calF_{\circ}} : 
	\calF_{\circ} \lhook\joinrel\xrightarrow{\quad\iota\quad} 
	\calF \xrightarrow{\quad\bar \alpha \quad} 
	\calF \xrightarrow{\quad\nu\quad} \calF_{\circ}$
is Fredholm, and that its Fredholm index is independent of the choice of $a,b,\nu$.

\begin{proof}[Proof of~\ref{lem:Fredholm2} that the MQCA index is well defined] \hfill\par

Define a map
\begin{align}
\eta : \calF_{\circ} \lhook\joinrel\xrightarrow{\quad\iota\quad} \calF \xrightarrow{\quad\bar \alpha^{-1} \quad} \calF \xrightarrow{\quad\nu\quad} \calF_{\circ} \,.
\end{align}
Suppose that $\alpha$ and $\alpha^{-1}$ have range at most~$r$,
so $\bar \alpha(\calF_{<k}) \subseteq \calF_{<k+r}$ and $\bar \alpha^{-1}(\calF_{<k}) \subseteq \calF_{<k+r}$
for any~$k \in\ZZ$.
Since the projection $\nu$ acts as the identity on~$\calF_{<a}$,
it follows that $\eta(\phi (f)) = \eta( \bar \alpha(f) ) = f$ for any $f \in \calF_{<a - 2r}$.
Take any subspace $\calE \subset \calF_{\circ}$ such that $\calF_{\circ} = \calF_{<a - 2r} \oplus \calE$.
We see that $(\eta \phi - \one)(\calF_{<a}) = (\eta \phi - \one)(\calE)$.
Here, $\calE \cong \calF_{\circ} / \calF_{<a - 2r}$ is contained in~$\calF_{<b} / \calF_{<a - 2r}$,
that is the image of 
the canonical map from~$\calL_{< b} / \calL_{< a - 2r}$,
which is finite dimensional 
because there are only finitely many qubits in the interval~$[a-2r,b)$.
Hence, $\calE$ is finite dimensional, and so is the image of~$\eta \phi - \one$.
Interchanging the role of $\alpha$ and $\alpha^{-1}$,
we see that $\phi \eta - \one$ is also finite rank.
Therefore, $\eta$ is a pseudoinverse for~$\phi$, implying that $\phi$ is Fredholm.

If $a',b',\nu'$ are some other choices giving a Fredholm map~$\phi'$,
then we set $a'' = \min(a,a')$
and consider a new projection~$\nu''$ onto~$\calF_{<a''}$,
which gives a new Fredholm map~$\phi''$.
It will suffice to show that $\ind(\phi) = \ind(\phi'')$
because by symmetry we will also have $\ind(\phi') = \ind(\phi'')$.
We chase the following diagram
\begin{align}
\xymatrix{
\calF_{\circ} \ar[r]^\iota& \calF \ar[r]^{\bar\alpha} & \calF \ar[dr]_{\nu''} \ar[r]^\nu & \calF_{\circ} \ar[d]^{\nu''}\\
\calF_{<a''} \ar[u]^{\iota''} & & & \calF_{<a''} 
}
\end{align}
where $\iota,\iota''$ are inclusions.

We claim that the difference 
$\Delta = (\nu'' \nu \bar \alpha \iota - \nu'' \bar \alpha \iota) : \calF_{\circ} \to \calF_{<a''}$
associated with the triangle in the diagram,
has finite rank.
To see this,
we recall that both projections~$\nu$ and~$\nu''$ are the identity on~$\calF_{<a''}$.
So, the difference $\Delta$ is zero on~$\calF_{< a'' - r}$ 
where $r$ is the range (also called spread) of~$\bar \alpha$.
Hence, $\Delta(\calF_{\circ}) = \Delta(\calE)$ 
where $\calE$ is a subspace such that $\calF_{<a''-r} \oplus \calE = \calF_{\circ}$.
This direct complement~$\calE$ is not necessarily unique,
but is always isomorphic to $\calF_{\circ} / \calF_{< a'' - r}$,
which is contained in~$\calF_{<b} / \calF_{< a'' - r}$,
which is the canonical image of~$\calL_{< b} / \calL_{<a'' - r}$,
which is finite dimensional 
because there are only finitely many qubits in the interval~$[a'' - r, a) \subset \ZZ$.
Hence, $\Delta$ has finite rank.
It follows that $\ind(\phi'') = \ind( \nu'' \nu \bar \alpha \iota \iota'')$
by the stability~\eqref{eq:FredholmStability} of the Fredholm index,
where the latter map, the longest chain of maps in the diagram,
is equal to $\nu'' \phi \iota''$.

It remains to show that $\ind(\nu'' \phi \iota'') = \ind(\phi)$.
To this end, we use the composition rule~\eqref{eq:FredholmPlus} of the Fredholm index.
The inclusion~$\iota''$ is injective, having zero kernel,
so $\ind(\iota'') = - \dim \coker \iota'' = - \dim \calF_{\circ} / \calF_{<a''}$,
which we have shown is finite.
On the other hand,
the projection $\nu'' : \calF_{\circ} \to \calF_{<a''}$ is surjective,
so $\ind(\nu''|_{\calF_{\circ}}) = \dim \ker (\nu''|_{\calF_{\circ}}) = \dim \calF_{\circ} / \calF_{<a''}$.
Therefore, $\ind(\iota'') + \ind(\nu''|_{\calF_{<a}}) = 0$
and the lemma is proved.
\end{proof}

\tocless\subsection{Properties}
\insertlabel{app:MQCAprop}{\thesubsection{}}
\begin{proof}[Proof of~\ref{MQCA-properties} on properties of the MQCA index] \hfill\par

	(Item~1) 
	This follows from the direct sum rule~\eqref{eq:FredholmPlus} of the Fredholm index.

	(Item~2) 
	Let $\nu : \calF \to \calF_{<a} \subseteq \calF$ be a projection, 
	which is a left inverse of the inclusion~$\iota : \calF_{<a} \to \calF$,
	i.e., $\nu \iota = \one$ on~$\calF_{<a}$.
	Then,
	\begin{align} \begin{split}
		\nu \bar \alpha \bar \beta \iota 
		&= \nu \bar \alpha ( \iota \nu + \one - \iota \nu) \bar \beta \iota
	\\	&= \nu \bar \alpha \iota \nu \bar \beta \iota  + \nu \bar \alpha (\one - \iota \nu) \bar \beta \iota
	\\	&= (\nu \bar \alpha \iota) (\nu \bar \beta \iota)  + \nu \bar \alpha (\one - \iota \nu) \bar \beta \iota \,.
	\end{split} \end{align}
	Here, the last term has finite rank because $\beta$ is locality preserving:
	for all $f \in \calF_{< a - r}$ we know that $\bar\beta(f) = \iota\nu \bar \beta (f)$,
	where $r$ is the spread of~$\beta$,
	and $\calF_{<a} / \calF_{<a -r}$ is finite dimensional.
	Hence,
	\begin{align} \begin{split}
		\ind(\nu \bar \alpha \bar \beta \iota) 
		&= \ind(  (\nu \bar \alpha \iota) (\nu \bar \beta \iota) )
	\\	&= \ind (\nu \bar \alpha \iota) + \ind (\nu \bar \beta \iota),
	\\	\Ind(\alpha \beta) &= \Ind(\alpha) + \Ind(\beta).
	\end{split} \end{align}
	
	(Item~3)
	Since $U$ is locality preserving, it restricts to an isomoprhism $\lgc_{\pm\infty} \to \calL'^{/\calS'}_{\pm\infty}$,
	and hence induces a locality-preserving isomorphism $\bar U: \calF \to \calF'$.
	If $\nu : \calF \to \calF_{<a}$ is a projection, that is a left inverse of the inclusion~$\iota : \calF_{<a} \to \calF$,
	Let $\calF'_\circ = \bar U(\calF_{<a})$ be a subspace of~$\calF'$.
	If $r$ is the spread of~$U$, then $\calF'_{<a - r} \subseteq \calF'_\circ \subseteq \calF'_{< a + r}$.
	Let $\iota' : \calF'_\circ \to \calF'$ be the inclusion,
	for which we choose a left inverse $\nu' = \bar U \iota \nu \bar U ^{-1} : \calF' \to \calF'_\circ$.
	Tautologically, $\nu' \bar U \iota : \calF_{<a} \to \calF'_\circ$ is an isomophism 
	with the inverse~$\nu \bar U^{-1} \iota'$.
	Both are Fredholm of index zero.
	Since $\bar U^{-1} \iota' = \iota \nu \bar U^{-1} \iota'$, we have
	\begin{align} \label{eq:equalindex} \begin{split}
		\Ind(U \alpha U^{-1}) 
		&= \thalf \ind\mathopen\big( \nu' (\bar{U}\bar{\qca}\bar{U}^{-1}) \iota' \big)
	\\	&=\thalf \ind\mathopen\big(( \nu' (\bar{U} \iota \nu \bar{U}^{-1}) (\bar{U}\bar{\qca}\bar{U}^{-1}) \iota' \big)
	\\	&=\thalf \ind( \nu' \bar U \iota ) + \thalf\ind( \nu \bar \alpha \iota) + \thalf\ind\mathopen\big(\nu \bar U^{-1} \iota'\big)
	\\	&= 0 +  \Ind(\alpha) + 0.
	\end{split} \end{align}
		
	(Item~4) 
	In this case $\calL=\calP$, $\calS=\{1\}$. 
	Everything goes back to the usual recipe for 1d unitary Clifford QCA index.
	It is well known that any unitary QCA in 1d is FDQC equivalent to a translation~\cite{GNVW}.
	The GNVW index is known to be invariant under FDQC,
	and our MQCA index is invariant under depth~1 semi-infinite array of unitary gates,
	and hence is invariant for any FDQC.
	For translations, direct calculation gives the result.
\end{proof}

\tocless\subsection{Measure of flow}
\insertlabel{app:MQCAflow}{\thesubsection{}}
\begin{proof} [Proof of~\ref{prop:IndAsFlow} on MQCA index as a measure of flow]\hfill\par

	Combine~\ref{lem:TopologicalBulkToLocallyFiniteBoundaryStabilizer},
	\ref{lem:LocallyGeneratedToLocallyFinite},
	\ref{lem:FaFbSeparation},
	\ref{lem:SemiLocalFormulaForIndex},
	\ref{lem:IndexOfReflection},
	and~\ref{lem:BlendingInvarianceOfMQCAIndex} below.
\end{proof}

\begin{lemma}\label{lem:TopologicalBulkToLocallyFiniteBoundaryStabilizer}
Let $\calB$ be a topological Pauli stabilizer group in a two-dimensional lattice~$\ZZ^2$
with locality parameter~$\ell$.
For a vertically extended strip~$I \subseteq \ZZ^2$ of finite width, 
let $\calS = \calB_I$ be the subgroup of~$\calB$ consisting of all elements supported on~$I$,
which is a one-dimensional Pauli stabilizer group.
Then, $\dim ( \calS / (\calS_{<a} \calS_{>b} ) ) < \infty$ for any~$a,b \in \ZZ$.
\end{lemma}

\begin{proof}
Since $\calS_{<a}$ gets bigger as $a \to +\infty$
and $\calS_{>b}$ gets bigger as $b \to -\infty$,
it suffices to prove the lemma when $a \ll b$.
Suppose $a + \ell < b$.
If the support of an element~$S \in \calS$ 
does not intersect the (vertical) interval~$[a,b]$ of~$I$,
then the tensor factor~$S_\uparrow$ of~$S$ above~$b$ commutes with every $\ell$-local generator of~$\calB$,
and hence is an element of~$\calB$ because $\calB$ is topological.
The same applies to the tensor factor~$S_\downarrow$ of~$S$ below~$a$,
and therefore $S \in \calS_{<a} \calS_{>b}$.
Hence, any nonzero element of the quotient $\calS / (\calS_{<a} \calS_{>b} )$
must act on the interval~$[a,b]$ nontrivially,
and the number of such operators that are independent
is at most the dimension of the Pauli group on the interval~$[a,b]$,
which is finite.
\end{proof}

\begin{lemma}\label{lem:LocallyGeneratedToLocallyFinite}
If $\calS$ is $\ell$-locally generated for some $\ell$,
then $\dim ( \calS / (\calS_{<a} \calS_{>b} ) ) < \infty$ for all~$a,b \in \ZZ$.
\end{lemma}

\begin{proof}
	Since $\calS_{<a}$ gets bigger as $a \to +\infty$
	and $\calS_{>b}$ gets bigger as $b \to -\infty$,
	it suffices to prove the lemma when $a \ll b$.
	$\calS / \calS_{<a} \calS_{>b}$ is generated by the $\ell$-local generators
	supported on $(a - 2\ell, b + 2\ell)$.
	There can only be finitely many generators on that interval.
\end{proof}

\begin{lemma}\label{lem:FaFbSeparation}
	If $\dim ( \calS / (\calS_{<a} \calS_{>b} ) ) < \infty$ for some~$a,b \in \ZZ$,
	then $\dim ( \calF_{<a} \cap \calF_{>b} ) < \infty $ for all~$a, b \in \ZZ$.
\end{lemma}

\begin{proof}
	Note that
	\begin{equation}
		\calF_{<a}\cap\calF_{>b}
		=
		\Bigl( \lgc_{<a}\lgc_{+\infty}\cap\lgc_{>b}\lgc_{-\infty} \Bigr) \big/ \lgc_{\infty}
		=
		\Bigl( \big(\lgc_{<a}\cap\lgc_{>b}\big)\lgc_{\infty} \Bigr) \big/ \lgc_{\infty} \,,
	\end{equation}
	where the second equality is because $\lgc_{-\infty}\subseteq\lgc_{<a}$ and $\lgc_{+\infty}\subseteq\lgc_{>b}$.
	So we only need to prove that $\lgc_{<a}\cap\lgc_{>b}$ is finite dimensional
	for all $a,b \in \ZZ$.
	
	First, assume $a\leq b$.
	We embed $\lgc_{<a}\cap\lgc_{>b}$ into $\calS/(\calS_{<a}\calS_{>b})$ as follows.
	If $L + \calS \in\lgc_{<a}\cap\lgc_{>b}$,
	then, obviously, $L + \calS = L_1 + \calS$ for some $L_1\in\calL_{<a}$
	and $L + \calS = L_2 + \calS$ for some $L_2 \in\calL_{>b}$.
	It follows that $L_1 L_2 \in \calS$.
	Using this, we define a map $[L] \mapsto [L_1 L_2] \in \calS/\calS_{<a}\calS_{>b}$.
	This map is well-defined:
	if $L'_1$ and $L'_2$ are different choices,
	then $L_1 L_1'\in \calS$, and hence $L_1L_1'\in \calS_{<a}$, 
	and similarly, $L_2 L_2' \in \calS_{>b}$,
	implying that $(L_1L_2)(L_1'L_2')\in \calS_{<a}\calS_{>b}$.
	This map must be injective. 
	Indeed, if $[L]$ maps to zero,
	then $L_1 L_2 \in \calS_{<a}\calS_{>b}$,
	implying $L_1 \in \calS_{<a}$ and $L_2\in \calS_{>b}$ since $a\leq b$.
	This means that $[L] = 0 \in \calL/\calS$.
	
	To handle the case where $a>b$,
	we note that we already know that $\dim\lgc_{<a}\cap\lgc_{>a} < \infty$. 
	Then, $\dim\bigl(\lgc_{<a}\cap\lgc_{>b}\bigr)/\bigl(\lgc_{<a}\cap\lgc_{>a}\bigr) \leq \dim \lgc_{>b}/\lgc_{>a}<\infty$,
	which completes the proof.
\end{proof}

\begin{lemma}\label{lem:SemiLocalFormulaForIndex}
	Suppose $\dim(\calF_{<a}\cap \calF_{>b})<\infty$ for some (and hence all) $a,b\in\ZZ$.
	Then, for any $a\in\ZZ$ there exists $b_0\leq a$ such that for all $b\leq b_0$, it holds that 
		\begin{equation}
			\Ind(\alpha) = \half 
			\Bigl( \dim(\calF_{<a}\cap\bar \alpha^{-1}\calF_{>b})
			-
			\dim(\calF_{<a}\cap\calF_{>b}) 
			\Bigr).
		\end{equation}
\end{lemma}

\begin{proof}
We leave it to the reader to show that 
if the assumption $\dim (\calF_{<a} \cap \calF_{>b}) < \infty$
is true for some~$a$ and~$b$, 
then it is true for all~$a$ and~$b$;
the reader can use the fact that 
both $\calF_{<a'} / \calF_{<a}$ and $\calF_{> b'} / \calF_{>b}$ are 
always finite dimensional for any~$a' \ge a$ and $b' \le b$.

Since $\alpha$ is locality preserving, 
there is $a' \ge a$ such that $\bar{\qca}(\calF_{<a}) \subseteq \calF_{<a'}$.
Clearly, for any $b$ we have
\begin{equation}
	\calF_{<a'} \supseteq (\calF_{>b} \cap\calF_{<a'}) + (\bar \alpha (\calF_{<a})\cap\calF_{<a})
\end{equation}
where the union of the right-hand side over all $b < a$ is obviously the left-hand side.
Since $\calF_{<a'}/(\bar{\qca}(\calF_{<a})\cap\calF_{<a})$ is finite dimensional,
this union is stabilized at a finite $b = b' < a$, with which we have the equality
\begin{equation}
	\calF_{<a'} = (\calF_{>b'} \cap\calF_{<a'}) + (\bar \alpha (\calF_{<a})\cap\calF_{<a}) .\label{eq:a1b}
\end{equation}
Now, since $\calF_{>b'} \cap \calF_{< a'} \supseteq \calF_{>b'} \cap \calF_{<a}$,
we find a vector subspace~$\calF_1 \subseteq \calF_{>b'} \cap \calF_{< a'}$ such that
\begin{equation}
	\calF_{>b'} \cap \calF_{< a'} = (\calF_{>b'} \cap \calF_{<a}) \oplus \calF_1 \,. \label{eq:b1ac1}
\end{equation}
By~\eqref{eq:a1b}, we know that the right-hand side~$\calF_{>b'} \cap \calF_{< a'}$
extends to~$\calF_{<a'}$ by some elements of~$\bar \alpha (\calF_{<a})\cap\calF_{<a}$
where the latter span a vector subspace~$\calF_2$ such that
\begin{align}
	\calF_2 &\subseteq \bar \alpha (\calF_{<a})\cap\calF_{<a} \,, \nonumber\\
	\calF_{<a'} &= (\calF_{>b'} \cap \calF_{<a'}) \oplus \calF_2 \,, \label{eq:apbpap2}
\end{align}
so
\begin{equation}
	\calF_{<a'} = (\calF_{>b'} \cap \calF_{<a}) \oplus \calF_1 \oplus \calF_2 \,.
\end{equation}
We take its intersection with $\calF_{<a}$:
\begin{equation}
	\calF_{<a} = (\calF_{>b'} \cap \calF_{<a}) \oplus (\calF_1 \cap \calF_{<a}) \oplus \calF_2 = (\calF_{>b'} \cap \calF_{<a}) \oplus \calF_2 \label{eq:2333}
\end{equation}
where $\calF_1 \cap \calF_{<a} = 0$ by~\eqref{eq:b1ac1}.

To define the Fredholm index, we choose a projection $\nu: \calF \to \calF_{<a}$,
which is a left inverse of the inclusion~$\iota : \calF_{<a} \to \calF$.
We choose another map $\mu : \calF \to \calF_{<a'} \to \calF_{<a}$ 
that is the identity on~$\calF_2$ but $\mu (\calF_{>b'} \cap \calF_{<a'}) = 0$;
see~\eqref{eq:apbpap2}.
The map~$\mu$ is really a projection onto~$\calF_2$ 
but we set the codomain to be the same as that of~$\nu$,
so we may consider~$\mu-\nu$.
Note that, by construction, $(\ker \mu) \cap \calF_{<a'} = \calF_{>b'} \cap \calF_{<a'}$.

We claim that $(\mu - \nu) \bar \alpha \iota : \calF_{<a} \to \calF_{<a}$ is finite rank.
It is routine to check that $(\mu - \nu) \bar \alpha \iota$ vanishes on 
$\bar \alpha^{-1}(\calF_2) \subseteq \calF_{<a} \cap \bar \alpha^{-1}(\calF_{<a})$;
its image under $\bar \alpha$ is~$\calF_2 = \im \mu \subseteq \im \nu$.
Hence, it suffices to see that $\calF_{<a} / \bar \alpha^{-1}(\calF_2)$ is finite dimensional,
but this quotient is isomorphic to $\bar \alpha(\calF_{<a}) / \calF_2$,
which is in~$\calF_{<a'} / \calF_2 \cong \calF_{>b'} \cap \calF_{<a'}$ by~\eqref{eq:apbpap2}.
The last is finite dimensional by assumption.

By~\eqref{eq:FredholmStability}, we know $\ind(\nu \bar \alpha \iota) = \ind(\mu \bar \alpha \iota)$.
Let us calculate the latter.
The kernel consists of~$x \in \calF_{<a}$ 
such that $\bar \alpha(x) \in (\ker \mu) \cap \calF_{<a'} = \calF_{>b'} \cap \calF_{<a'}$.
So, $x \in \calF_{<a} \cap \bar \alpha^{-1}(\calF_{>b'})$.
It is then checked that
\begin{equation}
	\ker (\mu \bar \alpha \iota) = \calF_{<a} \cap \bar \alpha^{-1}(\calF_{>b'}).
\end{equation}
Next, by definition, $\coker (\mu \bar \alpha \iota) = \calF_{<a} / \mu \bar \alpha(\calF_{<a})$,
but $\calF_2 = \im \mu$ is a subspace of~$\bar \alpha(\calF_{<a})$ by construction,
so $\mu \bar \alpha(\calF_{<a}) = \calF_2$. From~\eqref{eq:2333} we see
\begin{equation}
	\dim \coker(\mu \bar \alpha \iota) = \dim (\calF_{<a} / \calF_2) = \dim (\calF_{>b'} \cap \calF_{<a}) .
\end{equation}
Since~\eqref{eq:a1b} is true for any smaller $b'$,
we complete the proof.
\end{proof}

\begin{lemma}\label{lem:IndexOfReflection}
Suppose that $\calL$ is the commutant of a locally generated Pauli group within a full one-dimensional Pauli group
and further that $\dim (\calF_{<a} \cap \calF_{>b}) < \infty$ for some $a,b \in \ZZ$.
If $\alpha^\mathrm{refl}$ on~$\calL^\mathrm{refl}/ \calS^\mathrm{refl}$ 
is the MQCA obtained by the spatial reflection about any point in the 1d line of an MQCA~$\alpha$ on~$\calL/\calS$, 
then $\Ind(\alpha^\mathrm{refl}) = - \Ind(\alpha)$.
\end{lemma}

\begin{proof}
We claim that $\calF / (\calF_{<a} + \calF_{>b})$ is finite dimensional
whenever $a + \ell < b$ where $\ell$ is the locality parameter of an generating set~$G$ in the assumption,
of which $\calL$ is the commutant.
If the support of a Pauli operator~$P \in \calL$ does not intersect the interval~$[a,b]$
then the $\ell$-local generators in~$G$ can be tested for commutation relation
to each left and right tensor factor of~$P$.
Since $\calL$ is the commutant of~$G$,
the left tensor factor of~$P$ belongs to~$\calL_{<a}$ and the right tensor factor to~$\calL_{>b}$,
and therefore $P$ represents zero class in $\calF / (\calF_{<a} + \calF_{>b})$.
Therefore $\calF / (\calF_{<a} + \calF_{>b})$ must be generated by
elements of~$\calL$ that act nontrivially on the finite interval~$[a,b]$.
There are finitely many such elements that are linear independent.

By construction, $\bigcap_{b \in \ZZ} \calF_{>b} = 0$ since we have modded out the ``infinity groups.''
Therefore, $\lim_{b \to \infty} \dim( \calF_{<a} \cap \calF_{>b} ) = 0$.
Since we assume that $\calF_{<a} \cap \calF_{>b}$ is finite dimensional,
this limit zero must be attained by some finite~$b$.
Let $a,b \in \ZZ$ with $a + \ell < b$ be such that
\begin{equation}
	\calF_{<a} \cap \calF_{>b} = 0.
\end{equation}
It follows that there exists a subspace $\calF' \subseteq \calF$ such that
\begin{equation}
	\calF = \calF_{<a} \oplus \calF' \oplus \calF_{>b} \,.
\end{equation}
Let $\nu : \calF \to \calF_{<a}$ be the projection onto the first direct summand
and $\mu : \calF \to \calF' \oplus \calF_{>b}$ be the projection to the rest.
They are left inverses of the inclusions $\iota_\nu : \calF_{<a} \to \calF$ and $\iota_\mu : \calF' \oplus \calF_{>b} \to \calF$.
We may write $\one = (\nu,0) + (0,\mu)$.

Since $\dim \calF' < \infty$,
we have $\calF_{>b} \oplus \calF' \subseteq \calF_{>b'}$ for some~$b' < b$.
Then, it is clear by~\ref{lem:Fredholm2} that
\begin{align}
	\Ind(\alpha) &= \half \ind( \nu \bar \alpha \iota_\nu ), \\
	\Ind(\alpha^\mathrm{refl}) &= \half \ind ( \mu \bar \alpha \iota_\mu ). \nonumber 
\end{align}
By~\eqref{eq:FredholmPlus}, 
the direct sum $(\nu \bar \alpha \iota_\nu) \oplus (\mu \bar \alpha \iota_\mu)$ is Fredholm 
whose index is the sum of indexes of the two.
The direct sum can also be written as $(\nu,0) \bar \alpha (\nu,0) + (0,\mu) \bar \alpha (0,\mu)$.
We claim that a map
\begin{equation}
	(\nu,0) \bar \alpha (\nu,0) + (0,\mu) \bar \alpha (0,\mu) - \one : \calF \to \calF
\end{equation}
has finite rank.
Indeed, the map is zero on $\calF_{< a-r } \oplus \calF_{>b + r}$ where $r$ is the spread of~$\alpha$.
By the first paragraph in this proof, we know the complement of~$\calF_{< a - r} \oplus \calF_{>b + r}$ is finite dimensional.
Since $\ind(\one) = 0$, we conclude by~\eqref{eq:FredholmStability}
that
\begin{equation}
\ind(\nu \bar \alpha \iota_\nu) + \ind(\mu \bar \alpha \iota_\mu) 
=
\ind( (\nu \bar \alpha \iota_\nu) \oplus (\mu \bar \alpha \iota_\mu) ) = 
\ind( \one ) = 0 .
\end{equation}
This complete the proof.
\end{proof}

\begin{lemma}\label{lem:BlendingInvarianceOfMQCAIndex}
Assume the supposition of~\ref{lem:IndexOfReflection}.
If for two MQCA $\alpha,\beta$ on~$\calL / \calS$ there exists a third MQCA~$\gamma$ and $a, b \in \ZZ$ 
such that $\gamma|_{\calF_{<a}} = \alpha|_{\calF_{<a}}$ and $\gamma|_{\calF_{>b}} = \beta|_{\calF_{>b}}$,
then $\Ind(\alpha) = \Ind(\beta)$.
\end{lemma}

\begin{proof}
The MQCA index is defined by the action of the MQCA on~$\calF_{<a}$,
and hence $\Ind(\alpha) = \Ind(\gamma)$.
On the other hand, the MQCA index of a space-reflected MQCA~$\gamma^\mathrm{refl}$ 
is determined by the action of~$\gamma$ on $\calF_{>b}$,
and hence $\Ind(\gamma^\mathrm{refl}) = \Ind(\beta^\mathrm{refl})$.
The claim follows by~\ref{lem:IndexOfReflection}.
\end{proof}

\begin{proof}[Proof of \ref{thm:IntegerIndexOf1dLRMC} that the MQCA index of 1d LR cycle is an integer] \hfill\par

A finite Ising chain has two logical operators~$\bar Z$ and~$\bar X$.
The operator~$\bar X$ is the all-qubit flip logical operator $\bar X = X^{\otimes n}$
where $n \ge 0$ is the length of the chain.
There is no other equivalent representative that is a tensor product of~$X$ 
for the logical operator~$\bar X$.
The conjugate logical operator~$\bar Z$ is a single-qubit operator~$Z$
which can be put anywhere within the chain.
The support of an Ising chain is that of~$\bar X$.
For (semi-)infinite chains the all-qubit flip logical operator is not finitely supported.

Applying the structure theorem~\ref{thm:IsingIn1D},
our $\calL/\calS$ is generated by~$\bar X$ of \emph{finite} Ising chains,
and $\bar Z$ of \emph{all} chains.
The groups at infinities, $\lgc_{+\infty}$ and $\lgc_{-\infty}$,
are generated by~$\bar Z$ of \emph{infinite} Ising chains,
so the quotient~$\calF$ is generated precisely of all logical operators of \emph{finite} chains.
We choose a projection~$\nu$ onto the group of all logical operators of the finite chains 
that are supported on $(-\infty,0)$.
So, the image of~$\nu$ is isomorphic to the full Pauli group of some collection of qubits.%
\footnote{%
	One might have noticed that in our construction towards~$\Ind$,
	the groups at infinity~$\lgc_{\pm \infty}$ did not play any important role
	other than being an invariant subgroup under~$\alpha$.
	Here, the groups at infinities make it easy to reduce the MQCA 
	to a unitary QCA on a full algebra of local operators.
	Any systematic choice of an invariant subgroup, e.g.,~$\lgc_{\pm \infty}$, defines an index.
	It is a separate question whether such an index is useful.
}

To consider the locality preserving property of the MQCA,
we compress the lattice so that every chain is supported on at most two neighboring sites;
this compression or coarse-graining can be very nonuniform, and so can be the number of qubits.
Nonetheless, since there are only a finite number of Ising chains that can overlap in support,
and we only have finite Ising chains,
the number of qubits on any post-compression site is finite.
Replace each Ising chain with a qubit located at either site in the support of the compressed chain.
Then, the MQCA defines a unitary Clifford QCA on this new lattice of qubits with finitely many qubits per site.
The MQCA index has nothing to do with the compression,
and thus is equal to the index of the unitary Clifford QCA,
which we know is an integer.
\end{proof}

\tocless\subsection{Strip geometry}
\insertlabel{app:MQCAstrip}{\thesubsection{}}
\begin{proof}[Proof of \ref{lem:IndexSeparation} that $\Ind(\alpha_\strip) = \Ind(\alpha_1) + \Ind(\alpha_2)$] \hfill\par

In this proof we omit the subscript ``strip'' for simplicity of notation.
So, $\calL$, $\calA$, and $\calF$ are, respectively, the group of all finitely supported on the strip,
the group of all stabilizers on the strip, and the quotient group according to~\eqref{eq:ConstructF}.
From the discussion in~\S\ref{sec:boundaryactions}, we have two injections 
$\calL^{/\calA_1}_1\to\calL^{/\calA}$ and  $\calL^{/\calA_2}_2\to\calL^{/\calA}$.
Combining these, we have a map
\begin{align}
	\psi: \calL^{/\calA_1}_1\oplus \calL^{/\calA_2}_2\to \calL^{/\calA}.
\end{align}
Now consider the following commutative diagram:
\begin{align}\label{eq-defphi}
	\begin{aligned}
		\xymatrix @C=7mm @R=6mm @M=2mm{
0 \ar[r]  &  \calL^{/\calA_1}_{1,\infty}\oplus \calL^{/\calA_2}_{2,\infty} \ar@{^{(}->}[rr]^{} \ar[d]_{\psi_{\infty}}  &&  \calL^{/\calA_1}_1 \oplus \calL^{/\calA_2}_2 \ar[rr]\ar[d]_{\psi}  &&  \calF_1 \oplus \calF_2 \ar@{-->}^\varphi[d] \ar[r]  &  0 \\
0 \ar[r]  &  \calL^{/\calA}_{\infty} \ar@{^{(}->}[rr]^{}  &&  \calL^{/\calA} \ar[rr] && \calF \ar[r]  &  0
		}
	\end{aligned}
\end{align}
The rows in this diagram are exact.
The groups $\calL^{/\calA}_{n,\infty}$ denote $\calL^{/\calA}_{n,+\infty} \calL^{/\calA}_{n,-\infty}$,
the map~$\psi_\infty$ is a restriction of~$\psi$.
The commutative diagram defines a map
\begin{equation}
	\varphi: \calF_1 \oplus \calF_2\to\calF. 
\end{equation}
For any vertical position~$k$ of our strip,
we restrict $\varphi$ to $(\calF_1)_{<k} \oplus (\calF_2)_{<k}$ and define:
\begin{align}
	\varphi_{<k} : (\calF_1)_{<k} \oplus (\calF_2)_{<k} \to (\calF)_{<k} \,.
\end{align}
We will show:
	\begin{itemize}[itemsep=1pt, parsep=0pt, topsep=2pt]
		\item[(i)] $\psi$ is injective.
		\item[(ii)] $\dim\ker\varphi<\infty$, and hence $\dim\ker\varphi_{<k}<\infty$.
		\item[(iii)] $\dim\coker \varphi_{<k} < \infty$.
	\end{itemize}

	These claims will imply that $\Ind(\alpha) = \Ind(\alpha_1) + \Ind(\alpha_2)$ as follows.
	Recall that the indices of the boundary MQCA $\alpha_{1,2}$ and the strip MQCA $\alpha$ 
	are defined by the Fredholm indices of maps as in~\eqref{eq-nualphaiota}.
	Let us consider the following diagram:
	\begin{align}
		\vcenter{\xymatrix{
			\calF_{<k}\ar[rr]^{\iota} && \calF\ar[rr]^{\bar\alpha} && \calF\ar[rr]^\mu && \calF_{<k} \\
			(\calF_1)_{<k} \oplus (\calF_2)_{<k}\ar[rr]^{\iota_1\oplus\iota_2}\ar[u]^{\varphi_{<k}} && \calF_1 \oplus \calF_2\ar[rr]^{\bar\alpha_1 \oplus \bar\alpha_2}\ar[u]^{\varphi} && \calF_1 \oplus \calF_2\ar[u]^\varphi \ar[rr]^{\nu_1\oplus\nu_2} && (\calF_1)_{<k} \oplus (\calF_2)_{<k}\ar[u]^{\varphi_{<k}}
		}}
	\end{align}
	In this diagram, $\varphi_{<k}$ is Fredholm due to the claims that $\varphi_{<k}$ has finite dimensional kernel and cokernel.
	The composition of the three maps in the top line is the Fredholm map for the strip MQCA index
	(see~\eqref{eq-nualphaiota}), where $\iota$ is the inclusion and $\mu$ is a projection.
	The bottom line is the direct sum of the maps for the boundary MQCA indices,
	where $\iota_1$ and $\iota_2$ are the inclusions, and $\nu_1$ and $\nu_2$ are projections.
	The left square commutes because of the definition of~$\varphi$ and~$\varphi_{<k}$.
	The middle square commutes because~$\bar \alpha$ 
	restricts to~$\bar \alpha_1 \oplus \bar \alpha_2$ on~$\calF_1 \oplus \calF_2$.
	The right square may not be commuting,
	but the difference due to this square starting from the bottom-left is finite rank
	because the subgroup~$(\calF_1)_{<k - r} \oplus (\calF_2)_{<k-r}$ of the domain with $r$ larger 
	than the spreads of the MQCAs is mapped the same.
	Hence, using \eqref{eq:FredholmStability} we have
	\begin{align}
		\ind\mathopen\big( \varphi_{<k}(\nu_1\oplus\nu_2)(\bar\alpha_1 \oplus \bar\alpha_2)(\iota_1\oplus\iota_2) \big) = \ind(\mu\bar\alpha\iota\varphi_{<k}).
	\end{align}
	Using the composition rule for the Fredholm index (\ref{MQCA-properties}), we arrive at:
	\begin{align}
		\Ind(\alpha) = \frac{1}{2}\ind(\mu\bar\alpha\iota) = \frac{1}{2}\ind\mathopen\big( (\nu_1\oplus\nu_2)(\bar\alpha_1 \oplus \bar\alpha_2)(\iota_1\oplus\iota_2) \big) = \Ind(\alpha_1)+\Ind(\alpha_2).
	\end{align}
	It remains to prove those three claims. 

	(i) $\psi$ is injective.
	
	Suppose the product of two boundary logical operators $Q_i\in\calL_i$ is a strip stabilizer,
	$Q_1Q_2\in\calA$. We have to show that $Q_1 \in \calA$ and $Q_2 \in \calA$.	
	By assumption, $Q_1Q_2=\prod A$, where $A\in\calA$ are local strip stabilizer generators.
	These $A$'s do not need to be supported near the support of~$Q_1Q_2$,
	but the number of them is finite.
	We partition the whole strip into three regions, two of which are the interface regions~$I_1$ and~$I_2$,
	each having width~$10\ell$, and the bulk denoted as~$I_b$ the complement of~$I_1 \cup I_2$.
	We have assumed that the conditions for topological codes 
	hold for all operators and regions supported on~$I_b$.
	Accordingly, we group those factors~$A$ into three classes: 
	class 1 are those that intersect $I_1$, 
	class 2 are those that intersect $I_2$, 
	class $b$ are those that are fully supported on~$I_b$.
	We have $Q_1Q_2=A_1 A_2 A_b$, where $A_i$ is the product of all those $A$'s in class~$i$ ($i=1,2,b$).
	Equivalently, $(Q_1 A_1^{-1}) (Q_2 A_2^{-1}) = A_b$.
	
	Now, observe that for $i=1,2$, the product~$Q_i A_i^{-1}$ is supported on~$I'_i$, 
	a strip of width~$20\ell$ near boundary~$i$. 
	For their product~$A_b$ to be supported on $I_b$,
	it must be that $Q_i A_i^{-1}$ is supported on~$I'_i \cap I_b$.
	Then, every $\ell$-local stabilizer $S \in\calA$ 
	must commute with both $Q_1 A_1^{-1}$ and $Q_2 A_2^{-1}$ 
	since $S$ cannot overlap with both of them.
	But both $Q_1A_1^{-1}$ and $Q_2 A_2^{-1}$ are supported on~$I_b$,
	where the first condition of topological codes implies that
	$Q_i A_i^{-1} \in \calA$ for $i=1,2$ individually,
	which in turn implies that $Q_i \in \calA$ for $i=1,2$.

	(ii) $\dim\ker\varphi<\infty$. 
	
	Since $\ker \psi = 0$ by~(i),
	the snake lemma applied to the commutative diagram~\eqref{eq-defphi} gives
	\begin{equation}
		\ker\varphi	\cong \ker( \coker \psi_\infty \to \coker \psi ).
	\end{equation}
	Here, the codomain~$\calL^{/ \calA}_\infty$ of~$\psi_\infty$
	is the logical quotient group at infinity of a one-dimensional system,
	in which the number of qudits per effective site at any vertical position~$k$
	depends on the width of the strip at that position, which is uniformly bounded by assumption.
	Therefore, the structure theorem~\ref{thm:IsingIn1D} 
	implies that there are only finitely many (semi-)infinite Ising chains.
	Hence, $\calL^{/\calA}_\infty$ is finite dimensional, 
	so $\dim \ker \varphi \le \dim \coker \psi_\infty \le \dim \calL^{/ \calA}_\infty < \infty$.

	(iii) $\dim\coker \varphi_{<k} < \infty$.
	It suffices to show that
	\begin{equation}	
	\dim \frac{ \calL_{<k} }{
		(\calL_1)_{<k} + (\calL_2)_{<k} + \calA
	} < \infty .
	\end{equation}

	To illustrate what we are going to do, 
	we consider $P = P_1 P_2 \in \calL$
	where $P_1$ and $P_2$ are supported on vertical regions~$V_1 \supseteq I_1$ and $V_2\supseteq I_2$, respectively,
	where $V_1$ and $V_2$ are separated by distance larger than~$\ell$,
	so that every $\ell$-local generator of~$\calA$ may overlap with at most one of~$V_1$ and~$V_2$.
	Then, $P_1$ and~$P_2$ are both in~$\calL_\strip$.
	The operators~$P_1$ and $P_2$ can then be localized to~$I_1$ and~$I_2$, respectively,
	by some elements of~$\calA$.
	Hence, $P$ is in the image of~$\varphi$.
	Note that the localization of $P_1, P_2$ may enlarge their support vertically, but not more than~$10\ell$.
	So, if $P = P_1 P_2  \in \calL_{<k - 10\ell}$, then $P \in \im \varphi_{<k}$.
	Therefore, any representative~$P \in \calL$ of a nonzero class of~$\coker \varphi_{<k}$
	must either not have any vertical ``gap" in its support 
	or act nontrivially on the window of vertical positions between~$k - 10\ell$ and~$k$.
	Those that act on the window is easily seen to be finite dimensional because
	its $\FF_2$-dimension is bounded by the number of qudits in that window of the strip.
	The remaining questions is
	how many operators of~$\calL$ are there that need a no-gap support.
	
	\begin{figure}
		\centering
		\includegraphics[width=70mm]{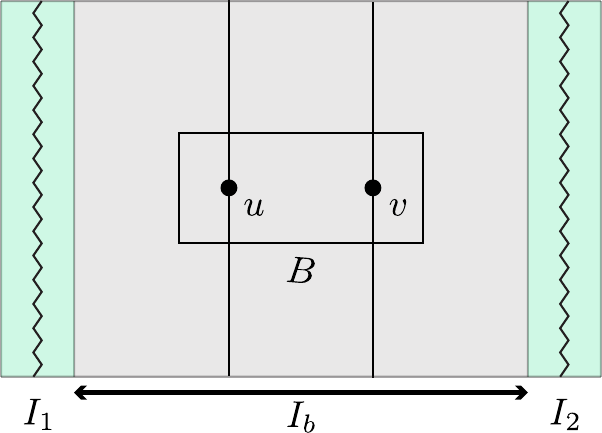}
		\caption{
			The geometry of regions in the proof of~\ref{lem:IndexSeparation}.
			For any logical operator,
			we find an equivalent logical operator that is supported near the boundaries~$I_1, I_2$
			and a bridge~$B$ of small height.
		}
		\label{fig:gluedStripUnfolded}
	\end{figure}

	Take any logical operator~$Q \in \calL_{<k}$ and 
	consider its truncation~$Q_1$ on, say, the $50\ell$-neighbor\-hood of~$I_1$.
	See Fig.~\ref{fig:gluedStripUnfolded}.
	Then, $Q_1$ will create some excitations on~$I_b$ which is the region in between~$I_1$ and~$I_2$, 
	but the condition~(ii) for topological codes in~\ref{defn:TopologicalCode}
	implies that these excitations 
	can be moved to a fixed location, say~$u \in I_b$, by some finitely supported operator 
	supported within distance~$80\ell$ from~$I_1$.
	Similarly, consider another truncation~$Q_2$ of~$Q$ on the $50\ell$-neighborhood of~$I_2$,
	and bring any excitation created by~$Q_2$ to a fixed location~$v \in I_b$ by some finitely supported operator
	supported within distance~$80\ell$ from~$I_2$.
	Since the set of all excitations caused by~$Q_1$ and~$Q_2$ 
	is created by a finitely supported operator,
	the condition~(iii) for topological codes 
	annihilates the moved excitations at~$u$ and~$v$ by some operator
	supported within the $10\ell$-neighborhood~$B$ of the straight line connecting~$u$ and~$v$.
	Thus, we have deformed $Q$ to~$Q'$ where $Q'$ is supported on the ``bridge''~$B$ and two vertical regions,
	that are each within distance~$80\ell$ from~$I_1$ and~$I_2$.
	Since any representative of a nonzero class of~$\coker \varphi_{<k}$ that is supported on $<k - 10\ell$
	must cross this bridge,
	we conclude that $\coker \varphi_{<k}$ has dimension bounded by the number of qudits in~$B$
	plus the number of qudits in the window of vertical coordinate between~$k - 10\ell$ and~$k$.
\end{proof}

\section{Structure of 1d Pauli stabilizer groups}\label{app:IsingIn1D}

Here we prove~\ref{thm:IsingIn1D} which we copy here for readers' convenience.
This section does not depend on any results from other sections of this paper.

\IsingOneD*

Every qudit in this proof will have a fixed prime dimension~$p \in \{ 2,3,5,7,11,\ldots \}$,
so we will write $\FF$ to denote the prime field instead of~$\FF_p$.
As usual, $Z,X$ will denote generalized Pauli or Weyl operators
\begin{align}
	Z &= \sum_{a \in \FF} \exp\left[\frac{2\pi \ii a}{p}\right] \ket{a}\!\bra{a} ,
&	X &= \sum_{a \in \FF} \ket{a+1}\!\bra{a} .
\end{align}
An {\bf Ising coupling} is $Z_j Z_{j+1}^\dag$ across two neighboring sites~$j$ and~$j+1$.
Recall that an Ising chain over one site consists of, by convention, 
one qudit with a stabilizer~$Z$ on it,
which is thus completely disentangled from the rest of the system.

\tocless\subsection{Reduction to infinite lattice}

Suppose we have the theorem for the infinite lattice~$\ZZ$.
Then, we can prove the theorem for any finite periodic lattice~$\ZZ/L\ZZ$ as follows
where $L$ is the number of sites.
If $L=1$ or $2$, the proof is left to the reader; see~\ref{lem:F2} below.
If $L \ge 3$, then since every generator~$P_i P'_{i+1}$ of the stabilizer group~$\calS$ of a finite periodic system
acts on just two neighboring sites $i,i+1 \in \ZZ/L\ZZ$,
we can define a lifted stabilizer group~$\calA$ on the infinite lattice 
by generators~$P_j P'_{j+1}$ for all $j = i \bmod L$;
the lifted stabilizer group is determined by the choice of a $2$-site local generating set of~$\calS$,
not just by the group~$\calS$.

The theorem for the infinite lattice gives us single-site Cliffords~$U_j$ for each site~$j \in \ZZ$
by which this lifted stabilizer group~$\calA$ 
becomes that of a collection of independent Ising chains and Bell pairs.
Note that $U_j$'s do not have to obey the periodicity~$L$,
i.e., it may be that $U_j \neq U_{j+L}$ for some~$j$.
Since each Bell pair is identified on just two sites,
the finite system has a Bell pair whenever the lifted system does.
Extracting all Bell pairs, we are left with Ising chains in~$\calA$.
Consider an Ising chain over $n$ sites in~$\calA$.
If $n = 1$, then one qubit is in a fixed product state, disentangled with the rest of the system.
This means that the finite system also has a disentangled qubit of a fixed state.
Similarly, if $n \le L$, then the Ising chain can also be distinguished in the finite system,
using $n$ single-site Cliffords.
Extracting all such Ising chains, we are left with a finite system
where the Ising chains of the lift, if any, always occupy more than $L$ sites.

Suppose we have chosen a nonredundant $2$-site local generating set for~$\calS$
and defined the lift~$\calA$ that has no Bell pairs or Ising chains over $L$ or fewer sites.
Suppose there is an Ising chain embedded in~$\calA$.
Then, we must have~$Z_j Z_{j+L}^\dag \in (U_j U_{j+L}) \calA (U_j U_{j+L})^\dag$ for some~$j \in \ZZ$.
This stabilizer is, in the finite chain, a product of 2-site local generators of~$\calS$ over $L$ bonds,
which is supported on a single site~$i = j \bmod L$.
By assumption, such a stabilizer must be the identity for the lack of any Ising chain over one site,
but then it violates the assumption that $\calS$ does not have any redundant generator.
Therefore, the lift $\calA$ must become the vacuous group 
after extracting Bell pairs and Ising chains over $L$ or fewer sites.
This completes the reduction from finite periodic cases to the infinite case.

\bigskip
\tocless\subsection{Reduction to classical codes}

\begin{lemma}\label{lem:F1}
For any set of independent commuting Pauli operators $P_1, P_2,\ldots, P_n$ on finitely many qudits,
	there exists a Clifford unitary~$U$ such that $U P_j U = Z_j$ for all~$j$ 
	where $Z_j$ is a single-qudit Pauli acting on qudit~$j$.
\end{lemma}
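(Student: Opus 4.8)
The plan is to prove this by induction on $n$, keeping the number of qubits as a free parameter and reducing the generators to single-site $Z$'s one at a time. The base case $n=0$ is vacuous (take $U=\one$). The inductive step uses the $n=1$ primitive first: \emph{any nontrivial Pauli $P$ is Clifford-conjugate to $Z_1$}. This is routine — apply a single-qubit Clifford on each qubit of $\Supp(P)$ to turn the corresponding tensor factor into $Z$, so $P$ becomes $\pm Z_{k_1}\cdots Z_{k_m}$ up to a phase; conjugate successively by the gates $\mathrm{CNOT}_{k_\ell\to k_1}$ for $\ell=2,\dots,m$ (control $k_\ell$, target $k_1$) to collapse this to $\pm Z_{k_1}$; apply a SWAP taking qubit $k_1$ to qubit $1$; and finally conjugate by $X_1$ if a sign remains.

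For the general step, given commuting independent $P_1,\dots,P_n$ on $N$ qubits, let $U_1$ be a Clifford with $U_1P_1U_1^\dagger=Z_1$ and set $Q_j=U_1P_jU_1^\dagger$ for $j\ge 2$. Since each $Q_j$ commutes with $Z_1$, it has the form $Q_j=Z_1^{c_j}\otimes R_j$ with $c_j\in\{0,1\}$ and $R_j$ a Pauli on qubits $2,\dots,N$. A short check using the independence of $Z_1,Q_2,\dots,Q_n$ shows that $R_2,\dots,R_n$ are themselves independent and commuting, so the inductive hypothesis, applied on the qubit set $\{2,\dots,N\}$ to the $n-1$ operators $R_j$, yields a Clifford $V$ supported on qubits $2,\dots,N$ with $VR_jV^\dagger=Z_j$; consequently $VZ_1V^\dagger=Z_1$ and $VQ_jV^\dagger=Z_1^{c_j}Z_j$.

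The remaining point — and the one place where a naive ``peel off $P_1$ first'' argument gets stuck — is the $Z_1$ ``tail'' on those $Q_j$ with $c_j=1$. These are cleared by conjugating with $T=\prod_{j:\,c_j=1}\mathrm{CNOT}_{1\to j}$ (control on qubit $1$, target qubit $j$): this Clifford fixes $Z_1$ and sends $Z_j\mapsto Z_1Z_j$, hence sends $Z_1Z_j\mapsto Z_j$, while leaving every $Z_k$ with $k\notin\{1,j\}$ untouched; the factors commute as unitaries (common control), so they act independently. Thus $U=TVU_1$ satisfies $UP_jU^\dagger=Z_j$ for all $j$, closing the induction. (If one prefers, all the steps can carry $\pm$ signs, absorbed at the end by one Pauli $\prod_j X_j^{b_j}$, since $X_j$ conjugation flips exactly $Z_j$.)

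I expect the only delicate parts to be bookkeeping: verifying independence and commutativity of the $R_j$, tracking the qubit count through the recursion, and the sign accounting. An alternative, less hands-on route is to pass to $\FF_2^{2N}$ with the commutation symplectic form, observe that $\langle P_1,\dots,P_n\rangle$ and $\langle Z_1,\dots,Z_n\rangle$ span totally isotropic subspaces, invoke Witt's extension theorem to obtain a symplectic map sending the ordered basis $\{[P_j]\}$ to $\{[Z_j]\}$, and lift it to a Clifford via the surjection of the Clifford group onto the affine symplectic group; but the constructive version above matches the elementary spirit of the paper and produces an explicit circuit.
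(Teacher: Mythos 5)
Your constructive proof is correct, but it is a genuinely different argument from the paper's. The paper disposes of the lemma in one line by passing to the symplectic $\FF_2$-picture: independent commuting Paulis span a self-orthogonal (isotropic) subspace, every isotropic subspace extends to a Lagrangian, and the symplectic group acts transitively on Lagrangians (equivalently, Witt extension), which lifts to a Clifford unitary. This is precisely the "alternative, less hands-on route" you sketch in your final sentence, so you have in effect also reproduced the paper's proof, but your main argument is the explicit inductive circuit construction. The two approaches buy different things. The paper's is shorter and fits the expository style of an appendix lemma whose content is standard; it does not produce a circuit. Your induction is longer but fully explicit: it exhibits $U$ as a concrete product of single-qubit Cliffords, CNOTs, and SWAPs, which would be useful if one wanted the depth-1 local Clifford of Theorem~\ref{thm:IsingIn1D} to be constructive end-to-end. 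Your bookkeeping is right: the $Q_j = Z_1^{c_j}\otimes R_j$ decomposition from commutation with $Z_1$, the independence and commutativity of the $R_j$ (a nontrivial relation among the $R_j$ would force a relation among $Z_1, Q_2,\dots,Q_n$), and the tail-clearing $T=\prod_{c_j=1}\mathrm{CNOT}_{1\to j}$ (which fixes $Z_1$, sends $Z_1 Z_j\mapsto Z_j$, and leaves the other $Z_k$ alone) all check out, and the terminal Pauli layer $\prod_j X_j^{b_j}$ does fix the residual signs. The only point you leave implicit, and which is worth a sentence, is that the induction requires $n\le N$ at every level so that enough target qubits $Z_2,\dots,Z_n$ exist on the remaining register; this is automatic because an $n$-dimensional isotropic subspace of $\FF_2^{2N}$ forces $n\le N$, but noting it closes the loop.
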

\begin{proof}
This is a rephrasing of the facts that any self-orthogonal subspace 
of a finite dimensional symplectic space extends to a lagrangian subspace
and that any lagrangian subspaces are isometric to each other.
\end{proof}

\begin{lemma}\label{lem:F2}	
	Given a finite Pauli stabilizer group~$\calS$ on a bipartite system
	with no element supported on either one party,
	there exists an unentangling Clifford~$U \otimes V$ such that $(U \otimes V) \calS (U \otimes V)^\dag$
	is supported on a collection of qudit pairs~$(i,1),(i,2)$ 
	where the second index denotes the party, 
	with a generating set where every generator is either $X_{i,1} \otimes X_{i,2}$ or $Z_{i,1} \otimes Z_{i,2}^\dag$.
\end{lemma}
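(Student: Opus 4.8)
The plan is to translate the statement into symplectic linear algebra over $\FF_2$ and then realize the required basis changes by Cliffords on each party. Forgetting phases, regard $\calS$ as a subspace of the phase space $V_1\oplus V_2$, where $V_i$ carries the symplectic form $\lambda_i$ of party~$i$ and the total form is $\lambda=\lambda_1\oplus\lambda_2$; abelianness of $\calS$ means $\calS$ is $\lambda$-self-orthogonal, and the hypothesis that no element is supported on a single party means $\calS\cap(V_1\oplus 0)=\calS\cap(0\oplus V_2)=0$. Hence the projections $\pi_i:\calS\to V_i$ are injective, so writing $S_i:=\pi_i(\calS)$ the group $\calS$ is the graph of a linear isomorphism $f:=\pi_2\circ(\pi_1|_\calS)^{-1}:S_1\to S_2$, and self-orthogonality of $\calS$ says precisely that $\lambda_2(fa,fb)=\lambda_1(a,b)$ for all $a,b\in S_1$, i.e. $f$ is a symplectic isometry between $(S_1,\lambda_1|_{S_1})$ and $(S_2,\lambda_2|_{S_2})$.

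Next I would put the alternating form $\lambda_1|_{S_1}$ in its normal form: choose a basis $(e_1,\dots,e_n)$ of $S_1$ and a set $\mathcal P$ of disjoint index pairs with $\lambda_1(e_i,e_j)=1$ if $\{i,j\}\in\mathcal P$ and $0$ otherwise, so that the unpaired indices $Q$ span the radical $R_1=S_1\cap S_1^{\perp_1}$. Because $f$ is an isometry, $(fe_1,\dots,fe_n)$ is a basis of $S_2$ with the \emph{same} pairing data and $f(R_1)$ is the radical of $\lambda_2|_{S_2}$. Now assign to each pair $\{i,j\}\in\mathcal P$ a fresh qubit on party~$1$ and to each $k\in Q$ a fresh qubit on party~$1$; the assignment $e_i\mapsto X$, $e_j\mapsto Z$ on the pair's qubit (and $e_k\mapsto X$ on the radical qubit) is a partial symplectic map, hence—by Witt extension together with the surjection of the Clifford group onto the symplectic group, exactly as in the proof of~\ref{lem:F1}—extends to a Clifford $U$ on party~$1$. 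This uses $|\mathcal P|+|Q|=\tfrac{1}{2}(\dim S_1+\dim R_1)$ qubits, which is $\le q_1$ because $R_1$ is an isotropic subspace of the symplectic complement of the $2|\mathcal P|$-dimensional symplectic subspace $\langle e_i,e_j\rangle_{\{i,j\}\in\mathcal P}$. Using the basis $(fe_1,\dots,fe_n)$ of $S_2$ with the \emph{same} qubit labels on party~$2$, the identical construction gives a Clifford $V$ on party~$2$; the bound $\le q_2$ holds by the same argument applied to $f(R_1)$ and $\langle fe_i,fe_j\rangle$.

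Conjugating $\calS$ by $U\otimes V$ then turns its generators (mod phases) into $X_{i,1}X_{i,2}$ and $Z_{i,1}Z_{i,2}$ for each pair-qubit $i$ and $X_{i,1}X_{i,2}$ for each radical-qubit $i$, all supported on matched qubit pairs $(i,1),(i,2)$; if $q_1\ne q_2$ there are simply untouched qubits on the larger side, which the statement permits. Finally I would restore phases: the conjugated group is generated by $\pm X_{i,1}X_{i,2}$ and $\pm Z_{i,1}Z_{i,2}$, and conjugating by $Z_{(i,1)}$ flips the sign of the $XX$ generator on pair $i$ and of no other listed generator, while conjugating by $X_{(i,1)}$ flips the sign of the $ZZ$ generator on pair $i$ and of no other, so a single Pauli $P_1$ on party~$1$ (absorbed into $U$) makes every generator have sign $+1$. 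The main thing to get right is the middle paragraph—verifying that the prescribed assignment really is a partial symplectic map, so that a realizing Clifford exists, and confirming the dimension bound $|\mathcal P|+|Q|\le\min(q_1,q_2)$ so that there are enough qubits on each side; the rest is bookkeeping.
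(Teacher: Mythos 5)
Your proof is correct and takes a genuinely different route from the paper's. The paper constructs the isomorphism $\phi:\calS_1\to\calS_2$ (your $f$) and then inducts on the order of $\calS$: whenever $\calS_1$ is nonabelian it extracts a single Bell pair from a noncommuting pair $P,P'$ and passes to the commutant; once $\calS_1$ is abelian, Lemma~\ref{lem:F1} is applied separately to each party to produce the Ising $ZZ$-couplings. You instead put the alternating form $\lambda_1|_{S_1}$ into its normal form (hyperbolic pairs plus radical) in one pass, transport it to $S_2$ via $f$, and realize the resulting assignment as Cliffords by full Witt extension. These are the same computation in different clothing --- the paper's induction is discovering your normal form one hyperbolic pair at a time --- but yours is non-inductive, at the price of invoking Witt's theorem for an arbitrary subspace rather than only the Lagrangian-extension special case underlying Lemma~\ref{lem:F1}. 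You also make explicit two points that the paper leaves implicit: the qubit-count bound $|\mathcal{P}|+|Q|\le\min(q_1,q_2)$ (your argument via isotropy of $R_1$ inside the orthogonal complement of the hyperbolic part is the right one) and the phase cleanup by a single Pauli on party~$1$. One cosmetic difference: your radical qubits carry $XX$ couplings where the paper writes $ZZ$; a Hadamard on each such qubit, absorbable into $U$ and $V$, converts one to the other, and the lemma as stated permits either.
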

If a pair~$i$ of qudits supports two generators, it is a Bell pair;
otherwise, the stabilizer group is an Ising coupling. 
\begin{proof}
We use induction in the order of~$\calS$.
Define groups~$\calS_1, \calS_2$ of Pauli operators in each party by
\begin{align} \begin{split}
	\calS_1 &= \{ P  \in \calP_1 \,|\, P \otimes Q \in \calS \text{ for some }Q \},
\\	\calS_2 &= \{ Q  \in \calP_2 \,|\, P \otimes Q \in \calS \text{ for some }P \},
\end{split} \end{align}
where $\calP_1,\calP_2$ are the groups of all Pauli operators on the respective party.
For $P \in \calS_1$, suppose there are two operators~$Q,Q'$ such that both $P \otimes Q, P \otimes Q'$ are in~$\calS$.
Then, their product $\one \otimes Q^{-1} Q'$ is also in $\calS$, supported on the second party,
violating assumption unless $Q = Q'$ because $\calS$ does not contain any nontrivial scalar,
i.e., if $\eta \one \in \calS$ for some~$\eta \in \CC$, then $\eta = 1$.
So, there is a unique~$Q$ on the second party corresponding to~$P$ such that $P \otimes Q \in \calS$.
This gives a group homomorphism
\begin{align}
\phi : \calS_1 \ni P \mapsto Q \in \calS_2 \,,
\end{align}
which is surjective by construction.
Note that $\calS_1$ and $\calS_2$ always include phase factors because $(-\one) \otimes (-\one) \in \calS$.
If $\phi(P) = \one$, then $P \otimes \one \in \calS$ and by assumption $P$ has to be the identity.
So, $\phi$ is injective and hence is a group isomorphism.
If $\calS_1$ is nonabelian, 
then a noncommuting pair $P,P' \in \calS_1$ is associated with a noncommuting pair $Q,Q' \in \calS_2$.
The pair $P \otimes Q, P' \otimes Q'$ defines a Bell pair.
Restricting to the commutant of~$P,P'$ within~$\calS_1$
(the set of all elements of~$\calS_1$ that commute with both~$P,P'$)
and that of~$Q,Q'$ within~$\calS_2$,
we reduce the order of the bipartite stabilizer group.
Hence, we may now assume that $\calS_1$ is abelian, and hence so is~$\calS_2$.
By~\ref{lem:F1} we find a Clifford $U \otimes V$ by which $\calS$ is mapped to
the group generated by~$\{ Z_{1,j} \otimes Z_{2,j}^\dag \,|\, j = 1,2,\ldots \}$,
the Ising couplings.
\end{proof}

Now, let us apply those two facts to a one-dimensional stabilizer group~$\calA$.
Recall the assumption that~$\calA$ is generated by operators acting on neighboring two sites.
Without loss of generality, 
we can assume that $\calA$ does not have any one-site operator;
if it did, we would find a one-site Clifford to map the one-site operator to a single-qudit operator~$Z$
and remove the qudit stabilized by this.
For any site~$i \in \ZZ$,
we denote by $\calA(i,i+1) \subseteq \calA$ the subgroup of all elements of~$\calA$ supported on the two sites.
Applying~\ref{lem:F2}, we drop any Bell pair qudits.
Then, from the proof for~\ref{lem:F2} above,
we know that $\calB_i' = \calA(i,i+1)_1$ (the ``left group'' of of the bond~$(i,i+1)$)
and $\calB_{i+1}'' = \calA(i,i+1)_2$ (the ``right group'' of the bond~$(i,i+1)$) 
are isomorphic and both abelian for all~$i$.
Furthermore, since $\calA$ is abelian, the left group $\calB_i'$ of the bond on the right of~$i$
commutes with the right group $\calB_i''$ of the bond on the left of~$i$.
For each site~$i$, let $\calB_i$ be the abelian group generated by~$\calB_i'$ and~$\calB_i''$.
This bigger group~$\calB_i$ is not necessarily a subgroup of~$\calA$.
Since $\calB_i$ is abelian, there is a Clifford on site~$i$ 
that brings~$\calB_i$ to a group consisting of tensor products of~$Z$ only.
Hence, our group~$\calA$ is generated  entirely by tensor products of single-qudit~$Z$ after some one-site Cliffords,
which we assume till the end of the proof of~\ref{thm:IsingIn1D}.

In the following we will conclude the proof using two complementary methods.
The first method is by analyzing the structure of logical operators, 
and the second is by analyzing the structure of stabilizer operators. 

\tocless\subsection{Looking at logical operators}

Since the stabilizer group~$\calA$ consists of tensor products of Pauli~$Z$ only,
the $X$-type logical operators determine everything.
We allow ``infinite tensor products'' of Pauli~$X$ as logical operators.%
\footnote{%
Strictly speaking, an infinite tensor product is not defined.
We are confident that an interested reader can fill this ``gap.''
}
The set~$\calX$ of all $X$-type logical operators for~$\calA$
can thus be identified with a set of all bitstrings that may be infinite, one bit ($\in \FF$) for each qudit,
such that the dot product of the bitstring with exponents of the stabilizer is zero.
This set~$\calX$ is an $\FF$-vector space.
For any set~$S \subseteq \ZZ$ of sites and for any bit string~$x$,
let $\Pi_S(x)$ be the substring of~$x$ on~$S$.
It is an $\FF$-linear projection.
We simply write $\Pi_{<k} = \Pi_{\{ j \in \ZZ | j < k\}}$,
$\Pi_{>k} = \Pi_{\{ j \in \ZZ | j > k\}}$,
and
$\Pi_k = \Pi_{\{ k\}}$.
The 2-site locality of the stabilizer group~$\calA$ implies the following.

\begin{lemma}\label{lem:F3}
If $x \in \calX$ is zero on a site~$k$, i.e., $\Pi_{\{k\}}(x) = 0$,
then the left substring~$\Pi_{<k}(x)$ and right substring~$\Pi_{>k}(x)$ are both members of~$\calX$.
\end{lemma}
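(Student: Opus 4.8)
The plan is to exploit the hypothesis that, after the reductions already carried out, $\calA$ is generated by tensor products of Pauli~$Z$ supported on at most two neighboring sites. Under the bit-string identification, $x \in \calX$ means precisely that $x$ commutes with every generator $g$ of $\calA$, i.e.\ that $\abs{\Supp(x)\cap\Supp(g)}$ is even, and $\calX$ is an $\FF_2$-subspace of the space of all (possibly infinite) bit strings.

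First I would note that $\Pi_k(x) = 0$ yields the decomposition $x = \Pi_{<k}(x) + \Pi_{>k}(x)$ inside this vector space, so it suffices to show $\Pi_{<k}(x) \in \calX$; then $\Pi_{>k}(x) = x + \Pi_{<k}(x)$ lies in $\calX$ for free, since $\calX$ is a subspace.

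To show $\Pi_{<k}(x) \in \calX$, I would verify commutation with each generator $g$ of $\calA$ by casework on its support, which is contained in some $\{i,i+1\}$. If $i+1 < k$, then $g$ is supported entirely to the left of $k$, where $\Pi_{<k}(x)$ agrees with $x$, so the overlap with $g$ is even by the hypothesis on $x$. If $i \geq k$, then $g$ is supported on sites $\geq k$, where $\Pi_{<k}(x)$ vanishes, so the overlap is $0$. The only remaining case is $i = k-1$, with $\Supp(g) = \{k-1,k\}$; since $x$ (hence also $\Pi_{<k}(x)$) vanishes on site $k$, the overlap of $g$ with $\Pi_{<k}(x)$ is computed on site $k-1$ alone, where $\Pi_{<k}(x) = x$, and it therefore equals the overlap of $g$ with $x$, which is even. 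This case is exactly where $2$-site locality does the work: no generator of $\calA$ touches both a site $<k$ and a site $>k$, so truncating $x$ at the site $k$ where it already vanishes cannot disturb any commutation relation. Consequently $\Pi_{<k}(x)$ commutes with every generator, hence with all of $\calA$, so $\Pi_{<k}(x) \in \calX$, and therefore $\Pi_{>k}(x) \in \calX$ as well.

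I do not expect a genuine obstacle here. The only points requiring a little care are the bookkeeping in the $i = k-1$ case and, at a cosmetic level, making sense of ``commutes with'' when $x$ is an infinite tensor product; but since every generator of $\calA$ is finitely supported, $\abs{\Supp(x)\cap\Supp(g)}$ is always finite and the argument goes through verbatim.
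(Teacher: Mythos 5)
Your proof is correct and is essentially the paper's argument, just spelled out. The paper's one-line observation — that each 2-site generator can only overlap one of the two substrings once $x$ vanishes at site $k$ — is exactly what your casework on $i+1<k$, $i\geq k$, and $i=k-1$ makes explicit.
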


\begin{proof}
Each 2-site local stabilizer generator intersect 
only one of the left and right substrings, not both.
So, if $x$ is logical, 
then every stabilizer generator must commute with both left and right substrings separately.
\end{proof}

Let us write $\calX(S)$ for any~$S \subseteq \ZZ$ to mean
the space of all bitstrings of~$\calX$ that is supported on~$S$.
(In earlier sections we used subscripts for this purpose.)
Equivalently, $\calX(S) = \calX \cap \ker \Pi_{\ZZ \setminus S}$.
Special cases deserve shorter notations:
\begin{align} \begin{aligned}
	\calX(\check k) &:= \calX( \ZZ \setminus \{ k \} ) ,
\\	\calX( < k ) &:= \calX( \{\ldots,k-3,k-2,k-1 \} ) ,
\\	\calX( > k ) &:= \calX( \{ k+1,k+2,k+3,\ldots \} ) ,
\\	\calX(\check k) &= \calX(<k) \oplus \calX(>k) &&\qquad \text{by~\ref{lem:F3}.}
\end{aligned} \end{align}

\begin{lemma}\label{lem:F4}
	If $L \subseteq \ZZ$ is a contiguous interval (possibly infinite),
	then the intersection $\calX(L) \cap \big( \sum_{k \in L} \calX(\check k) \big)$
	is generated by bitstrings, each of which is supported on a proper subset of~$L$.
	That is, $\calX(L) \cap \big( \sum_{k \in L} \calX(\check k) \big) =  \sum_{k \in L} \calX(L) \cap \calX(\check k)$.
\end{lemma}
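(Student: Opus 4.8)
The plan is to prove the two inclusions separately. The inclusion $\supseteq$ is immediate: each $\calX(L)\cap\calX(\hat k)$ with $k\in L$ is contained both in $\calX(L)$ and in $\sum_{k'\in L}\calX(\widehat{k'})$, hence so is their sum. For the reverse inclusion I would take $x\in\calX(L)$ written as a finite sum $x=\sum_{i=1}^m y_i$ with $y_i\in\calX(\widehat{k_i})$ and $k_1<k_2<\dots<k_m$ all lying in $L$ (the case $m=0$ being trivial), and reduce it to a sum of just \emph{two} logical bitstrings: one supported away from the leftmost site $k_1$, and one supported away from the rightmost site $k_m$.

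The key steps, in order, would be: (i) apply the splitting $\calX(\hat k)=\calX(<k)\oplus\calX(>k)$ of~\ref{lem:F3} to each term, writing $y_i=l_i+r_i$ with $l_i\in\calX(<k_i)$ and $r_i\in\calX(>k_i)$; (ii) regroup, setting $l:=\sum_i l_i$ and $r:=\sum_i r_i$, which are elements of $\calX$ (finite $\FF_2$-sums) satisfying $x=l+r$, $\Supp(l)\subseteq(-\infty,k_m)$, and $\Supp(r)\subseteq(k_1,\infty)$; (iii) show $l,r\in\calX(L)$ by a short support computation — $r$ lives strictly to the right of $k_1$, the smallest of the $k_i$, so it already avoids everything to the left of $L$, and where $L$ is bounded below the hypothesis $x\in\calX(L)$ forces $l=x+r$ to vanish there too, so $\Supp(l)$ is squeezed into the interval between $\min L$ and $k_m$, which sits inside $L$ precisely because $L$ is contiguous and contains both $\min L$ and $k_m$; the mirror argument handles $r$ and the upper end of $L$; (iv) observe that $l$ vanishes at $k_m$ and $r$ vanishes at $k_1$, so $l\in\calX(L)\cap\calX(\widehat{k_m})$ and $r\in\calX(L)\cap\calX(\widehat{k_1})$ with $k_1,k_m\in L$, giving $x=l+r\in\sum_{k\in L}\calX(L)\cap\calX(\hat k)$. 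Since $l$ and $r$ are then each supported on a proper subset of $L$, this simultaneously establishes the ``generated by bitstrings supported on proper subsets'' formulation.

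I do not anticipate a real obstacle. The only thing requiring attention is the minor case analysis according to whether $L$ is finite, semi-infinite, or all of $\ZZ$: when $L$ is unbounded on one side, the corresponding support containment in step (iii) is automatic and that half of the bookkeeping is simply dropped. The one genuinely load-bearing ingredient is~\ref{lem:F3} (equivalently, the $2$-site locality of the stabilizer generators): without it the ``left part'' $l_i$ and ``right part'' $r_i$ of a logical bitstring need not themselves be logical, and the regrouping in step (ii) would break down.
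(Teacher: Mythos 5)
Your proposal is correct and follows essentially the same route as the paper's proof: both split each term via Lemma~\ref{lem:F3} into a left part in $\calX(<k_i)$ and a right part in $\calX(>k_i)$, collapse the sum to exactly two pieces (one in $\calX(<k_m)$, one in $\calX(>k_1)$) using the monotonicity of these subspaces, and then pin both pieces inside $\calX(L)$ by a support argument using $x\in\calX(L)$ and the contiguity of $L$. The paper handles the three boundedness cases of $L$ ($\ZZ$, semi-infinite, finite) one at a time, whereas you fold them into a single computation with the comment that empty boundary constraints drop out; the content is the same.
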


\begin{proof}
Observe that $\calX(<j) \subseteq \calX(<k)$ whenever $j < k$,
and $\calX( > j ) \subseteq \calX(>k)$ whenever $j > k$.
By definition, any element~$z \in \sum_{k \in L} \calX(\check k)$ 
is a \emph{finite} sum of elements, each of which is an element of~$\calX(\check k')$ for some~$k' \in L$.
Given $z$, let $L'$ be the finite collection of all those~$k'$.
We have
\begin{align}
z \in \sum_{k' \in L'} \calX(\check k') = \sum_{k' \in L'} \calX(< k' ) + \calX( > k') = \calX(< L'_{\max} ) + \calX( > L'_{\min} )
\end{align}
where $L'_{\min}$ is the leftmost site of~$L'$ and $L'_{\max}$ is the rightmost.
Write $z = x + y$ where $x \in \calX( > L'_{\min} )$ and $y \in \calX(< L'_{\max})$.
If $L = \ZZ$, then we are done.
If $L$ is semi-infinite, extended to the left,
then $x$ is solely responsible for the bits of~$z$ on the right of~$L$,
implying that $x \in \calX(L)$ because $L$ is an interval.
Hence, $x$ is supported on a proper, actually finite, subset of~$L$.
Clearly, $y$ is supported on a proper subset of~$L$, and we are done.
If $L$ is semi-infinite, extended to the right,
a symmetric argument applies.
If $L$ is finite, then $x$ accounts for the bits of~$z$ on the right of~$L$,
and $y$ does for those on the left of~$L$.
This means that $x,y \in \calX(L)$ with at least one site missing from their supports, and we are done.
\end{proof}

Using~\ref{lem:F3} and~\ref{lem:F4}, we are going to show

\begin{lemma}\label{lem:extractIsing}
For any interval~$L \subseteq \ZZ$, if $\calX(M) = 0$ for all $M \subsetneq L$, 
then $\calX = \calX(L) \oplus \calY$ 
for some $\calY$ such that $\Pi_k(\calY) \cap \Pi_k(\calX(L)) = 0$ for all~$k \in L$.
Furthermore, $\calX(L)$ is precisely the $X$-logical space of
$n$ independent Ising chains where $n = \dim_{\FF} \calX(L) \le \min_{k \in L} q_k$
and $q_k$ is the number of qudits at site~$k$.
\end{lemma}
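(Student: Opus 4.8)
I would prove the two assertions in turn; both rest on Lemmas~\ref{lem:F3}--\ref{lem:F4} together with the minimality hypothesis on $L$. First I record the key consequence of that hypothesis: \emph{every nonzero element of $\calX(L)$ has full support $L$}. Indeed, if some $x\in\calX(L)$ vanished at a site $k\in L$, then by~\ref{lem:F3} the substrings $\Pi_{<k}(x)$ and $\Pi_{>k}(x)$ both lie in $\calX$, and they are supported on $L\cap(-\infty,k)$ and $L\cap(k,\infty)$, which are proper sub-intervals of $L$ (one is empty when $k$ is an endpoint); the hypothesis makes both vanish, forcing $x=0$. In particular $\calX(L)\cap\calX(\hat k)=\calX(L\setminus\{k\})=0$ for every $k\in L$.

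Now set $\calN:=\sum_{k\in L}\calX(\hat k)$. By~\ref{lem:F4}, $\calX(L)\cap\calN=\sum_{k\in L}\bigl(\calX(L)\cap\calX(\hat k)\bigr)=0$, so $\calX(L)+\calN$ is an internal direct sum inside $\calX$, and I extend it to a complement, writing $\calX=\calX(L)\oplus\calY$ with $\calN\subseteq\calY$. The projection condition then follows formally: if $v=\Pi_k(y)=\Pi_k(z)$ with $y\in\calY$ and $z\in\calX(L)$, then $y-z\in\ker(\Pi_k|_{\calX})=\calX(\hat k)\subseteq\calN\subseteq\calY$, so $z\in\calY\cap\calX(L)=0$ and $v=0$. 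This proves the first assertion.

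For the Ising part, note that $\ker(\Pi_k|_{\calX(L)})=\calX(L)\cap\calX(\hat k)=0$ for each $k\in L$, so $\Pi_k|_{\calX(L)}:\calX(L)\to\FF_2^{q_k}$ is injective; hence $\calX(L)$ is finite dimensional with $n:=\dim_{\FF_2}\calX(L)\le q_k$ for every $k\in L$, i.e.\ $n\le\min_{k\in L}q_k$. I then fix a basis $x^{(1)},\dots,x^{(n)}$ of $\calX(L)$; at each site $k\in L$ the vectors $\Pi_k(x^{(1)}),\dots,\Pi_k(x^{(n)})$ are linearly independent in $\FF_2^{q_k}$, so, after extending them to a basis of $\FF_2^{q_k}$, there is a one-site Clifford $C_k$ on the $q_k$ qubits of site $k$ whose conjugation acts on $X$-type Pauli strings by the $GL(q_k,\FF_2)$-matrix sending $\Pi_k(x^{(i)})$ to the $i$-th standard basis vector. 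Such a $C_k$ is a product of CNOTs, hence maps tensor products of $Z$ to tensor products of $Z$ (only changing the $Z$-basis at site $k$), and the induced map on $X$-strings is a site-preserving linear bijection, so applying the depth-$1$ circuit $\prod_{k\in L}C_k$ keeps $\calA$ in normalized form and preserves the structure of the first part while turning each $x^{(i)}$ into the all-$X$ string on the track $\{(k,i):k\in L\}$. Thus $\calX(L)=\mathrm{span}\{x^{(i)}\}$ is precisely the $X$-logical space of the $n$ disjoint Ising chains on those tracks, as claimed.

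The part I expect to be the real obstacle is the splitting: the tempting but \emph{false} (e.g.\ for semi-infinite $L$) move would be to prove that $\calN$ itself is a complement of $\calX(L)$. The point is to establish only $\calN\cap\calX(L)=0$ --- which is exactly where Lemma~\ref{lem:F4} and minimality of $L$ combine --- and then to extract the projection condition from the weaker containment $\calX(\hat k)\subseteq\calY$. A secondary care point is to check that the splitting of~\ref{lem:F3} at a site $k$ always yields \emph{proper} sub-intervals of $L$ to which the hypothesis applies, including when $k$ is an endpoint of $L$ and when $L$ is infinite.
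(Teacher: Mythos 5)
Your proof is correct and follows essentially the same route as the paper's: you use Lemma~\ref{lem:F3} together with minimality of $L$ to show every nonzero element of $\calX(L)$ has full support (hence $\Pi_k|_{\calX(L)}$ is injective), invoke Lemma~\ref{lem:F4} to get $\calX(L)\cap\sum_{k\in L}\calX(\hat k)=0$, extend to a complement $\calY$ containing that sum, and then derive the projection condition from $\calX(\hat k)\subseteq\calY$ exactly as the paper does. The only difference is that you make the one-site CNOT basis change explicit, which the paper leaves as a brief remark.
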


After a basis change, two subspaces of~$\FF^{q_k}$ 
(the vector space of all $q_k$-component column vectors)
with zero intersection
can be put to be supported on disjoint sets of components.
Hence, the lemma means that 
\emph{on any interval~$L$ of sites that supports a nonzero element~$x \in \calX$,
if no nonzero element of~$\calX$ is supported on any proper subset of~$L$,
then there exists a basis for~$\calX$ that includes~$x$ 
such that, after some one-site controlled-$X$,
the bitstring $x$ becomes a bitstring on~$L$ that has exactly one nonzero bit at each site of~$L$,
and any other basis element does not overlap with~$x$.}

\begin{proof}
The assumption is that there is no ``shorter'' logical on~$L$.
By~\ref{lem:F4}, this implies that $\calX(L) \cap \sum_{k \in L} \calX(\check k) = 0$.
Therefore, there exists an $\FF$-subspace~$\calY \subset \calX$ such that 
\begin{align}
\calX = \calX(L) \oplus \calY, \qquad \text{and} \qquad \sum_{k \in L} \calX(\check k) \subseteq \calY.
\end{align}
(To find such~$\calY$, 
we can extend a basis of~$\calX(L) \oplus \sum_{k \in L} \calX(\check k)$ to that of~$\calX$.)
Let $k \in L$ be an arbitrary site.
$\Pi_k(x)$ has to be nonzero for all nonzero~$x \in \calX(L)$;
otherwise, \ref{lem:F3} would give us elements of~$\calX(L)$ supported on a proper subset of~$L$,
and those are always zero by assumption.
This means that the projection $\Pi_k : \calX(L) \to \FF^{q_k}$ 
where $q_k$ is the number of qudits at site~$k$,
is injective, so $\Pi_k(\calX(L)) \subseteq \FF^{q_k}$ is isomorphic to~$\calX(L)$
of dimension~$n = \dim_{\FF} \calX(L) \le q_k$.
In particular,
there is a basis change under which 
for every qudit in the support of~$\Pi_k(\calX(L))$
there is a unique basis element of~$\calX(L)$ 
that has the nonzero bit~$1$ on that qudit.
Since~$k$ is arbitrary, 
we see that $\calX(L)$ is precisely the $X$-logical space of independent Ising chains.

It remains to separate the support of~$\calX(L)$ and that of~$\calY$.
Let $g \in \Pi_k(\calY) \cap \Pi_k(\calX(L))$ 
with $g = \Pi_k(y)$ for $y \in \calY$ 
and $g = \Pi_k(x)$ for $x \in \calX(L)$,
so that $\Pi_k(x-y) = 0$.
Then, $x-y \in \calX(\check k) \subseteq \calY$ by construction of~$\calY$,
implying that $x \in \calY$.
But, again by the construction of~$\calY$, we have $x \in \calY \cap \calX(L) = 0$.
Therefore, $\Pi_k(\calY) \cap \Pi_k(\calX(L)) = 0$.
\end{proof}

We conclude the proof of~\ref{thm:IsingIn1D} by~\ref{lem:extractIsing}
as we can always extract a nonzero number of Ising chains at a time,
starting with the shortest ones.
If a reader would like set-theoretic perfection,
then one can consider a collection~$\{ \calX' \}$ 
of all direct summands~$\calX'$ of~$\calX$
such that $\calX = \calX' \oplus \calY'$ for some~$\calY' \subseteq \calX$
where the two summands have disjoint supports after some one-site basis change
and $\calX'$ is the $X$-logical space of some family of independent Ising chains.
This collection is partially ordered by inclusion,
and it is not hard to check that 
any totally ordered chain has an upper bound, the union,
which also belongs to the collection.
Then, Zorn's lemma says that there is a maximal member,
and by~\ref{lem:extractIsing} this maximal member 
cannot leave anything behind from~$\calX$.

\tocless\subsection{Looking at stabilizers}

\newcommand{\cev}[1]{\reflectbox{\ensuremath{\vec{\reflectbox{\ensuremath{#1}}}}}}

It is convenient to regard~$\calB_i$ as an $\FF$-vector space $X_i\cong\FF^{q_i}$ 
(not to be confused with a Pauli~$X$ operator --- we do not need it here),
by forgetting phase factors.
There is no symplectic form here; all relevant groups are abelian.
The subgroups $\calB_i' = \calA(i,i+1)_1 \subseteq \calB_i$ and $\calB_i'' = \calA(i-1,i)_2 \subseteq \calB_i$
give subspaces~$\vec X_i$ and~$\cev X_i$ of~$X_i$, respectively.
The group isomorphism~$\phi : \calS_1 \to \calS_2$ in the proof of~\ref{lem:F2} above,
becomes a $\FF$-linear isomorphism~$\phi_{i \to i+1} : \vec X_i \to \cev X_{i+1}$ for each~$i \in \ZZ$.
This linear isomorphism~$\phi_{i \to i+1}$ is \emph{not} necessarily defined on the whole~$X_i$.
Nor is~$(\phi_{i-1 \to i})^{-1}$ on~$X_i$.

Let us say that a collection of $\FF$-linear spaces $\{ X_i | i \in \ZZ \}$ 
is a {\bf linear space chain} or a chain for short 
if there are subsapces $\cev X_i, \vec X_i \subseteq X_i$ for each~$i$
with linear isomorphisms~$\phi_{i \to i+1} : \vec X_i\to \cev X_{i+1}$.
These subspaces may be zero.
We can take the direct sum of two chains to make another chain:
$\{X_i \} \oplus \{ X'_i\} = \{ X_i \oplus X'_i \}$
with $\phi_{i \to i+1} \oplus \phi'_{ \to i+1} : \vec X_i \oplus \vec X'_i \to \cev X_{i+1} \oplus \cev X'_{i+1}$.
A linear space chain~$\{X_i\}$ is zero if $X_i = 0$ for all~$i \in \ZZ$.

In this proof, we will use intervals notations such as $[a,b)$ 
to denote a set of consecutive sites on the lattice~$\ZZ$.
When we wish to specify boundaries of an interval,
we will write that, for example, 
there is an interval $I = (a-1,b+1)$ with $a \in [-\infty,-1]$ and $b \in [3,\infty]$.
This means that $I$ is unbounded to the left if $a = -\infty$ or
$I$ contains~$a$ as the least element otherwise.
Similarly, $I$ is unbounded to the right if $b = \infty$
or $I$ contains~$b$ as the greatest element otherwise.
Here, by convention, $-\infty - 1 = -\infty$ and $\infty + 1 = \infty$.

An Ising chain on an interval~$I=(a-1,b+1)$ where $a,b \in [-\infty, \infty]$ 
gives a linear space chain: $X_i \cong \FF$ for all~$i \in (a-1,b+1)$ and $X_i = 0$ otherwise; 
$\cev X_i = X_i$ for $i > a$ but $\cev X_a=0$ for finite~$a$;
$\vec X_i = X_i$ for $i < b$ but $\vec X_b=0$ for finite~$b$.
Alternatively, an Ising chain is determined by
a sequence of vectors~$x_i \in X_i$
such that $x_{i+1} = \phi_{i \to i+1}(x_i)$ is nonzero for all~$i \in (a-1,b)$
but $x_i = 0$ for $i \notin (a-1,b+1)$.
If $\sup_i \dim X_i = 1$ for a chain~$\{X_i\}$, 
then the chain corresponds to an Ising chain or a collection of Ising chains supported on disjoint intervals.
We will simply call a sequence of vectors~$\{x_i \in X_i\,|\, i \in I\}$ on an interval~$I = (a-1,b+1)$
an {\bf Ising sequence} if $X_i \ni x_i \mapsto \phi_{i \to i+1}(x_i) = x_{i+1} \in X_{i+1}$ for all $i \in (a-1,b)$,
\emph{regardless} of whether it forms a direct summand of~$\{X_i\}$.
For example, if $\vec X_{-1} = \braket{\binom{0}{1}} = \cev X_0$
and $\vec X_0 = \braket{\binom{1}{0}} = \cev X_1$,
and all other subspaces~$\cev X_i,\vec X_i$ are zero,
then a sequence $\{ \binom{1}{1}  \in X_0 \}$ is a one-element Ising sequence,
but does not form a direct summand;
it cannot even be extended to form a direct summand.

We first describe conditions under which an Ising sequence defines a direct summand.
Denote by $\hat X_i$ the intersection 
\begin{align}
\hat X_i = \cev X_i \cap \vec X_i.
\end{align}

\begin{lemma}\label{lem:DirectSummandCondition}
An Ising sequence~$\{ x_i \}$ on~$(a-1,b+1)$ 
where $a,b \in [-\infty,\infty]$
is a direct summand of a chain~$\{ X_i \}$
if and only if there are subspaces 
$\cev Y_i \subseteq \cev X_i$ and $\vec Y_i \subseteq \vec X_i$ such that
\begin{align}
\begin{cases}
	\braket{x_i} \oplus \cev Y_i = \cev X_i &\text{for } i\in(a,b+1), \\
	\braket{x_i} \oplus \vec Y_i = \vec X_i &\text{for } i\in(a-1,b), \\
	\cev Y_i \cap \hat X_i = \vec Y_i \cap \hat X_i &\text{for } i\in(a,b), \\
	\phi_{i \to i+1}: \vec Y_i \cong \cev Y_{i+1} &\text{for } i \in (a-1, b),\\
	\hat X_{a} \subseteq \vec Y_{a} &\text{if } a > -\infty, \\
	\hat X_b \subseteq \cev Y_b &\text{if } b < \infty.
\end{cases}\label{eq:sixlines}
\end{align}
\end{lemma}

\begin{proof}
The ``only if'' direction is routine to check;
if $\{X_i\} = \braket{x_i} \oplus \{ Y_i \}$,
then $\cev Y_i$ and $\vec Y_i$ satisfy all those properties.

To show the ``if'' direction, 
we have to construct a direct complement of~$\{\braket{x_i}\}$.
We assume that $X_i = \cev X_i + \vec X_i$ for all~$i \in \ZZ$;
otherwise any direct complement of~$\cev X_i + \vec X_i$ 
within $X_i$ is a direct summand automatically.
Define $\{Z_i = \cev Z_i + \vec Z_i \}$ by
\begin{align} \begin{aligned}
	(\cev Z_i, \vec Z_i) &= (\cev Y_i, \vec Y_i) &&&&&&\text{ for } i \in (a,b),
\\	(\cev Z_a, \vec Z_a) &= (\cev X_a, \vec Y_a) &&&&&&\text{ if } a > -\infty,
\\	(\cev Z_b, \vec Z_b) &= (\cev Y_b, \vec X_b) &&&&&&\text{ if } b < \infty,
\\	(\cev Z_i, \vec Z_i) &= (\cev X_i, \vec X_i) &&&&&&\text{ for } i \notin (a-1,b+1).
\end{aligned} \end{align}

First, we show that $X_i = \braket{x_i} \oplus Z_i$ for all~$i \in \ZZ$.
If $i \in \ZZ \setminus (a-1,b+1)$, then $\braket{x_i} = 0$ and $X_i = Z_i$, 
so $X_i = \braket{x_i} \oplus Z_i$.
If $a \in \ZZ$, then $\braket{x_a} + Z_a = \braket{x_a} + \cev X_a + \vec Y_a $,
which equals $\cev X_a + \vec X_a = X_a$ by the second line of~\eqref{eq:sixlines}.
If $x_a \in Z_a = \cev X_a + \vec Y_a$, then $x_a = \cev x_a + \vec y_a$
for some $\cev x_a \in \cev X_a$ and $\vec y_a \in \vec Y_a$.
It follows that $x_a - \vec y_a = \cev x_a \in \vec X_a \cap \cev X_a = \hat X_a \subseteq \vec Y_a$ by the fifth line of~\eqref{eq:sixlines}.
Then, $0 \neq x_a \in \vec Y_a$, but this is contradictory to the second line of~\eqref{eq:sixlines}.
Hence, $x_a \notin Z_a$ and $\braket{x_a} \cap Z_a = 0$,
so $\braket{x_a} \oplus Z_a = X_a$ if $a > -\infty$.
A symmetric argument shows that $\braket{x_b} \oplus Z_b = X_b$ if $b < \infty$,
using the first and sixth lines of~\eqref{eq:sixlines}.
If $i \in (a,b)$,
then $\braket{x_i} + Z_i = \braket{x_i} + \cev Y_i + \vec Y_i = \braket{x_i} + \cev Y_i + \braket{x_i}+\vec Y_i = \cev X_i + \vec X_i = X_i$
by the first and second lines of~\eqref{eq:sixlines},
so $\braket{x_i} + Z_i = X_i$.
Suppose $x_i = \cev y_i + \vec y_i \in Z_i$ for some $\cev y_i \in \cev Y_i$ and $\vec y_i \in \vec Y_i$.
Since $x_i \in \cev X_i$, 
we have $x_i - \cev y_i = \vec y_i \in \cev X_i \cap \vec Y_i \subseteq \hat X \cap \vec Y_i = \hat X_i \cap \cev Y_i$
where the last equality is by the third line of~\eqref{eq:sixlines},
so $\vec y_i \in \cev Y_i$.
But, then, $0 \neq x_i = \cev y_i + \vec y_i \in \cev Y_i \cap \braket x = 0$.
Hence, $\braket{x_i} \cap Z_i = \braket{x_i} \cap (\cev Y_i + \vec Y_i) = 0$ and $\braket{x_i} \oplus Z_i = X_i$.

Second, we show that the isomorphisms of the parent chain~$\{X_i\}$ restrict to isomorphisms along~$\{Z_i\}$.
The fourth line of~\eqref{eq:sixlines} takes care of the ``interior'' of the interval:
for $i \in (a-1,b)$ 
we have $\phi_{i \to i+1}|_{\vec Z_i} : \vec Z_i = \vec Y_i \xrightarrow{\quad\cong\quad} \cev Y_{i+1} = \cev Z_{i+1}$.
The ``exterior'' of the interval remains intact:
for $i \in \ZZ \setminus (a-1,b)$
we have $\phi_{i \to i+1} |_{\vec Z_i} : \vec Z_i = \vec X_i \xrightarrow{\quad\cong\quad} \cev X_{i+1} = \cev Z_{i+1}$.
\end{proof}

Note that for any linear spaces $E,F,G$ with $G \subseteq F$,
it is easy to show that 
\begin{align}
(G+E)\cap F = G + (E \cap F). \label{eq:D-E-F}
\end{align}

\begin{lemma}\label{lem:InfiniteFullIsingSequences}
For any linear space chain~$\{X_i = \cev X_i + \vec X_i \}$ with $\cev X_0 \neq 0$,
any infinite Ising sequence~$\{ x_i \in X_i \,|\, i \in (a-1,b+1) \}$ forms a direct summand
where $x_a \in \vec X_a \setminus \hat X_a$ if $a > -\infty$
or $x_b \in \cev X_a \setminus \hat X_b$ if $b < \infty$.
\end{lemma}

\begin{proof}
For all $i \in \ZZ$, we take any subspaces $\cev W_i, \vec W_i$ 
such that $\cev X_i = \hat X_i \oplus \cev W_i$,
and $\vec X_i = \hat X_i \oplus \vec W_i$.

($a > -\infty, b = \infty$)
We have $x_a \in \vec X_a \setminus \hat X_a$.
Take any subspace $\vec V_a$ such that $\vec W_a = \braket{x_a} \oplus \vec V_a$,
and define $\vec Y_a = \hat X_a \oplus \vec V_a$.
For $k > a$ we inductively define
\begin{align} \begin{aligned} \label{eq:YtoRight}
	\cev Y_{k} &= \phi_{k-1 \to k}(\vec Y_{k-1}) ,
\\	\hat W_{k} &= \cev Y_{k} \cap \vec X_{k} \,,
\\	\vec Y_{k} &= \hat W_{k} \oplus \vec W_{k} \,.
\end{aligned} \end{align}
We have to check the first three lines of~\eqref{eq:sixlines};
all the rest is either vacuous or obvious by construction.
The third line is immediate: $\vec Y_k \cap \hat X_k = \hat W_k \cap \hat X_k = \cev Y_k \cap \hat X_k$.
The first two lines are straightforward by induction in~$k$. 
The base case is $\braket{x_a} \oplus \vec Y_a = \vec X_a$,
which is clear by construction.
Then for $k > a$ we have $\braket{x_k} \oplus \cev Y_k = \phi_{k-1 \to k}( \braket{x_{k-1}} \oplus \vec Y_{k-1} ) = \cev X_k$.
In turn, using~\eqref{eq:D-E-F} we have
$\braket{x_k} \oplus \vec Y_k 
= \braket{x_k} \oplus (\cev Y_k \cap \vec X_k) \oplus \vec W_k 
= ((\braket{x_k} \oplus \cev Y_k) \cap \vec X_k) \oplus \vec W_k 
= (\cev X_k \cap \vec X_k) \oplus \vec W_k
= \vec X_k$.
This completes the induction step,
proving the lemma if $a > - \infty$.

($a = -\infty, b < \infty$)
A left-right symmetric argument proves the lemma $b < \infty$.
Use the inverses $\phi_{i \ot i+1}$ of $\phi_{i \to i+1}$.

($a = -\infty, b = \infty$)
In this case, $x_i \in \hat X_i$ for all~$i \in \ZZ$.
Take any subspace $\hat W_0$ such that $\hat X_0 = \braket{x_0} \oplus \hat W_0$.
Define $\cev Y_0 = \cev W_0 \oplus \hat W_0$ and $\vec Y_0 = \vec W_0 \oplus \hat W_0$.
Then, $\braket{x_0} \oplus \cev Y_0 = \cev X_0$,  $\braket{x_0} \oplus \vec Y_0 = \vec X_0$,
and $\cev Y_0 \cap \hat X_0 = \hat W_0 = \vec Y_0 \cap \hat X_0$.
We proceed to the right as in the $a>-\infty$ case
and to the left as in the $b< \infty$ case.
\end{proof}

We are going to find a finite Ising sequence that forms a direct summand,
and the argument will use induction in the length of the sequence.
To that end, the following separation of~\eqref{eq:sixlines} will be useful,
which will make conditions on desired subspaces~$\cev Y_i,\vec Y_i$
look similar to those on a one-site shorter interval.
For any~$i \in \ZZ$, we denote by $\phi_{i \ot i+1}$ the inverse of $\phi_{i \to i+1}$.

\begin{lemma}\label{lem:TwoGroupsOfConditions}
Let $\{x_i \,|\, i \in [a,b] \}$ be an Ising sequence a finite interval $[a,b]$.
Define
\begin{align}
\hat{\hat X}_{b-1} &= \hat X_{b-1} \cap \phi_{b-1 \ot b}(\hat X_{b}), &
\hat{\hat X}_{a+1} &= \hat X_{a+1} \cap \phi_{a \to a+1}(\hat X_{a}).
\label{eq:hathat}
\end{align}
Then, for subspaces~$\vec Y_i, \cev Y_i$, \eqref{eq:DecomposedAtRight} $\iff$ \eqref{eq:sixlines} $\iff$ \eqref{eq:DecomposedAtLeft}:
\begin{align}
(i)\begin{cases}
		\braket{x_i}\oplus \cev Y_i=\cev X_i &\text{for } i \in [a+1, b-1],\\
		\braket{x_i}\oplus \vec Y_i=\vec X_i &\text{for } i \in [a, b-2],\\
		\cev Y_i \cap \hat X_i = \vec Y_i\cap \hat X_i &\text{for }i \in [a+1,b-2],\\
		\phi_{i \to i+1}: \vec Y_i \cong \cev Y_{i+1} &\text{for } i\in [a,b-2],\\
		\hat X_{a}\subseteq \vec Y_{a},\\
		\hat{\hat X}_{b-1}\subseteq \cev Y_{b-1},
\end{cases}
~ 	
(ii)\begin{cases}
	\braket{x_b} \oplus \cev Y_b = \cev X_b,\\
	\braket{x_{b-1}} \oplus \vec Y_{b-1} = \vec X_{b-1},\\
	\cev Y_{b-1}\cap \hat X_{b-1}=\vec Y_{b-1}\cap \hat X_{b-1},\\
	\phi_{b-1 \to b}: \vec Y_{b-1} \cong \cev Y_b,\\
	\exists T \subseteq \vec Y_{b-1}~:~ \phi_{b-1 \ot b}(\hat X_b)=\hat{\hat X}_{b-1}\oplus T.
\end{cases}
	\label{eq:DecomposedAtRight}\\
(i)\begin{cases}
		\braket{x_i}\oplus \cev Y_i=\cev X_i &\text{for } i \in [a+2, b],\\
		\braket{x_i}\oplus \vec Y_i=\vec X_i &\text{for } i \in [a+1, b-1],\\
		\cev Y_i \cap \hat X_i = \vec Y_i\cap \hat X_i &\text{for }i \in [a+2,b-1],\\
		\phi_{i \to i+1}: \vec Y_i \cong \cev Y_{i+1} &\text{for } i\in [a+1,b-1],\\
		\hat X_{b}\subseteq \cev Y_{b},\\
		\hat{\hat X}_{a+1}\subseteq \vec Y_{a+1},
\end{cases}
~ 	
(ii)\begin{cases}
	\braket{x_{a+1}} \oplus \cev Y_{a+1} = \cev X_{a+1},\\
	\braket{x_{a}} \oplus \vec Y_{a} = \vec X_{a},\\
	\cev Y_{a+1}\cap \hat X_{a+1} = \vec Y_{a+1}\cap \hat X_{a+1},\\
	\phi_{a \to a+1}: \vec Y_{a} \cong \cev Y_{a+1},\\
	\exists S \subseteq \cev Y_{a+1}~:~ \phi_{a \to a+1}(\hat X_a)=\hat{\hat X}_{a+1}\oplus S.
\end{cases}
	\label{eq:DecomposedAtLeft}
\end{align}
\end{lemma}

Observe the similarity of the group of conditions to those of~\eqref{eq:sixlines}.
The boundary site $b$ or $a$ is replaced by~$b-1$ or~$a+1$.
The boundary condition $\hat X_b \subseteq \cev Y_b$ of~\eqref{eq:sixlines}
is replaced by a similar-looking condition~$\hat{\hat X}_{b-1} \subseteq \cev Y_{b-1}$ in~\eqref{eq:DecomposedAtRight}.
The boundary condition $\hat X_a \subseteq \vec Y_a$ of~\eqref{eq:sixlines}
is replaced by a similar-looking condition~$\hat{\hat X}_{a+1} \subseteq \vec Y_{a+1}$ in~\eqref{eq:DecomposedAtLeft}.

\begin{proof}
We only prove that \eqref{eq:sixlines} $\iff$ \eqref{eq:DecomposedAtRight};
a symmetric argument will prove that \eqref{eq:sixlines} $\iff$ \eqref{eq:DecomposedAtLeft}.

Suppose that there are subspaces~$\vec Y_{b-1} \subseteq \vec X_{b-1}$ 
and~$\cev Y_{b} \subseteq \cev X_b$ such that
$\phi_{b-1 \to b}: \vec Y_{b-1} \cong \cev Y_b$.
This is the fourth line of~\eqref{eq:sixlines} applied to the rightmost site.
Then,
the boundary condition~$\hat X_b \subseteq \cev Y_b$ in the sixth line of~\eqref{eq:sixlines}
is equivalent to $\phi_{b-1 \ot b}(\hat X_b) \subseteq \vec Y_{b-1}$.
The latter condition implies that there exists $T \subseteq \vec Y_{b-1}$ 
such that $\phi_{b-1 \ot b}(\hat X_b) = \hat{\hat X}_{b-1} \oplus T$.
Clearly, $\hat{\hat X}_{b-1} \subseteq \vec Y_{b-1}$.
If we further assume that $\cev Y_{b-1} \cap \hat X_{b-1} = \vec Y_{b-1} \cap \hat X_{b-1}$ 
(the third line of~\eqref{eq:sixlines}),
then $\hat{\hat X}_{b-1} \subseteq \cev Y_{b-1}$.
Conversely, if we have $\phi_{b-1 \ot b}(\hat X_b) = \hat{\hat X}_{b-1} \oplus T$ for some $T \subseteq \vec Y_{b-1}$
and $\hat{\hat X}_{b-1} \subseteq \cev Y_{b-1}$,
then $\hat{\hat X}_{b-1} \subseteq \vec Y_{b-1}$
so $\phi_{b-1 \ot b}(\hat X_b) \subseteq \vec Y_{b-1}$.
Summarizing, if we assume that $\phi_{b-1 \to b}: \vec Y_{b-1} \cong \cev Y_b$
and $\cev Y_{b-1} \cap \hat X_{b-1} = \vec Y_{b-1} \cap \hat X_{b-1}$,
then
\begin{equation}
	\hat X_b \subseteq \cev Y_b
	\iff 
	\phi_{b-1 \ot b}(\hat X_b) \subseteq \vec Y_{b-1}
	\iff 
	\begin{cases}
		\hat{\hat X}_{b-1}\subseteq \cev Y_{b-1},\\
		\exists T \subseteq \vec Y_{b-1} ~:~ \phi_{b-1 \ot b}(\hat X_b)=\hat{\hat X}_{b-1} \oplus T .
	\end{cases} \label{eq:bdryCon}
\end{equation}
Now, the first four conditions of~\eqref{eq:DecomposedAtRight}(i)
are obviously those of~\eqref{eq:sixlines} except for the right end,
and the first four of~\eqref{eq:DecomposedAtRight}(ii) cover the right end.
The left boundary condition~$\hat X_a \subseteq \vec Y_a$ of~\eqref{eq:sixlines} 
appears in~\eqref{eq:DecomposedAtRight}(i).
The combination of the sixth line of~\eqref{eq:DecomposedAtRight}(i)
and the fifth line of~\eqref{eq:DecomposedAtRight}(ii)
is equivalent to the right boundary condition~$\hat X_b \subseteq \cev Y_b$ by~\eqref{eq:bdryCon}.
Therefore, \eqref{eq:sixlines} $\iff$ \eqref{eq:DecomposedAtRight}.
\end{proof}

To enable the induction below, 
we define, for any chain~$\{X_i\}$, two new chains~$\{C_i = \cev C_i + \vec C_i\}$ and~$\{D_i = \cev D_i + \vec D_i\}$ 
by the following table.
The two special subspaces $\hat{\hat X}_a$ and $\hat{\hat X}_b$ are defined in~\eqref{eq:hathat}.
\begin{align}
\begin{array}{c|c|c|c|c|c|c|c}
\cev C_i 	& 0				& 0									& \hat{\hat X}_{a+1}	& \multicolumn{4}{c}{\cev X_i} \\
\vec C_i 	& 0 				& \phi_{a \ot a+1}(\hat{\hat X}_{a+1})	& \vec X_{a+1} 		& \multicolumn{4}{c}{\vec X_i}\\
\hline
i 			& (-\infty,a-1]	&  a									& a+1 				& [a+2,b-2] 	&  b-1				& b 									& [b+1,\infty)\\
\hline
\cev D_i 	& \multicolumn{4}{c|}{\cev X_i} 		& \cev X_{b-1}		& \phi_{b-1 \to b}(\hat{\hat X}_{b-1}) &  0\\
\vec D_i 	& \multicolumn{4}{c|}{\vec X_i} 		& \hat{\hat X}_{b-1} & 0									& 0
\end{array}
\label{eq:CiDi}
\end{align}
The isomorphisms~$\phi^C,\phi^D$ of the new chains 
are inherited from the original chain~$\{X_i\}$ in an obvious manner.

\begin{lemma}\label{lem:induction}
Let $\{x_i \neq 0 \,|\, i \in [a,b] \}$ be a finite Ising sequence of~$\{X_i\}$.
If $\{ x_i \,|\, i \in [a,b-1]\}$ forms a direct summand of~$\{D_i\}$, 
then $\{x_i \,|\, i \in [a,b] \}$ forms a direct summand of~$\{X_i\}$.
By symmetry,
if $\{ x_i \,|\, i \in [a+1,b]\}$ forms a direct summand of~$\{C_i\}$, 
then $\{x_i \,|\, i \in [a,b] \}$ forms a direct summand of~$\{X_i\}$.
\end{lemma}

\begin{proof}
We only prove the first claim; the second follows by the left-right symmetry.

The assumption
gives subspaces~$\vec Y_i$ of~$\vec D_i = \vec X_i$ for $i \in [a,b-2]$ 
and~$\cev Y_i$ of~$\cev D_i = \cev X_i$ for $i \in [a+1,b-1]$,
satisfying~\eqref{eq:DecomposedAtRight}(i).
We have to find~$\vec Y_{b-1} = \phi_{b-1 \ot b}(\cev Y_b)$ that satisfy~\eqref{eq:DecomposedAtRight}(ii).
To this end, we take a subspace~$T$ such that
\begin{align}
	\phi_{b-1 \ot b}(\hat X_b) = \hat{\hat X}_{b-1} \oplus T
\end{align}
from the definition of~$\hat{\hat X}_{b-1}$ in~\eqref{eq:hathat}.
Since $\hat{\hat X}_{b-1}\subseteq \cev Y_{b-1}$ implies, obviously,
$\hat{\hat X}_{b-1}=(\cev Y_{b-1}\cap \hat X_{b-1}) \cap \phi_{b-1 \ot b}(\hat X_b)$,
we see that $(\cev Y_{b-1}\cap \hat X_{b-1}) \cap T = 0$,
so 
\begin{align}
	(\cev Y_{b-1}\cap \hat X_{b-1}) \oplus T \subseteq \vec X_{b-1}.
\end{align}
It follows that
\begin{align}
\big( (\cev Y_{b-1} \cap \hat X_{b-1}) \oplus T  \big) \cap \hat X_{b-1} 
&\subseteq
\big( (\cev Y_{b-1} \cap \hat X_{b-1}) + \phi_{b-1 \ot b}(\hat X_b)  \big) \cap \hat X_{b-1} \label{eq:YXTX}\\
&= (\cev Y_{b-1}\cap \hat X_{b-1}) + \hat{\hat X}_{b-1} & \text{due to }\eqref{eq:D-E-F}\nonumber\\
&= \cev Y_{b-1}\cap \hat X_{b-1} & (\hat{\hat X}_{b-1} \subseteq \hat X_{b-1}) .\nonumber
\end{align}
If $x_{b-1} \in (\cev Y_{b-1} \cap \hat X_{b-1}) \oplus T$,
then, since $x_{b-1} \in \hat X_{b-1}$ as well,
we must have $x_{b-1} \in \cev Y_{b-1} \cap \hat X_{b-1}$,
but the first line of~\eqref{eq:DecomposedAtRight}(i) 
says $\cev X_{b-1} = \braket{x_{b-1}} \oplus \cev Y_{b-1}$,
implying $x_{b-1} \notin \cev Y_{b-1}$, a contradiction.
Therefore,
\begin{align}
	\braket{x_{b-1}} \oplus  (\cev Y_{b-1} \cap \hat X_{b-1}) \oplus T  \subseteq \vec X_{b-1}.
\end{align}
Hence, there exists a subspace $V$ such that
\begin{align}
 \braket{x_{b-1}} \oplus \underbrace{ (\cev Y_{b-1} \cap \hat X_{b-1}) \oplus T  \oplus V}_{\vec Y_{b-1}} = \vec X_{b-1},
\end{align}
where we have defined~$\vec Y_{b-1}$.
Of~\eqref{eq:DecomposedAtRight}(ii) the first, second, fourth, and fifth lines are now obvious.
To show the third line of~\eqref{eq:DecomposedAtRight}(ii),
we claim that $V \cap \hat X_{b-1} = 0$.
Suppose $v \in V \cap \hat X_{b-1}$. 
From $\cev X_{b-1} = \braket{x_{b-1}} \oplus \cev Y_{b-1}$
we have $v = f x_{b-1} + \cev y_{b-1}$ for some~$f \in \FF$ and~$\cev y_{b-1} \in \cev Y_{b-1}$.
Then, $\cev y_{b-1} = v - f x_{b-1} \in \hat X_{b-1}$,
so $\cev y_{b-1} \in \cev Y_{b-1} \cap \hat X_{b-1}$.
But $0 = f x_{b-1} + \cev y_{b-1} - v \in \braket{x_{b-1}} \oplus (\cev Y_{b-1} \cap \hat X_{b-1}) \oplus V$
must be a unique expression, implying that $v = 0$.
Therefore,
\begin{align}
\vec Y_{b-1} \cap \hat X_{b-1} 
&= \big( (\cev Y_{b-1} \cap \hat X_{b-1}) \oplus T  \oplus V \big) \cap \hat X_{b-1} \\
&= (\cev Y_{b-1} \cap \hat X_{b-1}) + (V \cap \hat X_{b-1}) & \text{due to }\eqref{eq:D-E-F}, \eqref{eq:YXTX}\nonumber\\
&= \cev Y_{b-1} \cap \hat X_{b-1} .\nonumber
\end{align}
The proof is complete by~\ref{lem:DirectSummandCondition}.
\end{proof}

Next, we construct an Ising sequence that is maximal in a certain sense.
For any $i < j$ we define $\phi_{i \to j}$ recursively as follows:
\begin{equation}
	\begin{aligned}
		&\dom(\phi_{i \to i})=X_i,~~\phi_{i\to i}=\text{id}|_{X_i},\\
		&\dom(\phi_{i \to j})=(\phi_{i \to i+1})^{-1}(\cev X_{i+1}\cap\dom(\phi_{i+1 \to j})),~~\phi_{i \to j}=\phi_{i+1 \to j}\circ\phi_{i \to i+1}.
	\end{aligned}
\end{equation}
In other words, if $i<j$, the domain of $\phi_{i \to j}$ consists of all $x_i\in X_i$ 
such that a sequence $x_{k+1} = \phi_{k,k+1}(x_k) \in \vec X_{k+1}$ for $k \in [i,j)$ is defined.
The map~$\phi_{i \to j}$ sends a subspace of~$\vec X_i$ isomorphically onto a subspace of~$\cev X_j$.
For $i \leq j$, we denote by $\phi_{i \ot j}$ the inverse of~$\phi_{i \to j}$.
The site index on the left of an arrow will always be equal to or less than that on the right of the arrow.

Suppose we are given a chain~$\{X_i\}$ such that $\cev X_0\neq 0$.
Consider $\dom(\phi_{j \ot 0}) \subseteq \cev X_0 \subseteq X_0$ with $j<0$.
It is clear that its dimension is nonincreasing as $j \searrow -\infty$. 
Therefore, the dimensions must stabilize to a limit
\[
\lim_{j \searrow -\infty}\dim\dom(\phi_{j \ot 0})
\]
after finitely many steps.
If this limit is zero, then there exists a finite $a \in (-\infty,-1]$ 
such that $\im(\phi_{a+1 \ot 0})\cap \hat X_{a+1} \neq 0$ but $\im(\phi_{a \ot 0}) \cap \hat X_{a}=0$.
Then we pick a vector $x_0 \in \dom(\phi_{a \ot 0})$
and apply $\phi$'s leftwards starting from~$x_0$;
by definition of~$\phi_{a \ot 0}$, we obtain $x_a, x_{a+1}, \ldots, x_{-1}, x_0$.
We apply $\phi$'s rightwards until at some $b \in [0,\infty]$ we must stop: 
either $\phi_{0 \to n}(x_0) = x_n$ exists for all $n > 0$, or $b$ is finite so
$x_{b-1} \in \vec X_{b-1}$ but $x_b \in \cev X_b \setminus \hat X_b$.
If the limit is positive ($\lim_{j\searrow -\infty} \dim \dom(\phi_{0,j}) > 0$), in which case $a = -\infty$, 
then we simply pick any $x_0 \in \bigcap_{j<0}\dom(\phi_{0,j})$,
and define a sequence of vectors~$x_i$ for $i\in(-\infty, b+1)$ that are related by $\phi$'s.
We say that the sequence~$\{x_i \,|\, i\in (a-1,b+1) \}$ is {\bf full}.
Although the left boundary~$a$ is determined by the chain~$\{X_i\}$ with $\cev X_i \neq 0$,
the right boundary~$b$ is determined by both the chain and the choice of~$x_0 \in \dom(\phi_{a \ot 0})$.
A full Ising sequence~$\{ x_i \in X_i \,|\, i \in (a-1,b+1) \}$, too, depends on~$x_0$.

\begin{lemma}\label{lem:decomposition}
For any linear space chain~$\{X_i = \cev X_i + \vec X_i \}$ with $\cev X_0 \neq 0$,
any finite full Ising sequence forms a direct summand.
\end{lemma}

\begin{proof}
We are going to use the characterization in~\ref{lem:DirectSummandCondition} and~\ref{lem:TwoGroupsOfConditions}
of a direct summand in terms of subspaces~$\cev Y_i, \vec Y_i \subseteq X_i$.
Let $I = [a,b] = (a-1,b+1)$ be the finite interval over which a finite full Ising sequence is defined,
where $a < 0 \le b$.
We are going to prove the lemma by induction first in~$b \ge 0$ with~$a = -1$ and then in~$a < 0$.

(Induction~i) The base case is where $a = -1$ and $b = 0$, 
so the full Ising sequence consists of two elements~$x_{-1}$ and $x_0$.
By construction, $a=-1$ implies that $\im(\phi_{-1 \ot 0}) \cap \hat X_{-1}=0$.
Since $\hat X_{-1} \subseteq \im( \phi_{-1 \ot 0} ) = \vec X_{-1}\cong \cev X_0 = \dom(\phi_{-1 \ot 0} )$,
we find that $\hat X_{-1}=0$.
Therefore, the condition~\eqref{eq:sixlines} becomes
\begin{align} \begin{aligned}
	\braket{x_0} \oplus \cev Y_0 &= \cev X_0 \,, &&&&& \braket{x_{-1}} \oplus \vec Y_{-1} &= \vec X_{-1} \,,
\\	\phi_{-1\to 0}: \vec Y_{-1} &\cong \cev Y_{0} \,, &&&&& \hat X_0 &\subseteq \cev Y_0 \,.
\end{aligned} \end{align}
Since $b=0$, we must have $x_0 \notin \vec X_0$.
So, $\braket{x_0} \oplus \hat X_0 \subseteq \cev X_0$.
Let $\cev Z_0$ be any direct complement of $\braket{x_0} \oplus \hat X_0$ within $\cev X_0$,
so $\cev X_0 = \braket{x_0} \oplus \hat X_0 \oplus \cev Z_0$.
Define $\cev Y_0 = \hat X_0 \oplus \cev Z_0$ and $\vec Y_{-1} = \phi_{-1 \ot 0}(\cev Y_0)$.
These satisfy all the four conditions.

(Induction~ii) The induction hypothesis is that the lemma is true with any $I = [-1,b']$ where $0 \le b' < b$.
We are going to show the lemma with $I = [-1,b]$.

We claim that $\{d_i = x_i \in D_i \,|\, i \in [a,b-1] \}$ is a full Ising sequence of the new chain~$\{D_i\}$
defined in~\eqref{eq:CiDi}.
Towards the left of the origin,
$\dom(\phi^D_{j \ot 0}) = \dom(\phi_{j \ot 0})$ for all~$j<0$ 
since they are determined by $\phi_{i-1 \ot i} = \phi^D_{i-1 \ot i}$ with $i \le 0$.
Hence, we may construct a full Ising sequence with $d_0 = x_0 \in \cev D_0 = \cev X_0$ in the new chain,
and the sequence elements~$d_j$ to the left $(a \le j \le 0)$ and to the right ($0 < j < b$)
are identical to those of the original sequence.
In the original chain~$\{X_i\}$, by definition of full Ising sequences,
we have $x_b \notin \hat X_b$,
which implies $x_{b-1} \notin \hat{\hat X}_{b-1}$.
Hence, we cannot apply $\phi^D_{b-1 \to b}$ to~$d_{b-1}$.
Therefore, $\{d_i = x_i \,|\, i \in [a,b-1] \}$ is a full Ising sequence of the new chain.

Now, the induction hypothesis applied to the new chain~$\{D_i\}$ with the full Ising sequence $\{x_i\,|\,i\in[a,b-1]\}$
shows that it forms a direct summand of~$\{D_i\}$.
By~\ref{lem:induction}, we complete this induction step.

(Induction~iii) The induction hypothesis is that the lemma is true with $I = [a',b]$ 
where $b \in [0,\infty)$ is arbitrary but $a < a' < 0$.
We are going to show the lemma with $I = [a,b]$.%
\footnote{
	The argument here is mostly the same as in~(Induction~ii),
	but there is a difference 
	as our definition of a full Ising sequence is not left-right symmetric.
}

We claim that $\{c_i = x_i \in C_i \,|\, i \in [a+1,b] \}$ is a full Ising sequence of the new chain~$\{C_i\}$
defined in~\eqref{eq:CiDi}.
First, we have to show that $\phi^C_{a+1 \ot 0} \neq 0$ but $\phi^C_{a \ot 0} = 0$.
From~\eqref{eq:CiDi} it is clear that $\phi^C_{a+1 \ot 0} = \phi_{a+1 \ot 0} \neq 0$.
Since $\{x_i \in X_i \,|\, i \in [a,b]\}$ is full,
we know $\im(\phi_{a \ot 0}) \cap \hat X_a = 0$,
implying $\phi_{a\to a+1}(\im(\phi_{a \ot 0})) \cap \phi_{a\to a+1}(\hat X_a) = 0$.
By definition, $\phi_{a \to a+1}(\im(\phi_{a \ot 0})) = \phi_{a \to a+1}(\hat X_a) \cap \im(\phi_{a+1 \ot 0})$.
So, $\im(\phi_{a+1 \ot 0}) \cap \phi_{a\to a+1}(\hat X_a) = 0$,
implying that $\im(\phi_{a+1 \ot 0}) \cap \hat{\hat X}_a = 0$,
or equivalently $\phi^C_{a \ot 0} = 0$.
Second, we have to show that the sequence $x_{a+1},x_{a+2},\ldots,x_0,\ldots$ extends up to~$x_b$ but no more,
but this is obvious since the new chain is identical to the original on the right of the origin.
Therefore, $\{x_i\,|\, i \in [a+1,b]\}$ is a full Ising sequence of~$\{C_i\}$.

The induction hypothesis applied to the new chain~$\{C_i\}$ with the full Ising sequence $\{x_i\,|\,i\in[a+1,b]\}$
shows that it forms a direct summand of~$\{C_i\}$.
By~\ref{lem:induction}, we complete this induction step, and hence the lemma.
\end{proof}

\begin{lemma}\label{lemma-lookatstab}
Any linear space chain~$\{X_i\}$ with $\dim X_k > 1$ for some~$k$ is a direct sum of two nonzero subchains.
\end{lemma}

\begin{proof}
We may assume without loss of generality that $\cev X_i + \vec X_i=X_i$ for all~$i\in\ZZ$; 
if $\cev X_i + \vec X_i \neq X_i$ for some $i$, 
then any direct complement of $\cev X_i + \vec X_i$ within~$X_i$ 
gives a nonzero direct summand, an Ising chain on single site $i$.
Then, since $X_k = \cev X_k + \vec X_k \neq 0$,
either $\vec X_{k-1} \cong \cev X_k \neq 0$ or $\vec X_k \cong \cev X_{k+1} \neq 0$.
By shifting the origin, we may assume that $\cev X_0 \neq 0$.
Then, there is a nonzero full Ising sequence.
If it is finite, we have~\ref{lem:decomposition},
or if it is infinte, we have~\ref{lem:InfiniteFullIsingSequences},
to find a nontrivial direct summand. 
\end{proof}

Applying~\ref{lemma-lookatstab} recursively, by an application of Zorn's lemma,
we complete the proof of the theorem~\ref{thm:IsingIn1D}.

\section{More on the WPT model}\label{app:WPT}

\tocless\subsection{Horizontal (bottom) boundary.}
\label{sec:WPT_hbot}

The same idea used in the construction for the horizontal top boundary applies here as well.
Let
\begin{align}
	\ISG_{B(y_b)} &= \Big\langle \big\{ P_0(x,y-1) \,, Z_{x,y-1/2} ~\big|~ x\in\ZZ,\, y-y_b\in\ZZ_+ \big\} \cup \big\{ Z_{x,y} ~\big|~ (x,y)\in\Lambda,\, y<y_b \big\} \Big\rangle \,.
\end{align}
The blending circuit consists of 5 steps.
First transition $\ISG_{B(0)} \to \ISG_{B(1)}$ by measuring $\{Z_{x,0} \,|\, x\in\ZZ\}$, then apply the bulk translation circuit for $y>0$ to restore the ISG back to $\ISG_{B(0)}$.
\begin{subequations} \begin{align}
	\ISG^{(B)}_0 &= \ISG_{B(0)} \,,
\\	\ISG^{(B)}_1 &= \ISG_{B(1)} \,,
\\	\ISG^{(B)}_2 &= \Big\langle \big\{ P_1(x,y) \,, X_{x,y-1/2} X_{x,y} ~\big|~ x\in\ZZ,\, y\in\ZZ_+ \big\} \cup \big\{ Z_{x,y} ~\big|~ (x,y)\in\Lambda,\, y\leq0 \big\} \Big\rangle \,,
\\	\ISG^{(B)}_3 &= \ISG_{B(1/2)} \,,
\\	\ISG^{(B)}_4 &= \Big\langle \big\{ P_3(x,y) \,, X_{x,y-1} X_{x,y-1/2} ~\big|~ x\in\ZZ,\, y\in\ZZ_+ \big\} \cup \big\{ Z_{x,y} ~\big|~ (x,y)\in\Lambda,\, y<0 \big\} \Big\rangle \,.
\end{align} \end{subequations}
The local logical operators of $\ISG^{(B)}_0$ are generated by $L_j = X_{j,0} Z_{j+1,0}$.
After a cycle, these operators transforms as $L_j \mapsto L_{j-1}$.

Alternately, we can measure $X$ along the $y=0$ row on the first step, and the logical operators shifts in the opposite direction.
\begin{align} \begin{aligned}
	\ISG^{(B')}_1 &= \Big\langle \big\{ P_0(x,y) \,, Z_{x,y-1/2} ~\big|~ x\in\ZZ,\, y\in\ZZ_+ \big\} \\&\qquad \cup \big\{ X_{(x,0)} ~\big|~ x\in\ZZ \big\} \cup \big\{ Z_{x,y} ~\big|~ (x,y)\in\Lambda,\, y<0 \big\} \Big\rangle \,.
\end{aligned} \end{align}

\tocless\subsection{Reversing boundary flow--vertical boundary revisited}
\label{sec:WPT_vright_rev}

In \S\ref{sec:WPT_vright} we constructed a blend between the WPT ($x \leq 0$) and trivial ($x > 0$) circuits with the boundary logicals moving downwards along the same direction as the bulk circuit.
Here we construct an alternate blend (between the pair of bulk circuits in the same geometry), but with the boundary logicals moving \emph{upwards} instead.

The construction is based on the observation that the blends Eq.~\eqref{eq:WPT_top_blendl} and~\eqref{eq:WPT_top_blendr} have opposite index.
The composition of the circuit \eqref{eq:WPT_top_blendr} with the inverse of \eqref{eq:WPT_top_blendl} is a 1D LR cycle which translates the logicals by two units: $L_j \mapsto L_{j+2}$.
The simplest realization of this is a 3-step circuit involving logicals $L_j$ obeying the algebra~\eqref{eq:anomalous_1D_L}, and a chain of ancila qubits with operators $(Z_k,X_k)$ for $k\in\ZZ$ is:
\begin{align}
\label{eq:anomalous_1D_L_shift2}
\xymatrix @M=2mm {
	\Braket{ Z_k \,|\, k\in\ZZ } \ar[r]^(0.45){1}
&	\Braket{ L_k Z_{k-1} X_k \,|\, k\in\ZZ } \ar[r]^(0.62){2}
&	\Braket{ X_k \,|\, k\in\ZZ } \ar@/^1.5pc/[ll]^(0.76){3}
}
\end{align}
At step~0, the base stabilizer group generated by $Z_k$ has logical generated by $L_k$.
Between adjacent pairs of ISGs, their shared logicals are, for $k\in\ZZ$: 
$L_k Z_{k-1} Z_{k+1}$ (steps 0 \& 1),
$L_{k+2} X_{k} X_{k+2}$ (steps 1 \& 2),
and $L_k$ (steps 2 \& 0).
Under a cycle the logicals transforms $L_k \mapsto L_k Z_{k-1} Z_{k+1} \mapsto L_{k+2} X_{k} X_{k+2} \mapsto L_{k+2}$ from step~0 to step~3.

The vertical blend is a 7-step cycle combining \eqref{eq:WPT_right_blendd} with \eqref{eq:anomalous_1D_L_shift2}:
\begin{subequations} \begin{align}
	\ISG^{(R')}_{7n+t} &= \ISG^{(R)}_{4n+t} \qquad t \in \{0,1,2,3,4\} ,
\\\begin{split}
	\ISG^{(R')}_{7n+5} &= \Big\langle \big\{ P_0(x,y) \,, Z_{x+2,y-1/2} ~\big|~ x\in\ZZ_{-},\, y\in\ZZ \big\}
	\\&\qquad	\cup \big\{ Z_{0,y} X_{0,y+1} Z_{1,y} X_{1,y+1} ~\big|~ y\in\ZZ \big\} \cup \big\{ Z_{x,y} ~\big|~ (x,y)\in\Lambda,\, x\geq2 \big\} \Big\rangle \,,
\end{split}\\\begin{split}
	\ISG^{(R')}_{7n+6} &= \Big\langle P_0(x,y) \,, Z_{x+2,y-1/2} ~\big|~ x\in\ZZ_{-},\, y\in\ZZ \big\}
	\\&\qquad	\cup \big\{ X_{1,y} ~\big|~ y\in\ZZ \big\} \cup \big\{ Z_{x,y} ~\big|~ (x,y)\in\Lambda,\, x\geq2 \big\} \Big\rangle \,.
\end{split}
\end{align} \end{subequations}
This blend has the same index as that of \eqref{eq:WPT_right_blendd} stacked with a pure 1D translation circuit (moving upwards), but without altering the boundary logical algebra.

\bigskip
\section{More on the HH code}\label{app:HH}

\newcommand{\localrev}[3]{$\renewcommand{\arraystretch}{1.1} \begin{array}{@{} r @{\,=\,} l @{}} \mathcal{A} & \big\langle #1 \big\rangle_{\!+t} \\ \calS & \big\langle #2 \big\rangle_{\!+t} \\ \mathcal{B} & \big\langle #3 \big\rangle_{\!+t} \end{array}\mkern-4mu$}
\newcommand{\LOpArr}[1][0ex]{\multirow{2}{*}{\raisebox{#1}{$\downarrow$}}}

\paragraph*{String operators.}
We first briefly discuss logical string operators of the bulk HH code.
(As explained in \S\ref{sec:lrtrans}, these objects are technically not operators, but \emph{locally finite products}.)
We label strings operators by a color label and Pauli label:
	for $C \in \{R,G,B\}$ and $p \in \{X,Y,Z\}$,
	$\mathrm{Str}_C^p$ denote a product of Pauli $p$ operators along $C$-edges which connect a path of $C$-plaquettes.
Recall that a $C$-edge connects between two $C$-plaquettes.
We say two $C$-plaquettes are adjacent if they are connected by a $C$-edge.
For example, on the honeycomb lattice, each $C$-plaquette has exactly 6 adjacent $C$-plaquettes.
A path of $C$-plaquettes is a (infinite or finite periodic) sequence of adjacent $C$-plaquettes.
Note that $\mathrm{Str}_C^p$ (when restricted to any finite region) is generated by $E_C^p$,
	and commutes with $E_{C'}^p$ and $E_C^{p'}$ for any $C'$, $p'$.

With this notation, the logical string operators for each ISG are generated by
\begin{align} \label{eq:HH_stringOp} \begin{aligned}
	\ISG_\stepR &: & \mathrm{Str}_{R}^Y &\sim \mathrm{Str}_{R}^Z \,, & \mathrm{Str}_{G}^X \sim \mathrm{Str}_B^X \,,
\\	\ISG_\stepG &: & \mathrm{Str}_{G}^X &\sim \mathrm{Str}_{G}^Z \,, & \mathrm{Str}_{R}^Y \sim \mathrm{Str}_B^Y \,,
\\	\ISG_\stepB &: & \mathrm{Str}_{B}^X &\sim \mathrm{Str}_{B}^Y \,, & \mathrm{Str}_{R}^Z \sim \mathrm{Str}_G^Z \,.
\end{aligned} \end{align}
We have written a $\sim$ symbol to indicate when two logical representatives are related by a locally finite product built from the ISG.
As the model is locally reversible, each pair of steps has a common pair of string operators.

\tocless\subsection{Blending with WPT model}

We can construct a topological blending between the HH code with that of the WPT (cf.~\S\ref{sec:WPT}), 
that is, with no local logical operators.
First, on one half of the plane, put the HH code with the boundary described in this section, such that the boundary logicals move leftwards per cycle;
	on the other half of the plane, put one of the boundaries of the WPT model that also has a left-moving boundary.
To temporally synchronize the two models, an extra idle step should be added to the HH code such that it has four steps per cycle (e.g.\ two consecutive $\stepB$ steps).
To lattice match the two models, every unit cell of the WPT model should correspond to 3 sites along the boundary shown in Fig.~\ref{fig:HHboundary_zigzag}.
In this set up, the base logical algebra is generated by two copies of the \MCAlg:
	$L^{\stepR\prime}_j$ and $L^{W\!P}_j$ (for $j\in\ZZ$) from the HH code and WPT respectively.
The lattice matching condition would ensure that $L^{\stepR\prime}_j$ and $L^{W\!P}_j$ are spatially near each other.
We glue together these two boundaries in the same spirit as \S\ref{sec:WPT_double} (albeit the two bulks are different).
Let $L^{\stepR\prime}_j(t)$ and $L^{W\!P}_j(t)$ be the evolution of the the logicals at step $t$; we expect $L^{\stepR\prime/{W\!P}}_j(t+4) = L^{\stepR\prime/{W\!P}}_{j-1}(t)$ by constuction.
Next, construct a new glued circuit with the ISGs from the initial blending with added stabilizers $L^{\stepR\prime}_j(t) \, L^{W\!P}_j(t)$ for all $j\in\ZZ$.
This is the desired blending with no local logical operators,
proving that the WPT and HH code belong to the same class under topological blending equivalence.

\bigskip
\tocless\subsection{Planar implementation with chiral edge}
\label{sec:HHplanar}

Here we consider the HH code on the (finite) plane.
If the periodicity of the global circuit is to be maintained at 3 steps per cycle,
	the MQCA index must be nonzero and be constant along the boundary the code.
That is, any period-3 planar realization must have a chiral dynamical flow of information along the edge.

\begin{figure}[htb]
	\centering
	\begin{minipage}{54mm} Step $\stepR$ \\ \includegraphics[width=48mm]{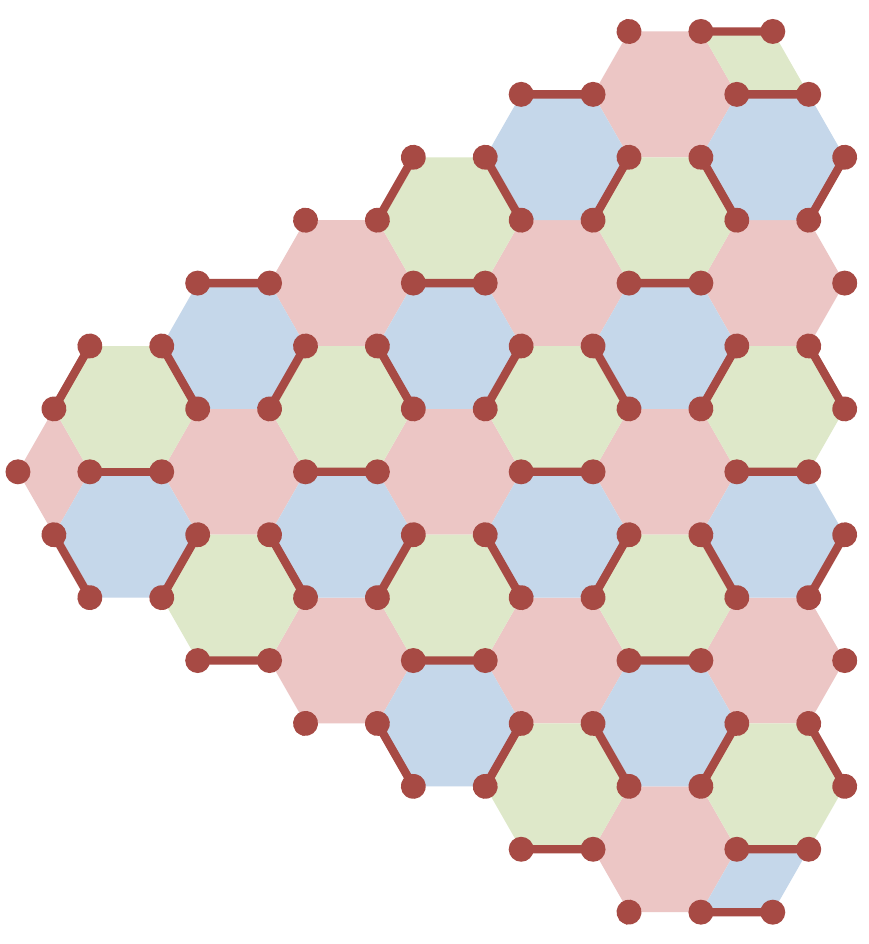} \end{minipage}
	\begin{minipage}{54mm} Step $\stepG$ \\ \includegraphics[width=48mm]{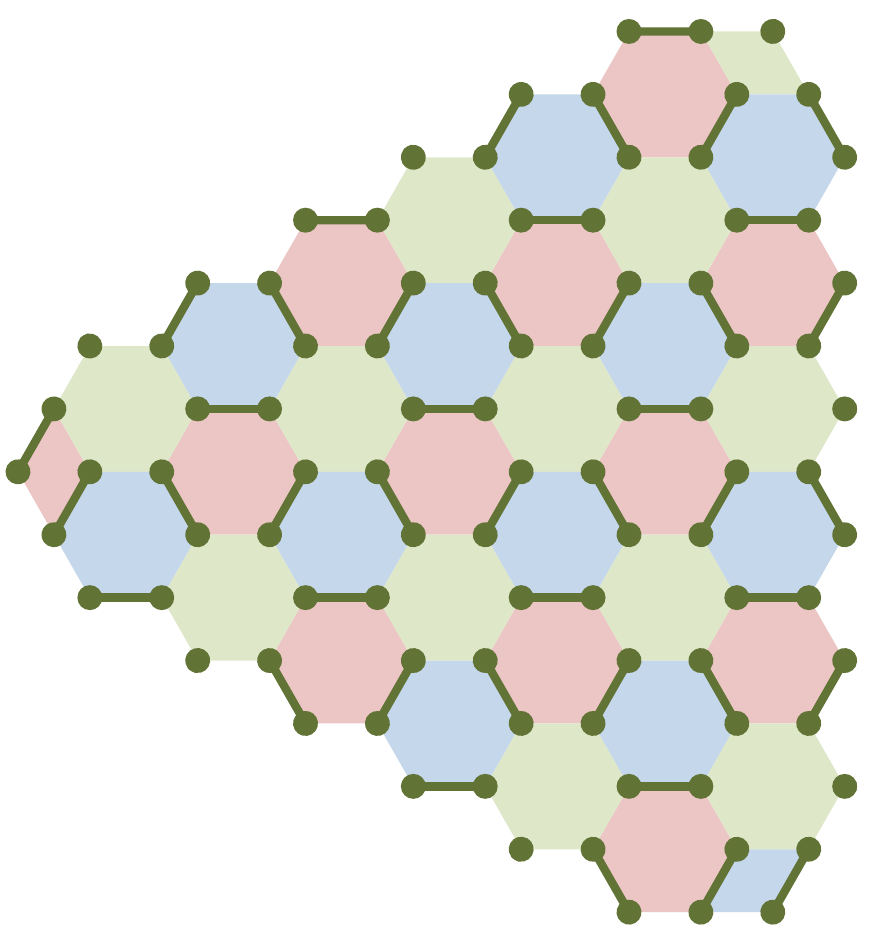} \end{minipage}
	\begin{minipage}{54mm} Step $\stepB$ \\ \includegraphics[width=48mm]{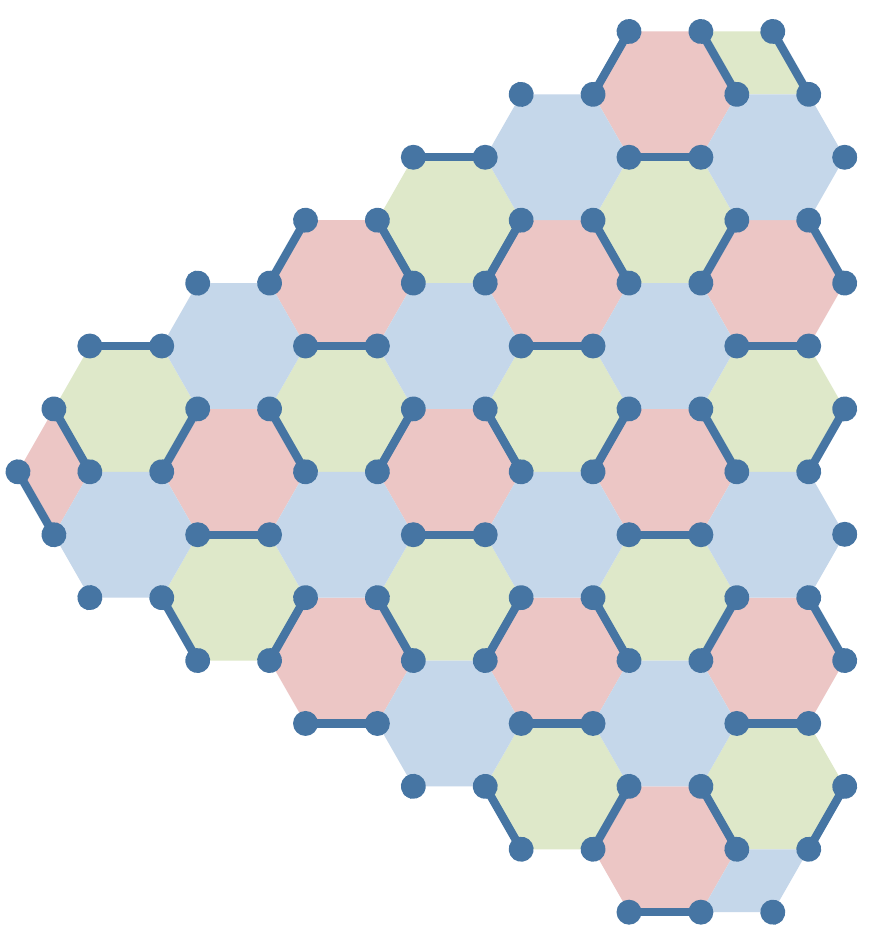} \end{minipage}
	\caption{%
	A planar configuration of the HH code with chiral boundary dynamics.
	The $\stepR$, $\stepG$, $\stepB$ steps involve measurements on single and pairs of sites, with Pauli operators $X$, $Y$, $Z$ respectively.
	There are an extensive number (proportional to the perimeter) of boundary logical operators, obeying the \MCAlg{} for a periodic chain.
	The boundary logical operators evolve in a counterclockwise direction.
	} \label{fig:HHtriangle}
\end{figure}

Figure~\ref{fig:HHtriangle} illustrates a possible realization with a triangle geometry.
Each side of the triangle is a zigzag truncated boundary described in \S\ref{sec:HHboundary1}, the plaquettes at the corners are replaced with rhombi.
The logical algebra is generated by 3- and 4-body operators of the form~\eqref{eq:HHboundary1_logicals} along the boundary.
For a triangle of size $N$ (shown in the figure is $N=7$), the boundary logical operators obey the \MCAlgEq{} for a periodic chain with $2N$ sites.
(The $2N$ operators are not independent; as the products $\prod_\text{even $i$} L_i$ and $\prod_\text{odd $i$} L_i$ belong to the ISG.)
Under a period, the boundary operators shifts by one in the counterclockwise direction.

\begin{figure}[ht]
	\centering
	\includegraphics[width=50mm]{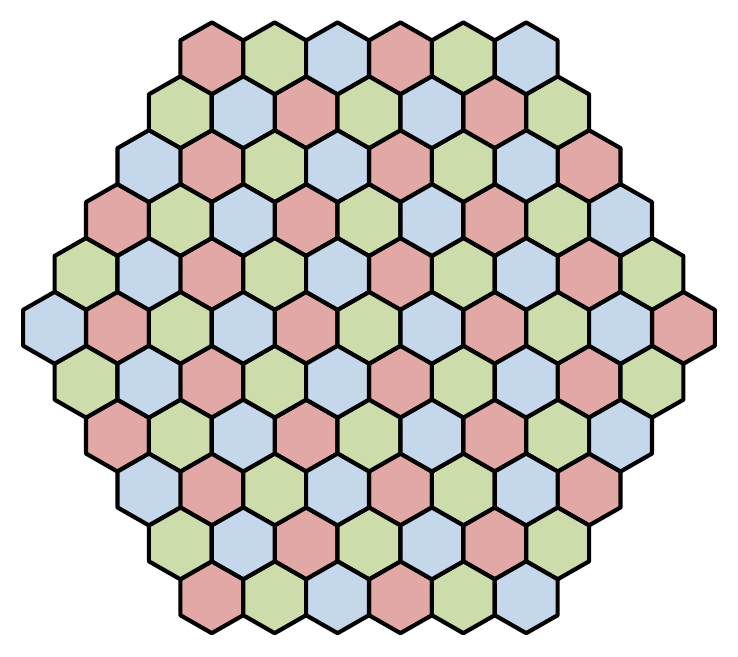}
	\caption{A planar configuration of the HH code with the hexagon geometry.
	} \label{fig:HHhexagon}
\end{figure}
The planar hexagon geometry in Fig.~\ref{fig:HHhexagon},
which is not to be confused with the individual plaquettes,
requires a more nuance construction.
Unlike the triangle geometry, the edges are not all isomorphic,
	the plaquette labeling of red, green, blue runs clockwise along three of six edges, and counterclockwise along the other three.
If we were to attempt implementing the circuit from \S\ref{sec:HHboundary1} along each boundary,
	half of the edges have counterclockwise-propagating dynamics while the other half have clockwise-propagating dynamics;
	neighboring edges of the hexagon have opposing direction for information flow.
Since logical operators cannot accumulate at any vertex,
	it means that ``Majorana Bell pairs'' will be created at certain corners,
	travel in opposite directions,
	and measured out of the system at different corners.
To maintain a fixed chirality along the entirety of the planar hexagon boundary,
	we need to implement boundary circuits alternating between ``foward-moving'' (e.g.~\S\nameref{app:HHzz:for})
	and ``backward-moving'' (e.g.~\S\ref{sec:HHboundary1}) along alternating edges.

\tocless\subsection{Various boundary circuit of the zigzag edge}

We demonstrate the local reversibility for various measurements circuits at the zigzag boundary of the HH code.
We will also track the transformation of logical operators under circuit dynamics.
Throughout, we use the following notation.
\begin{itemize}[itemsep=1pt, parsep=0pt, topsep=2pt]
\item	$\bar{n}$ means $-n$ (for integers $n$).
\item	Sites along the topmost row are numbered, with the `0' at the tip of a red hexagon.
\item	Pauli operators $X_n, Y_n, Z_n$ acts on site $n$ at the top boundary.
\item	Certain boundary circuit utilizes ancilla qubits.
	The ancilla will be (visually) placed above one of the boundary sites.
	Pauli operators $X^n, Y^n, Z^n$ (with superscript $n$) acts on ancilla atop of site $n$.
\item	If a plaquette/polygon is colored (one of red, green, or blue),
	then it means that the ISG contains the plaquette operator which is a product of Paulis on every plaquette/polygon vertex
	(corresponding to $\prod_v X_v$, $\prod_v Y_v$, or $\prod_v Z_v$ respectively).
	\begin{itemize}[itemsep=1pt, parsep=0pt, topsep=2pt]
	\item	It is possible for a colored plaquette to be stablized by additional plaquette operators.  (E.g., a red-colored plaquette may also be stablized by $\prod_v Z_v$.)
	\item	It is also possible for an uncolored plaquette/polygon to be stablized by some plaquette operator; we may not color every possible polygon to avoid visual overload.
	\end{itemize}
\item	Under the \textbf{reversibility} column:
	Each block $(\calA, \calS, \calB)$ describes the pair $(\ISG_t, \ISG_{t+1})$,
	with $\calS = \ISG_t \cap \ISG_{t+1}$,
	and conjugate bases $\calA, \calB$.
\item	The column under \textbf{logical operators} tracks the evolution of logical operators.
	Entries between $\ISG_t$ and $\ISG_{t+1}$ belongs to the shared logical space.
\item	$\calS_0 = \Braket{ P_R^X, P_G^Y, P_B^Z }$.  They are common stabilizers for every step.
\item	$E_{R'}^X$, $E_{G'}^Y$, $E_{B'}^Z$ consist of (the two-third) nonvertically oriented edges that are adjacent to a hexagon plaquette.
	They include operators on the boundary, e.g.~$Y_0Y_1 \in E_{G'}^Y$.
	They will never include single-site Paulis or operators involving ancilla.
\item	Here $\braket{ \{P\} }_{\!+t}$ means the group generated by the set $P$ and its horizontal translates.
	E.g., $\braket{Z_4}_{\!+t} = \braket{Z_{6j+4} ~|~ j\in\ZZ}$.
\end{itemize}

\tocless\subsubsection{Truncated boundary I.}
\insertlabel{app:HHzz:rev}{\thesubsection{}.\arabic{subsubsection}}
This is the period-3 circuit introduced in~\S\ref{sec:HHboundary1}.
There are 2 independent logical generators per unit cell, and the MQCA index is $-\half$.
\begin{center} \renewcommand{\arraystretch}{1.9}
	\begin{tabular}{r @{\;\;} c @{\quad} l r@{\hspace{8pt}} c @{\hspace{7pt}} c }
		& \HHzzSiteNumbering & reversibility && \multicolumn{2}{c}{logical operators}
	\\	\multirow{2}{*}{$\ISG_0$} & \multirow{2}{*}{$\vcenter{\hbox{\includegraphics[width=70mm]{HHzz_isg_R12.pdf}}}$}
		&&& \ooalign{\hfil\LOpArr[-2ex]\hfil \cr $Z_{\bar2}Z_{\bar1}Y_1Y_2$} & \ooalign{\hfil\LOpArr[-2ex]\hfil \cr $X_2X_3X_4$}
	\\\cline{3-3}	&& \multirow{2}{*}{\localrev{ E_{R'}^X, X_0 }{ \calS_0, P_B^{X/Y} }{ E_{G'}^Y, Y_2 }}
		&& \multirow{2}{*}{$Y_{\bar2}Y_{\bar1}Y_1Y_2$} & \multirow{2}{*}{$X_0X_1X_3X_4$}
	\\	\multirow{2}{*}{$\ISG_{1}$} & \multirow{2}{*}{$\vcenter{\hbox{\includegraphics[width=70mm]{HHzz_rev3_isg1.pdf}}}$} &&& \LOpArr&\LOpArr
	\\\cline{3-3}	&& \multirow{2}{*}{\localrev{ E_{G'}^Y, Y_2 }{ \calS_0, P_R^{Y/Z} }{ E_{B'}^Z, Z_{\bar2} }}
		&& \multirow{2}{*}{$Y_{\bar4}Y_{\bar3}Y_{\bar1}Y_0$} & \multirow{2}{*}{$Z_0Z_1Z_3Z_4$}
	\\	\multirow{2}{*}{$\ISG_{2}$} & \multirow{2}{*}{$\vcenter{\hbox{\includegraphics[width=70mm]{HHzz_rev3_isg2.pdf}}}$} &&& \LOpArr&\LOpArr
	\\\cline{3-3}	&& \multirow{2}{*}{\localrev{ E_{B'}^Z, Z_{\bar2} }{ \calS_0, P_G^{Z/X} }{ E_{R'}^X, X_0 }}
		&& \multirow{2}{*}{$X_{\bar4}X_{\bar3}X_{\bar1}X_0$} & \multirow{2}{*}{$Z_{\bar2}Z_{\bar1}Z_1Z_2$}
	\\	\multirow{2}{*}{$\ISG_{0}$} & \multirow{2}{*}{$\vcenter{\hbox{\includegraphics[width=70mm]{HHzz_isg_R12.pdf}}}$} &&& \LOpArr[2ex]&\LOpArr[2ex]
	\\\cline{3-3}	&&&& $X_{\bar4}X_{\bar3}X_{\bar2}$ & $Z_{\bar2}Z_{\bar1}Y_1Y_2$
	\end{tabular}
\end{center}
The boundary logical quotient algebra is that of the \MCAlg.
With the assignment of the logicals of $\ISG_\stepR$ to fermionic operators via
	$Z_{6n-2}Z_{6n-1}Y_{6n+1}Y_{6n+2} \cong \ii \gamma_{2n-1} \gamma_{2n}$,
	and $X_{6n+2}X_{6n+3}X_{6n+4} \cong \ii \gamma_{2n} \gamma_{2n+1}$,
	this MQCA is equivalent to the fermionic transformation $\gamma_k \mapsto \gamma_{k-1}$.
(Since every fermion is translated uniformly, this circuit has the special feature where logical operators remain bounded in size under repeated application of the MQCA.)

\newpage
\tocless\subsubsection{Truncated boundary II.}
\insertlabel{app:HHzz:truncII}{\thesubsection{}.\arabic{subsubsection}}
This is the period-3 circuit introduced in~\S\ref{sec:HHboundary2}.
There are 4 independent logical generators per unit cell, and the MQCA index is $+\half$.
\begin{center} \renewcommand{\arraystretch}{1.9}
	\begin{tabular}{r @{\;\;} c @{\quad} l r@{\hspace{-3.1pt}} c }
		& \HHzzSiteNumbering & reversibility && \multicolumn{1}{c}{logical operators}
	\\	\multirow{2}{*}{$\ISG_0$} & \multirow{2}{*}{$\vcenter{\hbox{\includegraphics[width=70mm]{HHzz_c3t2_isg0.pdf}}}$}
		&&& \ooalign{\hfil\LOpArr[-2ex]\hfil \cr $X_0$}
	\\\cline{3-3}	&& \multirow{2}{*}{\localrev{ E_{R'}^X }{ \calS_0, P_B^{X/Y} }{ E_{G'}^Y }}
		&& \multirow{2}{*}{$X_0X_1X_2$}
	\\	\multirow{2}{*}{$\ISG_{1}$} & \multirow{2}{*}{$\vcenter{\hbox{\includegraphics[width=70mm]{HHzz_c3t2_isg1.pdf}}}$} &&& \LOpArr
	\\\cline{3-3}	&& \multirow{2}{*}{\localrev{ E_{G'}^Y }{ \calS_0, P_R^{Y/Z} }{ E_{B'}^Z }}
		&& \multirow{2}{*}{$Z_0Z_1X_2Y_3Y_4$}
	\\	\multirow{2}{*}{$\ISG_{2}$} & \multirow{2}{*}{$\vcenter{\hbox{\includegraphics[width=70mm]{HHzz_c3t2_isg2.pdf}}}$} &&& \LOpArr
	\\\cline{3-3}	&& \multirow{2}{*}{\localrev{ E_{B'}^Z }{ \calS_0, P_G^{Z/X} }{ E_{R'}^X }}
		&& \multirow{2}{*}{$Z_0Z_1Y_2X_3Y_4Z_5Z_6$}
	\\	\multirow{2}{*}{$\ISG_{0}$} & \multirow{2}{*}{$\vcenter{\hbox{\includegraphics[width=70mm]{HHzz_c3t2_isg0.pdf}}}$} &&& \LOpArr[2ex]
	\\\cline{3-3}	&&&& \footnotesize $(Z_0Z_1Z_2)(X_2X_3X_4)(Y_4Y_5Y_6)X_6$
	\end{tabular}
\end{center}
There are 3 additional logical generators (per unit cell) that are shared between all the ISGs:
\begin{align}
    \big\{ X_{6j+2} X_{6j+3} X_{6j+4} \;,\, Y_{6j-2} Y_{6j-1} Y_{6j} \;,\, Z_{6j} Z_{6j+1} Z_{6j+2} \;\big|\; x\in\ZZ \big\} .
\end{align}
This model (including the boundary) admits an $\ell$-local conjugate bases with $\ell = 1$.
Starting with $L(0) = X_0 \in \ISG_\stepR$,
	its operator size grows linearly as $\operatorname{size} L(t) = \operatorname{size} L(0) + 2t$.
Here, we denote by $\operatorname{size} L(t)$  the minimum size among all equivalent representatives.
This model saturates the bound established in~\ref{thm:LP}.

The boundary logical quotient algebra is also that of the \MCAlg.
However, it is more convenient to make comparison to two Majorana chains, generated by $\beta_k$ and $\gamma_k$ as follows:
	$Z_{6n+0} Z_{6n+1} Z_{6n+2} \cong \ii \gamma_{3n+0} \gamma_{3n+1}$,
	$X_{6n+2} X_{6n+3} X_{6n+4} \cong \ii \gamma_{3n+1} \gamma_{3n+2}$,
	$Y_{6n+4} Y_{6n+5} Y_{6n+6} \cong \ii \gamma_{3n+2} \gamma_{3n+3}$,
	and $X_{6n} \cong \ii \gamma_{3n} \beta_{n}$.
This MQCA is equivalent to the fermionic transformation $\beta_n \mapsto \beta_{n+1}$, $\gamma_n \mapsto \gamma_n$.

\newpage
\tocless\subsubsection{Forward-moving boundary.}
\insertlabel{app:HHzz:for}{\thesubsection{}.\arabic{subsubsection}}

We present a period-3 circuit with logical operators translating uniformly to the right.
This circuit has two logical generators per unit cell, and implements an MQCA with index $+\half$.
The base stabilizer group is the same as that of~\S\ref{sec:HHboundary1}.
Hence, the composition of this circuit with that of \S\ref{sec:HHboundary1} 
gives a period-6 LR cycle whose boundary MQCA is the identity map.

The measurement sequence is
\begin{align} \begin{aligned}
	M_\stepR &= E_R^X \cup \{ X_{6j-2}X_{6j-1} \,, X_{6j+1}X_{6j+2} \,, X_{6j+0} ~|~ j\in\ZZ \} \,,
\\	M_\stepG &= E_G^Y \cup \{ Y_{6j+0}Y_{6j+1} \,, Y_{6j+3}Y_{6j+4} \,, Z_{6j+0}Z_{6j+1}Z_{6j+2} ~|~ j\in\ZZ \} \,,
\\	M_\stepB &= E_B^Z \cup \{ Z_{6j-4}Z_{6j-3} \,, Z_{6j-1}Z_{6j+0} \,, Y_{6j-2}Y_{6j-1}Y_{6j+0} ~|~ j\in\ZZ \} \,.
\end{aligned} \end{align}
Observe that step~$\stepG$ consists of both Pauli~$Y$ and Pauli~$Z$ measurements, as does step~$\stepB$.

This circuit indeed implments an LR cycle, the ISGS are illustrated as follows.
\begin{center} \renewcommand{\arraystretch}{1.9}
	\begin{tabular}{r @{\;\,} c @{\;\;} l r@{\hspace{-1pt}} c @{\;\;} c }
		& \HHzzSiteNumbering & reversibility && \multicolumn{2}{c}{logical operators}
	\\	\multirow{2}{*}{$\ISG_0$} & \multirow{2}{*}{$\vcenter{\hbox{\includegraphics[width=70mm]{HHzz_isg_R12.pdf}}}$}
		&&& \ooalign{\hfil\LOpArr[-2ex]\hfil \cr $Z_{\bar2}Z_{\bar1}Y_1Y_2$} & \ooalign{\hfil\LOpArr[-2ex]\hfil \cr $X_2X_3X_4$}
	\\\cline{3-3}	&& \multirow{2}{*}{\localrev{ E_{R'}^X, X_0 }{ \calS_0, P_B^{X/Y} }{ E_{G'}^Y, Z_0Z_1Z_2 }}
		&& \multirow{2}{*}{$Y_{\bar2}Y_{\bar1}Y_1Y_2$} & \multirow{2}{*}{$X_0X_1X_3X_4$}
	\\	\multirow{2}{*}{$\ISG_{1}$} & \multirow{2}{*}{$\vcenter{\hbox{\includegraphics[width=70mm]{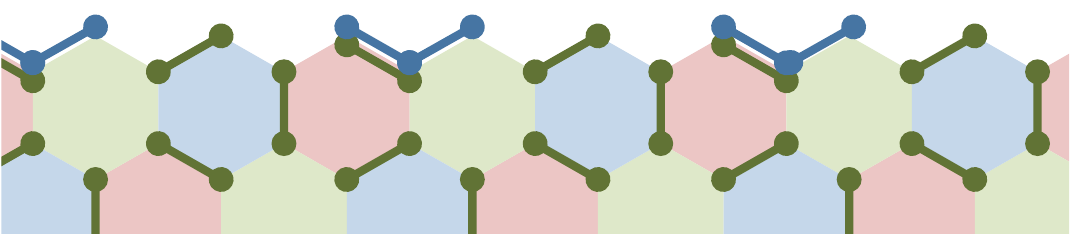}}}$} &&& \LOpArr&\LOpArr
	\\\cline{3-3}	&& \multirow{2}{*}{\localrev{ E_{G'}^Y, Z_0Z_1Z_2 }{ \calS_0, P_R^{Y/Z} }{ E_{B'}^Z, Y_{\bar2}Y_{\bar1}Y_0 }}
		&& \multirow{2}{*}{\parbox{8.8ex}{$Y_{\bar2}Y_{\bar1}Y_0$\\$\times Y_2Y_3Y_4$}} & \multirow{2}{*}{\parbox{9.5ex}{$Z_2Z_3Z_4$\\$\times Z_6Z_7Z_8$}}
	\\	\multirow{2}{*}{$\ISG_{2}$} & \multirow{2}{*}{$\vcenter{\hbox{\includegraphics[width=70mm]{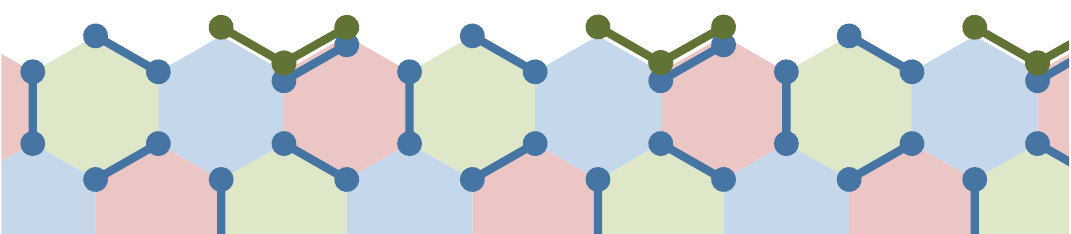}}}$} &&& \LOpArr&\LOpArr
	\\\cline{3-3}	&& \multirow{2}{*}{\localrev{ E_{B'}^Z, Y_{\bar2}Y_{\bar1}Y_0 }{ \calS_0, P_G^{Z/X} }{ E_{R'}^X, X_0 }}
		&& \multirow{2}{*}{$X_2X_3X_5X_6$} & \multirow{2}{*}{$Z_4Z_5Z_7Z_8$}
	\\	\multirow{2}{*}{$\ISG_{0}$} & \multirow{2}{*}{$\vcenter{\hbox{\includegraphics[width=70mm]{HHzz_isg_R12.pdf}}}$} &&& \LOpArr[2ex]&\LOpArr[2ex]
	\\\cline{3-3}	&&&& $X_2X_3X_4$ & $Z_4Z_5Y_7Y_8$
	\end{tabular}
\end{center}
With $\ISG_0$ as the base stabilizer of this circuit,
this circuit implements the inverse MQCA as that of \S\nameref{app:HHzz:rev},
although the intermediate circuits steps are not reversed.

\newpage
\tocless\subsubsection{Gapped boundary with three ancilla.}
\insertlabel{app:HHzz:3anc}{\thesubsection{}.\arabic{subsubsection}}

This is a period-6 circuit from~\S\ref{sec:HHzz_gapped}.
We place an ancilla qubit over every even site--3 ancilla per unit cell.
There are no nontrivial local logical operators.
\begin{center} \renewcommand{\arraystretch}{2.2}
	\begin{tabular}{r @{\;\;} c @{\quad} l }
		& \HHzzSiteNumbering & reversibility
	\\	\multirow{2}{*}{$\ISG_{0}$} & \multirow{2}{*}{$\vcenter{\hbox{\includegraphics[width=70mm]{HHzz_c6t_isg0.pdf}}}$}  &
	\\\cline{3-3}	&& \multirow{2}{*}{\localrev{ E_{R'}^X, X^0, X^2, X^4 }{ \calS_0, P_B^{X/Y}, X_0X_1X_2X^2, X^2X_2X_3X_4 }{ E_{G'}^Y, Y^0, Y_2Y^2, Y^4 }}
	\\	\multirow{2}{*}{$\ISG_{1}$} & \multirow{2}{*}{$\vcenter{\hbox{\includegraphics[width=70mm]{HHzz_c6t_isg1.pdf}}}$}
	\\\cline{3-3}	&& \multirow{2}{*}{\localrev{ E_{G'}^Y, Y^0, Y_2Y^2, Y^4 }{ \calS_0, P_R^{Y/Z}, Z_0Z_1Z_2Z^2, Z^2Z_2Z_3Z_4 }{ E_{B'}^Z, Z^0, Z^2, Z^4 }}
	\\	\multirow{2}{*}{$\ISG_{2}$} & \multirow{2}{*}{$\vcenter{\hbox{\includegraphics[width=70mm]{HHzz_c6t_isg2.pdf}}}$}
	\\\cline{3-3}	&& \multirow{2}{*}{\localrev{ E_{B'}^Z, Z^{\bar2}, Z^0, Z^2 }{ \calS_0, P_G^{Z/X}, Z_{\bar2}Z_{\bar1}Z_0Z^0, Z^0Z_0Z_1Z_2 }{ E_{R'}^X, X^{\bar2}, X_0X^0, X^2 }}
	\\	\multirow{2}{*}{$\ISG_{3}$} & \multirow{2}{*}{$\vcenter{\hbox{\includegraphics[width=70mm]{HHzz_c6t_isg3.pdf}}}$}
	\\\cline{3-3}	&& \multirow{2}{*}{\localrev{ E_{R'}^X, X^{\bar2}, X_0X^0, X^2 }{ \calS_0, P_B^{X/Y}, Y_{\bar2}Y_{\bar1}Y_0Y^0, Y^0Y_0Y_1Y_2 }{ E_{G'}^Y, Y^{\bar2}, Y^0, Y^2 }}
	\\	\multirow{2}{*}{$\ISG_{4}$} & \multirow{2}{*}{$\vcenter{\hbox{\includegraphics[width=70mm]{HHzz_c6t_isg4.pdf}}}$}
	\\\cline{3-3}	&& \multirow{2}{*}{\localrev{ E_{G'}^Y, Y^{\bar4}, Y^{\bar2}, Y^0 }{ \calS_0, P_R^{Y/Z}, Y_{\bar4}Y_{\bar3}Y_{\bar2}Y^{\bar2}, Y^{\bar2}Y_{\bar2}Y_{\bar1}Y_0 }{ E_{B'}^Z, Z^{\bar4}, Z^{\bar2}Z^{\bar2}, Z^0 }}
	\\	\multirow{2}{*}{$\ISG_{5}$} & \multirow{2}{*}{$\vcenter{\hbox{\includegraphics[width=70mm]{HHzz_c6t_isg5.pdf}}}$}
	\\\cline{3-3}	&& \multirow{2}{*}{\localrev{ E_{B'}^Z, Z^{\bar4}, Z^{\bar2}Z^{\bar2}, Z^0 }{ \calS_0, P_G^{Z/X}, X_{\bar4}X_{\bar3}X_{\bar2}X^{\bar2}, X^{\bar2}X_{\bar2}X_{\bar1}X_0 }{ E_{R'}^X, X^{\bar4}, X^{\bar2}, X^0 }}
	\\	\multirow{2}{*}{$\ISG_{0}$} & \multirow{2}{*}{$\vcenter{\hbox{\includegraphics[width=70mm]{HHzz_c6t_isg0.pdf}}}$}
	\\\cline{3-3}	&&
	\end{tabular}
\end{center}
Because the boundary is gapped, only one type of semi-infinite string operator~\eqref{eq:HH_stringOp} may terminate at the edge.
\begin{itemize}[itemsep=1pt, parsep=0pt, topsep=2pt]
\item	For $\ISG_0$,
	$\mathrm{Str}_{G/B}^X$ may terminate at the boundary,
	but $\mathrm{Str}_{R}^{Y/Z}$ cannot.
\item	For $\ISG_1$,
	$\mathrm{Str}_{G}^{Z/X}$ may terminate at the boundary,
	but $\mathrm{Str}_{B/R}^Y$ cannot.
\item	For $\ISG_2$,
	$\mathrm{Str}_{R/G}^Z$ may terminate at the boundary,
	but $\mathrm{Str}_{B}^{X/Y}$ cannot.
\end{itemize}
The termination rules for $\ISG_{t+3}$ are swapped in comparison to $\ISG_t$.

\newpage
\tocless\subsubsection{Gapped boundary with one ancilla.}
\insertlabel{app:HHzz:1anc}{\thesubsection{}.\arabic{subsubsection}}
This is a period-6 variant of circuit from~\S\ref{sec:HHzz_gapped}.
We place an ancilla qubit over every site in $6\ZZ$.
There are no nontrivial local logical operators.
\vspace{-1ex}
\begin{center} \renewcommand{\arraystretch}{2.2}
	\begin{tabular}{r @{\;\;} c @{\quad} l }
	\\	& \HHzzSiteNumbering & reversibility
	\\	\multirow{2}{*}{$\ISG_{0}$} & \multirow{2}{*}{$\vcenter{\hbox{\includegraphics[width=70mm]{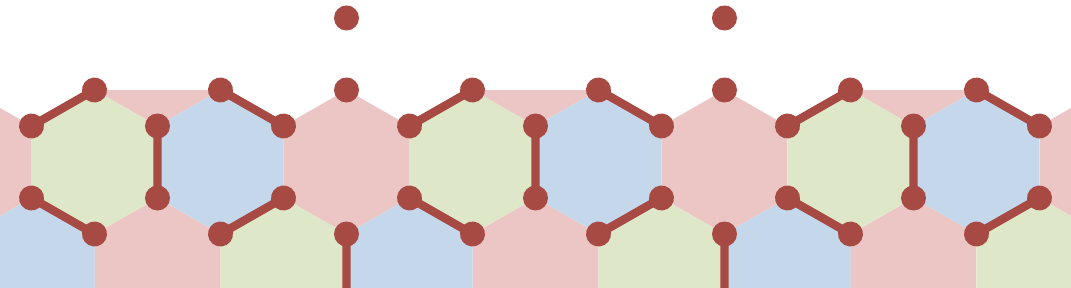}}}$}  &
	\\\cline{3-3}	&& \multirow{2}{*}{\localrev{ E_{R'}^X, X^0 }{ \calS_0, P_B^{X/Y}, X_0X_1X_2X^0, X^0X_2X_3X_4 }{ E_{G'}^Y, Y_2Y^0 }}
	\\	\multirow{2}{*}{$\ISG_{1}$} & \multirow{2}{*}{$\vcenter{\hbox{\includegraphics[width=70mm]{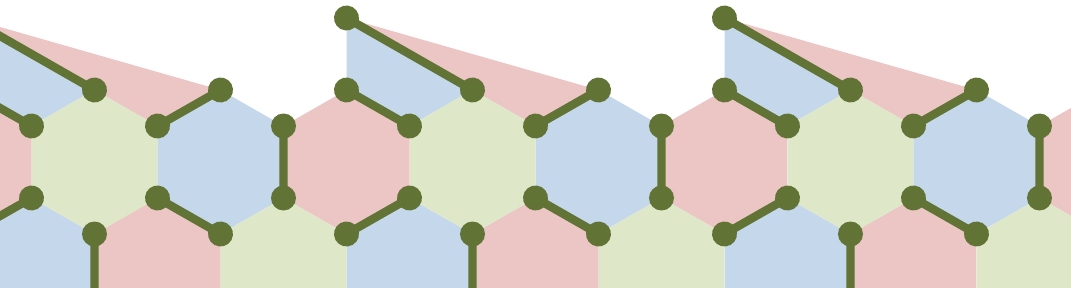}}}$}
	\\\cline{3-3}	&& \multirow{2}{*}{\localrev{ E_{G'}^Y, Y_2Y^0 }{ \calS_0, P_R^{Y/Z}, Z_0Z_1Z_2Z^0, Z^0Z_2Z_3Z_4 }{ E_{B'}^Z, Z^0 }}
	\\	\multirow{2}{*}{$\ISG_{2}$} & \multirow{2}{*}{$\vcenter{\hbox{\includegraphics[width=70mm]{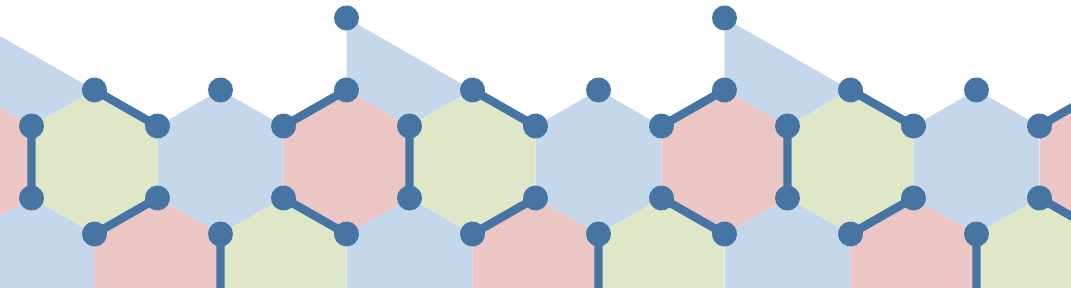}}}$}
	\\\cline{3-3}	&& \multirow{2}{*}{\localrev{ E_{B'}^Z, Z^0 }{ \calS_0, P_G^{Z/X}, Z_{\bar2}Z_{\bar1}Z_0Z^0, Z^0Z_0Z_1Z_2 }{ E_{R'}^X, X_0X^0 }}
	\\	\multirow{2}{*}{$\ISG_{3}$} & \multirow{2}{*}{$\vcenter{\hbox{\includegraphics[width=70mm]{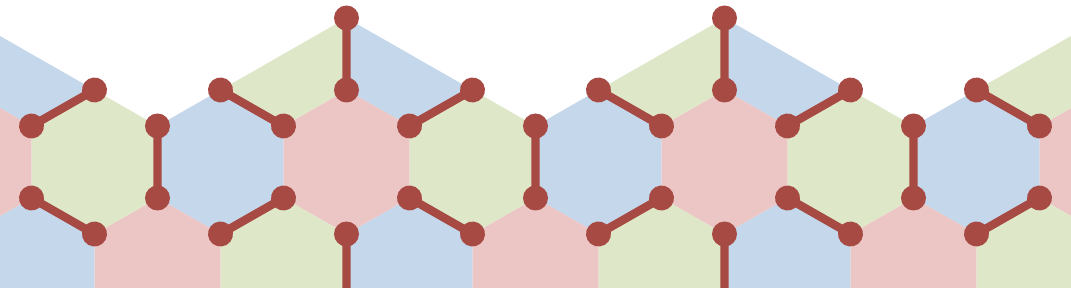}}}$}
	\\\cline{3-3}	&& \multirow{2}{*}{\localrev{ E_{R'}^X, X_0X^0 }{ \calS_0, P_B^{X/Y}, Y_{\bar2}Y_{\bar1}Y_0Y^0, Y^0Y_0Y_1Y_2 }{ E_{G'}^Y, Y^0 }}
	\\	\multirow{2}{*}{$\ISG_{4}$} & \multirow{2}{*}{$\vcenter{\hbox{\includegraphics[width=70mm]{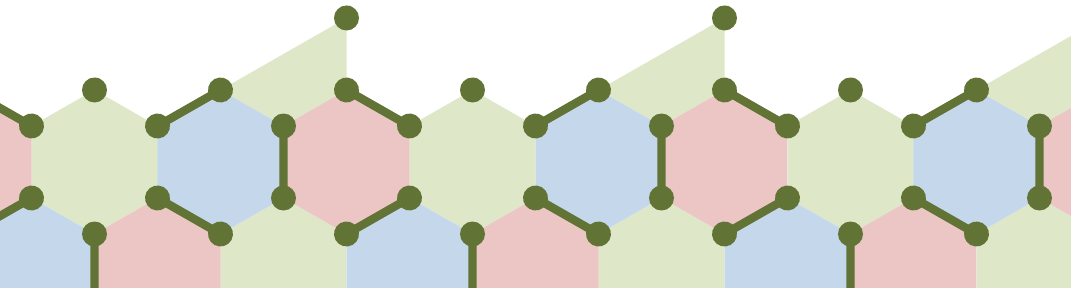}}}$}
	\\\cline{3-3}	&& \multirow{2}{*}{\localrev{ E_{G'}^Y, Y^0 }{ \calS_0, P_R^{Y/Z}, Y_{\bar4}Y_{\bar3}Y_{\bar2}Y^0, Y^0Y_{\bar2}Y_{\bar1}Y_0 }{ E_{B'}^Z, Z_{\bar2}Z^0 }}
	\\	\multirow{2}{*}{$\ISG_{5}$} & \multirow{2}{*}{$\vcenter{\hbox{\includegraphics[width=70mm]{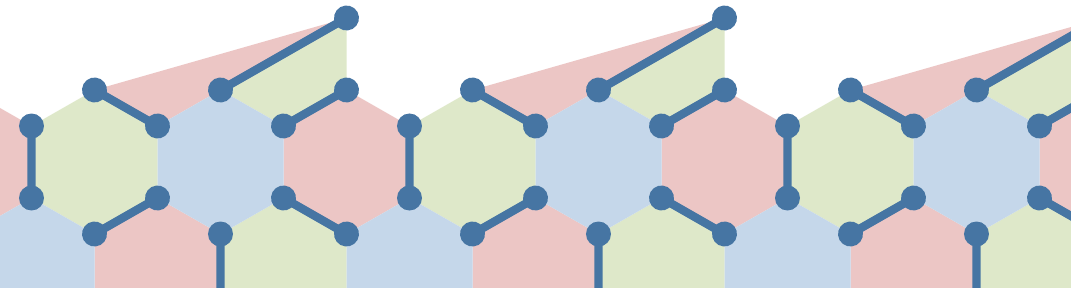}}}$}
	\\\cline{3-3}	&& \multirow{2}{*}{\localrev{ E_{B'}^Z, Z_{\bar2}Z^0 }{ \calS_0, P_G^{Z/X}, X_{\bar4}X_{\bar3}X_{\bar2}X^0, X^0X_{\bar2}X_{\bar1}X_0 }{ E_{R'}^X, X^0 }}
	\\	\multirow{2}{*}{$\ISG_{0}$} & \multirow{2}{*}{$\vcenter{\hbox{\includegraphics[width=70mm]{HHzz_c6s_isg0.pdf}}}$}
	\\\cline{3-3}	&&
	\end{tabular}
\end{center}
This model is nearly identical to previous case.
The termination rules for semi-infinite string operator are identical.

\newpage
\tocless\subsection{Various boundary circuits of the armchair edge}

\tocless\subsubsection{Gapped armchair boundary}
\insertlabel{app:HHac:gapped}{\thesubsection{}.\arabic{subsubsection}}
Here we present a measurement sequence along am armchair edge with no nontrivial logical operators.
The measurement circuit consists of 1- and 2-body measurements only,
and may be relevant to quantum error correction.
We leave the verification of local reversibility up to the reader.
\begin{align*} \renewcommand{\arraystretch}{1.9}
	\begin{array}{l @{\quad} l c c}
		&& \renewcommand{\arraystretch}{1} \begin{tabular}{c} allowed string \\[-0.9ex] termination \end{tabular}
		& \renewcommand{\arraystretch}{1} \begin{tabular}{c} disallowed string \\[-0.9ex] termination \end{tabular}
	\\	\ISG_{0} & \vcenter{\hbox{\includegraphics[width=70mm]{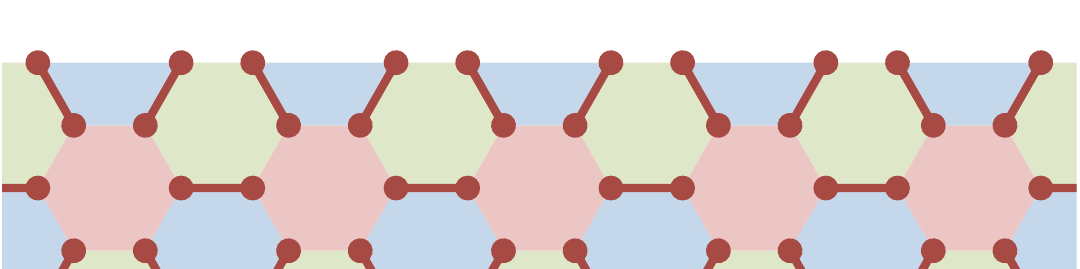}}} & \mathrm{Str}_{R}^{Y/Z} & \mathrm{Str}_{G/B}^X
	\\	\ISG_{1} & \vcenter{\hbox{\includegraphics[width=70mm]{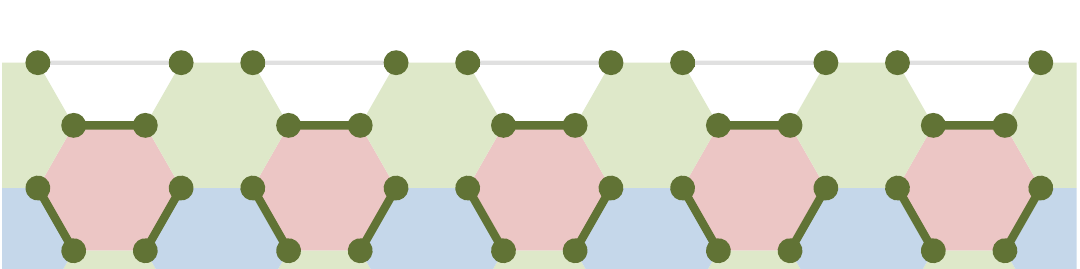}}} & \mathrm{Str}_{B/R}^Y & \mathrm{Str}_{G}^{Z/X}
	\\	\ISG_{2} & \vcenter{\hbox{\includegraphics[width=70mm]{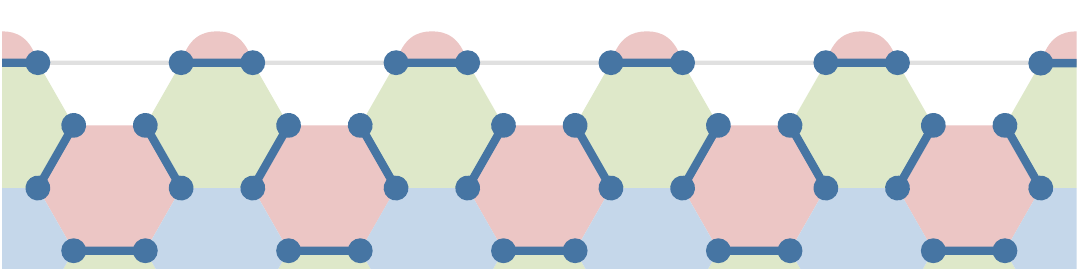}}} & \mathrm{Str}_{B}^{X/Y} & \mathrm{Str}_{R/G}^Z
	\\	\ISG_{3} & \vcenter{\hbox{\includegraphics[width=70mm]{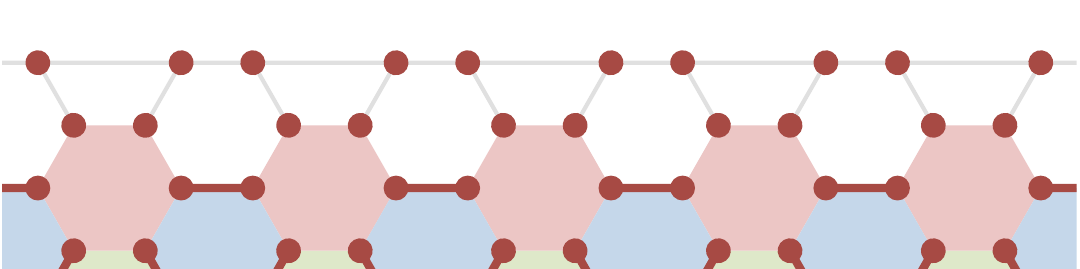}}} & \mathrm{Str}_{G/B}^X & \mathrm{Str}_{R}^{Y/Z}
	\\	\ISG_{4} & \vcenter{\hbox{\includegraphics[width=70mm]{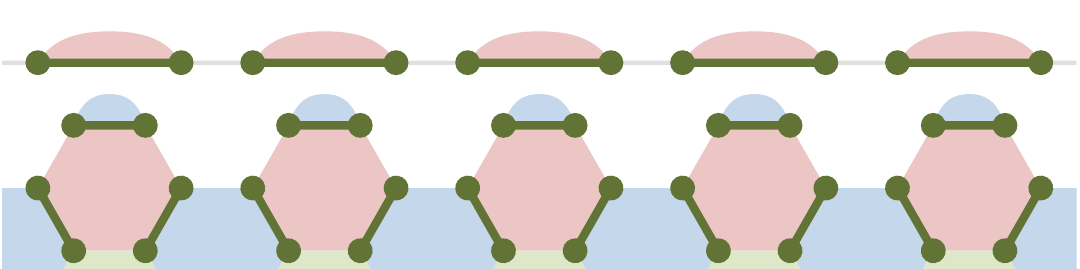}}} & \mathrm{Str}_{G}^{Z/X} & \mathrm{Str}_{B/R}^Y
	\\	\ISG_{5} & \vcenter{\hbox{\includegraphics[width=70mm]{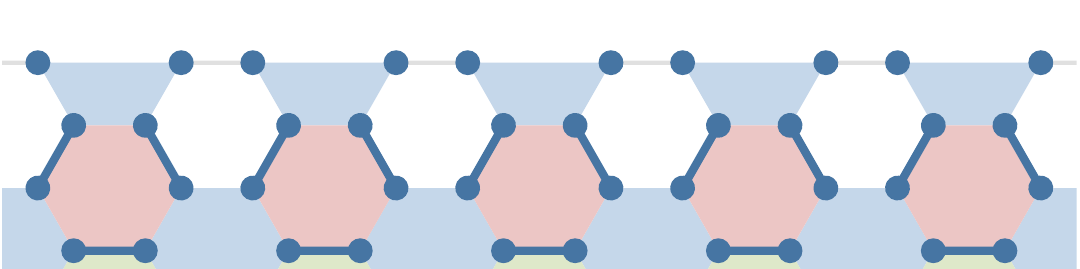}}} & \mathrm{Str}_{R/G}^Z & \mathrm{Str}_{B}^{X/Y}
	\end{array}
\end{align*}

\newpage
\tocless\subsubsection{Armchair boundary II}
\insertlabel{app:HHac:p3r}{\thesubsection{}.\arabic{subsubsection}}
This is a period-3 circuit along the armchair edge.
There are 2 independent logical generators per unit cell, and the MQCA index is $+\half$.
\begin{align*} \renewcommand{\arraystretch}{1.9}
	\begin{array}{l @{\quad} l}
		\ISG_{0} & \vcenter{\hbox{\includegraphics[width=70mm]{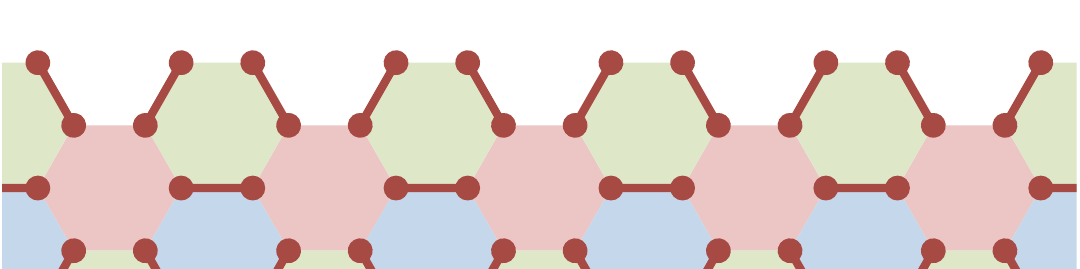}}}
	\\	\ISG_{1} & \vcenter{\hbox{\includegraphics[width=70mm]{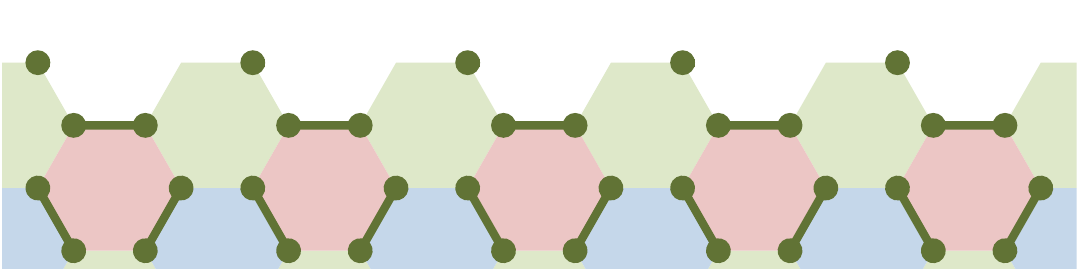}}}
	\\	\ISG_{2} & \vcenter{\hbox{\includegraphics[width=70mm]{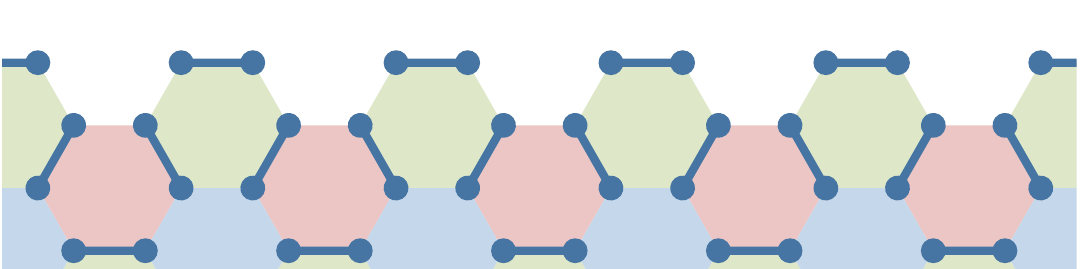}}}
	\end{array}
\end{align*}
The logicals of $\ISG_0$ forms a \MCAlg{} generated by $L_{2n} = Z_{6n-3} Z_{6n-2} Z_{6n-1} Z_{6n+0}$ and $L_{2n+1} = X_{6n+0} X_{6n+1}$.
Under a cycle, the MQCA maps
\begin{align}
	L_{2n} &\mapsto L_{2n+1} \,, & L_{2n-1} &\mapsto L_{2n-1} L_{2n} L_{2n+1} \,,
\end{align}
exhibiting operator size growth.
With the assignment of the logicals to fermionic operators via $L_{2n} = \ii \gamma_{2n-1} \gamma_{2n}$ and $L_{2n+1} = \ii \gamma_{2n} \gamma_{2n+1}$,
	this MQCA is equivalent to the fermionic transformation $\gamma_{2n-1} \mapsto \gamma_{2n+1}$, $\gamma_{2n} \mapsto \gamma_{2n}$.

\end{document}